\declaretheorem{theorem}
\declaretheorem[sibling=theorem]{proposition}
\declaretheorem[sibling=theorem]{lemma}
\declaretheorem[sibling=theorem]{assumption}
\declaretheorem[parent=theorem]{corollary}
\declaretheorem[style=remark, parent=theorem]{remark}
\declaretheorem[style=definition]{definition}
\definecolor{red4}{HTML}{B82727}
\definecolor{azure4}{HTML}{2263A3}
\definecolor{blue4}{HTML}{4C4CD9}
\definecolor{purple4}{HTML}{B3256C}
\crefname{section}{Sec.}{Secs.}
\DeclareMathOperator{\tr}{Tr}
\DeclareMathOperator{\poly}{poly}
\newcommand{\ketbra}[2]{\vert{#1}\rangle\langle{#2}\vert}
\renewcommand{\vec}[1]{\boldsymbol{#1}}
\newcommand{\tel}{\mathrm{el}}
\newcommand{\Del}{D^{\tel}_{n,j}}
\newcommand{\dent}[1]{%
  \hspace{#1em}%
}
\newcommand{\apptocfile}{atoc}
\let\apptoc@orig@appendix\appendix
\renewcommand{\appendix}{%
  \apptoc@orig@appendix
  \let\apptoc@orig@addtocontents\addtocontents
  \long\def\addtocontents##1##2{%
    \def\apptoc@ext{##1}%
    \def\apptoc@toc{toc}%
    \ifx\apptoc@ext\apptoc@toc
      \apptoc@orig@addtocontents{\apptocfile}{##2}%
    \else
      \apptoc@orig@addtocontents{##1}{##2}%
    \fi
  }%
}
\newcommand{\appendixtableofcontents}{%
  \begingroup
    \setcounter{tocdepth}{3}%
    \phantomsection
    \let\addcontentsline\@gobblethree
    \section*{Contents}%
    \pdfbookmark[1]{Appendices}{apxcontents}%
    \@starttoc{\apptocfile}%
  \endgroup
}
\newcommand{\QMT}{\affiliation{Quantum Motion, 9 Sterling Way, London N7 9HJ, United Kingdom}}
\newcommand{\OxMaterials}{\affiliation{Department of Materials, University of Oxford, Parks Road, Oxford OX1 3PH, United Kingdom}}
\newcommand{\OxMaths}{\affiliation{Mathematical Institute, University of Oxford, Woodstock Road, Oxford OX2 6GG, United Kingdom}}
\newcommand{\OxPhysChem}{\affiliation{Physical and Theoretical Chemistry Laboratory, University of Oxford, South Parks Road, Oxford, OX1 3QZ, United Kingdom}}
\newcommand{\UofT}{\affiliation{Department of Computer Science, University of Toronto, Canada}}
\newcommand{\UofTP}{\affiliation{Department of Physics, University of Toronto, Canada}}
\newcommand{\PNNL}{\affiliation{Pacific Northwest National Laboratory, Richland WA, USA}}
\newcommand{\CIFAR}{\affiliation{Canadian Institute for Advanced Studies, Toronto, Canada}}
\newcommand{\BI}{\affiliation{Quantum Lab, Boehringer Ingelheim, 55218 Ingelheim am Rhein, Germany}}
\newcommand{\BIBibMedChem}{\affiliation{Medicinal Chemistry, Boehringer Ingelheim Pharma GmbH \& Co KG, 88397 Biberach, Germany}}
\mathchardef\mhyphen="2D
\begin{document}
\title{Fullqubit alchemist: Quantum algorithm for alchemical free energy calculations}

\author{Po-Wei Huang\orcidlink{0009-0009-6973-5009}}
\email{po-wei.huang@maths.ox.ac.uk}
\QMT
\OxMaths

\author{Gregory Boyd\orcidlink{0000-0001-7822-1688}}
\QMT

\author{Gian-Luca R.~Anselmetti\orcidlink{0000-0002-8073-3567}}
\BI

\author{Matthias Degroote\orcidlink{0000-0002-8850-7708}}
\BI

\author{Nikolaj Moll\orcidlink{0000-0001-5645-4667}}
\BI

\author{Raffaele Santagati\orcidlink{0000-0001-9645-0580}}
\BI

\author{Michael Streif\orcidlink{0000-0002-7509-4748}}
\BI

\author{Benjamin Ries\orcidlink{0000-0002-0945-8304}}
\BIBibMedChem

\author{Daniel Marti-Dafcik\orcidlink{0000-0002-2871-238X}}
\QMT
\OxPhysChem

\author{Hamza Jnane\orcidlink{0000-0002-0713-3354}}
\QMT
\OxMaterials

\author{Sophia Simon\orcidlink{0000-0003-3261-2338}}
\UofTP

\author{Nathan Wiebe\orcidlink{0000-0001-6047-0547}}
\UofT
\PNNL
\CIFAR

\author{Thomas R. Bromley\orcidlink{0000-0002-7480-7478}}
\email{thomas@quantummotion.tech}
\QMT

\author{B\'alint Koczor\orcidlink{0000-0002-4319-6870}}
\email{balint.koczor@maths.ox.ac.uk}
\QMT
\OxMaths

\date{\today}
\begin{abstract}
    Accurately computing the free energies of biological processes is a cornerstone of computer-aided drug design, but it is a daunting task. The need to sample vast conformational spaces and account for entropic contributions makes the estimation of binding free energies very expensive. While classical methods, such as thermodynamic integration and alchemical free energy calculations, have significantly contributed to reducing computational costs, they still face limitations in terms of efficiency and scalability. We tackle this through a quantum algorithm for the estimation of free energy differences by adapting the existing Liouvillian approach and introducing several key algorithmic improvements. We directly implement the Liouvillian operator and provide an efficient description of electronic forces acting on both nuclear and electronic particles on the quantum ground state potential energy surface. This leads to super-polynomial runtime scaling improvements in the precision of our Liouvillian simulation approach and quadratic improvements in the scaling with the number of particles relative to prior quantum algorithms. Second, our algorithm calculates free energy differences via a fully quantum implementation of thermodynamic integration and alchemy, thereby foregoing expensive entropy estimation subroutines used in prior works. Our results open new avenues towards the application of quantum computers in drug discovery.
\end{abstract}

\maketitle
\begingroup
\renewcommand\thefootnote{}
\footnotetext{\vspace*{0.25ex}

Published in \emph{npj Quantum Information} (2026), DOI: \href{https://doi.org/10.1038/s41534-026-01275-2}{10.1038/s41534-026-01275-2}. The present arXiv version reformats and reorganizes sections for readability; the technical content is otherwise unchanged.}
\endgroup

\section{Introduction}

Biological processes occur in warm, aqueous environments where molecules are in constant thermal motion. Rather than adopting a single geometry, molecules in the human body continuously explore a large ensemble of configurations. As a result, thermodynamic properties such as entropy or free energy depend on the statistical distribution of a vast number of accessible conformations. Accurately estimating these properties \textit{in silico} requires sampling this high-dimensional configuration space, which is very computationally intensive.

Modern computer-aided drug design pipelines rely heavily on the determination of thermodynamic quantities. In particular, the binding free energy between a ligand and its biological target is a key metric for ranking drug candidates~\citep{Chodera2011_AlchemicalDrug}. Significant advancements in this field have been driven by two key factors: the exponential growth in computational power---consistent with Moore's law and the rise of parallel computing architectures---and the development of sophisticated algorithms and various protocols for free energy estimation~\citep{DeRuiter2021_ThermodynamicIntegration}. 

Free energy calculations focus primarily on calculating relative binding free energies, which quantify the change in free energy between two different molecular ensembles~\citep{frenkel_understanding_1996, kastner_qmmm_2006, Cournia2017DeltaDeltaG}. Thus, in such calculations, instead of evaluating two absolute binding free energies independently, one computes their difference directly. Alchemical free energy methods accomplish this by introducing a continuous, nonphysical transformation from one chemical species into another through a pathway of fictitious intermediary states~\citep{SHIRTS2007AlchemFreeEnergyCalc}, drawing parallels with transforming elements in the art of alchemy. The main benefit is that because free energy is a state function, the actual physical binding process does not need to be simulated. This flexibility avoids the need to model complex kinetic mechanisms and enables more efficient estimation of free energy differences. Furthermore, for condensed-phase systems such as liquids and solids---typical in drug design---the Gibbs free energy difference can be accurately approximated by the Helmholtz free energy, as the pressure-volume work is negligible under the nearly incompressible conditions of these phases~\citep{mey2020best,dejong2011determining}. \begin{figure*}
    \begin{adjustbox}{width=\linewidth}
    \begin{tikzpicture}[
    every node/.style={
        draw, rounded corners, align=flush center,
        text width=6.5em, minimum height=3em,
        anchor=center
    },
    invisible/.style={draw=none, fill=none},
    thickarrow/.style={thick, -{Stealth[scale=1.1]}},
    ]
    \matrix[matrix of nodes, row sep=1em, column sep=0, draw=none] (m) {
    \node[fill=azure4!15, name=ehb] {Electronic Hamiltonian \citep{su2021faulttolerant}}; &
    \node[invisible, anchor=north] {Ground State Preparation \citep{lin2020nearoptimal}}; &
    \node[fill=azure4!15, name=gs] {Ground State}; &&
    \node[fill=azure4!15, name=pdo] {Momentum Derivative Operator \citep{simon2024improved}}; &&
    \node[fill=azure4!15, name=clb] {Classical Liouvillian \citep{simon2024improved}}; &
    \node[invisible, name=3w]{};&&\node[invisible, name=3e]{};
    \\
    &
    \node[fill=azure4!15, name=fobe] {Force Operator \citep{obrien2022efficient}}; &
    \node[invisible, anchor=south] {Phase Kickback \citep{cleve1998quantum}};&
    \node[fill=red4!15, name=gsedb] {Ground State Energy Derivative \cref{eqDelInfml}}; &
    \node[invisible, anchor=south] {Tensor Prod.\\ + LCU \citep{childs2012hamiltonian}}; &
    \node[fill=red4!15, name=elb] {Electronic Liouvillian \cref{eqElecLiouInfml}}; &
    \node[invisible, anchor=south] {LCU \citep{childs2012hamiltonian}}; &
    \node[fill=red4!15, name=lb] {Liouvillian \cref{propLiouInfml}}; &
    \node[invisible, anchor=south] {Hamiltonian Simulation \citep{low2017optimal,low2019hamiltonian,gilyen2019quantum}}; &
    \node[fill=red4!15, name=leb] {Liouvillian Evolution \cref{thmLiouSimInfml}};
    \\
    \node[invisible, name=nw]{};&&\node[invisible, anchor=north] {Ground State Preparation \citep{lin2020nearoptimal}};&
    \node[fill=azure4!15, name=gsr] {Ground State}; &
    \node[invisible, align=left, anchor=north] {Phase Kick-back \citep{cleve1998quantum}};&
    \node[fill=red4!15, name=gseb] {Ground State Energy Hamiltonian \cref{eqGSEInfml}}; &
    \node[invisible, anchor=north] {LCU \citep{childs2012hamiltonian}}; &
    \node[fill=red4!15, name=ieb] {Nuclear Hamiltonian \cref{propInHamInfml}}; &
    \node[invisible, anchor=south] {Hadamard Test \citep{kitaev1995quantum,cleve1998quantum}}; &
    \node[fill=red4!15, name=febe] {Free Energy \cref{thmThermIntInfml}};
    \\
    \node[invisible, name=sw]{};&\node[invisible, name=se]{}; &
    \node[fill=azure4!15, name=ehbr] {Electronic Hamiltonian \citep{su2021faulttolerant}}; &&&&
    \node[fill=azure4!15, name=kpeb] {Kinetic/ Potential Energy Hamiltonian \citep{simon2024improved}}; &
    \node[invisible, name=4w]{};&&\node[invisible, name=4e]{};
    \\
    };
    
    \draw[thickarrow] (ehb) -- (gs);
    \draw[thickarrow] (gs) -- (gsedb);
    \draw[thickarrow] (fobe) -- (gsedb);
    \draw[thickarrow] (gsedb) -- (elb);
    \draw[thickarrow] (pdo) -- (elb);
    \draw[thickarrow] (clb) -- (lb);
    \draw[thickarrow] (elb) -- (lb);
    \draw[thickarrow] (lb) -- (leb);
    \draw[thickarrow] (ehbr) -- (gsr);
    \draw[thickarrow] (gsr) -- (gseb);
    \draw[thickarrow] (ehbr) -- (gseb);
    \draw[thickarrow] (kpeb) -- (ieb);
    \draw[thickarrow] (gseb) -- (ieb);
    \draw[thickarrow] (ieb) -- (febe);
    \draw[thickarrow] (leb) -- (febe);
    \draw[rounded corners] ($(ehb.north west)+(-1em,0.5em)$) rectangle ($(leb.south east)+(1em,-1.3em)$);
    \draw[rounded corners] ($(sw.south west)+(-1em,-2em)$) rectangle ($(febe.north east)+(1em,1.5em)$);
    \node[invisible, fit=(3w)(3e), align=center] {\\[1ex]\large\textbf{Liouvillian Simulation\\\vspace{1ex} \cref{secLS}}};
    \node[invisible, fit=(4w)(4e), align=center,anchor=north] {\large\textbf{Free Energy Calculation\\\vspace{1ex} \cref{secTI}\vspace{3ex}}};
    \end{tikzpicture}
\end{adjustbox}
\caption[Overview of our algorithm construction.]{\emph{Overview of our algorithm construction.} The results in the top panel of the figure regarding the efficient construction of a Liouvillian simulation quantum algorithm are featured in \cref{secLS}. In addition, the results in the bottom panel of the figure for the design of a quantum algorithm for alchemical free energy calculation are featured in \cref{secTI}. In this figure, we color our technical contributions in red, while those from existing work are colored in blue. For the Liouvillian simulation algorithm, we use the Hellmann--Feynman theorem~\citep{hellmann1937einfuhrung,feynman1939forces} to efficiently construct a diagonal block encoding of the ground state energy derivative for each point in phase-space, which is then used to construct the electronic and full Liouvillians. Subsequently, using Hamiltonian simulation~\citep{low2017optimal,low2019hamiltonian,gilyen2019quantum}, we can implement a Liouvillian evolution operator with logarithmic dependency on precision. Integrating these results with an efficient block encoding of the nuclear Hamiltonian, we can find the free energy difference between two systems through implementing the thermodynamic integration~\citep{frenkel_understanding_1996} via the Hadamard test~\citep{kitaev1995quantum,cleve1998quantum}.}
\label{figMain}
\end{figure*}
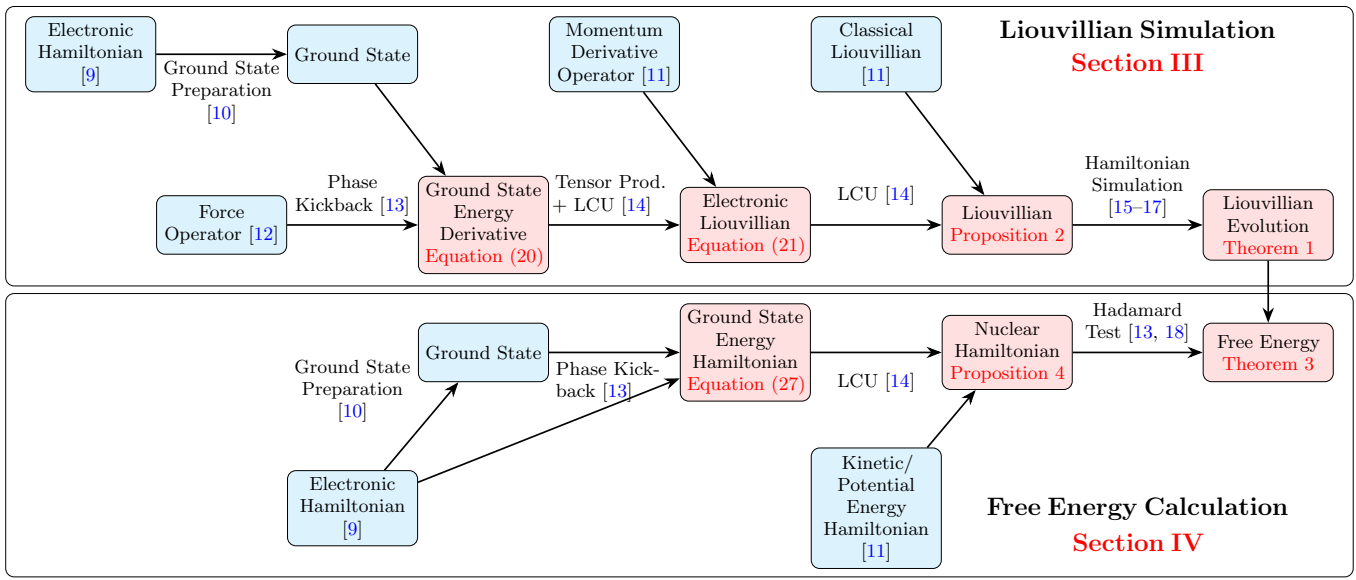
Among the most widely used techniques to estimate the Gibbs free energy differences via Helmholtz free energy differences are thermodynamic integration~\citep{Kirkwood1935_StatistiscalMechanics} and free energy perturbation~\citep{Zwanzig1955FreeEnPerturbation, BENNETTFreeEnDifference}, both of which rely on sampling equilibrium ensembles on classical computers.

Among the available techniques for sampling the configuration space, molecular dynamics is one of the most widely used~\citep{allen1987computer,frenkel_understanding_1996}. It is routinely employed in academic and industrial settings to compute thermodynamic properties from first principles~\citep{Williams2018_FreeEnergyInDrugDesign}. However, the efficiency of free energy calculations remains a considerable challenge due to the inherent complexity of accurately capturing entropic contributions to the free energy. Reliable sampling of conformational spaces, particularly in systems with high entropic barriers, requires substantial computational resources and robust statistical mechanical frameworks~\citep{Chodera2011_AlchemicalDrug, Hansen2014PracticalFreeEnergy}.

To address some of these computational bottlenecks, the efficient simulation of molecular and quantum systems on quantum computers has emerged as a promising avenue~\citep{feynman1982simulating,lloyd1996universal, Caesura2025Faster, Low2025SpectrumAmplif}. Quantum simulation techniques have already shown significant improvements in runtime for calculating ground state energies of systems of correlated electrons, with potential applications spanning catalysis, materials science, and drug design~\citep{reiher_elucidating_2017, McArdle2020QChem, von_burg_quantum_2021, Outeiral2021ProspectQCMolBio,rubin_fault-tolerant_2023, Santagati2024DrugDesignOnQC}.

Molecular dynamics requires simulating both electronic and nuclear degrees of freedom. In many practical cases, including most drug-like molecules, it is appropriate to treat the nuclei classically while modeling the electrons quantum mechanically, with the nuclei evolving under forces generated by the electrons~\citep{Born1927Zur}. Recognizing that the main bottleneck in classical computing lies in the electronic structure calculations, earlier work proposed a hybrid computational paradigm: use a quantum computer to compute electronic forces and then propagate the nuclei according to Newton’s equations of motion on a classical computer~\citep{obrien2022efficient, steudtner2023faulttolerant, gunther2025how}. While this is a natural way of enhancing existing classical simulation pipelines with quantum computers, the approach is fundamentally limited by measurement overhead as each electronic force must be extracted from the quantum computer via repeated measurements, a process that can take hours to days and must be repeated at each time step~\citep{Santagati2024DrugDesignOnQC}.

\begin{table*}
    \centering
    \caption[Overview of results on algorithmic complexity in terms of Toffoli gate count.]{\emph{Overview of results on algorithmic complexity in terms of Toffoli gate count.} Here we consider the Toffoli gate count in terms of $N$ (the number of nuclei), $\widetilde N$ (the number of electrons), $N_{\mathrm{tot}} = N + \widetilde N$ (the total number of particles), $t$ (the evolution time), $\varepsilon$ (the precision of the state preparation/free energy estimation), $\delta$ (a lower bound for the overlap between the approximate ground state prepared by an oracle $U_I$ and the true ground state), $\gamma$ (a lower bound on the spectral gap of the electronic Hamiltonian), $\eta$ (the number of grid points in phase-space, which is exponential in $N$), and $1-\xi$ (the success probability for free energy estimation). Throughout this paper, we use the notation $\widetilde{\mathcal{O}}(\cdot)$ to hide polylogarithmic factors. For simplicity in this table, we suppress the dependence on all other variables for simplicity and use the notation $\mathcal O^*(\cdot)$ to hide both polylogarithmic dependencies and $o(1)$ terms in the exponent. More details can be found in \cref{appLS}. Note that in this work we compute the difference of free energies between two systems while \citet{simon2024improved} computes the absolute free energy of a single system.}
    \label{tabMain}
    \begin{tabular}{l@{\hskip 3em}c@{\hskip 3em}c}
        \toprule
        & Molecular dynamics & Free energy calculation\\
        \midrule
        Liouvillian + Trotterization~\citep{simon2024improved} & $\displaystyle\mathcal{O}^*\left(\frac{N^2 \widetilde N^2 N_{\mathrm{tot}}^3  t}{\delta\gamma\varepsilon^{o(1)}}\right)$ & $\displaystyle\mathcal{O}^*\left( \left( \frac{\eta^{o(1)} N^2 \widetilde N^2 N_{\mathrm{tot}}^3 t}{\delta\gamma \, \varepsilon} \left( N_{\mathrm{tot}}^2 + \frac{\eta}{\sqrt{\varepsilon}} \right) + \frac{\widetilde N N_{\text{tot}}^4 }{\varepsilon^2}\right) \log \frac{1}{\xi}\right)$\\[3ex]
        Our work & $\displaystyle\widetilde{\mathcal{O}}\left(\frac{N \widetilde N N_{\mathrm{tot}}^3 t}{\delta\gamma}\log^3\frac{1}{\varepsilon}\right)$ & $\displaystyle\widetilde{\mathcal{O}}\left(\frac{N \widetilde N N_{\mathrm{tot}}^5 t}{\delta\gamma\varepsilon}\log\frac{1}{\xi}\right)$\\
        \bottomrule
    \end{tabular}
\end{table*}

Recent advances extend the runtime advantages of quantum algorithms to hybrid classical–quantum dynamics, enabling a fundamental shift from the paradigm described above~\citep{simon2024improved,gonzalez-conde2023mixed}. Notably, \citet{simon2024improved} introduced an approach based on the Liouvillian formalism~\citep{liouville1838theorie} that allows the entire molecular dynamics simulation to be carried out on a quantum computer. By directly applying electronic forces to the classical nuclei via the phase kickback technique~\citep{cleve1998quantum}, it eliminates the need for repeated force measurements. When combined with quantum algorithms for estimating thermodynamic quantities~\citep{simon2024improved}, this framework enables the computation of free energies over exponentially large configuration spaces without relying on any classical-quantum feedback loop at every time step.

Nonetheless, these quantum algorithms still face major bottlenecks in computational efficiency. In particular, the quantum simulation method presented in Ref.~\citep{simon2024improved} relies on Trotterization~\citep{suzuki1976generalized,childs2021theory}, which has a less favorable scaling with respect to error tolerance and evolution time than modern techniques based on block encodings~\citep{berry2015simulating,chakraborty2019power, gilyen2019quantum, low2018hamiltonian}. Furthermore, the technique lacks an efficient method for calculating the differences in free energies when comparing different compounds, which is critical for end-to-end applications such as drug discovery.

In this work, we introduce a new quantum algorithm for molecular dynamics that overcomes these limitations. The algorithm relies on an improved version of the original implementation of molecular dynamics in the Liouvillian formalism presented in Ref.~\citep{simon2024improved}. The latter relied on the Suzuki-Trotter product formula~\citep{trotter1959product,suzuki1976generalized,suzuki1991general} to implement the calculation of the electronic structure and to combine the electronic and classical Liouvillians. Here, instead, we introduce a block encoding of the electronic Liouvillian, preparing ground state energy forces on a separate electronic register, and then ``kicking-back'' the computed values onto the nuclear registers via a form of phase kickback~\citep{cleve1998quantum}. This enables us to implement the quantum simulation with the full (electronic and classical) Liouvillian by Hamiltonian simulation algorithms~\citep{low2017optimal,low2019hamiltonian,gilyen2019quantum}. In doing so, we achieve a super-polynomial improvement in runtime scaling with respect to precision and substantially decrease the dependency on the number of particles.

We then develop a new quantum algorithm for thermodynamic integration that runs solely on the quantum computer, building on these improvements. This algorithm avoids the need for a direct entropy estimation, which can scale exponentially with the number of atoms in the simulated system~\citep{gilyen2020distributional,simon2024improved}. It replaces it with internal energy estimation across interpolated molecular dynamic simulations, which can be implemented efficiently on a quantum computer and be further integrated with amplitude estimation~\citep{brassard2002quantum} to provide additional speedups over classical methods.

\begin{figure}
    \includegraphics[width=\linewidth]{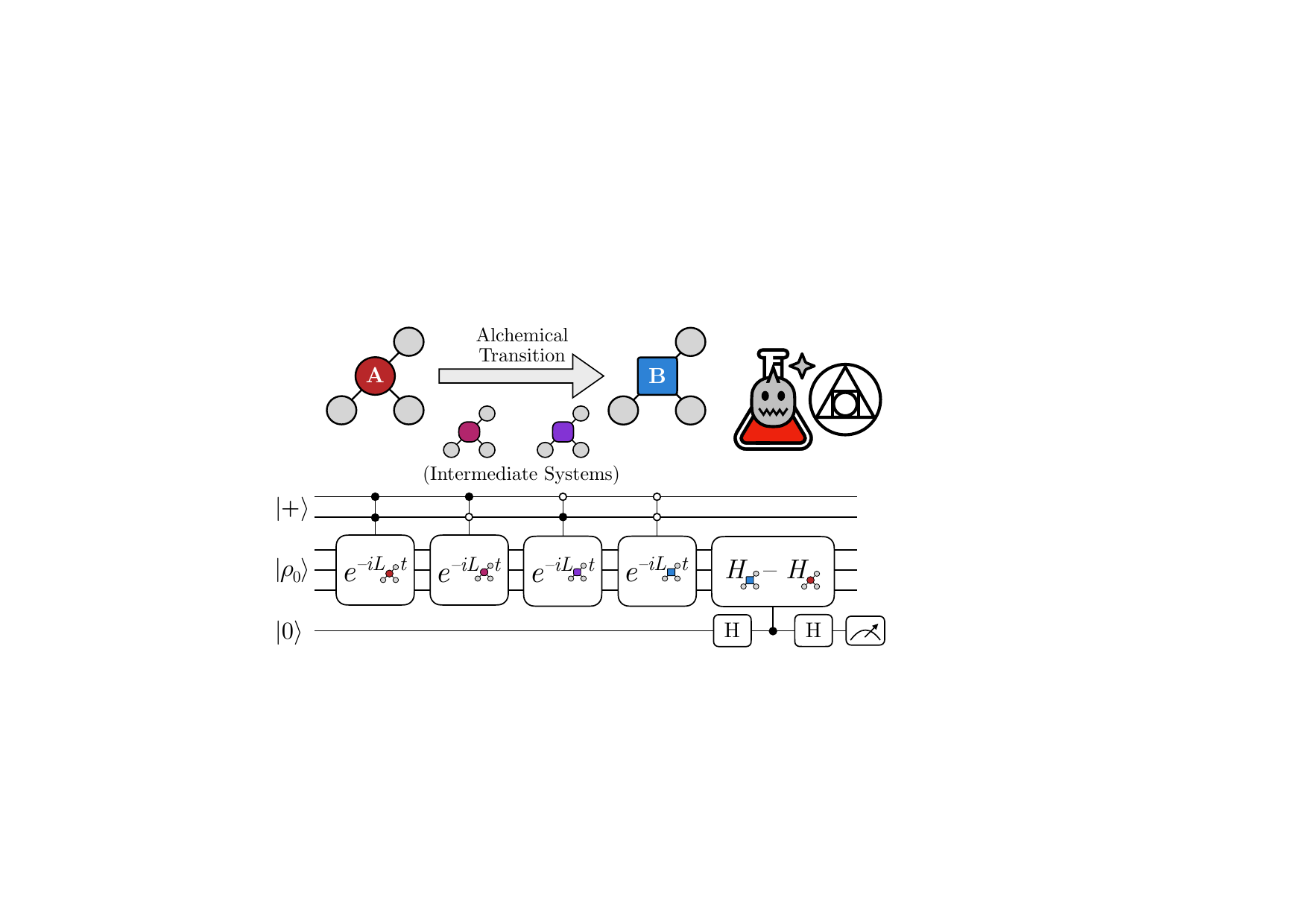}
    \caption[Alchemical free energy calculation on a quantum computer.]{\emph{Alchemical free energy calculation on a quantum computer.} Given systems $A$ and $B$, we perform Liouvillian simulations of the two systems and their interpolated intermediate systems in superposition. Note that while the operators are shown separately in this illustration, we implement them as one single operator via controlled weighting of LCU components without the multiplicative cost of executing separate simulations for each intermediate system. We then compute the internal energy difference and sum over the different results via the Hadamard test to provide the free energy difference.}
    \label{figConcept}
\end{figure}

\cref{figMain} provides an overview of our quantum algorithms: the top two rows detail the Liouvillian simulation. In addition, the bottom two rows present the computation of free energy via alchemical methods. Using the Hellmann–Feynman theorem~\citep{hellmann1937einfuhrung,feynman1939forces}, we construct diagonal block encodings of energy derivatives to build the Liouvillian operator, which is then used to implement its corresponding evolution operator. Combined with a block encoding of the nuclear Hamiltonian, this enables the computation of free energy differences by implementing thermodynamic integration via the Hadamard test~\citep{kitaev1995quantum,cleve1998quantum}. A sketch of the concept behind the implementation of thermodynamic integration on a quantum computer is provided in \cref{figConcept}. Note that the controlled Liouvillian is implemented as a single operator, rather than as separate operators. A comparison of algorithmic complexities in terms of Toffoli gate counts between our work and that of \citet{simon2024improved} is presented in \cref{tabMain}. For the Liouvillian simulation algorithm, we provide a super-polynomial speedup in terms of the precision $\varepsilon$ as well as a reduction in the particle numbers $N$ and $\widetilde N$ by a polynomial factor. On the other hand, for the free energy calculation, we remove the dependency on the number of grid points $\eta$, which can scale exponentially with the number of qubits, as well as reduce the dependency on precision $\varepsilon$ by a polynomial factor.

The paper is structured as follows. After the Preliminaries in \cref{secPreliminaries}, the improvement of the Liouvillian algorithm by implementing more efficient block encodings is presented in \cref{secLS}. In \cref{secTI}, we devise a quantum algorithm for thermodynamic integration, evolving molecular dynamics through Liouvillian simulation in superposition. Finally, we conclude in \cref{secconclusion}.

\section{Preliminaries}
\label{secPreliminaries}
\subsection{Molecular dynamics within the Liouvillian formalism}

Simulating the time evolution of a physical system is a crucial task in physics, with numerous applications, such as molecular dynamics (MD)~\citep{frenkel_understanding_1996}. MD simulations are often used to evaluate ensemble properties of a physical system, which can be used to estimate macroscopic quantities such as free energies. Different ensembles can be defined. For example, the microcanonical ensemble~\citep{gibbs1902elementary}, also known as NVE, is a statistical ensemble where the total number of particles (N), total volume (V), and the total energy (E) are held constant. However, for the calculation of chemically relevant properties, we require the canonical ensemble~\citep{gibbs1902elementary,nose1984molecular}, NVT, where the system is coupled to a heat bath to ensure a constant temperature (T). When coupled to the Nos\'e thermostat~\citep{nose1984molecular,nose1984unified}, an extra degree of freedom $s$ for the heat bath is introduced into the Hamiltonian. 

To construct the corresponding nuclear Hamiltonian that governs the dynamics, we write out the kinetic, Coulombic, and ground state energy terms as follows:
\begin{equation}
\label{eqNucHam}
    H_{\rm nuc} = \sum_{n=1}^N\sum_{j=1}^3 \frac{p_{n,j}^2}{2m_n} + \sum_{k=1}^{N-1}\sum_{n=k+1}^{N} \frac{Z_n Z_k}{\lVert x_n - x_k\rVert} + E_{\tel}(\vec x),
\end{equation}
Under NVT canonical simulations with the Nos\'e thermostat, the nuclear Hamiltonian has additional terms concerning an extra degree of freedom $s$ that represents the heat bath that equilibrates the system to a fixed temperature $T$. The extended Nos\'e nuclear Hamiltonian governing the dynamics can thus be formulated as follows~\citep{nose1984molecular, simon2024improved}:
\begin{multline}
\label{eqNoseHam}
    H_{\rm ext}^{\rm (NVT)} = \sum_{n=1}^N\sum_{j=1}^3 \frac{p_{n,j}'^2}{2m_ns^2} + \sum_{k=1}^{N-1}\sum_{n=k+1}^{N} \frac{Z_n Z_k}{\lVert x_n - x_k\rVert}\\+\frac{p_s^2}{2Q} + N_fk_BT\ln(s) + E_{\tel}(\vec x),
\end{multline}
where $m_n$ is the mass of the $n$-th nucleus, $Z_n$ is the atomic number of the $n$-th nucleus, $p_s$ is the momentum variable conjugate to $s$, $Q$ is an effective mass of $s$ that controls the coupling of the system to the heat bath, $k_B$ is the Boltzmann constant, and $N_f$ is the number of degrees of freedom of the system. In this extended system, $p_{n,j}'$ is the conjugate momentum variable to $x_{n,j}$ that is known as a ``virtual'' momentum variable where the real momentum $p_{n,j}$ can be obtained by $p_{n,j} = \frac{p_{n,j}'}{s}$. The last term $E_{\tel}$ is the adjustment of the Hamiltonian that adds the ground state electronic energy at a nuclear configuration $\boldsymbol x$, which is responsible for the electronic forces. Given this additional degree of freedom $s$ and the use of ``virtual momentum'', one needs to integrate over the heat bath variable to obtain the actual particle distribution after simulations. 

To calculate the ground state electronic energy $E_{\tel}$, we require assumptions on the ground state preparation and the separation of ground and excited state energies. For many chemical systems and most biologically relevant systems, where nuclear dynamics is much slower than electronic dynamics, the Born-Oppenheimer approximation can achieve reliable predictions of the system's properties~\cite{Born1927Zur, Tully2000Perspective, Bransden2003Physics}. Our work primarily operates under this assumption, which precludes the existence of conical intersections between the electronic ground state potential energy surface and the electronic excited state potential energy surface over different nuclear configurations. We further note that the particles in the simulated system in question would be correlated weakly enough such that the Hamiltonians remain sufficiently gapped, leading to a fair assumption that the spectral gap $\gamma$ is polynomially bounded. Under this assumption, to solve the electronic structure problem at ground state configurations, we assume that we can make use of an approximate ground state oracle $U_I$ that prepares an initial electronic state $\ket{\phi_{\rm init}(\vec x)}$ that has nontrivial overlap with the ground state $\ket{\psi_0(\vec x)}$ of the electronic Hamiltonian $H_{\tel}$ describing the dynamics of the electronic system. Specifically, given $0 < \delta \leq 1$, we demand that for any $\vec{\bar x}$,
\begin{equation}
    U_I \ket{\vec{\bar x}} \ket{0} = \ket{\vec{\bar x}} \ket{\phi_{\rm init} (\vec x)},
\end{equation}
where $\left| \braket{\psi_0(\vec x)|\phi_{\rm init}(\vec x)} \right| \ge \delta$ for all nuclear configurations. $\delta$ is thus a lower bound on the overlap of the prepared approximate state with the true electronic ground state. 

While we utilize the initial approximate ground state as generated from a black-box oracle $U_I$, to implement this in practice, we can borrow methods in classical computational chemistry that provide faster convergence in self-consistent field (SCF) procedures~\citep{lehtola2019assessment} that are nuclear-position independent. These methods include superposition of atomic densities (SAD)~\citep{vanlenthe2006starting} and its variants, as well as further refinements with the extended H\"{u}ckel method~\citep{hoffmann1963extended, lehtola2019assessment}. Such methods have appeared as wavefunction initialization techniques in computational chemistry packages such as QuantumESPRESSO~\citep{giannozzi2009quantum,giannozzi2017advanced} and VASP~\citep{kresse1993ab,kresse1996efficiency,kresse1996efficient}, and encode these results into the plane wave basis quantum state to provide an initial guess. Future work on quantizing the implementation of obtaining the Hartree-Fock state, such that it can be prepared in superposition, may further improve ground state preparation.

As in Ref.~\cite{simon2024improved}, our quantum algorithm operates within the Liouvillian formalism for simulating Koopman–von Neumann mechanics~\citep{koopman1931hamiltonian,neumann1932zur,neumann1932zusatze}. The Liouvillian formalism allows for the description of the time evolution of a physical system through the Liouville equation of motion:
\begin{equation}
    \frac{\partial \rho}{\partial t} = -i L \rho \, ,
\end{equation}
where $\rho(\vec x, \vec p,t)$ can be thought of as the probability density of a system of $N$ classical particles with position $\{\vec x_n\}_{n\in[N]}$ and momentum $\{\vec p_n\}_{n\in[N]}$, and where $L$ is the Liouvillian operator, defined as:
\begin{equation}\label{eq:Liouvillian}
    L:=-i\left((\nabla_{\vec p}H_{\rm ext})\cdot\nabla_{\vec x}-(\nabla_{\vec x}H_{\rm ext})\cdot\nabla_{\vec p}\right) \, ,
\end{equation}
where $H_{\rm ext}$ is the extended Nos\'e nuclear Hamiltonian underlying the dynamics of the nuclear system shown above, and is can be separated into the kinetic and potential Hamiltonians  $H_{\rm kin}$ and $H_{\rm pot}$ with additional ground state correction terms from the electronic system $H_{\rm gse}$ and extra terms regarding the heat bath $H_{\rm bath}$. The terms in the sum defining the Liouvillian operator depend directly on the derivatives of the nuclear Hamiltonian $H_{\rm ext}$ with respect to the $j$-th components of position and momentum of the $n$-th particle. To map the classical probability density on a quantum computer, we exploit the Koopman-von Neumann description of classical mechanics, which directly connects the probability density to a quantum state~\cite{Mauro2002KvN, simon2024improved}, by defining $\rho:= \psi^*_\mathrm{KvN} \psi_\mathrm{KvN}$.
Representing the KvN wavefunction $\psi_\mathrm{KvN}$ as a quantum state $\ket{\rho}$, the time evolution of the system can be computed as the evolution of the density under the Liouvillian:
\begin{equation}
    \ket{\rho_t} = e^{-iLt} \ket{\rho_0} \, .
\end{equation}
This can then be implemented on a quantum computer using Hamiltonian simulation~\cite{childs2012hamiltonian, berry2014exponential,low2017optimal,low2019hamiltonian, gilyen2019quantum}. 

Here, we are interested in MD simulations within the Born-Oppenheimer approximation. The force on the nuclei consists of nuclear and electronic contributions, which we can obtain by considering the two components of the Liouvillian
\begin{equation}
    L= L_\mathrm{cl}+L_\mathrm{el} \, ,
\end{equation}
where the classical part of the Liouvillian $L_\mathrm{cl}$ is given by the kinetic term and nucleus-nucleus repulsion, and $L_\mathrm{el}$ adds the quantum interaction between the nuclei and electrons and amongst the electrons, which takes the form of the ground state energy. The electronic Liouvillian can thus be formulated as
\begin{equation}
    L_{\tel}:=i\sum_{n=1}^{N}\sum_{j=1}^{3}\left(\frac{\partial E_\tel(\vec x)}{\partial x_{n,j}} \partial_{p_{n,j}}\right) \, ,
\end{equation}
where $E_\tel$ is the ground state energy. For simulations under the NVT canonical ensemble with the Nos\'e thermostat, an additional degree of freedom $s$ and its momentum conjugate $p_s$ are additionally considered in the Liouvillian simulation process, but integrated out to retrieve the final probability density $\rho(\vec x, \vec p, t)$. For implementation on a quantum computer, we prepare a set of registers $\ket{\bar x_{n,j}}$, $\ket{\bar p_{n,j}'}$, $\ket{\bar s}$, and $\ket{\bar p_s}$ such that operations are only carried out on the relevant registers, where the overbar stands for the discretized phase-space. The ``virtual'' momentum $\vec p'$ is a reparameterized version of the momentum dependent on both the true momentum $\vec p$ and the heat bath variable $s$. Details of the spatial discretization and NVT simulation are deferred to \cref{ap:preliminaries}.

Prior work by \citet{simon2024improved} has shown that a Liouvillian simulation algorithm can be achieved. They show how to efficiently block-encode the classical Liouvillian and implement the evolution with Hamiltonian simulation via quantum singular value transform (QSVT)~\citep{low2017optimal,low2019hamiltonian,gilyen2019quantum}. On the other hand, the direct implementation of electronic forces, and by extension, the electronic Liouvillian $L_{\tel}$, is circumvented by utilizing controlled Hamiltonian simulation~\citep{chakraborty2019power}, which allows for the implementation $e^{-iL_{\tel}t}$ in logarithmic dependency on error. However, since the evolution operator over the electronic Liouvillian $L_{\tel}$ is implemented directly instead of the full Liouvillian, Trotterization is required to approximate the evolution over the full Liouvillian $e^{-iLt}$. Their results provide a Liouvillian simulation algorithm with a runtime of
\begin{equation}
    \widetilde{\mathcal{O}}\left(\frac{N^{2+o(1)} \widetilde N^{2+o(1)} N_{\mathrm{tot}}^{3+o(1)} t^{1+o(1)}}{\delta\gamma\varepsilon^{o(1)}}\right),
\end{equation}
where we use $\widetilde{\mathcal{O}}(\cdot)$ to hide polylogarithmic factors.

In \cref{ap:preliminaries}, we formalize the construction of the electronic Liouvillian $L_{\tel}$ and further provide technical details required from implementation introduce in prior works, including the formulation and cost of implementing the electronic Hamiltonian $H_\tel$ from Refs.~\citep{babbush2019quantum,su2021faulttolerant} and the force operators from Ref.~\citep{obrien2022efficient} on quantum computers.

\subsection{Alchemical free energy calculation via thermodynamic integration}
\label{prelim:thermodynamic_integration}

Thermodynamic quantities such as free energy and entropy can be determined through ensemble averaging over configurations sampled in molecular dynamics or Monte Carlo simulations~\citep{BENNETTFreeEnDifference} using appropriate Boltzmann weighting~\citep{Zwanzig1955FreeEnPerturbation}. It is important to distinguish between the Helmholtz free energy, defined at constant volume $V$ as $F = U - TS$, where $U$ is the internal energy and $S$ is entropy, and the Gibbs free energy, defined at constant pressure $P$ as $G = U - TS + PV$ \cite{frenkel_understanding_1996}. Since experiments are typically carried out at constant pressure, the Gibbs ensemble provides a more realistic representation of physical conditions. Nevertheless, in the context of molecular simulations relevant to drug discovery in aqueous systems, the difference between Gibbs and Helmholtz free energies is negligible, i.e., $\Delta G \approx \Delta F$, because $P \Delta V$ is nearly zero for drug-like molecules in water, as they are effectively incompressible~\citep{mey2020best,dejong2011determining}. As molecular dynamics often employ constant-volume ensembles, the Helmholtz free energy is often used in computational chemistry. In the following, we will therefore refer to the free energy as $F$ throughout.

We now consider two closely related molecular systems, $A$ and $B$, with a large phase space overlap and with microstate energies $E_{\mathcal S}(A)$ and $E_{\mathcal S}(B)$ where ${\mathcal S}$ labels the microstates, corresponding to different molecular geometries, in the NVT canonical ensemble. These molecular systems differ by only one atom or functional group, resulting in significant overlap in nuclear phase-space. Computing the absolute free energy of both systems requires extensive sampling for reliable convergence, making the process computationally demanding. As an alternative, we can employ a method that focuses solely on calculating the free energy difference $\Delta F$ of systems $A$ and $B$, which reduces the required sampling and computational cost~\citep{DeRuiter2021_ThermodynamicIntegration, Kirkwood1935_StatistiscalMechanics, frenkel_understanding_1996, kastner_qmmm_2006, Cournia2017DeltaDeltaG}.

Consider a new system with microstate energy functions $E_{\mathcal S}(\Lambda)$ which are differentiable functions of a new coupling parameter $\Lambda \in [0,1]$ satisfying
\begin{align}
    E_{\mathcal S}(0) & = E_{\mathcal S}(A) \\
    E_{\mathcal S}(1) & = E_{\mathcal S}(B)  \, .
\end{align}
To obtain the free energy in an NVT canonical ensemble, we need to consider the partition function of the system:
\begin{equation}
    Z(N, V, T, \Lambda) = \sum_{{\mathcal S}} \exp (-E_{\mathcal S}(\Lambda)/k_{B}T) \, .
\end{equation}
The microstates $\mathcal S$ accessible to the system depend on the number of particles $N$, the volume $V$, and the temperature $T$. A sufficient number of states $\mathcal S$ must be chosen so that the partition function converges. Under Liouvillian simulation in the NVT canonical ensemble, we can directly obtain a phase-space density over such states. The Helmholtz free energy of this system is given by
\begin{equation}
    F(N,V,T,\Lambda)=-k_{B}T \ln Z(N,V,T,\Lambda) \, .
\end{equation}
Helmholtz free energy differences are important because they approximate Gibbs free energy differences in aqueous solutions.

However, when solely calculating the free energy difference $\Delta F$, the explicit calculation of the partition function can be avoided. The integral over the derivative of the free energy gives the free energy difference from $A$ to $B$,
\begin{align}
    \Delta F_{A \rightarrow B}
    &= \int_0^1 \frac{\partial F(\Lambda)}{\partial\Lambda} \mathrm d\Lambda \nonumber\\
    &= -k_BT \int_0^1 \frac{1}{Z}  \frac{\partial Z}{\partial \Lambda} \mathrm d\Lambda\nonumber\\
    &= \int_0^1 \frac{1}{Z}  \sum_{\mathcal S} \frac{\partial E_{\mathcal S}(\Lambda)}{\partial \Lambda} e^{-E_{\mathcal S}(\Lambda)/k_B T} \mathrm d\Lambda \nonumber\\
    &= \int_0^1 \left\langle\frac{\partial E(\Lambda)}{\partial\Lambda}\right\rangle_{\Lambda} \mathrm d\Lambda \, ,
\end{align}
where the angular brackets denote an ensemble-averaged derivative of the microstate energies with respect to the coupling parameter $\Lambda$ over the thermal distribution of the states $\mathcal S$ for a particular coupling parameter $\Lambda$. As noted, the distribution is provided by equilibration via the NVT canonical ensemble, so the partition function need not be further computed. Thus, under this scheme, rather than computing the derivative of the free energy directly, only the expectation value of the derivative of the microstate energy needs to be evaluated~\citep{frenkel_understanding_1996}.

When choosing $E_{\mathcal S}(\Lambda)$ as a linear function between the microstate energies of system $A$ and system $B$,
\begin{equation}
    E_{\mathcal S}(\Lambda) = E_{\mathcal S}(A) + \Lambda(E_{\mathcal S}(B) - E_{\mathcal S}(A)),
\end{equation}
we obtain
\begin{align}
    \Delta F_{A\to B} 
    &= \int_0^1 \braket{E(B) - E(A)}_{\Lambda} \mathrm d\Lambda.
\end{align}
In practice, the integral is discretized by calculating the expectation values for a discrete number of values of $\Lambda$. Then, to determine the difference in free energy from $A$ to $B$, the expectation values are summed over all values of $\Lambda$, yielding the following approximate free energy difference:
\begin{equation} \label{eq:free_energy_difference_integral}
    \Delta \widetilde F := \frac{1}{N_\Lambda} \sum_\Lambda \braket{E(B)-E(A)}_\Lambda \, .
\end{equation}
Note that while we choose a linear path here for simplicity, selecting a non-linear path may be advantageous because the instantaneous difference in the block encoding constants for nuclear Hamiltonians $H_A$ and $H_B$ that are used to compute the internal energies $\braket{E_A}$ and $\braket{E_B}$ may be chosen to be far smaller for such a thermodynamic integration rather than the differences, which can lead to smaller costs for the overall calculation as discussed in Ref.~\cite{simon2024amplified}.

Under the Liouvillian formalism, the thermal distribution is obtained by evolving the phase-space density with the NVT canonical Liouvillian for some sufficient equilibration time $t_{\rm eq}$, while thermal averages are obtained by taking the expectation value of the relevant operator over the thermal distribution for each point in the phase-space. In our case, where a quantum state on registers represents the thermal distribution, the computational basis corresponds to a point in the discretized phase-space, and the square of the corresponding amplitude would be its probability. The microstate energies over the phase-space can then be encoded into a diagonal matrix to form the nuclear Hamiltonian, with each entry corresponding to the microstate energy at the relevant point in phase-space.

\subsection{Block encodings and quantum linear algebra}

We assume access to operators such as Hamiltonians and forces in the form of block encodings~\citep{chakraborty2019power, gilyen2019quantum} in this paper. A block encoding is an encoding scheme that embeds a general matrix into the top left corner of a unitary matrix, such that
\begin{equation}
    \lvert A-\alpha (\bra{0^a}\otimes \mathbb{I}_n)U_A(\ket{0^a}\otimes \mathbb{I}_n)\rvert\leq \varepsilon \, .
\end{equation}
This embedding of $A$ into a unitary $U_A$ is then known as a $(\alpha, a, \varepsilon)$-block-encoding of $A$. In the main text, we refer to $\alpha$ as the scaling factor of the block encoding when mentioned in isolation.

To manipulate such block encodings, we can use a framework known as quantum singular value transformation (QSVT) \citep{gilyen2019quantum}, which can be used to apply a polynomial transformation to the singular values of the matrix $A$ such that
\begin{equation}
    \text{QSVT}(U_A) = \begin{bmatrix}
        \mathrm{Poly}^{\rm (SV)}(A) & * \\
        *                       & *
    \end{bmatrix} \, .
\end{equation}
In the case of Hermitian matrices, the transformations are applied to the eigenvalues of the matrix. We refer the reader to \cref{appQLA} for a detailed account of QSVT as a quantum linear algebra toolbox to perform ground state preparation~\citep{lin2020nearoptimal} and Hamiltonian simulation~\citep{low2017optimal, low2019hamiltonian, gilyen2019quantum}, which are subroutines used in our main results. The reader will also find a brief review on quantum signal processing (QSP) without angle finding~\citep{alase2025quantum}. Furthermore, we deferred several new results to the Appendix that were derived as part of our proofs, such as algorithms for robust ground state preparation and robust Hamiltonian simulation via QSP without angle finding.

\section{Improved quantum algorithm for Liouvillian simulation}
\label{secLS}

\begin{algorithm*}
\caption{Quantum algorithm for Liouvillian simulation}
\label{algoLS}
\Indm
\KwIn{Number of nuclei $N$, Evolution time $t$, Precision $\varepsilon$, Block encoding of the electronic Hamiltonian $H_{\tel}^{\rm ctrl}$~\citep{su2021faulttolerant}, Block encodings of the nuclear-position-controlled electronic force operator $\partial_{x_{n,j}} H_{\tel}^{\rm ctrl}$ for $n \in [N]$ and $j \in \{1, 2, 3\}$~\citep{obrien2022efficient}, Block encoding of the discretized momentum derivative operator $D_{p_{n, j}'}$ for $n \in [N]$ and $j \in \{1, 2, 3\}$~\citep{simon2024improved}, Block encoding of the classical Liouvillian $L_{\rm class}$~\citep{simon2024improved}, Approximate ground state preparation oracle $U_I$, Quantum state $\ket{\rho_0}$ encoding initial phase-space density of nuclei and heat bath.}
\KwOut{$\varepsilon$-close approximation of $\ket{\rho_t} = e^{-iLt}\ket{\rho_0}$ in $\ell_2$ distance}
\Indp 
\DontPrintSemicolon
\For{every query to the block encoding $L$ in the QSVT-based Hamiltonian simulation algorithm~\citep{gilyen2019quantum}}{
    Prepare state $\frac{1}{\sqrt{3N}}\sum_{n=1}^N\ket{n} \otimes \sum_{j=1}^3\ket{j}$ and additional electronic register.\;
    Apply approximate ground state preparation oracle $U_I$ to nuclear and electronic registers $U_I\ket{\bar{\vec x}}\ket{0} \to \ket{\bar{\vec x}}\ket{\phi_{\rm init}(\vec x)}$.\;
    Apply ground state preparation algorithm with nuclear-position-controlled electronic Hamiltonian $H_{\tel}^{\rm ctrl} = \sum_{\vec x} \ketbra{\bar{\vec x}}{\bar{\vec x}}\otimes H_{\tel}(\vec x)$.\;
    Controlled on $\ket{n}$ and $\ket{j}$, apply nuclear-position-controlled electronic force operator $\partial_{x_{n,j}} H_{\tel}^{\rm ctrl} = \sum_{\vec x} \ketbra{\bar{\vec x}}{\bar{\vec x}}\otimes \frac{\partial H_{\tel}(\vec x)}{\partial \vec x_{n,j}}$.\;
    Uncompute the (approximate) electronic ground state to obtain block-encoding of $\Del$ controlled on $\ket{n}$ and $\ket{j}$.\;
    Controlled on $\ket{n}$ and $\ket{j}$, tensor product $\Del$ with discretized momentum derivate operator $D_{p_{n,j}'}$.\;
    Use LCU to sum over registers $\ket{n}$ and $\ket{j}$ to obtain block encoding of electronic Liouvillian $L_{\tel}$.\;
    Use LCU to sum over classical Liouvillian $L_{\rm class}$ and electronic Liouvillian $L_{\tel}$ to obtain block encoding of Liouvillian $L$.\;
}
Apply $e^{-iLt}$ operator obtained from QSVT-based Hamiltonian simulation algorithm to $\ket{\rho_0}$.\;
\Return $\varepsilon$-close approximation of $\ket{\rho_t} = e^{-iLt}\ket{\rho_0}$.\;
\end{algorithm*}

We provide an algorithm that utilizes evolution under the Liouvillian directly by implementing the block encoding of the electronic Liouvillian from Ref.~\citep{simon2024improved}. We then encode the ground state energy derivative on an electronic register and exploit a form of phase kickback~\citep{cleve1998quantum} to report the results to the nuclear registers. This allows us to add together the classical and electronic Liouvillians, which we then exponentiate via QSVT (in the same way one would perform Hamiltonian simulation). The benefits of our approach are the following:
\begin{enumerate}
    \item We achieve super-polynomial asymptotic reductions of the runtime in terms of precision $\varepsilon$ from $\mathcal{O} (\varepsilon^{-o(1)})$ in the simulation relative to the quantum algorithm in Ref.~\citep{simon2024improved} to $\mathcal{O} (\poly\log \frac{1}{\varepsilon})$.
    \item We decrease the dependency on the number of particles by order of $\mathcal{\widetilde O}(N^{1+o(1)}\widetilde N^{1+o(1)}N_{\mathrm{tot}}^{o(1)})$ and time by an order of $\mathcal{\widetilde O}(t^{o(1)})$ from the removal of Trotterization and from the direct implementation of energy derivatives when compared to Ref.~\citep{simon2024improved}.
\end{enumerate}
These advantages are informally summarized in the following theorem, while the formal statement (\cref{thmLiouSim}) and proof of this theorem are detailed in \cref{appLS}.

\begin{theorem}[Born-Oppenheimer Liouvillian simulation of coupled quantum-classical dynamics; informal]
    Given an initial state $\ket{\rho_0}$ that encodes the initial discretized phase-space density and a discretized Liouvillian $L$ under the NVT canonical ensemble, we can construct an explicit quantum algorithm that outputs a quantum state that is $\varepsilon$-close in $\ell_2$ distance to $\ket{\rho_t} = e^{-iLt}\ket{\rho_0}$ using
    \begin{equation*}
        \widetilde{\mathcal{O}}\left(\frac{N \widetilde N N_{\mathrm{tot}}^3 t}{\delta\gamma}\log^3\frac{1}{\varepsilon}\right)
    \end{equation*}
    Toffoli gates,
    \begin{equation*}
        \widetilde{\mathcal{O}}\left(\frac{1}{\delta} \left(N N_{\mathrm{tot}} t + \log\frac{1}{\gamma\varepsilon}\right)\right)
    \end{equation*}
    queries to the approximate ground state preparation oracle $U_I$, and $\widetilde{\mathcal{O}}(N_{\mathrm{tot}} + \log\frac{t}{\delta\gamma\varepsilon})$ qubits, where $N$ is the number of nuclei, $\widetilde N$ is the number of electrons, $N_{\mathrm{tot}} = N + \widetilde N$ is the total number of particles, $\delta$ is the lower bound for the overlap between the approximate and true ground state prepared by $U_I$ and $\gamma$ is a lower bound on the spectral gap of the electronic Hamiltonian over all nuclear phase-space grid points.
    \label{thmLiouSimInfml}
\end{theorem}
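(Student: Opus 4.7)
The plan is to establish correctness and complexity for each stage of Algorithm 1 and then compose them through QSVT-based Hamiltonian simulation. The fundamental building block is a controlled block encoding of the electronic ground-state energy derivative $\Del := \partial E_{\tel}/\partial x_{n,j}$, diagonal in the nuclear position register. By the Hellmann-Feynman theorem, $\partial E_{\tel}(\vec x)/\partial x_{n,j} = \bra{\psi_0(\vec x)} \partial_{x_{n,j}} H_{\tel}(\vec x) \ket{\psi_0(\vec x)}$, so I construct this block encoding as a sandwich: (i) prepare $\ket{\phi_{\rm init}(\vec x)}$ on the electronic register using $U_I$ controlled on the nuclear register, (ii) refine to $\ket{\psi_0(\vec x)}$ via the robust ground-state preparation of Lin-Tong~\citep{lin2020nearoptimal} applied to the nuclear-position-controlled block encoding $H_{\tel}^{\rm ctrl}$ from~\citep{su2021faulttolerant}, (iii) apply the nuclear-position-controlled force block encoding $\partial_{x_{n,j}} H_{\tel}^{\rm ctrl}$ from~\citep{obrien2022efficient}, and (iv) uncompute the ground state. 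This yields a block encoding of $\Del$ whose scaling factor is inherited from the force block encoding and whose precision inherits only polylogarithmically from the ground-state preparation threshold.

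Next, for each fixed $(n,j)$, I take a tensor product of the block encoding of $\Del$ with the discretized momentum derivative operator $D_{p_{n,j}'}$ of~\citep{simon2024improved}, and then use LCU~\citep{childs2012hamiltonian} over the uniform superposition on $\ket{n}\ket{j}$ to assemble a block encoding of the electronic Liouvillian $L_{\tel} = \sum_{n,j} \Del \otimes D_{p_{n,j}'}$. A second LCU combines $L_{\tel}$ with the classical Liouvillian block encoding of~\citep{simon2024improved} to produce a block encoding of the full Liouvillian $L$ with scaling factor $\alpha_L$ that I track explicitly as a polynomial in $N$, $\tilde N$, and $N_{\mathrm{tot}}$. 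QSVT-based Hamiltonian simulation~\citep{low2017optimal,low2019hamiltonian,gilyen2019quantum} applied to this block encoding then yields an $\varepsilon$-close approximation of $e^{-iLt}\ket{\rho_0}$ with $\tilde{\mathcal O}(\alpha_L t + \log(1/\varepsilon))$ queries to the $L$ block encoding.

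Cost accounting then produces the three claimed figures. Each invocation of the $L$ block encoding costs one controlled ground-state preparation (contributing $\tilde{\mathcal O}(1/(\delta\gamma))$ queries to $U_I$ and to $H_{\tel}^{\rm ctrl}$) and one application of a force block encoding; the electronic-Hamiltonian and force constructions of~\citep{su2021faulttolerant,obrien2022efficient} implement a single such query in $\tilde{\mathcal O}(N \tilde N N_{\mathrm{tot}}^3)$ Toffoli gates, and multiplying by the outer Hamiltonian-simulation query count delivers the overall Toffoli bound. The query count to $U_I$ reduces to $\tilde{\mathcal O}((1/\delta)(N N_{\mathrm{tot}} t + \log(1/(\gamma\varepsilon))))$ because $U_I$ is consumed only at the initial state preparation inside each Lin-Tong refinement, not at every QSVT phase. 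The qubit count is the sum of the nuclear and electronic registers plus LCU, block encoding, and QSVT ancillas, all of which are logarithmic in the small parameters $1/\varepsilon$, $1/\delta$, $1/\gamma$ and $t$.

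The principal obstacle is a careful end-to-end precision analysis: the approximate ground-state error, arising both from finite $\delta$ and from finite Lin-Tong tolerance, must propagate through the force operator, the tensor product with $D_{p_{n,j}'}$, two LCUs, and finally through QSVT without incurring more than polylogarithmic overhead in $1/\varepsilon$. I therefore have to set each intermediate tolerance as a sufficiently small polynomial of $\varepsilon/(\alpha_L t)$, which is what forces the $\log^3(1/\varepsilon)$ factor in the Toffoli count. A secondary subtlety is ensuring that the Hellmann-Feynman sandwich remains a valid block encoding when the ground state is only approximate, which is handled by the robust variant of Lin-Tong preparation deferred to \cref{appLS}, and ensuring that $\Del$ is block-encoded diagonally in the nuclear register so that the subsequent tensor product with $D_{p_{n,j}'}$ acts on the correct subsystems before the LCU combination.
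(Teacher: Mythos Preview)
Your proposal is correct and follows essentially the same route as the paper: build $\Del$ via the Hellmann--Feynman sandwich (prepare ground state with Lin--Tong, apply the force block encoding, uncompute), tensor with $D_{p'_{n,j}}$, LCU over $(n,j)$ to form $L_{\tel}$, LCU with the classical Liouvillian, and then run QSVT Hamiltonian simulation with the precision budget backtracked from $\varepsilon/(\alpha_L t)$. The only small deviation is that the paper's LCU over $(n,j)$ is \emph{$Z_n$-weighted} (because the force block encodings carry scaling factors $\tau_n \propto Z_n$), not uniform as you write, and the per-invocation $U_I$ cost is $\mathcal{O}(1/\delta)$ rather than $\tilde{\mathcal O}(1/(\delta\gamma))$ (the $1/\gamma$ lives in the $H_{\tel}^{\rm ctrl}$ query count); neither affects the final asymptotics you state.
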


Let us now detail the steps in the construction of the algorithm, which is given above as \cref{algoLS}. What is required is to prepare the block encoding of the Liouvillian, which can be obtained by computing partial derivatives of the extended nuclear Hamiltonian $H_{\rm ext}$ (inclusive of nuclear kinetic, nuclear potential, ground state energy terms, and bath terms) of the system on both position and momentum as shown in \cref{eq:Liouvillian}. Once we have the block encoding of the Liouvillian, given that it remains self-adjoint, we can utilize a Hamiltonian simulation algorithm~\citep{low2017optimal,low2019hamiltonian} to obtain its corresponding evolution operator and apply it to evolve a prepared initial state $\ket{\rho_0}$ encoding the system's initial phase-space density.

We start with the block encoding of the electronic forces, which is found to be the derivative of the ground state energy function over nuclear positions:
\begin{equation}
    \Del := \frac{\partial E_{\tel}}{\partial x_{n,j}}( \vec x)\ketbra{\bar {\vec x}}{\bar {\vec x}}.
\end{equation}
We make use of the Hellmann--Feynman theorem~\citep{hellmann1937einfuhrung,feynman1939forces} to implement this block encoding.  This theorem states that the derivative of the total energy with respect to a parameter can be found by computing an expected value of the derivative of the Hamiltonian with respect to that same parameter, such that
\begin{equation}
    \frac{\partial E(x)}{\partial x} = \Braket{\psi(x)|\frac{\partial H(x)}{\partial x}|\psi(x)}\, ,
\end{equation}
where $H(x)$ is a Hamiltonian dependent on the continuous parameter $x$, $\ket{\psi(x)}$ is an eigenstate of $H(x)$ implicitly dependent on $x$, and $E(x)$ is the corresponding eigenenergy/eigenvalue.

When combined with the ground state preparation algorithm by \citet{lin2020nearoptimal}, one can utilize the Hellmann--Feynman theorem to encode electronic forces directly into the amplitude, eliminating the need for central finite difference schemes to approximate these values. Usage of the Hellmann--Feynman theorem was previously explored in Ref.~\citep{obrien2022efficient} under the context of estimating the molecular forces themselves, using a quantum phase estimation~\citep{kitaev1995quantum} of the Szegedy walk unitary~\citep{szegedy2004quantum} to implement the overlap estimation algorithm~\citep{knill2007optimal}. These forces can then be integrated into classical MD methods by replacing subsystems that compute quantum mechanical effects. In this paper, we develop a diagonal block encoding method that encodes the molecular forces for each distinct nuclear position configuration. Rather than leveraging phase estimation to extract the forces, we utilize the Hellmann--Feynman theorem directly as a series of matrix multiplication operations, block-encoding the forces by first computing the ground state, applying the force operator, and then directly uncomputing the ground state. This block-encodes the scalar value of the forces in the top left element.

Note that in our case, even though we do not use the exact eigenstate for the Hellmann--Feynman theorem, the errors that stem from the inexact ground state, even after the ground state preparation algorithm, can be accounted for in the error analysis of the block encoding, while discretization errors can be suppressed by adding more qubits. Further, as our choice of bases from the electronic system, we also need not consider the effects of Pulay stress~\citep{pulay1969ab,francis1990finite}, given that they are absent under nuclear-position-independent bases such as plane waves. A further discussion of basis sets involving Pulay forces under the Hellmann--Feynman theorem can be found in Ref.~\citep{obrien2022efficient}.

When implemented in superposition, by applying the controlled operators above, and subsequently projecting the electronic registers into the zero state, we can encode the force values onto the nuclear registers for all different nuclear positions by phase kickback~\citep{cleve1998quantum}. Therefore, we can implement
\begin{equation}
    \Del = \Braket{\psi_0(\vec x)|\frac{\partial H_{\tel}(\vec x)}{\partial x_{n,j}}|\psi_0(\vec x)} \ketbra{\bar {\vec x}}{\bar {\vec x}},
    \label{eqDelInfml}
\end{equation}
for the discretized grid space of position on the quantum computer. This approach results in the construction of the energy derivative operator with a logarithmic dependency on precision as shown in \cref{lemDel}. Combining the results with an implementation of the discretized momentum derivative operator $D_{p_{n,j}'}$ shown in Ref.~\citep{simon2024improved}, this provides a block encoding of the electronic Liouvillian via linear combination of unitaries~\citep{childs2012hamiltonian}:
\begin{equation}
    \widetilde L_{\tel} = i \sum_{n=1}^N\sum_{j=1}^3  \Del
    \otimes D_{p_{n,j}'}.
    \label{eqElecLiouInfml}
\end{equation}
The details of this block encoding are shown in \cref{propElecLiou}. This further provides a logarithmic dependency on the error of the total Liouvillian simulation algorithm, as the Liouvillian is directly and analytically block-encoded.

In addition to decreasing the runtime dependency on precision, by using the Hellmann--Feynman theorem to block-encode the force operator instead of a central finite difference approximation~\citep{simon2024improved}, we reduce the scaling factor of the block encoding of the electronic Liouvillian by a factor of $\mathcal{O}(N_{\mathrm{tot}})$~\citep{obrien2022efficient}. This is due to the reduced number of terms to implement in the Hellmann--Feynman approach. When combined with the classical Liouvillian, this produces a reduction in the total scaling of $\mathcal{O}(\widetilde N)$ in the full Liouvillian, with the implementation cost and details shown in the proposition below:
\begin{proposition}[Block encoding of the Liouvillian; informal]
    We can construct an explicit $(\alpha_L, a_L, \varepsilon_L)$-block-encoding of the Liouvillian $L$ where
    \begin{align*}
        \alpha_L \in \mathcal O & \left(N N_{\mathrm{tot}}\right) \\
        a_L \in \mathcal{O}     & \left(\widetilde{N} + \log\frac{N_{\mathrm{tot}}}{\varepsilon_L}\right).
    \end{align*}
    This block encoding can be prepared with $\mathcal{O}(\frac{1}{\delta})$ queries to $U_I$, and
    \begin{align*}
        \widetilde{\mathcal O} & \left(\frac{\widetilde N N_{\mathrm{tot}}^2}{\delta\gamma}\log^2 \frac{1}{\varepsilon_L}\right)
    \end{align*}
    Toffoli gates.
    \label{propLiouInfml}
\end{proposition}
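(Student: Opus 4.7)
The plan is to build the block encoding of $L = L_{\rm cl} + L_{\tel}$ by assembling the pieces laid out in \cref{algoLS} and \cref{figMain}. First I would take the already-constructed block encoding of $L_{\rm cl}$ from \citet{simon2024improved} as a black box, and focus effort on $L_{\tel}$. Given the diagonal block encoding of $\Del$ (from \cref{eqDelInfml}, to be proved as \cref{lemDel}) and the block encoding of the discretized momentum derivative $D_{p_{n,j}'}$ from \citet{simon2024improved}, I would take a tensor product of block encodings and then run the standard LCU construction of \citet{childs2012hamiltonian}, controlled on a uniform superposition over the index registers $\ket{n}\ket{j}$, to realize $\tilde L_{\tel}$ as in \cref{eqElecLiouInfml}. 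A second LCU with $L_{\rm cl}$ then yields the block encoding of $L$. Crucially, because the LCU selects the force operator $\partial_{x_{n,j}} H_\tel^{\rm ctrl}$ in superposition, only one ground-state preparation and one uncomputation are performed per query to $L$, rather than one per $(n,j)$; this is what drives the per-particle savings relative to \citet{simon2024improved}.

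The central step is the Hellmann--Feynman block encoding of $\Del$. On the register state $\ket{\bar{\vec x}}\ket{0}$ I would apply $U_I$ to produce $\ket{\bar{\vec x}}\ket{\phi_{\rm init}(\vec x)}$, then invoke the (robust) ground-state preparation of \citet{lin2020nearoptimal} using the nuclear-position-controlled electronic Hamiltonian $H_\tel^{\rm ctrl}$ from \citet{su2021faulttolerant}, apply the nuclear-position-controlled force operator $\partial_{x_{n,j}} H_\tel^{\rm ctrl}$ from \citet{obrien2022efficient}, and uncompute the ground-state preparation. Projecting the electronic ancilla onto $\ket{0}$ realises $\braket{\psi_0(\vec x)|\partial_{x_{n,j}} H_\tel(\vec x)|\psi_0(\vec x)}$ in the $\ket{\bar{\vec x}}\ket{\bar{\vec x}}$ diagonal via phase kickback, which by Hellmann--Feynman equals $\partial E_\tel/\partial x_{n,j}$. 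The unit-norm scaling of each $\Del$ is $\mathcal O(N_{\rm tot})$ (inherited from the force-operator block encoding of \citet{obrien2022efficient}), and $D_{p_{n,j}'}$ contributes $\mathcal O(1)$, so summing $3N$ tensor-product terms via LCU gives $\alpha_{L_{\tel}} \in \mathcal O(N N_{\rm tot})$; the analogous scaling of $L_{\rm cl}$ is compatible, and the outer LCU preserves the bound.

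Counting resources then yields the stated complexity. The ancillas decompose into the $\tilde N$ electronic register qubits needed inside the ground-state preparation, plus $\mathcal O(\log(N_{\rm tot}/\varepsilon_L))$ qubits for the LCU index and for precision. Each query to the block encoding of $L$ invokes one ground-state preparation, which costs $\mathcal O(1/\delta)$ calls to $U_I$ and $\tilde{\mathcal O}\bigl(\tfrac{1}{\gamma}\log\tfrac{1}{\varepsilon_L}\bigr)$ calls to $H_\tel^{\rm ctrl}$; multiplying by the $\tilde{\mathcal O}(\tilde N N_{\rm tot}^2)$ Toffoli cost of the block encoding of $H_\tel^{\rm ctrl}$ from \citet{su2021faulttolerant} gives the $\tilde{\mathcal O}\bigl(\tilde N N_{\rm tot}^2 / (\delta\gamma) \cdot \log^2(1/\varepsilon_L)\bigr)$ Toffoli bound. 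The force operator and LCU bookkeeping contribute only lower-order terms.

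The main obstacle will be propagating the block-encoding errors coherently so that the final $\varepsilon_L$ is controlled uniformly in the nuclear configuration $\vec{\bar x}$. The ground-state preparation yields an $\ell_2$-approximation of $\ket{\psi_0(\vec x)}$ at each grid point, and this approximation must remain uniform across all $\vec{\bar x}$ in order for the diagonal block encoding of $\Del$ to be $\varepsilon_L$-close in operator norm; this is precisely what the robust ground-state preparation result announced in the preliminaries (\cref{appQLA}) must deliver, and I would need to verify that it tolerates the $\vec{\bar x}$-dependence of $H_\tel(\vec x)$. A secondary subtlety is that tensor-product and LCU compositions multiply errors by the scaling factors, so the internal precision budget fed to \cref{lemDel} must be taken as $\varepsilon_L / \mathcal O(N N_{\rm tot})$ from the outset, which is precisely what produces the extra $\log$ factor in the Toffoli count.
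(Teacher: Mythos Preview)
Your proposal is correct and follows essentially the same route as the paper: build $\Del$ via Hellmann--Feynman with shared ground-state preparation, tensor with $D_{p'_{n,j}}$, LCU over $(n,j)$ to get $L_{\tel}$, then LCU with the classical Liouvillian of \citet{simon2024improved}. One bookkeeping point to tighten: the number of calls to $H_\tel^{\rm ctrl}$ inside ground-state preparation is $\mathcal O(\lambda/(\delta\gamma)\log(1/\varepsilon))$ with $\lambda\in\tilde{\mathcal O}(\tilde N N_{\rm tot})$ the electronic-Hamiltonian scaling factor, while the per-call Toffoli cost of $H_\tel^{\rm ctrl}$ is only $\tilde{\mathcal O}(N_{\rm tot})$---you have these two factors swapped, which happens to give the same product but misattributes where the $\tilde N N_{\rm tot}$ and the $1/\delta$ enter.
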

The formal statement (\cref{propLiou}) and proof of this proposition can be found as part of the proof of \cref{thmLiouSimInfml} in \cref{appLS}. In the context of the Liouvillian simulation algorithm, this results in a reduction in runtime by $\mathcal{O}(\widetilde N)$ in the Hamiltonian simulation step, due to the reduced scaling factor of central finite difference methods.

In addition to the runtime improvements resulting from the direct implementation of the Liouvillian and the use of the Hellmann--Feynman theorem, further speedups can be achieved. Namely, all sublinear dependencies are removed solely by using direct evolution as opposed to Trotterization. Furthermore, by directly implementing the electronic Liouvillian in superposition via LCU~\citep{childs2012hamiltonian} instead of taking the product of multiple commutating controlled Hamiltonian simulations, we can effectively reduce the number of queries to the ground state preparation algorithm. This results in a further reduction of the runtime by a factor of $\mathcal{O}(N)$.

Lastly, while our results are primarily focused on the NVT canonical ensemble, these improvements can be easily translated to the NVE microcanonical ensemble by replacing the block encoding of the classical Liouvillian with its NVE counterpart in Ref.~\citep{simon2024improved}. These results are shown in \cref{appNVE}.

\section{Quantum algorithm for alchemical free energy calculation}
\label{secTI}

\begin{algorithm*}
\caption{Quantum algorithm for thermodynamic integration}
\label{algoTI}
\Indm
\KwIn{Number of interpolation points $N_\Lambda$, Equilibration time $t_{\rm eq}$, Precision $\varepsilon$, Failure rate $\xi$, Block encoding of Liouvillian of systems $A/B$ over nuclear phase-space $U_{L_A}/U_{L_B}$ and scaling factors $\alpha_{L_A}/\alpha_{L_B}$ from \cref{propLiouInfml}, Block encoding of the difference of the nuclear Hamiltonian between systems $A$ and $B$ over nuclear phase-space $U_{\Delta H}$ and scaling factor $\alpha_{\Delta}$, Initial nuclear phase-space density $\ket{\rho_0} = \sum_{\vec x, \vec p', s, p_s} c_{\vec x, \vec p', s, p_s} \ket{\bar{\vec x}}\ket{\bar{\vec p}'}\ket{\bar s}\ket{\bar p_s}$}
\KwOut{Helmholtz free energy difference $\Delta F_{A\to B}$ within $\varepsilon$ additive error with success rate $1-\xi$}
\Indp
\DontPrintSemicolon
Prepare uniform superposition over $N_\Lambda$ such that $\frac{1}{\sqrt{N_\Lambda}}\sum_{\bar \Lambda=0}^{N_\Lambda-1} \ket{\Lambda}$ where $\bar \Lambda = N_{\Lambda}\cdot \Lambda$.\;
Use quantum arithmetic circuits~\citep{vedral1996quantum} to compute a fixed point representation of $\alpha_{\Lambda} = \alpha_{L_A} + \Lambda (\alpha_{L_B}-\alpha_{L_A})$ such that we have $\frac{1}{\sqrt{N_\Lambda}}\sum_{\bar \Lambda=0}^{N_\Lambda-1} \ket{\Lambda}\ket{\alpha_\Lambda} $.\;
\uIf{QSVT-based Hamiltonian simulation~\citep{gilyen2019quantum}}{
    Controlled on $\ket{\alpha_\Lambda}$, compute, on a separate register, the fixed-point representation of phase angles $\{\theta_j^{\Lambda}\}_{j\in [\mathcal{D}]}$ for polynomial approximation of function $f_{\Lambda}(x) =\exp(-i\alpha_\Lambda x t_{\rm eq})$ for degree $\mathcal{D} \in\mathcal O (\max(\alpha_{L_A}, \alpha_{L_B})t_{\mathrm{eq}} + \log\frac{1}{\varepsilon})$.\;
}
\uElseIf{QSP-without-angle-finding-based Hamiltonian simulation~\citep{alase2025quantum}}
{
    Controlled on $\ket{\alpha_\Lambda}$, using quantum arithmetic circuits to compute $\exp(-it\alpha_\Lambda \cos \frac{2\pi k}{4\mathcal{D}})$ for $k \in [4\mathcal{D}]$ for degree $\mathcal D \in\mathcal O (\max(\alpha_{L_A}, \alpha_{L_B})t_{\mathrm{eq}} + \log\frac{1}{\varepsilon})$.\;
    Use controlled rotations to create a diagonal block encoding of the values $\exp(-it\alpha_\Lambda \cos \frac{2\pi k}{4\mathcal{D}})$ for $\Lambda \in [N_\Lambda]$, $k \in [4 \mathcal D]$.\;
}
\For{every query to the block encoding $L$ in the Hamiltonian simulation algorithm}{
    Use quantum arithmetic circuits to compute a fixed point representation of $\frac{(1-\Lambda)\alpha_{L_A}}{\alpha_{L_\Lambda}}$ such that we have $\frac{1}{\sqrt{N_\Lambda}}\sum_{\bar \Lambda=0}^{N_\Lambda-1} \ket{\Lambda}\ket{\alpha_\Lambda}\Ket{\frac{(1-\Lambda)\alpha_{L_A}}{\alpha_{L_\Lambda}}}$ \; 
    Controlled on $\Ket{\frac{(1-\Lambda)\alpha_{L_A}}{\alpha_{L_\Lambda}}}$, take weighted LCU to prepare block encoding of $L_{\Lambda} = L_A + \Lambda(L_B-L_A)$.\;
    Uncompute $\Ket{\frac{(1-\Lambda)\alpha_{L_A}}{\alpha_{L_\Lambda}}}$.
}
Controlled on $\ket{\Lambda}$, obtain $e^{-iL_\Lambda t_{\rm eq}}$ from Hamiltonian simulation algorithm such that we have $\sum_{\bar \Lambda=0}^{N_\Lambda-1}\ketbra{\Lambda}{\Lambda}\otimes e^{-iL_{\Lambda}t_{\rm eq}}$.\;
Apply $\sum_{\bar \Lambda=0}^{N_\Lambda-1}\ketbra{\Lambda}{\Lambda}\otimes e^{-iL_{\Lambda}t_{\rm eq}}$ to $\frac{1}{\sqrt{N_\Lambda}}\sum_{\bar \Lambda=0}^{N_\Lambda-1} \ket{\Lambda} \otimes \ket{\rho_0}$.\;
Duplicate register $\ket{\bar s}$ using CNOTs to retain information of the bath register.\;
Apply the Hadamard test~\citep{cleve1998quantum} to the block encoding of $\Delta H$ on the registers $\ket{\bar{\vec x}}\ket{\bar{\vec p}'}\ket{\bar s}$ to obtain amplitude $\sqrt{\frac{1}{2} (\frac{\Delta \widetilde F}{\alpha_\Delta}+1)}$ on $\ket{0}$ of the measurement qubit.\;
Use amplitude estimation~\citep{brassard2002quantum} to obtain $\frac{1}{2} (\frac{\Delta \widetilde F}{\alpha_\Delta}+1)$ to $\frac{\varepsilon}{2\alpha_{\Delta}}$ precision with $1-\xi$ success rate.\;
\Return $\varepsilon$-close estimate of Helmholtz free energy difference $\Delta F$
\end{algorithm*}

The results we developed in the previous section give us access to a block encoding of the full Liouvillian as per \cref{propLiouInfml}. This allows us to implement algorithms that require access to weighted sums of Liouvillian terms. With simulations under the NVT ensemble, this enables the efficient preparation of thermal states that can be further used to compute thermal averages such as the free energy.

Prior work in Ref.~\citep{simon2024improved} computes the free energy by first estimating both the internal energy and Gibbs entropy separately, which are then subtracted. While the internal energy can be efficiently estimated by the Hadamard test with the nuclear Hamiltonian, the estimation of Gibbs entropy scales with the dimension $\eta$ of the density matrix $\ketbra{\rho_t}{\rho_t}$~\citep{gilyen2020distributional}. Given that $\ket{\rho_t}$ is prepared on a collection of registers $\ket{\bar x_{n,j}}$, $\ket{\bar p_{n,j}'}$, $\ket{\bar s}$, and $\ket{\bar p_s}$, the dimension $\eta$ corresponds to the dimension of the classical nuclear phase-space. If an individual register has dimension $g$, then $\eta \in \mathcal{O}(g^{6N+2})$ and is hence exponential to the number of atoms. However, this exponential scaling of entropy calculation may not be due to the algorithm itself, as the problem of computing absolute free energies is found to be NP-hard by reducing it to the problem of partition function computation, which is NP-complete~\citep{istrail2000statistical}.

Contrary to the computation of absolute free energies, relative free energies are known to be easier to compute, with various classical algorithms dedicated to this purpose~\citep{Kirkwood1935_StatistiscalMechanics, Zwanzig1955FreeEnPerturbation}. Aside from the reduced computational complexity, relative free energies are more relevant to applications because they directly determine observable quantities like binding affinities and reaction equilibria~\citep{mey2020best}. Hence, in our work, we compute relative free energies instead of absolute free energies. Borrowing techniques from classical computational chemistry literature and leveraging alchemical methods described in \cref{prelim:thermodynamic_integration} enables us to calculate the free energy differences between two closely related systems. With this strategy, we provide the following improvements to calculating free energy differences.
\begin{enumerate}
    \item Compared to Ref.~\citep{simon2024improved}, by implementing thermodynamic integration, we bypass the use of entropy estimation algorithms~\citep{gilyen2020distributional}, which scale with the total size of the nuclear phase-space $\eta$ used in the simulation (unless entropy estimation is evaluated over a coarsened grid) and therefore exponentially with the number of qubits $\Theta(N)$ that are used to represent the nuclear degrees of freedom. Our algorithm is only dependent on calculating the internal energy, which can be achieved efficiently via the Hadamard test and is not dependent on $\eta$.
    \item We achieve $\widetilde{\mathcal O} \left(\frac{1}{\varepsilon}\right)$ scaling for Helmholtz free energy estimation by using the Liouvillian simulation algorithm in the previous section, as well as a more efficient block-encoding of the nuclear Hamiltonian that depend only logarithmically on the overall precision---we achieve this by replacing quantum phase estimation~\citep{kitaev1995quantum} for ground state energy computation with phase kickback~\citep{cleve1998quantum} techniques.
\end{enumerate}
Below, we state our main result:
\begin{theorem}[Quantum complexity for thermodynamic integration; informal]
    Given an initial state $\ket{\rho_0}$ that encodes the initial discretized phase-space density and two discretized Liouvillians, $L_A$, and $L_B$, under the NVT canonical ensemble encoding the dynamics for systems $A$ and $B$, as well as an equilibration time $t_{\mathrm{eq}}$ for Liouvillian simulation, there exists a quantum algorithm that can compute the free energy difference between the two systems up to additive $\varepsilon$-precision with $1-\xi$ success probability using
    \begin{equation*}
        \widetilde{\mathcal{O}}\left(\frac{N \widetilde N N_{\mathrm{tot}}^5 t_{\mathrm{eq}}}{\delta\gamma\varepsilon}\log\frac{1}{\xi}\right)
    \end{equation*}
    Toffoli gates,
    \begin{equation*}
        \widetilde{\mathcal{O}}\left(\frac{N N_{\mathrm{tot}}^3 t_{\mathrm{eq}}}{\delta\varepsilon} \log^2\left(\frac{1}{\gamma}\right)\log\left(\frac{1}{\xi}\right)\right)
    \end{equation*}
    queries to the approximate ground state preparation oracle $U_I$, and $\widetilde{\mathcal{O}}(N_{\mathrm{tot}} + \log\frac{t_{\mathrm{eq}}}{\delta\gamma\varepsilon})$ qubits, where $N$ is the number of nuclei, $\widetilde N$ is the number of electrons, $N_{\mathrm{tot}} = N + \widetilde N$ is total number of particles, $\delta$ is the lower bound for the overlap on the approximate and true ground state prepared by $U_I$ and $\gamma$ is a lower bound on the spectral gap of the electronic Hamiltonian over all nuclear phase-space grid points.
    \label{thmThermIntInfml}
\end{theorem}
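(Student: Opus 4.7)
The plan is to follow \cref{algoTI} and combine the Liouvillian simulation result of \cref{thmLiouSimInfml} with a Hadamard test and amplitude estimation. The key observation is that $\Delta \tilde F$ from \cref{eq:free_energy_difference_integral} can be written as the uniform average over $\Lambda$ of $\braket{\rho_{\rm eq}(\Lambda)|\Delta H|\rho_{\rm eq}(\Lambda)}$, where $\ket{\rho_{\rm eq}(\Lambda)} = e^{-iL_\Lambda t_{\rm eq}}\ket{\rho_0}$ is the equilibrated phase-space state under the interpolated dynamics and $\Delta H = H_B - H_A$ is the nuclear-Hamiltonian difference. Preparing an ancilla register in the uniform superposition $\frac{1}{\sqrt{N_\Lambda}}\sum_\Lambda \ket{\Lambda}$ and applying the $\Lambda$-controlled evolution captures this ensemble-and-$\Lambda$ average in a single quantum state; a Hadamard test against a block encoding of $\Delta H$ then places $\Delta \tilde F / \alpha_\Delta$ into an amplitude accessible through amplitude estimation.

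The first construction step is a controlled block encoding of the interpolated Liouvillian $L_\Lambda = L_A + \Lambda(L_B - L_A)$. Using quantum arithmetic on the $\ket{\Lambda}$ register I would coherently compute the scaling factor $\alpha_\Lambda = \alpha_{L_A} + \Lambda(\alpha_{L_B}-\alpha_{L_A})$ and the mixing weight $(1-\Lambda)\alpha_{L_A}/\alpha_\Lambda$, and use these to drive a weighted LCU of $U_{L_A}$ and $U_{L_B}$ from \cref{propLiouInfml}. Since both $\alpha_{L_A}$ and $\alpha_{L_B}$ lie in $\mathcal{O}(NN_{\mathrm{tot}})$, so does $\alpha_\Lambda$ uniformly in $\Lambda$. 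The evolution $e^{-iL_\Lambda t_{\rm eq}}$ is then implemented by QSVT-based Hamiltonian simulation controlled on $\ket{\Lambda}$, with angles either precomputed as a function of $\alpha_\Lambda$ or generated on-the-fly via QSP-without-angle-finding, at cost $\tilde{\mathcal{O}}(\alpha_\Lambda t_{\rm eq} + \log(1/\varepsilon'))$ block-encoding queries at internal precision $\varepsilon'$.

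Given a block encoding of $\Delta H$ with scaling factor $\alpha_\Delta \in \mathcal{O}(N_{\mathrm{tot}}^2)$ constructed from kinetic, potential, and electronic ground-state contributions (the last obtained via phase kickback on the electronic register rather than phase estimation, which is what removes the polynomial precision dependence flagged in the introduction), the Hadamard test against $\Delta H$ applied to the evolved state places the amplitude $\sqrt{\tfrac{1}{2}(\Delta \tilde F / \alpha_\Delta + 1)}$ on $\ket{0}$ of the measurement qubit; the $\ket{\bar s}$ register is first duplicated by CNOTs so that the Nos\'e bath variable is preserved across both Hadamard branches. Amplitude estimation at target precision $\varepsilon/(2\alpha_\Delta)$ then extracts $\Delta \tilde F$ to additive accuracy $\varepsilon$ with success probability $1-\xi$ using $\mathcal{O}((\alpha_\Delta/\varepsilon)\log(1/\xi))$ invocations of the above pipeline. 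Multiplying the per-invocation cost from \cref{thmLiouSimInfml}, namely $\tilde{\mathcal{O}}(N\tilde N N_{\mathrm{tot}}^3 t_{\rm eq}/(\delta\gamma))$ Toffolis, by $\alpha_\Delta/\varepsilon$ and $\log(1/\xi)$ gives the stated Toffoli count. The $U_I$-query and qubit bounds follow analogously because $U_I$ appears only inside $L_\Lambda$: each Liouvillian simulation costs $\tilde{\mathcal{O}}(\delta^{-1}(NN_{\mathrm{tot}} t_{\rm eq} + \log(1/(\gamma\varepsilon))))$ oracle queries, and multiplying by $(\alpha_\Delta/\varepsilon)\log(1/\xi) = \tilde{\mathcal{O}}(N_{\mathrm{tot}}^2/\varepsilon \cdot \log(1/\xi))$ reproduces the advertised $\tilde{\mathcal{O}}(NN_{\mathrm{tot}}^3 t_{\rm eq}/(\delta\varepsilon)\log^2(1/\gamma)\log(1/\xi))$.

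The main obstacle will be precision budgeting. Three error sources compound: the Riemann-sum discretization of the $\Lambda$-integral into $N_\Lambda$ interpolation points, the simulation error of each $e^{-iL_\Lambda t_{\rm eq}}$, and the Hadamard-test amplitude error, all of which must be driven below $\varepsilon/(c\alpha_\Delta)$ for an appropriate constant so that the propagated error on $\Delta F$ remains within $\varepsilon$. I will need to fix $N_\Lambda$ large enough that the Riemann-sum error is $O(\varepsilon)$ while ensuring that the internal simulation precision $\varepsilon'$ enters only polylogarithmically. Verifying that the controlled arithmetic on $\ket{\Lambda}$ used to compute $\alpha_\Lambda$, the LCU mixing weight, and the QSP phase angles does not dominate the Toffoli count, and that the dependence on the scaling factor $\alpha_\Delta$ lands at $N_{\mathrm{tot}}^2$ rather than something larger, is the most delicate part of the analysis.
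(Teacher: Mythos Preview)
Your proposal is correct and follows essentially the same route as the paper: build the $\Lambda$-controlled Liouvillian via weighted LCU, simulate in superposition, apply the Hadamard test against a block encoding of $\Delta H$ with $\alpha_\Delta\in\mathcal{O}(N_{\mathrm{tot}}^2)$, and amplify via amplitude estimation with median boosting; the error budget is split among the Riemann-sum discretization, the simulation error, the $\Delta H$ block-encoding error, and the amplitude-estimation error, exactly as you outline. One small inaccuracy: $U_I$ does not appear only inside $L_\Lambda$ but also inside the ground-state-energy term $H_{\mathrm{gse}}$ of $\Delta H$ (via \cref{eqGSEInfml}); however, since the Liouvillian simulation carries the $t_{\mathrm{eq}}$ factor it dominates the $U_I$-query count and your final tally is unaffected. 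Also note that the paper's error analysis sets the simulation precision to $\varepsilon_L\in\mathcal{O}(\varepsilon^2/\alpha_\Delta^2)$ rather than $\varepsilon/\alpha_\Delta$, because the fidelity bound converts to an $\ell_2$ bound via a square root; since the simulation cost is polylogarithmic in precision this again leaves the asymptotics unchanged.
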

The formal statement (\cref{thmThermInt}) and proof of this proposition have been deferred to \cref{appThermoInt}.

\begin{figure*}
    \includegraphics[width=0.9\linewidth]{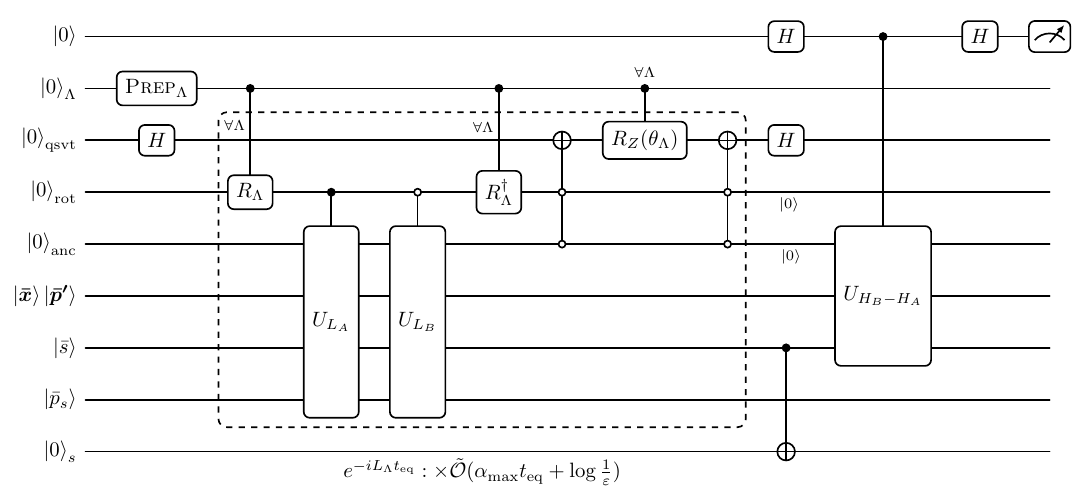}
    \caption[Circuit for obtaining the free energy difference.]{\emph{Circuit for obtaining the free energy difference.} The dashed block indicates the repeated linear combination of block encodings to produce a block encoding of $L_\Lambda$, which is then transformed to $e^{-iL_\Lambda t_{\mathrm{eq}}}$ via QSVT using $\Lambda$-dependent phase angles. Methods for this are discussed in the main text and circuits given in \cref{appLambdaLiouvillian}. Here we illustrate the use of the standard QSVT scheme in Ref.~\citep{gilyen2019quantum}, where we would load the phase angles from a separate quantum register and where an angle finding algorithm would be implemented. We ignore the LCU required to add $\cos$ and $\sin$ together, as well as amplitude amplification for simplicity. Note the CNOT gate that copied the contents of the bath register $\ket{\bar s}$ before it is traced out. This is due to the extended Hamiltonian with the Nos\'e thermostat being dependent on $s$ and is thus required for a description of the thermal state, while still being needed to be traced out in an analog of being integrated over in classical molecular dynamics simulations. The small $\ket{0}$ on the wires indicates that the ancilla qubits are returned to the zero state after robust oblivious amplitude amplification~\citep{berry2015simulating} such that they can be reused as the ancilla qubits for the Hadamard test.} \label{fig:thermodynamic_integration_overview}
\end{figure*}

We are now ready to describe the main algorithm (\cref{algoTI}). Suppose we want to compute the free energy differences between two systems $A$ and $B$. Then, to perform thermodynamic integration over $\Lambda$ in superposition, we need to apply an interpolated NVT Liouvillian $L_\Lambda$, which can be easily constructed from the $\Lambda$-dependent nuclear Hamiltonian $H_\Lambda = (1-\Lambda)H_A + \Lambda H_B$ as the Liouvillian is linearly dependent on the Hamiltonian as seen in \cref{eq:Liouvillian}. Given this linearity, the corresponding Liouvillian is $L_\Lambda = (1-\Lambda)L_A + \Lambda L_B$, where $L_A$ and $L_B$ may act on different sets of nuclear registers. Registers describing nuclear degrees of freedom that are shared between $A$ and $B$ can be acted on by both $L_A$ and $L_B$, or only one of them (if there is an increase or decrease of nuclei). Given this construction of the intermediate Liouvillian via LCU, we do not actually construct individual components of the intermediate Liouvillians and hence do not need to formulate the nuclear Hamiltonian $H_\Lambda$ and ground state correction terms $E_{\tel, \Lambda}$ for the intermediate systems. This implies that the assumptions on ground state overlap $\delta$ and spectral gap $\gamma$ only apply to the physical initial and final systems $A$ and $B$.

To compute the free energy, starting with a common initial state $\ket{\rho_0}$, for each $\Lambda$, we evolve the state by the interpolated NVT Liouvillian $L_\Lambda$ over a predetermined and sufficient equilibration time $t_{\mathrm{eq}}$ such that we obtain the thermal state $\ket{\rho_\Lambda} = e^{-iL_\Lambda t_{\mathrm{eq}}}\ket{\rho_0}$. Given that the state includes registers corresponding to the heat bath, we then trace out the bath registers in a quantum analog of integrating over them to obtain $\rho_{\mathrm{sys}, \Lambda} = \tr_{\mathrm{bath}}(e^{-iL_\Lambda t_{\mathrm{eq}}}\ketbra{\rho_0}{\rho_0}e^{iL_\Lambda t_{\mathrm{eq}}})$. By the thermodynamic integration algorithm, we can then compute the free energy difference between the systems $A$ and $B$ by computing
\begin{align}
    \Delta \widetilde F & = \frac{1}{N_\Lambda}\sum_\Lambda \tr(\rho_{\mathrm{sys}, \Lambda} (H_B-H_A))\nonumber \\
    & = \frac{1}{N_\Lambda}\sum_\Lambda \bra{\rho_\Lambda}(H_B-H_A)\otimes \mathbb{I}\ket{\rho_\Lambda}
\end{align}
This quantity can be computed on a quantum computer via the Hadamard test~\citep{kitaev1995quantum,cleve1998quantum}, where we first create a uniform superposition over $N_\Lambda$ points with the unitary $\textsc{Prep}_\Lambda$, apply time evolution with $L_\Lambda$, and a final application of a controlled block encoding of the nuclear Hamiltonian difference $H_B-H_A$. We show an illustration of the quantum circuit in \cref{fig:thermodynamic_integration_overview}.

Delving further into the implementation details, we first discuss the implementation of the evolution of $\Lambda$-controlled Liouvillians. From the results of \cref{propLiouInfml}, we can produce block encodings of the Liouvillians $L_A$ and $L_B$, but with different scaling factors $\alpha_{L_A}$ and $\alpha_{L_B}$. We block-encode the interpolated Liouvillian $L_\Lambda$ by a weighted LCU whose weights are dependent on $\alpha_{L_A}$, $\alpha_{L_B}$, and $\Lambda$, which has a scaling factor of
\begin{equation}
    \alpha_\Lambda = (1-\Lambda)\alpha_{L_A} + \Lambda \alpha_{L_B}
\end{equation}
To perform the Hamiltonian simulation by QSVT, the polynomial approximations of $e^{-iL_\Lambda t_{\mathrm{eq}}}$ will also depend on $\Lambda$ as the number of queries to $U_{L_\Lambda}$ is dependent on $\alpha_\Lambda$. We provide two strategies to perform this transformation of $L_\Lambda$, first using standard QSVT, and then an approach using QSVT without angle finding~\citep{alase2025quantum} that provides the $\mathcal{O}(\frac{1}{\varepsilon})$ scaling of our algorithm.

Firstly, under conventional QSVT schemes, for each $\Lambda$ in superposition, we require calculation of the phase angles $\{\theta_\Lambda\}$ with either analytical methods, or with optimization methods~\citep{chao2020finding,dong2021efficient,wang2022energy}, where the scaling of the number of $\Lambda$-s required, or $N_\Lambda$, can be shown to be
\begin{equation}
    N_\Lambda \in \mathcal{O} \left(\frac{\|L_A-L_B\|\|H_A-H_B\|t_\mathrm{eq}}{\varepsilon}\right),
\end{equation}
by bounding left Riemann sum errors and where a proof is presented in \cref{appThmIntDiscErr}.
The QSVT sequence must have a length sufficient for the largest polynomial degree required for any $\Lambda$. For other values of $\Lambda$, the final angles can be set to $0$ such that the remaining applications of $U_{L_\Lambda}$ and its inverse create the identity. In the case of optimization methods, one can set the degree of optimization $\mathcal{D}$ directly to the maximum degree. If the angles are classically loaded, then we need an extra $\mathcal O(N_\Lambda)$ arbitrary rotation gates to load all required phase angles per query of $U_{L_\Lambda}$, as we need a different phase gate for each $\Lambda$.

However, one can bypass this by implementing the classical computations on a separate quantum register in superposition and then loading the results directly into the phase angles, which would then cost at most $\mathcal {O} (\poly \mathcal{D} \poly \log N_\Lambda)$ additional gates to load the phase angles. Furthermore, suppose we replace the QSVT procedure with generalized quantum signal processing (GQSP)~\citep{motlagh2024generalized}. In that case, it is known that the classical algorithm for computing the phase angles scales as $\mathcal{O}(\mathcal{D}\log \mathcal{D})$. Thus, when implemented on a quantum computer, we expect the number of gates to be at most $\widetilde{\mathcal{O}}(\mathcal{D}\poly \log N_\Lambda)$. However, the precise scaling of such algorithms when implemented on quantum computers, as well as the quantum circuit for such implementations, is not obvious.

An alternative method sidesteps the requirement of angle finding by using recent results on QSVT without classical angle finding~\citep{alase2025quantum} by instead encoding a functional representation in a diagonal block encoding, in this case with entries $\exp(-it\alpha_\Lambda \cos \frac{2\pi k}{4\mathcal{D}})$ for $k \in [4\mathcal{D}]$, where $\mathcal{D} \in\mathcal O (\max(\alpha_{L_A}, \alpha_{L_B})t_{\mathrm{eq}} + \log\frac{1}{\varepsilon})$ is the polynomial approximation degree. Such function entries are computed in superposition via quantum arithmetic circuits~\citep{vedral1996quantum} while controlled on $\Lambda$. This can then be encoded into the amplitude of the diagonal block encoding via controlled rotations or the alternating-sign trick~\citep{berry2014exponential}. The reason that the additional cosine in the function is needed is due to the use of qubitized operators, whose eigenvalues are $\arccos$ of the original block encoded matrix. This method allows for a gate count scaling $\mathcal O(\log^2 N_\Lambda)$ to implement the function representation. Adding the additional $\mathcal O(\log^2 N_\Lambda)$ cost to produce the weighted Liouvillian per query to $U_{L_\Lambda}$, this requires an additional $\mathcal O(\mathcal{D}\log^2 N_\Lambda)$ gates for implementation. On the other hand, this method requires an additional constant factor calls to the block encoding of the interpolated Liouvillians $U_{L_\Lambda}$ as well as more ancilla qubits when compared to standard QSVT/GQSP. However, it bypasses the requirement of implementing phase angle finding algorithms on a quantum computer, as needed for standard QSVT/GQSP implementations. A more detailed discussion of an exact construction is provided in \cref{appLambdaLiouvillian}.

We note that while the dependency on $N_\Lambda$ is logarithmic in the runtime, this does not indicate an exponential speedup when compared to classical methods. While conventional thermodynamic integration implementations would evaluate the different $N_\Lambda$ interpolated systems individually and obtain the free energy via numerical integration methods such as Simpson's rule or specifically designed physics-based fitting functions~\citep{jorge2010effect} for higher accuracy, Monte Carlo integration~\citep{metropolis1949monte} can be applied to obtain faster runtime asymptotics. The number of samples in Monte Carlo integration would then depend on the square of the spectral norm of the nuclear Hamiltonian difference $\Delta H$ as opposed to $N_\Lambda$, for which the runtime is at most logarithmically dependent on the Monte Carlo integration. This logarithmic dependency stems from the precision of floating point representation on classical computers, it being $\mathcal O(N_\Lambda^{-1})$.

Moving on from the implementation of interpolated Liouvillians, one then time evolves the system for a duration $t_{\mathrm{eq}}$ that is sufficient to equilibrate all $N_\Lambda$ Liouvillians $\{L_\Lambda\}$ such that their corresponding phase-space densities $\rho_\Lambda$ serve as classical thermal distributions.
We then obtain the state
\begin{equation}
    \frac{1}{\sqrt{N_\Lambda}}\sum_\Lambda \ket \Lambda \ket{\rho_\Lambda} \, ,
\end{equation}
where $\ket{\rho_\Lambda}$ is the state of the nuclear and environmental degrees of freedom produced by the evolution of the Liouvillian $L_\Lambda$. Under NVT Liouvillian equilibration, the state obtained by tracing out the bath registers $\ket{\bar s}$ and $\ket{\bar p_s}$ corresponds to a thermal state of the nuclear registers. Note that the heat bath variable $s$ should still be part of the description of the thermal state as the extended Hamiltonian with the Nos\'e thermostat is dependent on $s$. Therefore, $s$ is required for a description of the thermal state, and will have corresponding terms in Hamiltonians $H_A$ and $H_B$ to be interacted upon in the Hadamard test. Thus, we duplicate the contents of $\ket{\bar s}$ to a separate register by CNOT gates, which can be traced out while the original is part of the registers that the Hadamard test is applied to~\citep{simon2024improved}. We refer the reader to \cref{ap:preliminaries} for a more in-depth discussion as to why $\ket{s}$ is needed. Note that this is shown in \cref{fig:thermodynamic_integration_overview} as the additional CNOT gate from the $\ket{\bar s}$ to the $\ket{0}_s$ register.

To obtain the nuclear Hamiltonian differences $\Delta H = H_B-H_A$, we take block encodings of the nuclear Hamiltonian for both systems $A$ and $B$, where for each system, we have
\begin{equation}
    H_{\rm nuc} = H_{\mathrm{kin}} + H_{\mathrm{pot}} + H_{\mathrm{gse}} \, ,
\end{equation}
where the three terms correspond to diagonal block encodings of the values of the kinetic, potential, and ground state Hamiltonian of the relevant points in phase-space, respectively. While the kinetic and potential energy terms are previously efficiently implemented in Ref.~\citep{simon2024improved}, we draw attention to the implementation of $H_{\mathrm{gse}}$, which we achieve by preparing the electronic ground state for each point in the discretized grid space for position and applying a block encoding of the electronic Hamiltonian $H_{\tel}$ before uncomputing the ground states such that
\begin{align}
    H_{\rm gse} & = \sum_{\vec {\bar x}}E_{\tel}( \vec x)\ketbra{\bar {\vec x}}{\bar {\vec x}} \nonumber \\
    & = \sum_{\vec {\bar x}}\braket{\psi_0(\vec x)|H_{\tel}(\vec x)|\psi_0(\vec x)} \ketbra{\bar {\vec x}}{\bar {\vec x}} \, .
    \label{eqGSEInfml}
\end{align}
This encodes the ground state energy in the main nuclear registers by phase kickback in a similar fashion to our implementation of encoding electronic forces in \cref{secLS}. This approach to implementing $H_{\mathrm{gse}}$ eliminates the use of phase estimation, as shown in previous approaches~\citep{simon2024improved}, and reduces the runtime dependency on precision to a logarithmic scaling.

The difference in energies is then implemented by taking the difference between the block encoding of $H_A$ and $H_B$ via the LCU approach. Here, thermodynamic integration requires the assumption that systems $A$ and $B$ have a relatively high overlap in their nuclear phase spaces, and thus, most kinetic and potential Hamiltonian terms can be expected to cancel each other out. However, this is not the case for the ground state energy of the electronic Hamiltonians due to the separate ground state preparations for systems $A$ and $B$, and would still have a scaling factor of $\mathcal{O}(\widetilde N N_{\text{tot}})$.

In the following result, we construct the difference via LCU for simplicity, while noting that in practice, the block encoding implementation can be made more efficient and with a lower scaling factor.

\begin{proposition}[Block encoding of the nuclear Hamiltonian; informal]
    We can explicit construct a $(\alpha_{H}, a_{H}, \varepsilon_{H})$-block-encoding of the nuclear Hamiltonian $H_{\rm nuc}$ where
    \begin{align*}
        \alpha_{H}            & \in \mathcal O \left(N_{\mathrm{tot}}^2\right) \\
        a_{H} \in \mathcal{O} & \left(\widetilde{N} + \log\frac{ N_{\mathrm{tot}}}{\varepsilon_{\rm nuc}}\right)
    \end{align*}
    can be prepared  using $\mathcal{O}(\frac{1}{\delta})$ queries to $U_I$, and
    \begin{align*}
        \widetilde{\mathcal O} & \left(\frac{\widetilde N N_{\mathrm{tot}}^2}{\delta\gamma}\log^2 \left(\frac{1}{\varepsilon_{\rm nuc}}\right)\right)
    \end{align*}
    Toffoli gates.
    \label{propInHamInfml}
\end{proposition}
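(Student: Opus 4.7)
The plan is to decompose $H_{\rm nuc} = H_{\rm kin} + H_{\rm pot} + H_{\rm gse}$ and build a block encoding of each summand separately, and then combine them into one block encoding via a standard LCU construction~\citep{childs2012hamiltonian}. The kinetic and potential pieces are already diagonal in the position basis and are block-encoded efficiently in Ref.~\citep{simon2024improved} via quantum arithmetic circuits acting on $\ket{\bar{\vec x}}$ and $\ket{\bar{\vec p}'}$; these contribute $\mathcal O(N_{\rm tot}^2)$ to $\alpha_H$ (from the two-body Coulomb term), only $\mathcal O(\mathrm{polylog})$ Toffolis, and no queries to $U_I$, so they are not the bottleneck. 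The entire content of the proposition therefore lies in the construction of $H_{\rm gse}$.

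For the ground-state-energy block encoding, the strategy mirrors the diagonal energy-derivative construction that underpins \cref{eqDelInfml} and \cref{propLiouInfml}. First, I apply $U_I$ to prepare $\ket{\bar{\vec x}}\ket{\phi_{\rm init}(\vec x)}$ in superposition over the nuclear grid, then refine the electronic register to the true ground state $\ket{\psi_0(\vec x)}$ using the Lin--Tong ground-state preparation algorithm~\citep{lin2020nearoptimal}. This subroutine requires $\tilde{\mathcal O}(1/(\delta\gamma))$ queries to the position-controlled electronic Hamiltonian block encoding $H_{\tel}^{\rm ctrl}$ from Ref.~\citep{su2021faulttolerant}, each query costing $\tilde{\mathcal O}(\tilde N N_{\rm tot}^2)$ Toffolis. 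I then apply $H_{\tel}^{\rm ctrl}$ once more and uncompute the ground-state preparation; by phase kickback the amplitude on $\ket{\bar{\vec x}}\ket{0}_{\rm el}$ becomes $\braket{\psi_0(\vec x)|H_{\tel}(\vec x)|\psi_0(\vec x)}/\alpha_{H_{\tel}} = E_{\tel}(\vec x)/\alpha_{H_{\tel}}$, yielding a diagonal block encoding of $H_{\rm gse}$ with scaling factor equal to $\alpha_{H_{\tel}} \in \mathcal O(N_{\rm tot}^2)$ and $a_{\rm el} \in \mathcal O(\tilde N)$ electronic ancillas.

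Combining the three diagonal block encodings by a two-qubit LCU preserves the dominant $\mathcal O(N_{\rm tot}^2)$ scaling factor, consumes two additional ancillas, and leaves the total ancilla count at $\tilde N$ plus the $\mathcal O(\log(N_{\rm tot}/\varepsilon_H))$ qubits used by the arithmetic registers in the kinetic/potential block encodings and the fixed-point representation inside $H_{\tel}^{\rm ctrl}$. Tallying the queries: $U_I$ is invoked $\mathcal O(1/\delta)$ times inside Lin--Tong (up to logarithmic factors suppressed in $\tilde{\mathcal O}$), and the Toffoli count is the product of the $\tilde{\mathcal O}(1/(\delta\gamma))$ ground-state-preparation queries with the $\tilde{\mathcal O}(\tilde N N_{\rm tot}^2)$ Toffolis per $H_{\tel}^{\rm ctrl}$ query, reproducing the claimed $\tilde{\mathcal O}(\tilde N N_{\rm tot}^2/(\delta\gamma)\log^2(1/\varepsilon_H))$ bound.

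The main obstacle I expect is the careful propagation of errors through the composition, which is what generates the $\log^2(1/\varepsilon_H)$ rather than a single log. Ground-state preparation produces an $\varepsilon'$-approximation to $\ket{\psi_0(\vec x)}$ in $\mathcal O(\log(1/\varepsilon'))$ polynomial degree, and $H_{\tel}^{\rm ctrl}$ is itself an $\varepsilon''$-block-encoding; both errors feed multiplicatively through the prepare--apply--uncompute sandwich and, because the error in the ground state appears twice and couples to the spectral norm of $H_{\tel}$, I will need to set $\varepsilon', \varepsilon'' = \Theta(\varepsilon_H/\alpha_{H_{\tel}})$ and verify that this yields a valid diagonal block encoding of $H_{\rm gse}$ up to $\varepsilon_H$ in spectral norm. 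Once this bookkeeping is in place, the LCU step is routine and the final resource counts follow by direct substitution.
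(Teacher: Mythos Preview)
Your proposal is correct and mirrors the paper's proof almost exactly: the paper decomposes $H_{\rm nuc}=H_{\rm kin}+H_{\rm pot}+H_{\rm gse}$, imports the kinetic and potential block encodings from Ref.~\citep{simon2024improved}, builds $H_{\rm gse}$ via the prepare--apply-$H_\tel^{\rm ctrl}$--uncompute sandwich with Lin--Tong ground-state preparation, and combines all three by a weighted LCU, with the same error bookkeeping you anticipate. One small accounting correction: the $\tilde N N_{\rm tot}$ factor enters through the scaling constant $\lambda$ of $H_\tel^{\rm ctrl}$ in the \emph{number} of Lin--Tong queries $\mathcal O(\lambda/(\delta\gamma)\log(1/\varepsilon'))$, while each query costs only $\tilde{\mathcal O}(N_{\rm tot})$ Toffolis---your factorization $\tilde{\mathcal O}(1/(\delta\gamma))\times\tilde{\mathcal O}(\tilde N N_{\rm tot}^2)$ yields the right product but misattributes where the particle-number dependence sits.
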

The formal statement (\cref{propNuclearHam}) and proof of this proposition can be found in \cref{appInHam}. Using our results above, we can block-encode $\Delta H = H_B-H_A$ via LCU with a worst-case scaling factor of $\alpha_{H_A} + \alpha_{H_B}$.  However, in practice, it may often be beneficial to block encode the difference between the two operators separately.

Given this block encoding, we extract the free energy value using the Hadamard test. The probability of measuring the $\ket{0}$ state on the Hadamard test ancilla register is
\begin{align}
    \mathbb{P}_{\Delta \widetilde F} &=\frac{1}{2N_\Lambda}\sum_\Lambda \left(\text{Re} \bra{\rho_\Lambda} U_{\Delta}  \ket{\rho_\Lambda} + 1\right)\nonumber\\
    & = \frac{1}{2N_\Lambda}\sum_\Lambda \left(\frac{\braket{H_B-H_A}_\Lambda}{\alpha_{\Delta}} + 1\right) \nonumber \\
    & =\frac{1}{2}\left(\frac{\Delta \widetilde F}{\alpha_{\Delta}} + 1\right) \, ,
\end{align}
giving us the value of the free energy difference. Then, instead of directly measuring the $\ket{0}$ state, we can use amplitude estimation~\citep{brassard2002quantum} on the result of the Hadamard test. We can obtain the estimation of $\Delta\widetilde F$ up to $\varepsilon$-precision by $\mathcal{O}(\frac{\alpha_{H_A} + \alpha_{H_B}}{\varepsilon})$ queries to the full Hadamard test circuit. We can treat the superposed Liouvillian simulations as quantum sampling access over the $\Lambda$-dependent classical thermal states. Thus, our approach achieves a scaling in terms of the precision $\varepsilon$ that is quadratically better than a classical thermodynamic integration modified to utilize Monte Carlo integration \citep{montanaro2015quantum} without considering potential improvements from implementing MD simulations on quantum computers as mentioned in \cref{secLS}.

\section{Discussion}
\label{secconclusion}

We present a quantum algorithm for simulating molecular dynamics based on the Liouvillian formalism of classical mechanics that achieves a super-polynomial improvement in the precision scaling over prior quantum algorithms. We then apply this improved algorithm to develop an efficient quantum algorithm for computing free energy differences via alchemical methods. 

While previous work~\citep{simon2024improved} treated the electronic and classical Liouvillians separately, here we introduce an efficient block encoding of the full (electronic + classical) Liouvillian. Our block encoding relies on phase kickback techniques to directly transfer forces and energies computed on electronic registers to the nuclear registers and therefore removes the need for phase estimation~\citep{kitaev1995quantum} to obtain electronic forces~\citep{obrien2022efficient}, or the need for Trotterization to interleave separate classical and electronic Liouvillian evolutions~\citep{simon2024improved}. This block encoding can then be used in a Hamiltonian simulation algorithm to provide a super-polynomial improvement in the scaling of the runtime of hybrid quantum-classical molecular dynamics with respect to the precision. Furthermore, by utilizing the Hellmann--Feynman theorem~\citep{hellmann1937einfuhrung, feynman1939forces} and direct block encodings of electronic force operators~\citep{obrien2022efficient}, we significantly decrease the dependency of the runtime on the number of particles compared to central finite difference methods. For a more detailed comparison that includes a breakdown of subroutines, please refer to \cref{tabOurRuntime,tabPRXRuntime} in \cref{appThermoInt}.

Our fundamentally improved Liouvillian simulation then enables us to develop a quantum algorithm for alchemical free energy calculation that achieves a $\widetilde{\mathcal O}(\frac{1}{\varepsilon})$ scaling with respect to precision. Our alchemical approach is a key enabler as it forgoes fundamentally more expensive entropy estimation, whose runtime on a quantum computer scales exponentially with the number of qubits representing the nuclear registers---and hence with the number of atoms---due to the exponential growth of the phase-space with the number of particles~\citep{gilyen2020distributional, simon2024improved}. Specifically, we use thermodynamic integration~\citep{Kirkwood1935_StatistiscalMechanics, frenkel_understanding_1996} to compute the free energy difference between two molecular ensembles directly. This is achieved by defining a thermodynamic path connecting two ensembles and integrating over the change of internal energy along the path.

Our approach performs this numerical integration coherently on a quantum computer: we simulate molecular dynamics at each interpolation point along the path by coherently evolving a superposition of phase-space densities, each under a distinct Liouvillian. A key algorithmic development here is that we construct methods to apply different polynomial transformations in superposition using both standard QSVT/GQSP as well as QSP without angle finding~\citep{alase2025quantum}, enabling efficient application of different transformations tailored to each block-encoded Liouvillian on the thermodynamic path. Finally, we extract the free energy difference via the Hadamard test followed by amplitude estimation. The gate overhead for this superposition of Liouvillian dynamics is only polylogarithmic in the number of interpolation points that discretize the thermodynamic path. 

Our approach can, in theory, provide exponential advantage relative to known classical methods with provable guarantees, such as full configuration interaction (FCI)~\citep{szabo1996modern}, for computing free-energy differences under the assumption that an oracle can be constructed that prepares a surrogate ground state with polynomial overlap with the ground state at every point in the thermodynamic integration and that the mixing time is polynomial. Examples of families of fermionic Hamiltonians where the ground state can be efficiently prepared can be found in Ref.~\cite{breuckmann2014space}, where it is shown that their ground state can encode the output of an arbitrary quantum circuit.

We briefly compare our methods with hybrid quantum–classical approaches that replace electronic-structure calculations with quantum primitives such as phase estimation~\citep{obrien2022efficient,gunther2025how}. First, we discuss the number of classical iterations and thus, the number of times we need to query the quantum computer for electronic structure calculations. We note that the classical molecular dynamics, at best, scales as $\mathcal{O}(\varepsilon^{-o(1)})$ in the number of time steps required. MD solves Newton’s equations iteratively, updating particle positions and momenta each timestep~\citep{frenkel_understanding_1996,allen1987computer}, and estimates phase-space densities via long-time averages assuming ergodic exploration. A common method used for MD is Verlet integration~\citep{verlet1967computer}, a second-order symplectic method with global trajectory error step count $\mathcal{O}(1/\sqrt{\varepsilon})$ to achieve final error $\varepsilon$. Under the Liouvillian formalism and Suzuki–Trotter viewpoint, this induces the same error order in the phase-space density~\citep{tuckerman1992reversible}, paralleling Ref.\citep{simon2024improved}. Higher-order symplectic integrators\citep{yoshida1990construction} can reduce the error such that the runtime required is $\mathcal{O}(\varepsilon^{-1/q})$ for some $q$, though such methods are rarely used in practice. 

Second, hybrid algorithms typically require fewer qubits and shorter circuit depths, given only the computation of the electronic structure problem solved with the quantum computer. However, reading out classical information at each timestep is necessary, which incurred a cost of  $\mathcal{O}(\frac{1}{\varepsilon})$ per timestep via phase estimation. Combining the above runtime estimates for readout and for classical MD integration, we conclude that hybrid algorithms are likely to require a higher asymptotic runtime than our approach. Specifically, we see that we can compute free energy differences using $\widetilde{\mathcal{O}}\left(\frac{N \widetilde N N_{\mathrm{tot}}^5 t_{\rm eq}}{\delta\gamma\varepsilon}\log\frac{1}{\xi}\right)$ Toffoli gates.

To provide a sense of the amount of required quantum resources, we sketch a back-of-the-envelope estimate for the number of qubits. We consider a $20$-atom molecule with $200$ electrons and calculate the number of qubits required in the main nuclear and electronic registers. We assume coordinates of all nuclear degrees of freedom are discretized into $g=2^{12}$ grid points (i.e., $12$ qubits per nuclear degree of freedom). Therefore, representing all nuclear degrees of freedom requires approximately $1500$ logical qubits, given each atom has 6 degrees of freedom (three position and three momentum), and that we also account for bath registers.
             
We further assume the wave function of each electronic degree of freedom is represented using  $B=2^{15}$ plane waves (i.e., $15$ qubits per electron and coordinate). Therefore, a total of $15\times 3 \times 200 = 9000$ logical qubits are required to represent all electrons. Including the negligible number of ancilla qubits (which are logarithmically bounded), we estimate that the number of logical qubits is on the order of ten thousand even for a small system -- this is likely beyond the capabilities of early fault-tolerant machines, however, reasonably within reach for realistic and mature quantum computers. On the other hand, the significant advantage of our approach is that scaling up the number of particles only requires a proportionally increased number of logical qubits.
             
Let us note, however, that estimating the number of required Toffoli gates is significantly more involved, given resource estimates typically report upper bounds, and we expect our bounds are loose --  however, it is beyond the scope of the present work to optimize our approach for explicit implementations.

Though our algorithms may be able to provide asymptotic speedups against hybrid quantum-classical methods and classical algorithms with provable guarantees, comparisons to well-established computational chemistry methods that have heuristic elements are much more unclear. In search for potential practical quantum advantages, the quantum algorithms presented here contain three main potential sources of asymptotic speedups over classical methods: (1) the replacement of iterative updates for MD simulation with a single evolution operator, improving on runtime dependencies of precision, (2) efficient coherent simulation of a large number of molecular dynamics trajectories for thermodynamic integration, and (3) efficient computation of the quantum dynamics of the electronic ground state and associated forces that can potentially be hard to simulate on a classical computer to guaranteed precision, where the last point is of particular value.

So far, we have not addressed the asymptotic behavior of the equilibration time $t_{\rm eq}$ with the error tolerance of the simulation. With completely integrable systems, the equilibration time can scale with $\mathcal{O}(\log\frac{1}{\varepsilon})$ with additional weighting over time averages~\citep{cances2005long}. However, while the Coulomb energy forces give rise to a multidimensional harmonic oscillator, an integrable system, the potential terms provided by the Born-Oppenheimer potential energy surface are generally not believed to be completely integrable (although empirical results show that this bound can be observed for some non-integrable systems~\citep{cances2004high}). It is therefore generally a challenge to bound this parameter for general systems, and such bounds may depend on the eigenstate thermalization hypothesis (ETH) when applied to open systems~\citep{deutsch2018eigenstate, shirai2020thermalization, chen2023fast,almeida2025universality}. We therefore treat $t_{\rm eq}$ as a potentially unbounded parameter in our scaling results.

We now briefly discuss factors that may affect $t_{\rm eq}$. First, properties of the energy landscape may substantially affect the equilibration time. In particularly challenging cases, the energy landscape may have deep basins of meta-stable regions where transitions between the meta-stable regions are separated by high energy barriers, which may lead to an increase in the thermalization time of classical MD, potentially by several orders of magnitude. This issue is potentially mitigated via the Liouvillian formalism that we use in the present manuscript, as we do not propagate a single configuration by Hamiltonian dynamics, but rather the phase space density instead. This alleviates some of the need for the simulation to actually ``cross'' the rare event threshold, but low-quality initialization of the phase space density may still lead to a significantly increased thermalization time. These challenges in our quantum algorithm may be tackled by further exploration of quantum counterparts for classical rare event processing of molecular dynamics~\citep{frenkel_understanding_1996}, such as the Bennett-Chandler approach~\citep{bennett1977molecular,chandler1978statistical}. In our paper, we mainly do not consider the occurrence of rare events with meta stability, given that classical literature with methods beyond na\"ive molecular dynamics as well. Under such circumstances, a polynomial overall algorithmic runtime holds in regimes where the thermostat dynamics mix in polynomial equilibration time $t_{\rm eq}$.

On the other hand, given our coupling of the system with a N\'ose thermostat, parameters intrinsic to the thermostat can be adjusted to affect the equilibration time. In particular, the effective mass $Q$ of the thermostat and the fixed temperature $T$ affect terms in the extended nuclear Hamiltonian $H_{\rm ext}$, and this in turn affects the eigenstates of said Hamiltonian. Per the dependency of the equilibration time on the eigenstate thermalization hypothesis, as mentioned before, we can then see that when the underlying Hamiltonian and eigenstates are altered, the amount of time to thermalize and equilibrate the system would be affected as well.

We now move on to our discussion of cases where our assumptions on the Born-Oppenheimer approximation fail, and the approach of simulating nuclei classically and electrons quantumly may not be applicable. An alternative quantum computational approach would involve representing both the nuclei and electrons as quantum particles and leveraging algorithms for the preparation of quantum thermal states~\citep{chen2023quantum, rall2023thermal}, and it may be more similar in nature to Monte Carlo methods. Apart from this additional ability to operate beyond Born-Oppenheimer approximations, it would be an interesting future research direction to compare our quantum algorithm, which uses a hybrid quantum-classical representation of molecules, with such fully-quantum approaches. Currently, it is not obvious which of the algorithms would have the most favorable scaling with respect to the number of particles and whether the progress on bounds on the mixing times of Lindbladians for specific cases~\citep{ramkumar2024mixing, rouze2024optimal, tong2025fast, smid2025polynomial} would yield an asymptotic advantage over the classical equilibration times in the Liouvillian framework we use here.

Looking forward, it will be important to identify concrete scenarios where the alchemical quantum algorithm offers an absolute runtime advantage over classical methods for free energy estimation. This requires obtaining fine-grained resource estimates on systems representative of practical applications to determine constant-level overheads. Performing such an analysis will require the construction of suitable block encodings for the molecular systems as well as a quantitative examination of potential implementations of an efficient oracle for approximate ground state preparation at all points in phase-space, including SAD, extended H\"uckel methods, and beyond.

Another important question left open by this work is that of deciding the ultimate limits on Helmholtz free-energy estimation for quantum systems.  At present, our work provides a worst-case asymptotic advantage over the results of Ref.~\cite{simon2024improved}, but this does not imply that the improvements that we have observed have led to the ultimate limits of free-energy estimation using thermodynamic integration. While we do not expect the dependency of precision $\varepsilon$ and spectral gap $\gamma$ to be improved due to fundamental limits---except when further structure of the problem can be assumed, the dependency on the particle number may be improved via improved representations and quantum algorithmic subroutines. In addition, a large runtime dependency on the particle number can be provided if the ground state preparation can be decoupled somehow from the time evolution, as repeated ground state preparations are the main bottleneck of our algorithm. Further, other improvements over exact computation of entropy differences~\cite{simon2024improved} may be attainable using Metropolis-Hastings ideas for thermalization~\cite{metropolis1949monte,metropolis1953equation, hastings1970monte} to approximately calculate free energies. We hope that future work will continue to make progress in improving the speed, accuracy, and range of applicability of methods for estimating both Gibbs and Helmholtz free energy differences across a wide range of physical systems.  

\section*{Acknowledgments}
    PWH, GB, DMD, HJ, TRB, and BK thank Zhu Sun, Zhenyu Cai, and Simon Benjamin for valuable discussions and their support during the project, as well as Annina Lieberherr for her comments on the manuscript and discussions on MD simulations. PWH would also like to thank Wei-Tse Hsu for discussions on classical MD simulations and rare events, as well as Jon Keating and Sakura Schafer-Nameki for comments on the initial preprint manuscript.  GLRA, MD, NM, RS, MS, and BR thank Clemens Utschig-Utschig for his comments on the manuscript and his support during the project.

    PWH acknowledges support from the Engineering and Physical Sciences Research Council (EPSRC) Doctoral Training Partnership (EP/W524311/1) with a CASE Conversion Studentship in collaboration with Quantum Motion. DMD acknowledges financial support from the EPSRC Hub in Quantum Computing and Simulation (EP/T001062/1). NW and SS's work was supported by support from Boehringer Ingelheim and the U.S. Department of Energy, Office of Science, National Quantum Information Science Research Centers, Co-design Center for Quantum Advantage (C2QA) under contract number DE-SC0012704 (PNNL FWP 76274). NW's research is also supported by PNNL’s Quantum Algorithms and Architecture for Domain Science (QuAADS) Laboratory Directed Research and Development (LDRD) Initiative. The Pacific Northwest National Laboratory is operated by Battelle for the U.S. Department of Energy under Contract DE-AC05-76RL01830. BK thanks UKRI for the Future Leaders Fellowship Theory to Enable Practical Quantum Advantage (MR/Y015843/1). BK also acknowledges funding from the EPSRC projects Robust and Reliable Quantum Computing (RoaRQ, EP/W032635/1) and Software Enabling Early Quantum Advantage (SEEQA, EP/Y004655/1).

\bibliography{main}
\clearpage

\appendix
\counterwithin{theorem}{section}
\counterwithin{proposition}{section}
\counterwithin{lemma}{section}
\counterwithin{assumption}{section}
\counterwithin{equation}{section}
\counterwithin{definition}{section}
\crefalias{section}{appendix}
\setcounter{figure}{0}
\renewcommand{\thefigure}{S\arabic{figure}}
\setcounter{table}{0}
\renewcommand{\thetable}{S\arabic{table}}

\makeatletter
\renewcommand{\theHfigure}{S\arabic{figure}}
\renewcommand{\theHtable}{S\arabic{table}}
\makeatother
\onecolumngrid
\begin{center}
\noindent{\large\bfseries Appendices for ``\textsl{Fullqubit alchemist: Quantum algorithm for alchemical free energy calculations}''}
\end{center}
\appendixtableofcontents
\vspace{2em}
\twocolumngrid

\section{Simulating molecular dynamics on a quantum computer}\label{ap:preliminaries}

In this section, we provide further details on simulating molecular dynamics on quantum computers, including definitions and implementation costs of relevant operators.

\subsection{Nuclear interactions under the NVT Liouvillian}

For operations involving the phase-space density on a quantum computer, we need to consider a finite, discretized phase-space.  Under the Born-Oppenheimer approximation, we separate the nuclear and electronic motion, and model them in separate phase spaces characterized by different sets of quantum registers. For the nuclear phase-space, we discretize the position and momentum and cast the values onto a grid such that
\begin{align}
    g_x     & := \frac{x_{\max}}{h_x} \in \mathbb{N} \\
    g_{p'}  & := \frac{p'_{\max}}{h_{p'}} \in \mathbb{N} \\
    g_s     & := \frac{s_{\max}}{h_s} \in \mathbb{N} \\
    g_{p_s} & := \frac{p_{s,\max}}{h_{p_s}} \in \mathbb{N}
\end{align}
where $x_{\max}/p_{\max}/s_{\max}/p_{s,\max}$ is the maximum value of any $x_{n,j}/p_{n,j}/s_{n,j}/{p_s}_{n,j}$ and $h_x/h_p/h_s/h_{p_s}$ is the size of the grid spacing. Boundary conditions are assumed to be periodic in our system for simplicity.

When implemented on a quantum computer, each grid point of the discretized phase-space corresponds to a computational basis state in the form of
\begin{equation}
    \ket{\vec {\bar x}, \vec {\bar p'}, \bar s, \bar p_s} := \bigotimes_{n, j} \Big( \ket{\bar x_{n,j}} \otimes \ket{ {\bar p'}_{n,j}} \Big) \otimes \ket{\bar s} \otimes \ket{\bar p_s}\, ,
\end{equation}
where $\bar x_{n,j} \in [g_x]$, $\bar p_{n,j}' \in [g_{p'}]$, $\bar s \in [g_s]$, $\bar p_s \in [g_s]$ such that $x_{n,j} = \bar x_{n,j} h_x$, $p_{n,j}' = \bar p_{n,j}' h_p$, $s = \bar s h_s$, and $p_s = \bar p_s h_{p_s}$.

To obtain the NVT Liouvillian, we note that the full formulation is as follows:
\begin{align}
    L^{\rm (NVT)}:=&-i\sum_{n=1}^{N}\sum_{j=1}^{3}\left(\frac{\partial H_{\rm ext}^{\rm (NVT)}}{\partial p'_{n,j}}\partial_{x_{n,j}}-\frac{\partial H_{\rm ext}^{\rm (NVT)} }{\partial x_{n,j}}\partial_{p'_{n,j}}\right)\nonumber \\
    &-i\left(\frac{\partial H_{\rm ext}^{\rm (NVT)}}{\partial p_s}\partial_{s}-\frac{\partial H_{\rm ext}^{\rm (NVT)}}{\partial s}\partial_{p_s}\right)\,.
\end{align}
Dropping the NVT notation, under the discretized phase-space, we obtain
\begin{align}
    L_{\rm disc}:=&-i\sum_{n=1}^{N}\sum_{j=1}^{3}\left(\frac{\partial H_{\rm ext}}{\partial p'_{n,j}} \otimes D_{x_{n, j}}-\frac{\partial H_{\rm ext}}{\partial x_{n,j}}\otimes D_{p'_{n, j}}\right)\nonumber\\
    &-i\left(\frac{\partial H_{\rm ext}}{\partial p_s}\otimes D_{s}-\frac{\partial H_{\rm ext}}{\partial s}\otimes D_{p_s}\right)\,,
\end{align}
where $D_{x_{n, j}}/D_{p'_{n, j}}/D_s/D_{p_s}$ are discrete derivative approximations obtained by central finite differences, and the partial derivatives on $H$ are analytically obtained from \cref{eqNoseHam} and applied to the corresponding registers. The implementation of the NVT Liouvillian, barring the terms related to the ground state energy $E_{\tel}(\vec x)$, is termed the classical Liouvillian and block-encoded in Ref.~\citep{simon2024improved}. Given the classical Hamiltonian $H_{\rm cl} = H_{\rm ext} - E_{\tel}$, we write the discretized classical Liouvillian
\begin{align}
L_{\rm cl, disc} :=& -i\sum_{n=1}^{N} \sum_{j=1}^{3} \left(  D_{x_{n,j}} \otimes \frac{\partial H_{\rm cl}}{\partial {p'_{n,j}}} - \frac{\partial H_{\rm cl}}{\partial {x_{n,j}}} \otimes D_{p'_{n,j}} \right)\nonumber\\
&-i \left(  D_{s} \otimes \frac{\partial H_{\rm cl}}{\partial_{p_{s}}} - \frac{\partial H_{\rm cl}}{\partial_{s}} \otimes D_{p_s} \right),
\end{align}
where
    \begin{align}
        \frac{\partial H_{\rm cl}}{\partial {x_{n,j}}} &= \begin{multlined}[t]
            \sum_{n' \neq n} \sum_{\bar x_{n}} \sum_{\bar x_{n'}} \frac{-Z_n Z_{n'}}{\left( \lVert x_n - x_{n'}\rVert^2 + \Delta^2 \right)^{3/2}}\\ \times(x_{n,j} - x_{n',j}) \ketbra{{\bar x_n}}{{\bar x_n}} \otimes \ketbra{{\bar x_{n'}}}{{\bar x_{n'}}}
        \end{multlined}\\
        \frac{\partial H_{\rm cl}}{\partial {p'_{n,j}}} &= \sum_{\bar p'_{n,j}} \sum_{\bar s} \frac{p'_{n,j}}{m_n \left( s + s_{\min} \right)^2} \ketbra{{\bar p'_{n,j}}}{{\bar p'_{n,j}}} \otimes \ketbra{\bar s}{\bar s}\\
        \frac{\partial H_{\rm cl}}{\partial {s}} &= \begin{multlined}[t]
            \sum_{\bar p_{n,j}} \sum_{\bar s} \frac{- {p'}_{n,j}^2}{m_{n} \left( s + s_{\min} \right)^3} \ketbra{{\bar p_{n,j}}}{{\bar p_{n,j}}} \otimes \ketbra{{\bar s}}{\bar s} \\
            + \sum_{\bar s} \frac{N_f k_B T}{s + s_{\min}} \ketbra{{\bar s}}{\bar s}
        \end{multlined}\\
        \frac{\partial H_{\rm cl}}{\partial p_{s}} &= \sum_{\bar p_{s}} \frac{p_{s}}{Q} \ketbra{{\bar p_{s}}}{\bar p_{s}}
    \end{align}
Thus, the electronic Liouvillian can then be found to be
\begin{equation}
L_{\tel} = L - L_{\rm cl} := i\sum_{n=1}^{N}\sum_{j=1}^{3}\frac{\partial E_{\tel}(\vec x)}{\partial x_{n,j}}\partial_{p_{n,j}'}
\end{equation}
We discuss this further in the next section.

We briefly note that, given that ``virtual'' rescaled momentum is used in this simulation, a distortion of dynamics in the aspect of time occurs. Across subsystems with different $s$ values, the length of the time steps is different, and is dependent on $s$~\citep{nose1984molecular}. However, as long as the entire system is under the ``virtual'' basis, this does not affect the classical thermal state preparation. It is only upon obtaining thermal averages of time-dependent operators, such as the evolution time or average momentum, with the thermal state, that this time distortion requires attention, and a rescaling by $s$ is required on the operator. For our algorithm, we obtain thermal averages on the nuclear Hamiltonian, which is time-independent. Hence, it suffices to describe everything under the ``virtual'' momentum and time picture for our algorithm.

To obtain thermal averages over the thermal state prepared by the simulation of the NVT Liouvillian, we integrate the degrees of freedom $s$ and $p_s$, which in the discretized quantum implementation corresponds to taking the sum of their values, or taking a partial trace over the corresponding registers. However, given that the thermal state is dependent on the true momentum $\vec p$ (or $\vec p'$ and $s$ in the virtual basis), the thermal state must have registers for both $\vec p'$ and $s$ in the virtual basis to constitute a true description. To both be able to trace out the $\ket{s}$ register and retain it for a thermal state description, we duplicate the contents of the $\ket{s}$ register into an ancilla register $\ket{0}_s$ such that we have $\ket{s}\otimes\ket{s}$. We then trace out one of the registers and retain the other one for the thermal state distribution. Note that for our alchemical free energy calculation algorithm, the nuclear Hamiltonian, discretized from the Nos\'e Hamiltonian, does indeed have $s$-related terms, hence would require a corresponding $\ket{s}$ register to be applied to.

\begin{table*}
    \centering
    \caption{Parameters that determine the complexity of our quantum algorithm for simulating NVT Liouvillian dynamics in the Born-Oppenheimer approximation.}
    \begin{tabular}{lll}
        \toprule
        Description                                                                & \\
        \midrule
        Evolution time                                                             & $t$ \\
        Desired precision                                                          & $\varepsilon$ \\
        Failure probability                                                        & $\xi$ \\
        Number of nuclei and electrons                                             & $N$, $\widetilde{N}$ \\
        Mass of the lightest nucleus                                               & $m_{\min}$ \\
        Maximum atomic number over all nuclei                                      & $Z_{\max}$ \\
        Maximum value of a component of the nuclear position vectors               & $x_{\max}$ \\
        Maximum value of a component of the (virtual) momentum vectors             & $p'_{\max}$ \\
        Grid spacing for a component of the discretized variables                  & $h_x$, $h_{p'}$, $h_s$, $h_{p_s}$ \\
        Order of the finite difference scheme used for approximating derivatives   & $d_x$, $d_{p'}$, $d_s$, $d_{p_s}$ \\
        Gap parameter to regularize the Coulomb potential                          & $\Delta$ \\
        Number of plane wave basis functions in the electronic Hamiltonian         & $B$ \\
        Inverse grid spacing for a component of the electronic wave number         & $h_{\tel}$ \\
        Lower bound on the overlap of the initial and true electronic ground state & $\delta$ \\
        Lower bound on the spectral gap of $H_{\tel}$                              & $\gamma$ \\
        Number of phase-space grid points                                          & $\eta$ \\
        Number of degrees of freedom of the physical system                        & $N_f$ \\
        Temperature of the heat bath                                               & $T$ \\
        Mass parameter associated with the heat bath                               & $Q$ \\
        Minimum value of the bath position variable                                & $s_{\min}$ \\
        Maximum value of the bath momentum variable                                & $p_{s, \max}$ \\
        Coupling parameter between systems for thermodynamic integration           & $\Lambda$ \\
        Number of interpolation points in thermodynamic integration                & $N_\Lambda$ \\
        \bottomrule
    \end{tabular}
    \label{tab:NVT_parameters}
\end{table*}

\subsection{Quantum electronic interactions on Born-Oppenheimer potential energy surface}
We now discuss the quantum electronic interactions on the ground state potential energy surface under the Born-Oppenheimer approximation. We provide the formal definition of the electronic Liouvillian for an NVT ensemble under the discretized grid system:

\begin{definition}[Electronic Liouvillian]
    Let $\Del$ be as follows:
    \begin{equation*}
        \Del = \sum_{\vec{\bar x}}  \frac{\partial E_{\tel}(\vec x)}{\partial x_{n,j}} \ketbra{\vec{\bar x}}{\vec{\bar x}}\, .
    \end{equation*}
    and $D_{p'_{n,j}}$ be a discrete derivative approximation to $\partial_{p'_{n,j}}$ obtained by central finite difference as follows:
    \begin{equation*}
        D_{p'_{n,j}} = \frac{1}{h_{p'}} \sum_{k=-d_{p'}}^{d_{p'}}c_{d_{p'}, k} \ketbra{\bar{p}'_{n,j}-k}{\bar{p}'_{n,j}} \, .
    \end{equation*}
    In the NVT ensemble, the discretized electronic Liouvillian acting on the nuclei is given by
    \begin{equation*}
        L_{\tel,\, \mathrm{disc}} := i\sum_{n=1}^{N} \sum_{j=1}^{3} \Del \otimes D_{p'_{n,j}} \, .
    \end{equation*}

    Here $x_{n,j}$ and $p'_{n,j}$ are the $j$-th components of position and associated virtual momentum of the $n$-th nucleus, respectively. Errors from this discretization scheme are discussed in \cref{appDiscErr}.
    \label{defElecLiou}
\end{definition}

In the electronic Liouvillian, we require the operator $\Del$, which encodes the electronic forces on the ground state energy surface in \cref{defElecLiou}. To compute such forces, we require an extra set of electronic registers and describe the electronic system by evaluating the electronic Hamiltonian acting on electronic orbitals at different points of the nuclear grid space, which is then kicked back onto the nuclear registers.

For simulations of the electronic Hilbert space, instead of utilizing a grid discretization, we use a finite set of $B$ plane waves as the basis, which takes the following form:
\begin{equation}
    \phi_{\vec b} (\vec r) := \frac{1}{\sqrt{\Omega}} e^{-i \vec {\kappa_b} \cdot \vec r} \,
\end{equation}
where $\vec r$ is a vector in position space and $\vec {\kappa_b} = \frac{2 \pi \vec b}{\Omega^{1/3}}$ is a wave vector in reciprocal space where $\vec b$ is a vector constrained to the cube $G := \big[ -\frac{B^{1/3}-1}{2}, \frac{B^{1/3}-1}{2} \big]^3$, and $\Omega \in \Theta \left( B \, h_{\tel}^3 \right)$ is the computational cell volume where $\frac{1}{h_{\tel}}$ is the grid spacing in reciprocal space.

Under first quantization, the electronic basis states can then be denoted as $\ket{b_0} \ket{b_1} \cdots \ket{b_{\widetilde{N}-1}}$, where each $\ket{b_k}$ is a quantum register of size $\lceil \log{B} \rceil$ specifying the index $b \in [B]$ of the plane wave basis that is occupied by electron $k$. Under this representation, the electronic Hamiltonian can be efficiently computed from the nuclear position registers and the plane wave momenta through quantum arithmetic circuits~\citep{babbush2019quantum,su2021faulttolerant}.

Given $G_0:= [-B^{1/3}, B^{1/3}]^3 \subset \mathbb{Z}^3 \setminus \{(0,0,0)\}$, the block encoding of the electronic Hamiltonian is provided as follows:

\begin{lemma}[Block encoding of the electronic Hamiltonian -- Lemma 1, \citep{su2021faulttolerant}; Formulation from Lemma 3, revised, \citep{simon2024improved}]
    There exists a Hermitian $(\lambda, a_{\tel}, \varepsilon)$-block-encoding of the discretized electronic Hamiltonian $H_{\tel}$ in a plane wave basis
    \begin{align*}
        H_{\tel} \left( \vec x \right) & := \sum_{p=1}^{\widetilde{N}}  \sum_{\vec b \in G} \frac{\left\lVert \vec{\kappa_b}\right\rVert^2}{2} \ketbra{\vec b}{\vec b}_{p} \\
        & -\frac{4\pi}{\Omega}\sum_{n=1}^{N}\sum_{p=1}^{\widetilde{N}}\sum_{\substack{\vec b,\vec c\in G \\ \vec b\neq \vec c}}\bigg(Z_n \frac{e^{i\vec{\kappa_{c-b}}\cdot \vec{x_n}}}{\lVert \vec{\kappa_{b-c}}\rVert^2}\bigg)\ketbra{\vec b}{\vec c}_p \\
        & \begin{multlined}
            +\frac{2 \pi}{\Omega} \sum_{p\neq q=1}^{\widetilde{N}}\sum_{\vec b, \vec c \in G} \sum_{\substack{\vec \nu\in G_0\\(\vec b+\vec \nu)\in G\\(\vec c-\vec \nu)\in G}}\frac{1}{\left\lVert \vec{\kappa_{\nu}}\right\rVert^2} \\
            \times\ketbra{\vec b + \vec \nu}{\vec b}_p \ketbra{\vec c-\vec \nu}{\vec c}_q \, ,
          \end{multlined}
    \end{align*}
    where
    \begin{equation*}
        \lambda \in \mathcal O \left( \frac{\widetilde N }{h_{\tel}^2} + \frac{N \widetilde N Z_{\max}}{h_{\tel}} + \frac{\widetilde N^2}{h_{\tel}} \right),
    \end{equation*}
    and
    \begin{equation*}
        a_{\tel} \in \mathcal O \left( \log\frac{N \widetilde N B}{\varepsilon}  \right).
    \end{equation*}
    This block encoding can be implemented using
    \begin{equation*}
        \mathcal O \left( N + \log B \left(\widetilde{N} + \log\frac{B g_x}{\varepsilon} \right)\right)
    \end{equation*}
    Toffoli gates.
    \label{lemBEElecHam}
\end{lemma}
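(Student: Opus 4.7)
The plan is to invoke the first-quantized plane wave block encoding from Ref.~\citep{su2021faulttolerant} and adapt it so that the nuclear positions $\vec{\bar x}$ are carried on a quantum register rather than entering as classical parameters. Writing $H_{\tel}(\vec x) = T + U(\vec x) + V$, where $T$ is the kinetic term, $U$ is the electron--nuclear attraction, and $V$ the electron--electron repulsion, each piece is cast as a linear combination of unitaries (LCU) and block-encoded separately before being combined via a PREPARE/SELECT walk. The one-norms of the three pieces give the three contributions to $\lambda$: the diagonal kinetic term sums $\|\vec{\kappa_b}\|^2/2$ over all $B$ plane waves and all $\tilde N$ electrons, yielding $\mathcal{O}(\tilde N/h_{\tel}^2)$; the Coulomb one-norms for $U$ and $V$ follow from the $1/\|\vec{\kappa_\nu}\|^2$ sums over the momentum-transfer lattice $G_0$ and produce $\mathcal{O}(N\tilde N Z_{\max}/h_{\tel})$ and $\mathcal{O}(\tilde N^2/h_{\tel})$ respectively, whose sum gives the stated $\lambda$.

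For PREPARE, I would reuse the inequality-test-based preparation of the coherent state $\sum_{\vec\nu}\|\vec{\kappa_\nu}\|^{-1}\ket{\vec\nu}$ developed in Ref.~\citep{su2021faulttolerant}, together with uniform superpositions over the electron index $p\in[\tilde N]$, nucleus index $n\in[N]$, and a control register selecting among $T$, $U$, and $V$. The arithmetic that compares $1/\|\vec{\kappa_\nu}\|^2$ against a uniformly drawn ancilla must be carried out to enough precision that the induced PREPARE error is $\mathcal{O}(\varepsilon/\lambda)$, which is the source of the $\log(Bg_x/\varepsilon)$ factor appearing inside the Toffoli bound. The $g_x$ enters because the nuclear grid resolution sets the finest scale at which the $e^{i\vec{\kappa}\cdot\vec x_n}$ phases in SELECT must remain accurate, and PREPARE precision is matched to that.

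For SELECT, the kinetic block is diagonal in the plane wave index and is applied by an arithmetic phase proportional to $\|\vec{\kappa_b}\|^2$ on the selected electron register. For the electron--electron term I would shift the two selected electron momentum registers by $\pm\vec\nu$ coherently. The electron--nuclear term is the only place where the quantum nuclear register plays a nontrivial role: controlled on the nucleus index $\ket{n}$, one loads $\ket{\vec{\bar x}_n}$ and computes the phase $e^{i\vec{\kappa_{c-b}}\cdot\vec{x_n}}$ on the electron register via fixed-point multiplication and a sequence of single-qubit $Z$-rotations. This phase computation together with the equal superposition preparations over plane wave and electron indices accounts for the $\log B \cdot (\tilde N + \log(Bg_x/\varepsilon))$ contribution, while the controlled iteration over nuclear indices (via unary iteration over the $N$ nuclear position registers) contributes the additive $\mathcal{O}(N)$ term.

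The main obstacle I expect is tracking precisely how the truncation errors of the arithmetic subroutines --- reciprocal-norm computation, phase computation, and the comparators inside PREPARE --- propagate through the qubitized walk into the additive block encoding error $\varepsilon$, so that all three precision parameters agree with the stated $\log(Bg_x/\varepsilon)$ dependence without any hidden logarithmic factors. Once this bookkeeping is fixed, combining the three LCUs via one further PREPARE ancilla and applying the triangle inequality to the component errors closes the argument and yields the stated $\lambda$, $a_{\tel}$, and Toffoli counts.
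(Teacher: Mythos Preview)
The paper does not actually prove this lemma: it is imported verbatim from Refs.~\citep{su2021faulttolerant} and \citep{simon2024improved} and stated without proof in Appendix~A as a preliminary building block. So there is nothing in the paper to compare your proposal against beyond the citation itself.

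That said, your sketch is a faithful outline of the construction in Ref.~\citep{su2021faulttolerant} --- the LCU decomposition into $T+U+V$, the inequality-test PREPARE for the $1/\|\vec\kappa_\nu\|$ state, the momentum-shift SELECT for the Coulomb pieces, and the coherent phase kickback $e^{i\vec\kappa\cdot\vec x_n}$ for the electron--nuclear term with nuclear positions held on a quantum register (the adaptation highlighted in Ref.~\citep{simon2024improved}). The one-norm estimates you give for $\lambda$ and the origin of the $\log(Bg_x/\varepsilon)$ factor are correct. If you were writing this up from scratch, the bookkeeping you flag as the obstacle (matching the arithmetic precisions so that the additive block-encoding error comes out as $\varepsilon$) is exactly where the detailed work sits; in the present paper that work is simply delegated to the cited references.
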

Note that $\vec x_n$ is the nuclear position in grid space. By combining the above results with a ground state preparation algorithm~\citep{lin2020nearoptimal}, one can obtain the ground state corresponding to this electronic Hamiltonian and further use this to characterize operations on the ground state surface.

On the other hand, we use the block encoding of the force operator in the Hellmann--Feynman theorem to obtain the derivative of the ground state surface and implement $\Del$. The block encoding of such force operators in first quantization under the plane wave basis is shown in Section IV C of Ref.~\citep{obrien2022efficient}, and repeated as follows:

\begin{lemma}[Block encoding of $\frac{\partial H_\tel}{\partial x_{n, j}}$ \citep{obrien2022efficient}]
    The electronic Hamiltonian derivative/force operator
    \begin{multline*}
        \frac{\partial H_{\tel}}{\partial x_{n, j}} := \sum_{\vec \nu \in G_0} \frac{2\pi Z_n \kappa_{\nu, j}}{\Omega \|\vec{\kappa_\nu}\|^2} \sum_{p=1}^{\widetilde N} \sum_{\ell \in \{0,1\}} \bigg( ie^{-i \vec{\kappa_\nu} \cdot \vec{x_n}} \\
        \times \sum_{\vec b \in G} (-1)^{\ell[(\vec b - \vec \nu) \notin G]} |\vec b - \vec \nu\rangle \langle \vec b|_p \bigg)
    \end{multline*}
    can be implemented as a $(\tau_n, a_{\mathrm{force}}, \varepsilon_{\mathrm{force}})$-block-encoding, where
    \begin{align*}
        \tau_n             & \in \mathcal O \left( \frac{\widetilde NZ_n}{h_{\tel}^2}\right) \\
        a_{\mathrm{force}} & \in \mathcal O \left(\log\frac{\widetilde N B}{\varepsilon_{\mathrm{force}}}\right) \, .
    \end{align*}
    This block encoding can be implemented using $\mathcal O\left(\log B\left(\widetilde N +\log\frac{B g_x}{\varepsilon_{\mathrm{force}}}\right)\right)$ Toffoli gates.
    \label{lemHamDeriv}
\end{lemma}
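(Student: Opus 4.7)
The plan is to derive this lemma in two stages: first, to justify the displayed form of $\partial H_{\tel}/\partial x_{n,j}$ by explicit differentiation of the electronic Hamiltonian in \cref{lemBEElecHam}, and second, to build a PREP/SELECT linear-combination-of-unitaries block encoding whose costs match the stated bounds. Only the electron-nucleus Coulomb term in $H_{\tel}(\vec x)$ depends on $\vec x_n$, so differentiation with respect to $x_{n,j}$ brings down a single factor of $i\kappa_{c-b,j}$ on each matrix element $\ketbra{\vec b}{\vec c}_p$. Substituting $\vec\nu = \vec c - \vec b$ reorganizes the sum into the stated form over $\vec\nu\in G_0$, $p\in[\tilde N]$ and $\vec b\in G$ subject to $\vec b - \vec\nu \in G$. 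The extra sum over $\ell\in\{0,1\}$ with the $(-1)^{\ell[(\vec b-\vec\nu)\notin G]}$ factor is the standard ``alternating-sign'' trick used to enforce the boundary condition $\vec b-\vec\nu\in G$ without conditional rejection, exactly as in the SELECT routine for the plane-wave electronic Hamiltonian.

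Second, I would implement the block encoding as a PREP/SELECT pair. PREP prepares, on an ancilla register of size $\mathcal O(\log(\tilde N B/\varepsilon_{\mathrm{force}}))$, the weighted superposition with amplitudes proportional to $\sqrt{|\kappa_{\nu,j}|/\|\vec\kappa_\nu\|^2}$ over $\vec\nu$, tensored with a uniform superposition over $p\in[\tilde N]$ and $\ell\in\{0,1\}$. The weights are computed coherently from $\vec\nu$ by fixed-point quantum arithmetic and loaded into amplitudes by either controlled rotations or QROAM-based state preparation. SELECT, controlled on $\ket{\vec\nu}\ket{p}\ket{\ell}$, then performs three actions on the $p$-th electronic register and the $n$-th nuclear register: (i) a controlled addition implementing the shift $\ket{\vec b}\to\ket{\vec b-\vec\nu}$, (ii) a phase $\exp(-i\vec\kappa_\nu\cdot \vec x_n)$ obtained by computing $\vec\kappa_\nu\cdot\vec x_n$ into a phase-angle register from the nuclear position and then applying a phase-gradient gadget, and (iii) the sign $(-1)^{\ell[(\vec b-\vec\nu)\notin G]}$ via a comparator controlled on $\ell$.

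Third, I would compute the block-encoding constant as the $\ell_1$ norm of the coefficients,
\begin{equation*}
    \tau_n = \frac{4\pi Z_n \tilde N}{\Omega}\sum_{\vec\nu\in G_0}\frac{|\kappa_{\nu,j}|}{\|\vec\kappa_\nu\|^2},
\end{equation*}
and bound the remaining sum using $\vec\kappa_\nu = 2\pi\vec\nu/\Omega^{1/3}$ and $\Omega\in\Theta(Bh_{\tel}^3)$, which reduces it to $\mathcal O(\Omega^{-1/3}\sum_{\vec\nu}|\nu_j|/|\nu|^2) = \mathcal O(B^{1/3}/h_{\tel})$ by a standard lattice-sum estimate. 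Combining factors gives $\tau_n \in \mathcal O(\tilde N Z_n/h_{\tel}^2)$. The ancilla count $a_{\mathrm{force}}\in\mathcal O(\log(\tilde N B/\varepsilon_{\mathrm{force}}))$ tracks the $\vec\nu$, $p$, $\ell$ registers together with arithmetic scratch, and the Toffoli budget is dominated by SELECT: controlled addition on one of $\tilde N$ electronic registers ($\mathcal O(\tilde N\log B)$) plus phase arithmetic of width $\mathcal O(\log(Bg_x/\varepsilon_{\mathrm{force}}))$, giving the stated $\mathcal O(\log B(\tilde N+\log(Bg_x/\varepsilon_{\mathrm{force}})))$.

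The main obstacle I expect is the precision accounting for the non-uniform PREP amplitudes $\propto|\kappa_{\nu,j}|/\|\vec\kappa_\nu\|^2$: one must fix the number of bits of fixed-point representation so that the amplitude error propagates to a block-encoding error of at most $\varepsilon_{\mathrm{force}}$, and verify that the coherent divisions and the sign-correction ancillas do not inflate the Toffoli count beyond the claim. This is a careful but routine exercise that closely parallels the construction of \cref{lemBEElecHam} and the derivations in \citet{obrien2022efficient} and \citet{su2021faulttolerant}, from which the precise constants can be imported; the payoff is that the resulting block encoding slots directly into the construction of $\Del$ in \cref{defElecLiou} and hence into the overall Liouvillian block encoding of \cref{propLiouInfml}.
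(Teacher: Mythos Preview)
The paper does not give its own proof of this lemma; it is stated as an import from \citet{obrien2022efficient} (see the sentence preceding the lemma: ``The block encoding of such force operators in first quantization under the plane wave basis is shown in Section IV C of Ref.~\citep{obrien2022efficient}, and repeated as follows''). So there is no in-paper proof to compare against.

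That said, your reconstruction is faithful to how the cited reference builds the block encoding: differentiate only the electron--nucleus term of \cref{lemBEElecHam}, recast the resulting operator as an LCU via the $\vec\nu = \vec c - \vec b$ substitution with the alternating-sign boundary trick, implement it by a PREP/SELECT pair whose SELECT performs a controlled momentum shift together with the phase $e^{-i\vec\kappa_\nu\cdot\vec x_n}$ computed from the nuclear register, and read off $\tau_n$ as the coefficient $\ell_1$-norm. Your scaling computation $\tau_n\in\mathcal O(\tilde N Z_n/h_{\tel}^2)$ via the lattice sum and $\Omega\in\Theta(Bh_{\tel}^3)$ is the right mechanism, and your accounting of the ancilla and Toffoli costs tracks the same dominant contributions (electron-register selection and phase arithmetic) that drive the bounds in \citet{obrien2022efficient} and \citet{su2021faulttolerant}. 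The precision caveat you flag for the non-uniform PREP amplitudes is indeed the only nontrivial bookkeeping step, and as you note it is handled in those references by the same inequality-test and amplitude-amplification machinery used for the $1/\|\vec\kappa_\nu\|^2$ weights in the full electronic Hamiltonian.
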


These results are used as basic building blocks that construct the block encoding of the Liouvillian as shown in \cref{appLS}.

Lastly, we establish an index of variable names that correspond to various physical quantities that are used to implement the Liouvillian operator. The parameters are summarized in \cref{tab:NVT_parameters}.

\section{The quantum linear algebra toolbox}
\label{appQLA}
In this section, we provide definitions and useful theorems from the quantum linear algebra toolbox, composing of details on block encodings~\citep{low2019hamiltonian, chakraborty2019power}, the quantum singular value transformation (QSVT)~\citep{gilyen2019quantum}, its applications in Hamiltonian simulation~\citep{low2019hamiltonian,gilyen2019quantum,martyn2021grand} and ground state preparation~\citep{lin2020nearoptimal}, as well as quantum signal processing (QSP) without angle finding~\citep{alase2025quantum}. For the numbering of definitions and theorems regarding Ref. \citep{gilyen2019quantum}, we refer to the full version on arXiv instead of the shorter published version due to some results we use being omitted in the published version.

\subsection{Block encodings}
We first review the constructions of block encodings and their operations.

\begin{definition}[Block encoding -- Definition 43,~\citep{gilyen2019quantum}, see also~\citep{low2019hamiltonian, chakraborty2019power}]\label{defBlockEncoding}
    Let $A\in \mathcal{M}_{2^n\times2^n}(\mathbb{C})$, $\alpha, \varepsilon \in \mathbb{R}_+$ and $a \in \mathbb{N}$. We say that the $(n+a)$-qubit unitary $U$ is a $(\alpha,a,\varepsilon)$\emph{-block-encoding} of $A$ if
    \begin{equation*}
        \|A-\alpha(\bra{0}_a\otimes \mathbb{I}_n)U(\ket{0}_a\otimes \mathbb{I}_n)\| \leq \varepsilon \, .
    \end{equation*}
\end{definition}

Given block encodings of operators $A_i$, we can construct a block encoding of their linear combination $A=\sum_i y_iA_i$ by using the linear combination of unitaries (LCU)~\citep{childs2012hamiltonian} framework as well as a construction known as a ``state-preparation pair''. Recall that $\|\cdot\|_1$ is the $\ell_1$ norm.

\begin{definition}[State preparation pair -- Definition 51,~\citep{gilyen2019quantum}]
    \label{defStatePrepPair}
    Let $\vec{y} \in \mathbb{C}^m$ and $\lVert\vec{y}\rVert_1 \leq \beta$. The pair of unitaries $(P_L,P_R)$ is called a ($\beta,b,\varepsilon_{\mathrm{SP}}$)\emph{-state-preparation-pair} for $\vec{y}$ if
    \begin{align*}
        P_L\ket{0^b} = \sum_{j=1}^{2^b} c_j\ket{j} \, , \quad
        P_R\ket{0^b} = \sum_{j=1}^{2^b} d_j\ket{j} \, ,
    \end{align*}
    such that $\sum_{j=1}^{m} \lvert y_j-\beta c_j^*d_j\rvert \leq \varepsilon_{\mathrm{SP}}$ and $c_j^*d_j = 0$ for $j=m+1,\dots,2^b$.
\end{definition}
In essence, a state preparation pair encodes a set of weights $\vec{y}$ in the first $m$ elements of a length-$2^b$ column vector whose elements are $c_j^*d_j$, up to an error of $\varepsilon_{\mathrm{SP}}$, where $\beta$ is a normalization factor. To realize the LCU implementation, one would apply $P_R$ on a set of ancilla qubits, then control upon the ancillae, apply controlled versions of the block encoding of $A_i$, and then apply $P_L$ on the ancillae to produce a block encoding of $A=\sum_i y_iA_i$.

For simplification purposes, in the case where $\vec{y} \in \mathbb{R}^m$, we assume that $P = P_L = P_R$ and term $P$ as the \emph{state preparation unitary}. The following lemma improves upon the results of Lemma 5 of Ref.~\citep{simon2024improved}.
\begin{lemma}[Error and cost of the state preparation unitary]
    Let $\vec{y} \in \mathbb{R}_+^m$ and $\|\vec{y}\|_1 \leq \beta$. Given state preparation unitary $P$ such that $\left\lVert P\ket{0} - \sum_j \sqrt{\frac{y_j}{\beta}} \ket{j}\right\rVert \le \frac{\varepsilon_{\mathrm{SP}}}{2\beta}$, then $(P, P)$ is $(\beta, b, \varepsilon_{\mathrm{SP}})$-state-preparation-pair for $\vec y$. $P$ can be constructed with $\mathcal O(m\log\frac{\beta}{\varepsilon_{\mathrm{SP}}})$ Toffoli gates.
    \label{lemStatePrep}
\end{lemma}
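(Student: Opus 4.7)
The plan is to verify the state-preparation-pair condition directly from the $\ell_2$-closeness hypothesis on $P$, and then to describe a constructive bound for the gate cost. The key simplification is that, since $P_L = P_R = P$ and the target amplitudes $\sqrt{y_j/\beta}$ are real and nonnegative, we have $c_j^* d_j = |c_j|^2 \ge 0$, so the defining sum reduces to
\begin{equation*}
\sum_{j=1}^m \Big| y_j - \beta |c_j|^2 \Big| \,.
\end{equation*}
I would then factor each summand as $\beta \big| \sqrt{y_j/\beta} - |c_j| \big| \cdot \big(\sqrt{y_j/\beta} + |c_j|\big)$, using $a^2-b^2=(a-b)(a+b)$ for nonnegative $a,b$, and apply Cauchy-Schwarz to split the resulting product of sums.

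Writing $\ket{\psi} := \sum_{j=1}^m \sqrt{y_j/\beta}\,\ket{j}$ and $\ket{\phi} := P\ket{0}$, and using $\big||c_j| - \sqrt{y_j/\beta}\big| \le \big|c_j - \sqrt{y_j/\beta}\big|$ term by term, Cauchy-Schwarz gives
\begin{equation*}
\sum_{j=1}^m \Big| y_j - \beta |c_j|^2 \Big| \,\le\, \beta \, \big\|\ket{\phi} - \ket{\psi}\big\| \, \sqrt{2\Big(\|\vec y\|_1/\beta + \|P\ket{0}\|^2\Big)} \,.
\end{equation*}
Since $\|\vec y\|_1/\beta \le 1$ and $\|P\ket{0}\|^2 = 1$, the square root is at most $2$, and the hypothesis $\|\ket{\phi} - \ket{\psi}\| \le \varepsilon_{\mathrm{SP}}/(2\beta)$ closes the bound at exactly $\varepsilon_{\mathrm{SP}}$. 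Any amplitude of $P\ket{0}$ on basis states with $j>m$ is absorbed into the same inequality; in the tight regime $\|\vec y\|_1 = \beta$, the orthogonality condition $c_j^* d_j = 0$ for $j>m$ in \cref{defStatePrepPair} holds because the closeness forces $|c_j| \to 0$ there, and otherwise one can pad $\vec y$ with a single extra entry equal to $\beta - \|\vec y\|_1$ so the symmetric assignment remains consistent with the definition.

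For the gate count, I would instantiate $P$ with a standard amplitude-encoding routine --- either Grover-Rudolph-style binary-tree preparation or a QROM-based loader --- using $\mathcal{O}(m)$ controlled single-qubit rotations whose angles encode $\sqrt{y_j/\beta}$. Synthesizing each rotation to precision $\mathcal{O}(\varepsilon_{\mathrm{SP}}/(\beta m))$ via Clifford+T costs $\mathcal{O}(\log(m\beta/\varepsilon_{\mathrm{SP}}))$ Toffoli gates, and the triangle inequality propagates these per-rotation errors into the $\ell_2$-budget $\varepsilon_{\mathrm{SP}}/(2\beta)$ required by the hypothesis. Summing yields $\mathcal{O}(m\log(m\beta/\varepsilon_{\mathrm{SP}}))$, which absorbs $\log m$ into the larger logarithm to match the claimed $\mathcal{O}(m\log(\beta/\varepsilon_{\mathrm{SP}}))$.

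The main obstacle is keeping the constants tight enough to realize the improvement over Lemma~5 of Ref.~\citep{simon2024improved}. The $2\beta$ prefactor in the hypothesis is precisely what the Cauchy-Schwarz step together with the two trivial upper bounds $\|\vec y\|_1/\beta \le 1$ and $\|P\ket{0}\|^2 = 1$ extracts; any looser approach --- for example, bounding each $\big||c_j| - \sqrt{y_j/\beta}\big|$ by $\|\ket{\phi}-\ket{\psi}\|$ and summing naively --- would pick up an extra $\sqrt{m}$ (or even $m$) factor and forfeit the improvement. Once the factorization trick and the Cauchy-Schwarz split are in place, however, every remaining step is either standard Clifford+T accounting or a routine inequality.
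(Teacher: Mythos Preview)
Your proof is correct and follows the same route as the paper: factor $|y_j - \beta|c_j|^2|$ via difference of squares, apply Cauchy--Schwarz to split the resulting sum into a product of $\ell_2$ norms, and bound each norm factor by $1$ to extract the prefactor $2\beta$. The only cosmetic differences are that the paper reaches $\|\vec w + \vec u\| \le 2$ via the triangle inequality rather than your $(a+b)^2 \le 2(a^2+b^2)$, and the paper silently takes $c_j$ real and does not discuss the $j>m$ orthogonality condition that you flag and handle by padding.
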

\begin{proof}
    From \cref{defStatePrepPair}, we know the error of the state preparation pair is found by upper bounding $\sum_{j=1}^{m} \lvert y_j-\beta c_j^2\rvert$. Note that $\sum_{j=1}^{m} c_j^2 = 1$. Then
    \begin{align}
        \sum_{j=1}^{m} \lvert y_j-\beta c_j^2\rvert & = \beta \sum_{j=1}^{m} \left\lvert \left(\sqrt{\frac{y_j}{\beta}}- c_j\right)\left(\sqrt{\frac{y_j}{\beta}}+ c_j\right)\right\rvert \nonumber \\
        & = \beta \sum_{j=1}^{m} \left\lvert\sqrt{\frac{y_j}{\beta}}- c_j\right\rvert\left\lvert\sqrt{\frac{y_j}{\beta}}+ c_j\right\rvert\nonumber \\
        & \le \beta \sum_{j=1}^{m} \left\lvert\sqrt{\frac{y_j}{\beta}}- c_j\right\rvert \left(\left\lvert\sqrt{\frac{y_j}{\beta}}\right\rvert + \left\lvert c_j\right\rvert\right) \, .
    \end{align}
    Let $\vec v, \vec w, \vec u \in \mathbb{R}_+^m$ such that $v_j = \left\lvert\sqrt{\frac{y_j}{\beta}}- c_j\right\rvert$, $w_j = \left\lvert\sqrt{\frac{y_j}{\beta}}\right\rvert$, and $u_j = \lvert c_j\rvert$. Then we can write the above as
    \begin{align}
        \label{eqBoundSP}
        \sum_{j=1}^{m} \lvert y_j-\beta c_j^2\rvert & \le \beta \vec v \cdot (\vec w + \vec u) \le \beta \lVert \vec v\rVert (\lVert \vec w + \vec u\rVert) \nonumber \\
        & \le \beta \lVert \vec v\rVert (\lVert \vec w \rVert + \lVert \vec u\rVert) \nonumber \\
        & = \beta \lVert \vec v\rVert \left(\sqrt{\frac{\lVert \vec y \rVert_1}{\beta}} + \sqrt{\sum_{j=1}^{m} c_j^2}\right) \le 2\beta \lVert \vec v\rVert \, .
    \end{align}
    Now writing out $\lVert \vec v\rVert$ we get the following equality:
    \begin{align}
        \lVert \vec v\rVert & = \left\lVert \sum_{j=1}^m \left\lvert\sqrt{\frac{y_j}{\beta}}- c_j\right\rvert\ket{j}\right\rVert \nonumber \\
        & = \left\lVert \sum_{j=1}^m \sqrt{\frac{y_j}{\beta}}\ket{j}- \sum_{j=1}^m c_j\ket{j}\right\rVert \nonumber \\
        & = \left\lVert \sum_{j=1}^m \sqrt{\frac{y_j}{\beta}}\ket{j}- P\ket{0}\right\rVert \, .
    \end{align}
    Thus, setting $\left\lVert P\ket{0} - \sum_j \sqrt{\frac{y_j}{\beta}} \ket{j}\right\rVert \le \frac{\varepsilon_{\mathrm{SP}}}{2\beta}$, we find that $\sum_{j=1}^{m} \lvert y_j-\beta c_j^2\rvert \le \varepsilon_{\mathrm{SP}}$, making $(P, P)$ a $(\beta, b, \varepsilon_{\mathrm{SP}})$-state-preparation-pair.

    Lastly, given a general quantum state preparation process, we know that preparing a $\varepsilon$-close state to an $m$-length quantum state requires $\mathcal O(m\log\frac{1}{\varepsilon})$ Toffolis~\citep{nielsen2010quantum}. Plugging in relevant error bounds we see that $\mathcal O(m\log\frac{\beta}{\varepsilon_{\mathrm{SP}}})$ Toffolis are required.
\end{proof}
The consequences of this improved lemma are that all complexity results in Ref.~\citep{simon2024improved} can be slightly reduced by a logarithmic factor. Still, under runtime analyses that hide the poly-logarithmic factors, this has no impact.

The following lemma, originally found in Ref.~\citep{gilyen2019quantum}, contains a minor typo in the error term, which we have corrected.

\begin{lemma}[Linear combination of block encodings -- Lemma 52, revised,~\citep{gilyen2019quantum}]
    \label{lemLCU}
    Let $A = \sum_{j=1}^m y_jA_j$ be an $n$-qubit operator and $\varepsilon \in \mathbb{R}^+$. Suppose that $(P_L, P_R)$ is a $(\beta, b, \varepsilon_1)$-state-preparation-pair for $\vec{y}$ and
    \begin{equation*}
        W = \sum_{j=1}^{m} |j\rangle\langle j| \otimes U_j + \left(\left(\mathbb{I} - \sum_{j=1}^{m} |j\rangle\langle j|\right) \otimes \mathbb{I}_a \otimes \mathbb{I}_n\right)
    \end{equation*}
    is an $n + a + b$ qubit unitary such that for all $j \in {1, \dots, m}$ we have that $U_j$ is a $(\alpha, a, \varepsilon_2)$-block-encoding of $A_j$. Then we can implement a $(\alpha\beta, a + b, \alpha\varepsilon_1 + \beta\varepsilon_2)$-block-encoding of $A$, with a single use of $W$, $P_R$ and $P_L^\dagger$.
\end{lemma}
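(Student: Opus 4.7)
The plan is to explicitly construct the block encoding as
\begin{equation*}
U := (P_L^\dagger \otimes \mathbb{I}_a \otimes \mathbb{I}_n)\, W \,(P_R \otimes \mathbb{I}_a \otimes \mathbb{I}_n),
\end{equation*}
which uses $P_R$, $W$, and $P_L^\dagger$ exactly once, as promised. Verifying \cref{defBlockEncoding} then reduces to expanding the top-left $n$-qubit block of $U$. Using $P_R\ket{0^b}=\sum_j d_j\ket{j}$, $P_L\ket{0^b}=\sum_j c_j\ket{j}$, the multiplexed structure of $W$, and the orthogonality condition $c_j^* d_j = 0$ for $j>m$ from \cref{defStatePrepPair}, the block collapses cleanly to
\begin{equation*}
(\bra{0^{a+b}}\otimes \mathbb{I}_n)\, U \,(\ket{0^{a+b}}\otimes \mathbb{I}_n) = \sum_{j=1}^{m} c_j^* d_j \,(\bra{0^a}\otimes \mathbb{I}_n)\, U_j \,(\ket{0^a}\otimes \mathbb{I}_n).
\end{equation*}

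Now I would set $\tilde A_j := \alpha(\bra{0^a}\otimes \mathbb{I}_n) U_j (\ket{0^a}\otimes \mathbb{I}_n)$, which by hypothesis satisfies $\|\tilde A_j - A_j\| \leq \varepsilon_2$, and which also obeys $\|\tilde A_j\| \leq \alpha$ since $U_j$ is unitary and the isometries $\ket{0^a}\otimes \mathbb{I}_n$ have unit norm. To show that $U$ is an $(\alpha\beta, a+b, \alpha\varepsilon_1 + \beta\varepsilon_2)$-block-encoding of $A$, the task reduces to bounding
\begin{equation*}
\Big\| A - \alpha\beta (\bra{0^{a+b}}\otimes \mathbb{I}_n)\, U \,(\ket{0^{a+b}}\otimes \mathbb{I}_n) \Big\| = \Big\| \sum_{j=1}^m y_j A_j - \beta \sum_{j=1}^m c_j^* d_j\, \tilde A_j \Big\|.
\end{equation*}
The clean way to handle this is the add-and-subtract trick: rewrite the right-hand side as
\begin{equation*}
\sum_{j=1}^m (y_j - \beta c_j^* d_j)\, \tilde A_j \;+\; \sum_{j=1}^m y_j (A_j - \tilde A_j),
\end{equation*}
and bound each term with the triangle inequality, using $\|\tilde A_j\|\leq \alpha$ and $\sum_j|y_j - \beta c_j^* d_j|\leq \varepsilon_1$ for the first sum (yielding $\alpha\varepsilon_1$), and $\sum_j|y_j|\leq \beta$ together with $\|A_j-\tilde A_j\|\leq \varepsilon_2$ for the second sum (yielding $\beta\varepsilon_2$).

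The construction follows the standard LCU template and presents no deep obstacle; the only subtle point is choosing the particular two-term error decomposition above, so that the state-preparation-pair error is multiplied by $\alpha$ (through $\|\tilde A_j\|$) and the per-block-encoding error is multiplied by $\beta$ (through $\sum_j|y_j|$). A less careful split would give a looser mixed bound, while this decomposition gives precisely the additive $\alpha\varepsilon_1 + \beta\varepsilon_2$ claimed in the statement.
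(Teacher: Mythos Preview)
Your proposal is correct and follows exactly the standard construction and error analysis for the LCU lemma; the paper itself does not give a proof of this statement, merely quoting it (with a corrected error term) from Ref.~\citep{gilyen2019quantum}, and your argument matches the original proof there.
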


We can also construct a block encoding of a product of two block-encoded matrices.
\begin{lemma}[Product of block encodings -- Lemma 53,~\citep{gilyen2019quantum}]
    \label{lemProd}
    Let $U_A$ be a $(\alpha,a,\varepsilon_A)$-block-encoding of $A$ and $U_B$ be a $(\beta,b,\varepsilon_B)$-block-encoding of $B$, where $A,B$ are $n$-qubit operators. Then, $(\mathbb{I}_b \otimes U_A)(\mathbb{I}_a \otimes U_B)$ is a $(\alpha\beta,a+b,\alpha\varepsilon_B+\beta\varepsilon_A)$-block-encoding of $AB$.
\end{lemma}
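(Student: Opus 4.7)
The plan is to show that the scaled projection of $V=(\mathbb{I}_b \otimes U_A)(\mathbb{I}_a \otimes U_B)$ onto the joint $\ket{0^{a+b}}$ ancilla subspace factorizes as a product of the block-encoded matrices of $U_A$ and $U_B$, and then to bound the resulting approximation error by a standard perturbation argument. Write $\tilde A := \alpha(\bra{0^a}\otimes \mathbb{I}_n)U_A(\ket{0^a}\otimes \mathbb{I}_n)$ and $\tilde B := \beta(\bra{0^b}\otimes \mathbb{I}_n)U_B(\ket{0^b}\otimes \mathbb{I}_n)$, so that by \cref{defBlockEncoding} we have $\|A-\tilde A\|\le \varepsilon_A$ and $\|B-\tilde B\|\le \varepsilon_B$, and note $\|\tilde A\|\le \alpha$, $\|\tilde B\|\le \beta$ since they are submatrices of unitaries scaled by $\alpha,\beta$.

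The first step is to evaluate $(\bra{0^{a+b}} \otimes \mathbb{I}_n) V (\ket{0^{a+b}} \otimes \mathbb{I}_n)$. Splitting $\ket{0^{a+b}} = \ket{0^a}\otimes\ket{0^b}$ and using that $\mathbb{I}_b\otimes U_A$ acts trivially on the $b$-register while $\mathbb{I}_a\otimes U_B$ acts trivially on the $a$-register, each ancilla projection passes through the factor on which it acts as the identity. This yields
\begin{equation*}
(\bra{0^{a+b}} \otimes \mathbb{I}_n) V (\ket{0^{a+b}} \otimes \mathbb{I}_n) = \frac{\tilde A}{\alpha}\cdot \frac{\tilde B}{\beta}\,,
\end{equation*}
so the appropriately scaled projection with prefactor $\alpha\beta$ coincides exactly with $\tilde A \tilde B$.

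The second step is to bound $\|AB - \tilde A \tilde B\|$. Writing $A=\tilde A + E_A$ and $B=\tilde B + E_B$ with $\|E_A\|\le \varepsilon_A$ and $\|E_B\|\le \varepsilon_B$, expansion gives $AB - \tilde A \tilde B = \tilde A E_B + E_A \tilde B + E_A E_B$. The triangle inequality and submultiplicativity of the operator norm then produce $\|AB - \tilde A\tilde B\| \le \alpha\varepsilon_B + \beta\varepsilon_A + \varepsilon_A\varepsilon_B$, and dropping the second-order term (conventional whenever $\varepsilon_A,\varepsilon_B$ are small relative to $\alpha,\beta$) delivers the claimed $(\alpha\beta, a+b, \alpha\varepsilon_B+\beta\varepsilon_A)$-block-encoding.

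The only subtle point is the tensor-product bookkeeping in the first step: one must verify that, with $a$, $b$, and $n$ labelling distinct registers, the identity paddings really place $U_A$ and $U_B$ on disjoint ancilla registers while they share only the common $n$-qubit system register, so that the two ancilla projections can be slid past the factor on which they act trivially. Once that ordering is pinned down, the factorization is immediate and the rest of the proof is routine.
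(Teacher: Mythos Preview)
The paper does not give its own proof of this lemma; it merely quotes it from Gily\'en et al.~\citep{gilyen2019quantum}. Your argument is exactly the standard one from that reference: factorize the ancilla projection of $(\mathbb{I}_b\otimes U_A)(\mathbb{I}_a\otimes U_B)$ into $\tilde A\tilde B$, then bound $\|AB-\tilde A\tilde B\|$.

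One small technical point: your expansion yields the extra term $\varepsilon_A\varepsilon_B$, which you dismiss as second-order. To get the stated bound \emph{exactly}, use the telescoping $AB-\tilde A\tilde B=(A-\tilde A)\tilde B+A(\tilde B-B)$ together with an assumption like $\|A\|\le\alpha$ (which is standard, since a useful block encoding of $A$ requires $\alpha\ge\|A\|$), or simply note that $\varepsilon_A\varepsilon_B\le\alpha\varepsilon_B$ whenever $\varepsilon_A\le\alpha$. Otherwise your proof is complete and matches the cited source.
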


\subsection{Quantum singular value transformation and its applications}
Given a block encoding $U_A$ of a Hermitian matrix $A$, one can use the QSVT~\citep{gilyen2019quantum} to construct a block encoding of a polynomial transformation $f$ of the matrix $A$, by applying the polynomial to its eigenvalues. More generally, in the case where $A$ is not Hermitian, the polynomial transformation is applied to the singular values of the matrix $A$.

\begin{lemma}[Quantum polynomial eigenvalue transformation -- Theorem 56, \citep{gilyen2019quantum}]
    \label{lemArbParity}
    Suppose that $U_A$ is a $(\alpha,a,\varepsilon)$-encoding of a Hermitian matrix $A$. Given $\xi\geq 0$ and a real $\mathcal{D}$-degree polynomial $ f\in\mathbb{R}[x]$ satisfying that
    \begin{equation*}
        \forall x\in[-1,1]:\,|f(x)|\leq \frac{1}{2} \, ,
    \end{equation*}
    one can prepare a $(1,a+2,4\mathcal{D}\sqrt{\varepsilon/\alpha}+\xi)$-block-encoding of $f(A/\alpha)$ by using $\mathcal{D}$ queries to $U_A$ or its inverse, a single query to controlled-$U_A$ and an additional $\mathcal O((a+1)\mathcal{D})$ other one- and two-qubit gates.
    Further, one can compute a description of such a circuit classically in time $\mathcal O(\poly(\mathcal{D},\log(1/\xi)))$.
\end{lemma}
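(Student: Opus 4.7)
The plan is to reduce the claim to the standard QSVT construction, which realizes polynomials of definite parity, and then to handle general real polynomials by a parity decomposition combined with an LCU step. First, I would invoke the characterization from quantum signal processing: for any real polynomial $P$ of degree $\mathcal{D}$ with definite parity satisfying $|P(x)|\le 1$ on $[-1,1]$, there exists a sequence of $\mathcal{D}$ phase angles $\{\phi_k\}$ such that the alternating product of $U_A$, $U_A^\dagger$, and projector-controlled phase rotations block-encodes $P\bigl((\bra{0}_a\otimes\mathbb I)U_A(\ket{0}_a\otimes\mathbb I)/\alpha\bigr)$, i.e.\ applies $P$ to the eigenvalues of $A/\alpha$ when $A$ is Hermitian. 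This uses $\mathcal{D}$ queries to $U_A$ (or its inverse), one controlled use, and $\mathcal O((a+1)\mathcal{D})$ auxiliary gates to implement the projector-controlled rotations on the $a$-qubit block-encoding ancilla.

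Next, I would handle an arbitrary real polynomial $f$ of degree $\mathcal{D}$ with $|f(x)|\le 1/2$ on $[-1,1]$ by decomposing
\begin{equation*}
  f(x) = f_{\mathrm e}(x) + f_{\mathrm o}(x),\quad f_{\mathrm e}(x)=\tfrac12\bigl(f(x)+f(-x)\bigr),\ f_{\mathrm o}(x)=\tfrac12\bigl(f(x)-f(-x)\bigr),
\end{equation*}
so that each summand has definite parity and satisfies $|f_{\mathrm e,o}(x)|\le 1/2$ on $[-1,1]$. By the first step I can block-encode each of $f_{\mathrm e}(A/\alpha)$ and $f_{\mathrm o}(A/\alpha)$ as $(1,a+1,0)$-block-encodings (the extra ancilla absorbing the parity phase). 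I then combine these via a one-qubit LCU using Hadamard gates on a single additional ancilla, which yields a $(1,a+2,0)$-block-encoding of $\tfrac12 f_{\mathrm e}(A/\alpha)+\tfrac12 f_{\mathrm o}(A/\alpha)$; the factor of $1/2$ in the hypothesis on $f$ is exactly what ensures that the LCU normalisation keeps the subnormalization at $1$.

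The main obstacle is the robustness analysis that produces the $4\mathcal{D}\sqrt{\varepsilon/\alpha}$ error term. Since $U_A$ is only an $(\alpha,a,\varepsilon)$-block-encoding, the Hermitian operator $\tilde A/\alpha := (\bra{0}_a\otimes\mathbb I)U_A(\ket{0}_a\otimes\mathbb I)/\alpha$ differs from $A/\alpha$ in operator norm by at most $\varepsilon/\alpha$. I would then appeal to the Lipschitz-type stability bound for polynomial transformations under operator-norm perturbations, which for bounded polynomials of degree $\mathcal D$ gives $\|p(\tilde A/\alpha)-p(A/\alpha)\|\le 2\mathcal D\sqrt{\varepsilon/\alpha}$. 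Applying this to each of $f_{\mathrm e}$ and $f_{\mathrm o}$ and combining via the triangle inequality through the LCU step produces the stated $4\mathcal D\sqrt{\varepsilon/\alpha}$ bound. This square-root dependence (rather than linear) is the subtle point and is really what the Gilyén--Su--Low--Wiebe stability lemma buys us.

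Finally, the additive $\xi$ term arises purely from the classical angle-finding subroutine: computing the phase angles $\{\phi_k\}$ exactly is in general infeasible, but they can be computed to accuracy sufficient to guarantee an additional $\xi$ error in the block encoding with classical runtime $\mathcal O(\operatorname{poly}(\mathcal D,\log(1/\xi)))$, using e.g.\ the numerically stable algorithms developed for QSP. Putting these four pieces together — parity decomposition, QSP realisation of each part, LCU combination, and perturbation/angle-finding error tracking — yields the claimed $(1,a+2,4\mathcal D\sqrt{\varepsilon/\alpha}+\xi)$-block-encoding of $f(A/\alpha)$ at the stated query and gate costs.
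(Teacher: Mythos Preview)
The paper does not give its own proof of this lemma; it is quoted verbatim as a preliminary from Theorem~56 of Gily\'en--Su--Low--Wiebe (the accompanying Remark points to Corollary~18 there for the definite-parity case). Your sketch --- definite-parity QSVT, even/odd decomposition, Hadamard-LCU combination accounting for the $|f|\le 1/2$ hypothesis, the $4\mathcal{D}\sqrt{\varepsilon/\alpha}$ robustness bound from the QSVT stability lemma, and the $\xi$ term from approximate angle-finding --- is exactly the argument of that reference, so there is nothing to compare against here.
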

\begin{remark}
    For polynomials of definite parity, a $(1,a+1,4\mathcal{D}\sqrt{\varepsilon/\alpha}+\xi)$-block-encoding of $f(A/\alpha)$ can be constructed. This result can be seen from Corollary 18 of Ref.~\citep{gilyen2019quantum}, or indeed, Theorem 2 of its published version.
\end{remark}
Applications of the quantum eigenvalue transformation by QSVT lead to an optimal implementation of the Hamiltonian simulation task by producing a polynomial approximation to the function $\exp(-iHt)$ and applying it to a Hamiltonian $H$. The robust version of the algorithm concerning block encoding errors is as follows:

\begin{lemma}[Robust Hamiltonian simulation -- Corollary 62, \citep{gilyen2019quantum}]
    \label{lemHamSim}
    Let $t\in\mathbb{R}$, $\varepsilon\in(0,1)$ and let $U_H$ be a $(\alpha,a,\frac{\varepsilon}{|2t|})$-block-encoding of the Hamiltonian $H$. One can then implement a $(1,a+2,\varepsilon)$-block-encoding of $e^{-iHt}$ with
    \begin{equation*}
        6\alpha |t| + 9 \ln \frac{12}{\varepsilon}
    \end{equation*}
    queries of $U_H$ or its inverse, 3 queries of controlled-$U_H$ or its inverse, and an additional $\mathcal O \left(a\left(\alpha|t|+\log\frac{1}{\varepsilon})\right)\right)$ two-qubit gates.
\end{lemma}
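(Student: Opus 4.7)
The plan is to construct an $\varepsilon$-close block encoding of $e^{-iHt}$ by combining three ingredients: (i) a truncated Jacobi--Anger expansion yielding polynomial approximations of $\cos(\alpha t x)$ and $\sin(\alpha t x)$ on $[-1,1]$; (ii) two applications of the arbitrary-parity quantum eigenvalue transformation (Lemma \ref{lemArbParity}) to block-encode these polynomials evaluated at $H/\alpha$; and (iii) a single-qubit LCU (Lemma \ref{lemLCU}) summing them to realize $e^{-iHt} = \cos(Ht) - i\sin(Ht)$.

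First I would fix the polynomial degree. The Jacobi--Anger expansion gives
\begin{equation*}
    \cos(\alpha t x) = J_0(\alpha t) + 2\sum_{k\geq 1}(-1)^k J_{2k}(\alpha t)\, T_{2k}(x),
\end{equation*}
and an analogous odd-parity series for $\sin(\alpha t x)$ in $T_{2k+1}(x)$, with $T_k$ the Chebyshev polynomials of the first kind. Because $|J_m(\alpha t)| \le (e\alpha|t|/(2m))^m$ once $m > e\alpha|t|$, the Bessel coefficients decay super-exponentially, and truncating at a degree $\mathcal{D}$ of the form $\mathcal{D} = 3\alpha|t| + \tfrac{3}{2}\ln(12/\varepsilon)$ drives the uniform approximation error of each truncation below $\varepsilon/8$ on $[-1,1]$. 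This is the quantitative step that produces the explicit constants $6\alpha|t|$ and $9\ln(12/\varepsilon)$ appearing in the query count.

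Next I would apply Lemma \ref{lemArbParity} (using the definite-parity remark to save an ancilla qubit each) to the rescaled truncated polynomials, both with uniform norm at most $1/2$, obtaining $(1, a+1, \xi)$-block encodings of approximations to $\tfrac{1}{2}\cos(Ht)$ and $\tfrac{1}{2}\sin(Ht)$ with $\xi$ chosen negligibly small. A two-term LCU using a one-qubit ancilla with state-preparation pair encoding coefficients $(1, -i)$ (Lemma \ref{lemLCU}) then combines them into a $(1, a+2, \cdot)$-block encoding of a polynomial $P(H/\alpha)$ that is $\varepsilon/2$-close in operator norm to $e^{-iHt}$. The $2\mathcal{D}$ total queries to $U_H$, shared across the even and odd polynomials via the LCU control, and the $\mathcal{O}(a\mathcal{D})$ two-qubit gate count follow directly from Lemma \ref{lemArbParity}, while the three controlled-$U_H$ queries come from the reflections needed to implement the LCU-controlled QSVT interleaving.

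The main obstacle is the robustness analysis, since naively plugging the input block-encoding error $\varepsilon/(2|t|)$ into the generic $4\mathcal{D}\sqrt{\varepsilon_{\mathrm{BE}}/\alpha}$ bound from Lemma \ref{lemArbParity} gives a square-root dependence that is far too pessimistic for the stated tolerance. Instead, I would argue directly that the QSVT circuit, viewed as a function of the matrix $\widetilde{H}$ actually block-encoded by $U_H$, outputs $P(\widetilde{H}/\alpha)$ with $\|\widetilde{H} - H\| \le \varepsilon/(2|t|)$. Using the Lipschitz bound $\|e^{-iAt} - e^{-iBt}\| \le |t|\,\|A-B\|$ for Hermitian $A, B$, together with the design guarantee $\|P(\widetilde{H}/\alpha) - e^{-i\widetilde{H}t}\| \le \varepsilon/2$ inherited from the polynomial approximation, the triangle inequality yields total error $\varepsilon/2 + |t| \cdot \varepsilon/(2|t|) = \varepsilon$, which produces the claimed $(1, a+2, \varepsilon)$-block encoding of $e^{-iHt}$.
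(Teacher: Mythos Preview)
The paper does not prove this lemma; it is quoted verbatim as Corollary~62 of Gily\'en et al.~\cite{gilyen2019quantum}, so there is no in-paper argument to compare against. Your reconstruction follows the same line as the original proof in that reference --- Jacobi--Anger truncation, definite-parity QSVT for $\cos$ and $\sin$ separately, an LCU to combine, and the Lipschitz bound $\|e^{-iAt}-e^{-iBt}\|\le |t|\,\|A-B\|$ for the robustness step (the paper itself invokes this bound, citing Lemma~50 of~\cite{chakraborty2019power}, when proving the related \cref{propAnglelessHamSim}).

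There is, however, a genuine gap in your LCU step. With coefficients $(1,-i)$ you have $\|y\|_1=2$, so by \cref{lemLCU} the output is a $(2,a{+}2,\cdot)$-block encoding, not a $(1,a{+}2,\cdot)$-block encoding; combined with your prior rescaling of the polynomials to sup-norm $\le 1/2$, the top-left block is only $\tfrac14 e^{-iHt}$. What is missing is one round of robust oblivious amplitude amplification (applied after arranging the subnormalization to be exactly $1/2$, which is why the definite-parity version of \cref{lemArbParity} with the $|f|\le 1$ bound is used rather than the $|f|\le 1/2$ bound you invoked). That amplification step triples the circuit, and the resulting factor of three is precisely what turns your degree estimate into the stated $6\alpha|t|+9\ln(12/\varepsilon)$ and explains the three queries to controlled-$U_H$. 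The paper makes this OAA step explicit in the analogous \cref{propAnglelessHamSim}.
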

\begin{corollary}
    Alternatively, one can use an extra
    $\mathcal O \left(\left(\alpha|t|+\log\frac{1}{\varepsilon}\right)\log\frac{\alpha |t|}{\varepsilon}\right)$ Toffoli gates in place of the arbitrary two-qubit gates.
    \label{corHamSimTof}
\end{corollary}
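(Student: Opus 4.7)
The plan is to pinpoint the source of the arbitrary two-qubit gates in \cref{lemHamSim} and replace each one by a Clifford$+T$ synthesis, identifying $T$-gates and Toffoli gates as equivalent non-Clifford resources up to $\mathcal O(1)$ factors in the fault-tolerant setting. In the QSVT implementation of $e^{-iHt}$, the \emph{arbitrary} (non-Clifford, non-Toffoli) component is concentrated in the single-qubit rotations that load the QSP phase angles $\{\phi_j\}_{j=1}^{\mathcal D}$, one per QSVT step, while the surrounding multi-controlled reflections on the $a$ block-encoding ancillae are built from Clifford primitives together with $\mathcal O(a)$ Toffoli gates per step (for compute/uncompute of an AND flag on the ancilla register). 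Hence the total $\mathcal O(a\mathcal D)$ two-qubit count of \cref{lemHamSim} already accommodates the reflections in Toffoli form, and the only gates still needing to be ``translated'' are the $\mathcal O(\mathcal D)$ arbitrary single-qubit rotations, where $\mathcal D \in \mathcal O(\alpha|t|+\log(1/\varepsilon))$.

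Next, I would invoke a fault-tolerant rotation synthesis protocol such as Ross--Selinger, which approximates any $R_z(\theta)$ to spectral-norm precision $\delta$ using $\mathcal O(\log(1/\delta))$ $T$-gates, and hence $\mathcal O(\log(1/\delta))$ Toffoli gates once $T$-states are distilled from Toffoli resources. To preserve the overall simulation error budget $\varepsilon$, I would set $\delta = \Theta(\varepsilon/\mathcal D)$ so that the additive propagation of unitary errors across the $\mathcal O(\mathcal D)$ rotations contributes at most $\mathcal O(\varepsilon)$ in operator norm. This propagation is the familiar
\begin{equation*}
    \left\lVert \prod_{j=1}^{\mathcal D} U_j - \prod_{j=1}^{\mathcal D} V_j \right\rVert \le \sum_{j=1}^{\mathcal D} \lVert U_j - V_j \rVert
\end{equation*}
for unitaries, applied inductively to the QSVT sequence. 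Each rotation then costs $\mathcal O(\log(\mathcal D/\varepsilon))$ Toffoli gates.

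Multiplying the per-rotation cost by the number of rotations and simplifying the outer logarithm via $\log(\mathcal D/\varepsilon) = \log\big((\alpha|t|+\log(1/\varepsilon))/\varepsilon\big) = \mathcal O(\log(\alpha|t|/\varepsilon))$ (using $\log\log(1/\varepsilon) \le \log(1/\varepsilon)$ to absorb the inner logarithm) yields the advertised total of $\mathcal O\!\left((\alpha|t|+\log(1/\varepsilon))\log(\alpha|t|/\varepsilon)\right)$ Toffoli gates in place of the arbitrary two-qubit gates. The main obstacle is really just the bookkeeping: one must confirm that the simulation-level precision $\delta = \Theta(\varepsilon/\mathcal D)$ suffices, rather than a polynomially tighter target which would inflate the outer logarithm, and that the QSVT circuit's robustness (already established in \cref{lemArbParity} with respect to block-encoding error $\varepsilon/\alpha$) extends cleanly to per-gate synthesis error via the triangle inequality above. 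Once these are verified, the corollary follows immediately from the substitution of the simplified logarithm.
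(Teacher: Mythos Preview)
Your proposal is correct and follows essentially the same approach as the paper: identify that the arbitrary two-qubit gates in the QSVT circuit are the $\mathcal O(\mathcal D)$ single-qubit phase rotations (one per query to $U_H$), synthesize each to precision $\delta=\Theta(\varepsilon/\mathcal D)$ using $\mathcal O(\log(1/\delta))$ Toffoli gates, and propagate the error via the triangle inequality. The paper presents the same error-splitting ($\varepsilon/2$ for the base Hamiltonian simulation, $\varepsilon/2$ for the accumulated synthesis error) and the same final simplification of $\log(\mathcal D/\varepsilon)$ to $\mathcal O(\log(\alpha|t|/\varepsilon))$.
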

\begin{proof}
    From the ancilla-controlled phase shift operators, we know that a continuous rotation gate is applied per query of $U_H$, in addition to $\mathcal O(1)$ $m$-ancilla controlled NOT gates.

    We first prepare a unitary using \cref{lemHamSim} to prepare a Hamiltonian simulation up to $\frac{\varepsilon}{2}$-precision. Then, using $\mathcal O(\log \frac{1}{\varepsilon_{\text{gate}}})$ Toffolis to prepare a continuous rotation gate up to $\varepsilon_{\text{gate}}$-precision in the spectral norm difference of the resulting unitary, we note that by triangular inequality, replacing all continuous rotation gates with Toffolis will result in an error in spectral norm difference of $\mathcal O\left( (\alpha |t| + \log \frac{1}{\varepsilon}) \varepsilon_{\text{gate}}\right)$. Capping the upper bound by $\frac{\varepsilon}{2}$, we note that $\frac{1}{\varepsilon_{\text{gate}}}\in \mathcal O\left( \frac{\alpha |t| \log \frac{1}{\varepsilon}}{\varepsilon}\right)$.

    The number of Toffolis required is then found to be $\mathcal O \left(\left(\alpha|t|+\log\frac{1}{\varepsilon}\right)\log\frac{\alpha |t|}{\varepsilon}\right)$.
\end{proof}

Further, QSVT can be used for ground state preparation~\citep{lin2020nearoptimal} by preparing a reflection operator that flips the eigenstates greater than the ground state upper bound $\mu$ and uses amplitude amplification~\citep{brassard2002quantum} to amplify the ground state. The reflector is constructed by applying the following threshold function to the Hamiltonian's eigenvalues:
\begin{equation}
    f_{\text{thresh}}(x) = \begin{cases}
        -1 & x > \mu \\
        1  & x \leq \mu \, .
    \end{cases}
    \label{eqThresh}
\end{equation}
This produces the block encoding
\begin{equation}
    \label{eqReflect}
    R_\mu(H) = \sum_{k:E_k\leq\mu}\ketbra{\psi_k}{\psi_k}- \sum_{k:E_k>\mu}\ketbra{\psi_k}{\psi_k}
\end{equation}
of the phase oracle used for amplitude amplification. When $\mu$ is set to be between the ground state energy and the first excited state energy, $R(H)$ reflects all other states apart from the ground state and can be used to amplify the latter. We prove and show a robust version of Theorem 6 in Ref.~\citep{lin2020nearoptimal} where we do not assume perfect block encoding of the Hamiltonian and do not consider the implementation of perfect continuous rotation gates. If the error of the block encoding is much smaller than that of the Hamiltonian's spectral gap, then \cref{eqThresh} can still be used to successfully filter out the ground state energy despite its perturbations. To start, we require the following lemma:

\begin{lemma}[Polynomial approximations of the sign function -- Lemma 10,~\citep{low2017hamiltonian}; Formulation from Lemma 3,~\citep{lin2020nearoptimal}]
    \label{lemSign}
    For all $0 < \gamma < 1$, $0 < \xi < 1$, there exists an efficiently computable odd polynomial $S(x; \gamma, \xi) \in \mathbb{R}[x]$ of degree $\mathcal{D} = \mathcal O(\frac{1}{\gamma} \log\frac{1}{\xi})$, such that
    \begin{enumerate}
        \item \textit{for all $x \in [-1, 1]$, $\lvert S(x; \gamma, \xi)\rvert \le 1$, and}
        \item \textit{for all $x \in [-1, -\gamma] \cup [\gamma, 1]$, $\lvert S(x; \gamma, \xi) - \operatorname*{sign}(x)\rvert \le \xi$.}
    \end{enumerate}
\end{lemma}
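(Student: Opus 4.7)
The strategy is the standard two-stage approximation. First I would smooth the sign function by an analytic odd function that is already exponentially close to $\operatorname{sign}(x)$ away from the origin, and then truncate its series expansion to obtain a polynomial of the claimed degree. The natural candidate is the error function $\operatorname{erf}(kx) = (2/\sqrt{\pi})\int_0^{kx} e^{-t^2}\,dt$, which is odd, bounded by $1$ in magnitude on all of $\mathbb{R}$, and satisfies a Gaussian-tail bound of the form $|\operatorname{erf}(kx) - \operatorname{sign}(x)| \lesssim e^{-k^2 x^2}/(k|x|)$ for $x \ne 0$.

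The first step is to choose the scale parameter $k$. Requiring the right-hand tail bound to be at most $\xi/2$ on the set $|x|\ge \gamma$ forces $k^2 \gamma^2 \gtrsim \log(1/\xi)$, i.e.\ $k = \Theta(\gamma^{-1}\sqrt{\log(1/\xi)})$. This gives an odd analytic function that already meets condition (ii) up to a factor of two, and trivially satisfies condition (i) since $|\operatorname{erf}|\le 1$.

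The second step is to replace $\operatorname{erf}(kx)$ by a polynomial. I would use the Jacobi--Anger style expansion of $e^{-t^2}$ and integrate term-by-term, as in the Low--Chuang construction, to obtain a partial sum $S_\mathcal{D}(x;\gamma,\xi)$ of degree $\mathcal{D}$ that is odd by construction. Standard estimates on the tail of the Taylor / Chebyshev coefficients of $\operatorname{erf}(k\,\cdot\,)$ on $[-1,1]$ show that a truncation at degree $\mathcal{D} = \mathcal O\!\bigl(k \cdot \log(1/\xi)\bigr) = \mathcal O(\gamma^{-1}\log(1/\xi))$ suffices to push the truncation error uniformly below $\xi/2$ on $[-1,1]$. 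Combining this with the smoothing error and the triangle inequality yields the required error bound of $\xi$ on $[-1,-\gamma]\cup[\gamma,1]$.

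The main obstacle is condition (i): the partial sum is not automatically bounded by $1$ on $[-1,1]$, and naive rescaling would ruin the approximation on $|x|\ge\gamma$. I would handle this exactly as in Low--Chuang by first constructing $S_\mathcal{D}$ with uniform error $\xi/4$ on $[-1,1]$ to $\operatorname{erf}(kx)$ and then rescaling by $1/(1+\xi/4)$, which keeps the polynomial bounded by $1$ on $[-1,1]$ while only degrading the approximation error by a constant factor that can be absorbed into the $\mathcal{O}$. Verifying that the rescaled polynomial still satisfies both (i) and (ii) with the stated degree scaling is the delicate bookkeeping step, and is where quantitative control over the Chebyshev coefficients of $\operatorname{erf}(k\,\cdot\,)$ is essential; in particular, the factor of $\log(1/\xi)$ in $\mathcal{D}$ (rather than $\log^{1/2}(1/\xi)$) originates from the need to push the truncation error below $\xi$, not just below a constant.
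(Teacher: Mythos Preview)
The paper does not supply a proof of this lemma at all: it is quoted verbatim from the literature (Lemma~10 of Low--Chuang and Lemma~3 of Lin--Tong) and used as a black box inside the proof of \cref{lemGSPrep}. So there is no ``paper's own proof'' to compare against; your proposal is really a reconstruction of the original Low--Chuang argument, and the overall strategy (smooth $\operatorname{sign}$ by $\operatorname{erf}(kx)$, then truncate a series expansion, then rescale to enforce the sup bound) is exactly the right one.

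There is, however, an arithmetic inconsistency in your degree bookkeeping. You set $k = \Theta(\gamma^{-1}\sqrt{\log(1/\xi)})$ and then claim the truncation degree is $\mathcal{D} = \mathcal{O}(k\log(1/\xi)) = \mathcal{O}(\gamma^{-1}\log(1/\xi))$. But $k\log(1/\xi) = \gamma^{-1}\log^{3/2}(1/\xi)$, so that chain of equalities is off by a $\sqrt{\log(1/\xi)}$ factor. The actual bound from the Chebyshev/Jacobi--Anger analysis of $\operatorname{erf}(k\,\cdot\,)$ on $[-1,1]$ is $\mathcal{D} = \mathcal{O}\bigl(\sqrt{(k^2+\log(1/\xi))\log(1/\xi)}\bigr)$, which with your $k$ collapses to $\mathcal{O}(\gamma^{-1}\log(1/\xi))$ as desired. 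So the final scaling you state is correct, but the intermediate step ``$\mathcal{D} = \mathcal{O}(k\log(1/\xi))$'' is not how you get there; you need the sharper square-root dependence on $\log(1/\xi)$ in the truncation step, and your closing remark about where the $\log$ versus $\log^{1/2}$ comes from should be revised accordingly.
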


\begin{proposition}[Robust ground state preparation with \textit{a priori} ground state energy bound]
    Given a Hamiltonian $H = \sum_k E_k \ketbra{\psi_k}{\psi_k} \in \mathbb{C}^{N \times N}$, where $E_k \leq E_{k+1}$ and whose ground state energy and the spectral gap obeys the following inequality: $E_0 \leq \mu - \gamma/2 < \mu + \gamma/2 \leq E_1$, where $\mu$ is an upper bound on the ground state energy and $\gamma$ is a lower bound on the spectral gap of $H$. Suppose the Hamiltonian is given through its $(\lambda, m, \varepsilon_H)$-block-encoding $U_H$, where
    \begin{equation*}
        \varepsilon_H\in \widetilde{\mathcal O} \left(\frac{\delta^2\gamma^2\varepsilon_{\mathrm{prep}}^2}{\lambda}\right),
    \end{equation*}
    and $\varepsilon_H \le \frac{\gamma}{4}$.
    Also suppose we have an initial state $\ket{\phi_{\rm init}}$, prepared by a unitary $U_I$, together with a lower bound on the overlap $\lvert\braket{\psi_0|\phi_{\rm init}}\rvert \geq\delta$. Then the ground state $\ket{\psi_0}$ can be prepared with fidelity at least $1-\varepsilon_{\mathrm{prep}}$ using
    \begin{equation*}
        \mathcal O \left( \frac{\lambda}{\delta \gamma} \log\frac{1}{\delta \varepsilon_{\mathrm{prep}}} \right)
    \end{equation*}
    queries to $U_H$,
    \begin{equation*}
        \mathcal O \left( \frac{1}{\delta} \right)
    \end{equation*}
    queries to $U_I$, and
    \begin{equation*}
        \mathcal O \left( \frac{m\lambda}{\delta \gamma} \log \frac{1}{\delta \varepsilon_{\mathrm{prep}}}\right)
    \end{equation*}
    one- or two-qubit gates.
    \label{lemGSPrep}
\end{proposition}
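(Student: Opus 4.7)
The plan is to construct an approximate reflector $R_\mu(H)$ by applying QSVT with a polynomial approximation of the sign function and then to use amplitude amplification driven by $U_I$ and this reflector to amplify the ground-state component of $\ket{\phi_{\mathrm{init}}}$. Concretely, I would first shift and rescale to work with $(H-\mu)/\lambda$: combining $U_H$ with the identity via a state-preparation pair (\cref{lemLCU}) gives a $(1, m+1, \varepsilon_H/\lambda)$-block-encoding whose eigenvalues, by the assumed gap condition together with $\varepsilon_H \le \gamma/4$, are separated from $0$ by at least $\gamma/(4\lambda)$.

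Next, I would apply the quantum eigenvalue transformation of \cref{lemArbParity} with the odd sign polynomial $S(x; \gamma/(4\lambda), \xi)$ from \cref{lemSign}, whose degree is $\mathcal{D} = \mathcal{O}((\lambda/\gamma)\log(1/\xi))$. This yields a $(1, m+3, 4\mathcal{D}\sqrt{\varepsilon_H/\lambda} + \xi)$-block-encoding of $R_\mu(H)$, the reflector that flips the sign of every eigenstate with energy above $\mu$ and acts as identity on the ground state. The hypothesis $\varepsilon_H \le \gamma/4$ is exactly what ensures that the perturbed eigenvalues of the block-encoded matrix remain inside the sign polynomial's accuracy intervals $[-1, -\gamma/(4\lambda)] \cup [\gamma/(4\lambda), 1]$.

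With the approximate reflector in hand, I would invoke (fixed-point) amplitude amplification using the two reflectors $2\ketbra{\phi_{\mathrm{init}}}{\phi_{\mathrm{init}}} - \mathbb{I} = U_I (2\ketbra{0}{0} - \mathbb{I}) U_I^\dagger$ and $R_\mu(H)$. Achieving final fidelity $1 - \varepsilon_{\mathrm{prep}}$ from a state of overlap at least $\delta$ requires $\mathcal{O}(1/\delta)$ rounds, each using one query each to $U_I$, $U_I^\dagger$, and the approximate reflector. The error in the prepared state is at most $\mathcal{O}(1/\delta)$ times the per-round reflector error. Setting $\xi = \Theta(\delta \varepsilon_{\mathrm{prep}})$ and enforcing $4 \mathcal{D} \sqrt{\varepsilon_H/\lambda} = \mathcal{O}(\delta \varepsilon_{\mathrm{prep}})$ recovers the stated tolerance $\varepsilon_H \in \tilde{\mathcal{O}}(\delta^2 \gamma^2 \varepsilon_{\mathrm{prep}}^2 / \lambda)$ (with the tilde absorbing the $\log^2(1/(\delta \varepsilon_{\mathrm{prep}}))$ factor from $\mathcal{D}^2$), and the claimed $\mathcal{O}(\mathcal{D}/\delta) = \mathcal{O}((\lambda/(\delta \gamma))\log(1/(\delta \varepsilon_{\mathrm{prep}})))$ queries to $U_H$, $\mathcal{O}(1/\delta)$ queries to $U_I$, and $\mathcal{O}(m \mathcal{D}/\delta)$ additional one- and two-qubit gates.

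The main obstacle is the robustness bookkeeping. Two points are delicate: (i) the eigenvalue perturbation $\varepsilon_H/\lambda$ of the block-encoded operator must be strictly smaller than the sign polynomial's transition width $\gamma/(4\lambda)$ so that the reflector correctly separates the ground-state eigenspace, which is why the hypothesis $\varepsilon_H \le \gamma/4$ is essential; and (ii) one has to verify that the QSVT transformation error $\propto \mathcal{D}\sqrt{\varepsilon_H/\lambda}$, the truncation error $\xi$ of the sign polynomial, and the amplification by $\mathcal{O}(1/\delta)$ from amplitude amplification compose only linearly and do not introduce hidden inverse powers of $\delta$ or $\gamma$. Careful use of the triangle inequality on unitary perturbations, combined with the standard analysis of amplitude amplification under imperfect reflectors, should close the loop and yield the stated scaling.
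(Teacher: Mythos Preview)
Your proposal is correct and follows essentially the same approach as the paper: shift the block encoding by $\mu$ via LCU, apply the sign polynomial from \cref{lemSign} via QSVT to build the approximate reflector, then run $\mathcal{O}(1/\delta)$ rounds of amplitude amplification with $U_I$, choosing $\xi=\Theta(\delta\varepsilon_{\mathrm{prep}})$ and back-solving the QSVT robustness bound $\mathcal{D}\sqrt{\varepsilon_H/\lambda}$ to obtain the stated tolerance on $\varepsilon_H$. The only cosmetic differences are that the paper invokes Weyl's inequality explicitly to justify the perturbed gap bound and uses the slightly tighter transition width $\gamma/(8\lambda)$ (and the definite-parity variant of \cref{lemArbParity}, saving one ancilla), none of which affect the asymptotics.
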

\begin{proof}
    The main algorithm and correctness of results follow from Theorem 6 of Ref.~\citep{lin2020nearoptimal}. We now reiterate parts of the proof and make relevant adjustments so that the error in the block encoding is considered. The first step is the shift the block encoding by a factor of $\mu$ and from \cref{lemLCU}, we note that we can obtain a $(1, m + 1, \frac{\varepsilon_H}{\lambda})$-block-encoding of $\frac{H-\mu I}{\lambda + |\mu|}$. The next step is then to apply a sign function to $\frac{H-\mu I}{\lambda + |\mu|}$, but before doing so, there are a few things to note. First, due to the error in the Hamiltonian block encoding, all eigenenergies of the Hamiltonian $H$, and by extension, $H - \mu I$, may be shifted. Let $\widetilde H = \sum_k \widetilde E_k \ketbra{\widetilde \psi_k}{ \widetilde \psi_k}$ be the actual Hamiltonian that is block-encoded. By Weyl's inequality~\citep{weyl1912das}, we note the difference in eigenenergies can be upper bounded as follows:
    \begin{align}
        \lvert\widetilde E_k - E_k\rvert \le \|\widetilde H - H\| \le \varepsilon_H.
    \end{align}
    Let $\varepsilon_H \le \frac{\gamma}{4}$. Then by reverse triangle inequality, we can find that the spectral gap for $\frac{H-\mu I}{\lambda + |\mu|}$ is lower bounded by $\frac{\gamma}{2(\lambda + |\mu|)}$, which can be further lower bounded by $\frac{\gamma}{4\lambda}$.

    Then, applying the polynomial approximation of the sign function $S(x; \frac{\gamma}{8\lambda}, \xi)$ where for all $x \in [-1, -\frac{\gamma}{8\lambda}] \cup [\frac{\gamma}{8\lambda}, 1]$, $\lvert S(x; \frac{\gamma}{8\lambda}, \xi) - \operatorname*{sign}(x)\rvert \le \xi$, we can still produce a reflection operator that reflects the ground state of $\widetilde H$ up to error $\xi$. We now quantify the error on the actual Hamiltonian $H$ with the assurance that the error in the block encoding is small enough not to affect the results of applying the sign function. By \cref{lemSign}, we note that the polynomial approximation should be taken to degree $\mathcal{D} \in \mathcal O (\frac{\lambda}{\gamma}\log\frac{1}{\xi})$. Then by \cref{lemArbParity}, we obtain a $(1, m+2, \mathcal{D}\sqrt{\frac{\varepsilon_H}{\lambda}}+\xi)$-block-encoding of $R_\mu(H)$ as defined in \cref{eqReflect}.

    Given the reflection operator $R_\mu(H)$ we have now produced, we can use this operator and the approximate ground state oracle $U_I$ to produce the ground state via $\mathcal O (\frac{1}{\delta})$ rounds of amplitude amplification~\citep{brassard2002quantum}. To produce a ground state with at least fidelity $1-\varepsilon_{\mathrm{prep}}$, we require the error of the block encoding of $R_\mu(H)$ to be within $\delta\varepsilon_{\mathrm{prep}}$. Hence, the equation $\mathcal{D}\sqrt{\frac{\varepsilon_H}{\lambda}}+\xi \le \delta\varepsilon_{\mathrm{prep}}$ has to be satisfied. Setting $\xi = \frac{\delta\varepsilon_{\mathrm{prep}}}{2}$, we note that $\mathcal{D} \in \mathcal O (\frac{\lambda}{\gamma}\log\frac{1}{\delta\varepsilon_{\mathrm{prep}}})$ and
    \begin{equation}
        \varepsilon_H\in \mathcal O \left(\frac{\delta^2\gamma^2\varepsilon_{\mathrm{prep}}^2}{\lambda \log^2 (1/\delta\varepsilon_{\mathrm{prep}})}\right)
    \end{equation}
    has to be satisfied, or
    \begin{equation}
        \varepsilon_H\in \widetilde{\mathcal O} \left(\frac{\delta^2\gamma^2\varepsilon_{\mathrm{prep}}^2}{\lambda}\right).
    \end{equation}

    Noting that $\mathcal{D}$ is of the same order as when we have a perfect block encoding. The number of queries to $U_H$, $U_I$, and the number of additional gates is the same as that of Theorem 6 of Ref.~\citep{lin2020nearoptimal}.
\end{proof}
\begin{corollary}
    Alternatively, the ground state can be prepared to at least $1-\varepsilon_{\mathrm{prep}}$ using the same number of queries to $U_H$ and $U_I$, and an extra
    \begin{equation*}
        \mathcal O \left( \frac{\lambda}{\delta \gamma} \log \frac{1}{\delta \varepsilon_{\mathrm{prep}}}  \left( m+\log \left(\frac{\lambda}{\delta \gamma\varepsilon_{\mathrm{prep}}}\right)\right) \right)
    \end{equation*}
    Toffoli gates.
    \label{corGSPrepTof}
\end{corollary}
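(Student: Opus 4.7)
My plan is to follow essentially the same template as the proof of \cref{corHamSimTof}, using gate synthesis to replace the continuous single-qubit rotations appearing in the ground state preparation circuit with Toffoli-based approximations, and then controlling the accumulated error through a triangle-inequality argument. First, I would invoke \cref{lemGSPrep} to build a circuit that prepares the ground state to fidelity at least $1 - \varepsilon_{\mathrm{prep}}/2$; this circuit consists of $\mathcal{O}(\lambda/(\delta\gamma)\log(1/(\delta\varepsilon_{\mathrm{prep}})))$ queries to $U_H$ and $U_I$, interleaved with QSVT phase-shift operations and the $(m{+}1)$-qubit Toffoli-like controls used to detect the $\ket{0^m}$ block-encoding ancilla state. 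The phase-shift operators decompose into one continuous single-qubit rotation per QSVT step, together with $\mathcal{O}(1)$ multiply-controlled NOT gates on $m{+}1$ ancillae, each of which can be compiled into $\mathcal{O}(m)$ Toffolis using standard constructions.

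Next I would approximate each continuous rotation by a Toffoli (Clifford+T) circuit of depth $\mathcal{O}(\log(1/\varepsilon_{\mathrm{gate}}))$ achieving spectral-norm error at most $\varepsilon_{\mathrm{gate}}$. The number of such rotations is on the same order as the number of queries to $U_H$, i.e.\ $R := \mathcal{O}(\lambda/(\delta\gamma)\log(1/(\delta\varepsilon_{\mathrm{prep}})))$. By the sub-multiplicativity of errors under sequential composition of unitaries (triangle inequality for products of close-to-unitary circuits), the total spectral-norm perturbation introduced into the full prepared circuit is $\mathcal{O}(R \varepsilon_{\mathrm{gate}})$. Since amplitude amplification consists of $\mathcal{O}(1/\delta)$ rounds that are already absorbed in the $R$ count, this bound applies to the entire end-to-end circuit. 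To retain a final fidelity at least $1 - \varepsilon_{\mathrm{prep}}$, I would demand $R\varepsilon_{\mathrm{gate}} \le \varepsilon_{\mathrm{prep}}/2$, which gives
\begin{equation*}
\frac{1}{\varepsilon_{\mathrm{gate}}} \in \mathcal{O}\!\left(\frac{\lambda}{\delta\gamma\varepsilon_{\mathrm{prep}}}\log\frac{1}{\delta\varepsilon_{\mathrm{prep}}}\right),
\end{equation*}
so each rotation costs $\mathcal{O}(\log(\lambda/(\delta\gamma\varepsilon_{\mathrm{prep}})))$ Toffolis (absorbing the inner log factor).

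Finally, I would add up the Toffoli costs. Each of the $R$ QSVT steps contributes $\mathcal{O}(m)$ Toffolis for the multi-controlled NOTs and $\mathcal{O}(\log(\lambda/(\delta\gamma\varepsilon_{\mathrm{prep}})))$ Toffolis for the synthesized rotation, giving a total of
\begin{equation*}
\mathcal{O}\!\left(\frac{\lambda}{\delta\gamma}\log\frac{1}{\delta\varepsilon_{\mathrm{prep}}}\left(m+\log\frac{\lambda}{\delta\gamma\varepsilon_{\mathrm{prep}}}\right)\right)
\end{equation*}
Toffoli gates, which is exactly the stated bound. The numbers of queries to $U_H$ and $U_I$ are unaffected, since gate synthesis only replaces the ``other'' one- and two-qubit gates.

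The main thing to be careful about is the propagation of the gate-synthesis error through the amplitude-amplification stage: the reflector $R_\mu(H)$ is used $\mathcal{O}(1/\delta)$ times, and its own construction contains $\mathcal{O}(\mathcal{D})$ rotations, so naively one might worry about a multiplicative $1/\delta$ blow-up. In fact this is already accounted for because the QSVT polynomial is applied once inside the full $\mathcal{O}(1/\delta)$-round Grover-like sequence, and $R$ already counts all such rotations across all rounds. The remaining subtlety is ensuring that a perturbation $\eta$ in the unitary implementing the reflector translates only to an $\mathcal{O}(\eta)$ loss in fidelity of the output state, which follows because $\|(\tilde U - U)\ket{\psi}\| \le \|\tilde U - U\|$ in operator norm and fidelity is Lipschitz in trace distance.
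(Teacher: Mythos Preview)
Your proposal is correct and follows essentially the same approach as the paper: invoke \cref{lemGSPrep} with target fidelity $1-\varepsilon_{\mathrm{prep}}/2$, count the $\mathcal O(\lambda/(\delta\gamma)\log(1/(\delta\varepsilon_{\mathrm{prep}})))$ continuous rotations and $m$-controlled NOTs, synthesize each rotation to precision $\varepsilon_{\mathrm{gate}}$ with $\mathcal O(\log(1/\varepsilon_{\mathrm{gate}}))$ Toffolis, and fix $\varepsilon_{\mathrm{gate}}$ via the triangle inequality so the accumulated perturbation stays below $\varepsilon_{\mathrm{prep}}/2$. The paper additionally itemizes the two rotations per $U_H$ query coming from the $\mu$-shift LCU and the $m$-controlled NOTs arising in the amplitude-amplification reflectors, but these only change constants and your count $R$ already absorbs them asymptotically.
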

\begin{proof}
    The extra gates of \cref{lemGSPrep} originate from three sources: the weighted LCU to shift the eigenvalues of $H$ by $\mu$, the ancilla-controlled phase shift operators, and the amplitude amplification step. To calculate the number of Toffoli gates required, we first determine how many continuous rotation gates are used. From the weighted LCU, we first construct a block encoding of $H-\mu I$, which requires two continuous rotation gates per query of $U_H$. From the ancilla-controlled phase shift operators, we know that a continuous rotation gate is applied per query of $U_H$, in addition to $\mathcal O(1)$ $m$-ancilla controlled NOT gates per query. Lastly, a set of reflectors implemented by $\mathcal O(1)$ $m$-ancilla controlled NOT gates are used per query of $U_I$ in the amplitude amplification step. In total, there are $\mathcal O \left( \frac{\lambda}{\delta \gamma} \log\frac{1}{\delta \varepsilon_{\mathrm{prep}}} \right)$ continuous rotation gates and $\mathcal O \left( \frac{\lambda}{\delta \gamma} \log\frac{1}{\delta \varepsilon_{\mathrm{prep}}} \right)$ $m$-ancilla controlled NOT gates.

    We first construct a unitary using \cref{lemGSPrep} to prepare a ground state up to $\frac{\varepsilon_\mathrm{prep}}{2}$-precision. Then, using $\mathcal O(\log \frac{1}{\varepsilon_{\text{gate}}})$ Toffolis to prepare a continuous rotation gate up to $\varepsilon_{\text{gate}}$-precision in the spectral norm difference of the resulting unitary, we note that by triangular inequality, replacing all continuous rotation gates with Toffolis will result in an error in spectral norm difference of $\mathcal O\left( \frac{\lambda\varepsilon_{\text{gate}}}{\delta \gamma} \log \frac{1}{\delta \varepsilon_{\mathrm{prep}}}\right)$. Capping the upper bound by $\frac{\varepsilon_{\mathrm{prep}}}{2}$, we note that $\frac{1}{\varepsilon_{\text{gate}}}\in \mathcal O\left( \frac{\lambda}{\delta \gamma\varepsilon_{\mathrm{prep}}} \log \frac{1}{\delta \varepsilon_{\mathrm{prep}}}\right)$.

    The number of Toffolis required is then found to be $\mathcal O \left( \frac{\lambda}{\delta \gamma} \log\frac{1}{\delta \varepsilon_{\mathrm{prep}}}\log \left(\frac{\lambda}{\delta \gamma\varepsilon_{\mathrm{prep}}} \log \frac{1}{\delta \varepsilon_{\mathrm{prep}}}\right) \right) $
    from the continuous rotations, and $\mathcal O \left( \frac{m\lambda}{\delta \gamma} \log\frac{1}{\delta \varepsilon_{\mathrm{prep}}} \right)$ from the $m$-controlled NOT gates.
\end{proof}

\begin{remark}
    \label{remGSPrepSuperposition}
    For a set of Hamiltonians $\{H_i\}$ prepared in superposition such that $H = \sum_i \ketbra{i}{i} \otimes H_i$, and given the approximate ground state oracle $U_I$ such that $U_I \ket{i} \ket{0} \to \ket{i} \ket{\phi_{\rm init, i}}$, one can use \cref{lemGSPrep} to prepare the ground state $\ket{\psi_{0, i}}$ with fidelity at least $1-\varepsilon_{\mathrm{prep}}$.
\end{remark}

Note that the controlled Hamiltonian $H = \sum_i \ketbra{i}{i} \otimes H_i$ can be reexpressed as a direct sum of the terms $H_i$, or $\bigoplus_{i} H_i.$ The singular value decomposition of a direct sum of matrices is equivalent to taking the direct sum of the decomposed components of the original matrices. Hence, the diagonalized controlled Hamiltonian $H$ is equal to the direct sum of the diagonalized individual Hamiltonians $H_i$. Hence, we can prepare the reflector in superposition, or $\sum_{i}\ketbra{i}{i} \otimes R(H_i)$.

To construct the other reflection, given that the approximate ground state oracle can produce $U_I \ket{i} \ket{0} \to \ket{i} \ket{\phi_{\rm init, i}}$, we apply $U_I\left(\mathbb{I}\otimes(2\ketbra{0}{0}-\mathbb{I})\right)U_I^\dagger$, which is equal to $\sum_{i}\ketbra{i}{i}\otimes (2\ketbra{\phi_{\rm init, i}}{\phi_{\rm init, i}}-\mathbb{I})$.

Thus, the Grover diffusion operator takes the form of
\begin{equation}
    -\sum_{i}\ketbra{i}{i} \otimes \left(R(H_i) \times(2\ketbra{\phi_{\rm init,i}}{\phi_{\rm init,i}}-\mathbb{I})\right) \, ,
\end{equation}
which allows us to prepare the ground state of different Hamiltonians in superposition via amplitude amplification.

\subsection{Quantum signal processing without angle finding}
Lastly, we review an alternate version of QSP proposed by \citet{alase2025quantum} which doesn't require the use of phase angle computation on classical computers. Here, we only discuss the specific application of quantum eigenvalue transformation. These methods enable the computation of functional representations in superposition on quantum computers, allowing for the execution of QSP with respect to different functions in superposition.

We provide the following definition of a function modified for the QET setting.
\begin{definition}[Diagonal block encoding of functions -- Definition 1, modified, \citep{alase2025quantum}]
    Let $f: [-1, 1] \to \mathbb{C}$, with $\lvert f(x)\rvert \le 1$. For $\mathcal{D} = 2^m$, where $m \in \mathbb{N}$, we say a $(1, a, \varepsilon)$-block-encoding encodes function $f$ if
    \begin{equation*}
        \lVert D_{f, 4\mathcal{D}} - (\mathbb I \otimes \bra{0^a}) U_{f, 4\mathcal{D}} (\mathbb I \otimes \ket{0^a})\rVert \le \varepsilon
    \end{equation*}
    and
    \begin{equation*}
        \bra{j'}D_{f, 4\mathcal{D}} \ket{j} = \delta_{j, j'} f\left(\cos \frac{2\pi j}{4\mathcal{D}}\right)
    \end{equation*}
    for $j, j' \in [4\mathcal{D}]$ \, .
\end{definition}

Given a $(n+a)$ qubit unitary $U_H$ that block-encodes a $n$-qubit Hermitian $H$, we define the following operators:
\begin{align}
    W                      & = (2\ketbra{0^a}{0^a}-\mathbb{I}_a) U_H \\
    V           & = \sum_{k=0}^{4\mathcal{D}-1} W^k \otimes \ketbra{k}{k} \\
    \ket{\varphi_\ell}        & = \sum_{k=0}^{4\mathcal{D}-1} \frac{1}{\sqrt{4\mathcal{D}}} e^{\frac{2i\pi k\ell}{4\mathcal{D}}}\ket{k} \nonumber \\
                           & = \mathrm{QFT}_{4\mathcal{D}}\ket{\ell} \\
    F_{4\mathcal{D}}       & = \sum_{\ell=0}^{4\mathcal{D}-1}f\left(\cos\frac{2\pi k}{4\mathcal{D}}\right) \ketbra{\varphi_\ell}{\varphi_\ell}\nonumber \\
                           & = \mathrm{QFT}_{4\mathcal{D}} D_{f, 4\mathcal{D}} \mathrm{QFT}_{4\mathcal{D}}^\dagger \\
    \ket{+_{2\mathcal{D}}} & = \frac{1}{\sqrt{2\mathcal{D}}}\sum_{k=\mathcal{D}}^{3\mathcal{D}-1}\ket{k} \\
    \ket{+_{4\mathcal{D}}} & = \frac{1}{\sqrt{4\mathcal{D}}}\sum_{k=0}^{4\mathcal{D}-1}\ket{k} \, .
\end{align}
With these operations, one can construct a degree $\mathcal{D}$ Laurent polynomial approximation $f_\mathcal{D}(H)$ such that
\begin{equation}
    f_\mathcal{D}(H) = \sqrt{2}\bra{+_{2\mathcal{D}}}V^\dagger \mathrm{QFT}_{4\mathcal{D}} D_{f, 4\mathcal{D}} \mathrm{QFT}_{4\mathcal{D}}^\dagger V \ket{+_{4\mathcal{D}}} \, .
\end{equation}
One would then obtain the following proposition,
\begin{proposition}[Quantum polynomial eigenvalue transform by angleless QSP -- Theorem 5, modified, ~\citep{alase2025quantum}]
    Let $f: [-1, 1] \to \mathbb{C}$, with $\lvert f(x)\rvert \le 1$. Given a $(n+a)$ qubit unitary $U_H$ that block-encodes a $n$-qubit Hermitian $H$, and a unitary $U_{f, 4\mathcal{D}}$ for $\mathcal{D} = 2^m$ that is a $(1, b, \delta)$-block-encoding, then we can construct a $(\sqrt{2}, m+a+b+2, (1+\sqrt{2})E_\mathcal{D}(f\circ \cos \circ \arg) + \sqrt2 \delta)$-block-encoding of $f(H)$ with $4\mathcal{D}-1$ queries of $C-U_H$ and $C-U_H^\dagger$ each, one query to $U_{f, 4\mathcal{D}}$, and $\mathcal O (\mathcal{D}a)$ Toffoli gates, where $E_\mathcal{D}(g)$ is the approximation error of $g(x)$ and its best degree $\mathcal{D}$ Laurent polynomial approximation $g_{\mathcal{D},*}(x)$.
    \label{propAnglelessQSP}
\end{proposition}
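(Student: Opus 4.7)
The plan is to verify that the operator
\begin{equation*}
    f_\mathcal{D}(H) := \sqrt{2}\bra{+_{2\mathcal{D}}}V^\dagger \mathrm{QFT}_{4\mathcal{D}} D_{f, 4\mathcal{D}} \mathrm{QFT}_{4\mathcal{D}}^\dagger V \ket{+_{4\mathcal{D}}}
\end{equation*}
defined in the excerpt is indeed a degree-$\mathcal{D}$ Laurent polynomial approximation of $f(H)$, then to account for the imperfect block-encoding $U_{f,4\mathcal{D}}$, and finally to count the gate resources. To start, I would invoke qubitization: on each two-dimensional invariant subspace attached to an eigenvalue $\lambda_k$ of $H$, the walk operator $W = (2\ketbra{0^a}{0^a}-\mathbb{I}_a)U_H$ acts as a rotation by $\pm \arccos(\lambda_k)$, so $W^k$ restricted to that subspace is expressible via Chebyshev polynomials in $\lambda_k$, and applying $V$ to an input state coherently stores the entire family $\{W^k\}_{k=0}^{4\mathcal{D}-1}$ indexed by the counter register.

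Next, using the relation $\mathrm{QFT}_{4\mathcal{D}} D_{f,4\mathcal{D}} \mathrm{QFT}_{4\mathcal{D}}^\dagger = F_{4\mathcal{D}} = \sum_\ell f(\cos(2\pi\ell/4\mathcal{D}))\ketbra{\varphi_\ell}{\varphi_\ell}$ already noted in the excerpt, I would expand the sandwich in the $\ket{\varphi_\ell}$ basis. The amplitudes combine into a discrete-Fourier interpolation of the periodic function $\theta \mapsto f(\cos\theta)$ at the $4\mathcal{D}$ equispaced nodes, yielding an operator of the form $\sum_{|k|\le \mathcal{D}} c_k W^k$ whose coefficients $c_k$ are exactly the discrete Fourier coefficients of $f \circ \cos$. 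Restricting to the $\ket{0^a}$ block and using the Chebyshev identification of $W^k$ on invariant subspaces, this Laurent polynomial in $W$ collapses to a polynomial in $H$, recovering the best degree-$\mathcal{D}$ Laurent approximation $f_\mathcal{D}$ evaluated at $H$. The asymmetric use of $\ket{+_{2\mathcal{D}}}$ and $\ket{+_{4\mathcal{D}}}$ supplies the $\sqrt{2}$ normalization while symmetrizing the positive- and negative-frequency contributions, ensuring the output is Hermitian when $f$ is real.

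Having established the identity, I would analyze errors in two stages. First, a standard aliasing argument for trigonometric interpolation on the $4\mathcal{D}$ equispaced grid bounds the deviation between $f_\mathcal{D}(H)$ and $f(H)$ by $(1+\sqrt{2})\, E_\mathcal{D}(f\circ\cos\circ\arg)$, with the constant packaging the $\ell_\infty$-to-interpolation-error transfer together with a geometric factor stemming from the overlap between $\ket{+_{2\mathcal{D}}}$ and $\ket{+_{4\mathcal{D}}}$. Second, swapping the ideal $D_{f,4\mathcal{D}}$ for its $\delta$-approximate block-encoding $U_{f,4\mathcal{D}}$ injects a perturbation of operator norm at most $\delta$ inside the sandwich, amplified by the $\sqrt{2}$ prefactor to give $\sqrt{2}\delta$. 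Triangle inequality combines these into the stated total error bound.

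Finally, I would count resources. The operator $V$ is built as a sequential chain of controlled applications of $W$ indexed by the counter register, consuming $4\mathcal{D}-1$ queries of $C$-$U_H$, and likewise $V^\dagger$ uses $4\mathcal{D}-1$ queries of $C$-$U_H^\dagger$; each reflection $2\ketbra{0^a}{0^a}-\mathbb{I}_a$ inside $W$ adds $\mathcal{O}(a)$ Toffoli cost, for $\mathcal{O}(\mathcal{D}a)$ overhead in total. The unitary $U_{f,4\mathcal{D}}$ is invoked once, and the two QFTs together with preparations of $\ket{+_{2\mathcal{D}}}$ and $\ket{+_{4\mathcal{D}}}$ contribute only $\mathcal{O}(m^2)$ one- and two-qubit gates, which are absorbed into the stated count. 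The hard part will be the first step: rigorously tying qubitization together with discrete trigonometric interpolation while tracking the parity and symmetry properties of the $4\mathcal{D}$-point grid, since that is what pins down the $(1+\sqrt{2})$ constant and certifies that the resulting operator is genuinely the best degree-$\mathcal{D}$ Laurent approximation rather than an asymmetric distortion.
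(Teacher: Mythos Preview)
The paper does not supply its own proof of this proposition: it is stated as a modification of Theorem~5 of \cite{alase2025quantum}. The paper merely lays out the operators $W$, $V$, $F_{4\mathcal{D}}$, $\ket{+_{2\mathcal{D}}}$, $\ket{+_{4\mathcal{D}}}$, records the identity
\[
    f_\mathcal{D}(H) = \sqrt{2}\,\bra{+_{2\mathcal{D}}}V^\dagger \mathrm{QFT}_{4\mathcal{D}} D_{f, 4\mathcal{D}} \mathrm{QFT}_{4\mathcal{D}}^\dagger V \ket{+_{4\mathcal{D}}},
\]
and then states the proposition, followed only by a one-sentence remark that qubitization sends eigenvalues $\lambda$ to $e^{i\arccos\lambda}$, so the relevant approximation error is that of $f\circ\cos\circ\arg$. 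All of the analysis you outline---the qubitization spectral decomposition of $W$, the reading of the QFT sandwich as trigonometric interpolation at the $4\mathcal{D}$ equispaced nodes, the Lebesgue-type transfer from interpolation error to best-approximation error, the perturbation bound for the imperfect $U_{f,4\mathcal{D}}$, and the resource count via a binary-controlled power ladder for $V$---is deferred to the cited reference rather than carried out in the paper. Your sketch is consistent with that construction and with the argument one finds in \cite{alase2025quantum}.

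One small correction to your write-up: the circuit does \emph{not} realize the best degree-$\mathcal{D}$ Laurent approximation $g_{\mathcal{D},*}$ of $g = f\circ\cos\circ\arg$. It realizes the trigonometric \emph{interpolant} of $g$ at the $4\mathcal{D}$ nodes, and the $(1+\sqrt{2})$ factor is the Lebesgue-constant-type bound controlling the gap between that interpolant and $f$ in terms of $E_\mathcal{D}$. So where you write ``recovering the best degree-$\mathcal{D}$ Laurent approximation,'' you should instead say the circuit yields the interpolant, whose uniform error is then bounded by $(1+\sqrt{2})E_\mathcal{D}$ via the standard aliasing/near-best argument.
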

We note in the conventional QSP, the eigenvalues $\lambda$ of the Hamiltonian are transformed into $e^{i \arccos \lambda}$ from the qubitization approach~\citep{low2019hamiltonian} and the error stems from the Chebyshev polynomial approximation error $\varepsilon_\mathcal{D}(f)$. Hence, the corresponding polynomial error in this approach is that of $f\circ \cos \circ \arg$. Given that $E_\mathcal{D}(f\circ \cos \circ \arg)$ is the best Laurent approximation of $f\circ \cos \circ \arg$, we note that $E_\mathcal{D}(f\circ \cos \circ \arg) \le \varepsilon_\mathcal{D}(f)$.

We now apply these results to robust Hamiltonian simulation.
\begin{proposition}[Robust Hamiltonian simulation by angleless QSP]
    Let $t\in\mathbb{R}$, $\varepsilon\in(0,1)$ and let $U_H$ be a $(\alpha,a,\frac{\varepsilon}{2|t|})$-block-encoding of the Hamiltonian $H$. Further let $U_{f, 4\mathcal{D}}$ be a $(1, b, \frac{\varepsilon}{24\sqrt{2}})$-block-encoding of the function $f(x)= e^{-i\alpha t x}$. One can then implement a $(1, a+b+3, \varepsilon)$-block-encoding of $e^{-iHt}$ with
    \begin{equation*}
        48\alpha |t| + 72 \ln \frac{48(1+\sqrt 2)}{\varepsilon} -6
    \end{equation*}
    queries of $U_H$, $3$ queries to $U_{f, 4\mathcal{D}}$, and an additional $\mathcal O \left(a\left(\alpha|t|+\log\frac{1}{\varepsilon})\right)\right)$ Toffoli gates.
    \label{propAnglelessHamSim}
\end{proposition}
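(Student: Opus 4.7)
The plan is to instantiate \cref{propAnglelessQSP} with $f(x)=e^{-i\alpha t x}$ applied to the block-encoded matrix $H/\alpha$, so that $f(H/\alpha)=e^{-iHt}$, and then collapse the $\sqrt 2$ subnormalisation using three applications of oblivious amplitude amplification, in direct analogy with the standard robust Hamiltonian simulation result \cref{lemHamSim}.

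First, I would fix the polynomial degree. By the Jacobi--Anger expansion $e^{-i\alpha t\cos\theta}=J_0(\alpha t)+2\sum_{k\ge 1}(-i)^k J_k(\alpha t)\cos(k\theta)$, Bessel tail bounds (the same ones used in Theorem 58 of \citep{gilyen2019quantum}) give $E_{\mathcal D}(f\circ\cos\circ\arg)\le \varepsilon/(6(1+\sqrt 2))$ once $\mathcal D = 2\alpha|t|+3\ln\tfrac{48(1+\sqrt 2)}{\varepsilon}$, which is precisely the degree needed to match the query count in the statement. Substituting $\mathcal D$ and the assumed $(1,b,\varepsilon/(24\sqrt 2))$-encoding of $f$ into \cref{propAnglelessQSP} produces a $(\sqrt 2, m+a+b+2,\lesssim\varepsilon/3)$-block-encoding of $e^{-iHt}$; the propagation of the $U_H$ error $\varepsilon/(2|t|)$ through the $4\mathcal D-1$ qubitised walk applications contributes at most $\mathcal O(\mathcal D\cdot \varepsilon_{U_H}/\alpha)\lesssim\varepsilon/6$ by the standard QSVT error-propagation estimate.

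Second, to remove the $\sqrt 2$ subnormalisation, I would invoke robust oblivious amplitude amplification in the spirit of Berry--Childs--Cleve--Kothari--Somma: for a scale-$\sqrt 2$ block-encoding of a unitary, three invocations of the underlying circuit (two forward and one inverse, interleaved with ancilla reflections) yield a scale-$1$ block-encoding. Counting: each invocation uses $4\mathcal D-1$ queries each to controlled-$U_H$ and controlled-$U_H^\dagger$ together with one query to $U_{f,4\mathcal D}$, so three invocations give $24\mathcal D-6 = 48\alpha|t|+72\ln\tfrac{48(1+\sqrt 2)}{\varepsilon}-6$ queries to $U_H$ and $3$ queries to $U_{f,4\mathcal D}$, as advertised. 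The $\mathcal O(\mathcal D)$ continuous rotations arising from the inline QFTs on the $m$-qubit register and from the ancilla reflections are then each synthesised to precision $\sim\varepsilon/\mathcal D$ using $\mathcal O(\log(\mathcal D/\varepsilon))$ Toffolis apiece, giving the claimed $\mathcal O(a(\alpha|t|+\log(1/\varepsilon)))$ Toffoli overhead (mirroring \cref{corHamSimTof}).

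The main obstacle I anticipate is justifying the amplitude-amplification step rigorously: a naive Grover-style analysis on amplitude $1/\sqrt 2$ gives $\sin(3\pi/4)=1/\sqrt 2$ and no amplification, so the three-query OAA only succeeds by exploiting the fact that the target is a unitary together with the specific form of the angleless-QSP circuit (a QFT-controlled walk rather than a truncated-Taylor-series LCU). Showing that the orthogonal component cancels exactly in this setting, rather than merely shrinking, is the step that most directly departs from the existing BCCKS and \citep{gilyen2019quantum} analyses and where I expect the bulk of the technical work to lie.
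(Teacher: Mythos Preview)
Your overall strategy matches the paper's, and the obstacle you flag at the end is real --- but the paper's resolution is far simpler than the cancellation argument you anticipate. Standard oblivious amplitude amplification does not work at amplitude $1/\sqrt 2$, so the paper first \emph{pads} the $(\sqrt 2,a+b+2,\cdot)$-block-encoding with one extra ancilla in the $\ket{+}$ state: tensoring a Hadamard gate produces a $(2,a+b+3,\cdot)$-block-encoding of the same unitary, whose success amplitude is $1/2=\sin(\pi/6)$, and now one round (three invocations) of robust oblivious amplitude amplification \`a la \citep{berry2015simulating} boosts to $\sin(\pi/2)=1$. This costs exactly the extra qubit visible in the $a+b+3$ of the statement and no extra queries; there is no need to exploit any special structure of the angleless-QSP circuit.

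Two smaller points where your proposal diverges from the paper. First, your propagation of the $U_H$ error as $\mathcal O(\mathcal D\cdot\varepsilon_{U_H}/\alpha)$ is not justified by \cref{lemArbParity} (which gives the weaker $4\mathcal D\sqrt{\varepsilon_{U_H}/\alpha}$) nor by \cref{propAnglelessQSP} (which assumes an exact encoding of $H$). The paper instead defines $H':=\alpha(\bra{0^a}\otimes\mathbb I)U_H(\ket{0^a}\otimes\mathbb I)$, treats $U_H$ as an exact $(\alpha,a,0)$-encoding of $H'$, runs the whole construction to get an $\varepsilon/2$-approximate encoding of $e^{-iH't}$, and only then appends $\lVert e^{-iHt}-e^{-iH't}\rVert\le t\lVert H-H'\rVert\le\varepsilon/2$ via Lemma~50 of \citep{chakraborty2019power}; this sidesteps tracking $U_H$-error through the QSP entirely. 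Second, the Toffoli overhead in the paper comes not from synthesising rotations but from the $a$-qubit reflector $(2\ketbra{0^a}{0^a}-\mathbb I_a)$ inside each qubitised-walk step $W$: there are $\mathcal O(\mathcal D)$ of these, each costing $\mathcal O(a)$ Toffolis, and the two QFTs contribute only $\mathcal O(\log^2\mathcal D)$.
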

\begin{proof}
    Let $H' = \alpha(\bra{0}_a\otimes \mathbb{I}_n)U_H(\ket{0}_a\otimes \mathbb{I}_n)$. Then by Lemma 50 of Ref. \citep{chakraborty2019power}, we have $\lVert e^{-iHt} - e^{-iH't} \rVert \le \lVert H - H' \rVert t$ which we use to cap the error from Hamiltonian block encoding to $\frac{\varepsilon}{2|t|}$, and producing a total error of $\frac{\varepsilon}{2}$.

    We aim to create a $(\sqrt{2}, a+b+2, \frac{\varepsilon}{6})$-block-encoding of $e^{-iH't}$. Then by \cref{propAnglelessQSP}, we see that we require a $(1, b, \frac{\varepsilon}{24\sqrt{2}})$-block-encoding of the function $f(x)= e^{-i\alpha t x}$, or $U_{f, 4\mathcal{D}}$, where the error stems for first halving $\frac{\varepsilon}{12}$, then factoring the $\sqrt{2}$ from angleless QSP. From polynomial approximations of $e^{-i\alpha t x}$ as shown in Lemma 57 and 59 of Ref.~\citep{gilyen2019quantum}, we require
    \begin{equation}
        \mathcal{D} = r\left(\frac{e\alpha t}{2}, \frac{\varepsilon}{48(1+\sqrt 2)}\right) \le 2\alpha |t| + 3 \ln \frac{48(1+\sqrt 2)}{\varepsilon}
    \end{equation}
    by \cref{propAnglelessQSP} to produce a $\frac{\varepsilon}{48(1+\sqrt 2)}$ approximation for $\cos$ and $\sin$ separately, resulting in a total $\frac{\varepsilon}{24}$ error when factoring in the $1+\sqrt{2}$ from angleless QSP.

    Given a $(\sqrt{2}, a+b+2, \frac{\varepsilon}{12})$-block-encoding of $e^{-iH't}$, we then tensor product another Hadamard gate, which would result in a $(2, a+b+3, \frac{\varepsilon}{12})$-block-encoding of $e^{-iH't}$. Then by one round of robust oblivious amplitude amplification~\citep{berry2015simulating}, we obtain a $(1, a+b+3, \frac{\varepsilon}{2})$-block-encoding of $e^{-iH't}$, or a $(1, a+b+3, \varepsilon)$-block-encoding of $e^{-iHt}$. In total, we then need $12\mathcal{D}-3$ queries of $c-U_H$ and $12\mathcal{D}-3$ queries of $c-U_H^\dagger$ each from the results of \cref{propAnglelessQSP} and the amplitude amplification, or
    \begin{equation}
        48\alpha |t| + 72 \ln \frac{48(1+\sqrt 2)}{\varepsilon} -6
    \end{equation}
    queries in total.

    The Toffoli gates stem from the projection operators in the qubitized Hamiltonian, which are multi-controlled-NOT gates, which can be decomposed into $\mathcal O \left(a\left(\alpha|t|+\log\frac{1}{\varepsilon}\right)\right)$ Toffoli gates accordingly. The other source is the decomposition of QFT into Toffolis, which requires a trivial $\mathcal O (\log^2 \mathcal{D})$ Toffolis.
\end{proof}

\section{Construction and proof of the Liouvillian simulation algorithm}
\label{appLS}

We detail in this appendix the construction of the Liouvillian simulation algorithm. As mentioned in the main text, the simulation algorithm is realized by a block encoding of the Liouvillian $L$. With the implementation of the classical Liouvillian shown in Ref.~\citep{simon2024improved}, we show the implementation of the electronic Liouvillian $L_{\tel}$ as defined in \cref{defElecLiou}.

In the first subsection, we detail the implementation of $\Del$ by the Hellmann--Feynman theorem and ground state preparation subroutines. We then use this to build $L_{\tel}$ and $L$ in the subsequent subsection, and combine everything to construct the NVT Liouvillian evolution operator $e^{-iLt}$ in the third subsection. Lastly, we provide results on the NVE Liouvillian in the fourth subsection.

\subsection{Implementation of the force diagonal block encoding}

We first note that while the nuclear positions required for the implementation of the block encoding of the electronic Hamiltonian are loaded via QROM in Ref.~\citep{babbush2019quantum,su2021faulttolerant}, the nuclear positions are input into the electronic Hamiltonian via SWAP gates in Ref.~\citep{simon2024improved}. Thus, this construction provides the following operator:
\begin{equation}
    {H}_{\tel}^{\rm ctrl} = \sum_{\vec{\bar x}}\ketbra{\vec{\bar x}}{\vec{\bar x}} \otimes H_{\tel}(\vec x) \, .
    \label{eqCtrlHam}
\end{equation}
From \cref{remGSPrepSuperposition}, we know that \cref{lemGSPrep} can be used to prepare all ground states $\ket{\psi_0(\vec x)}$ of ${H}_{\tel}^{\rm ctrl}$ in superposition given the following assumption:
\begin{assumption}
    We are given a $(\lambda, a_{\tel}, \varepsilon_{\tel})$-block-encoding $U_H$ of the nuclear-position-controlled electronic Hamiltonian
    \begin{equation*}
        H_{\tel}^{\rm ctrl} = \sum_{\vec{\bar x}}\ketbra{\vec{\bar x}}{\vec{\bar x}} \otimes H_{\tel}(\vec x) \, .
    \end{equation*}
    We are given $\mu$ and $\gamma$ such that $E_0(\vec x) \leq \mu - \gamma/2 < \mu + \gamma/2 \leq E_1(\vec x)$ for all $\vec x$ and a state preparation oracle $U_I$ that prepares an approximate ground state such that $U_I \ket{\vec{\bar x}}\ket{0} \to \ket{\vec{\bar x}}\ket{\phi_{\rm init}(\vec x)}$ where for all $\vec x$, $\lvert\braket{ \psi_0(\vec x)| \phi_{\rm init}(\vec x)}\rvert \geq\delta$ is satisfied.
    \label{assumptionGS}
\end{assumption}

Recall that
\begin{equation}
    \Del = \sum_{\vec{\bar x}}  \frac{\partial E_{\tel}(\vec x)}{\partial x_{n,j}} \ketbra{\vec{\bar x}}{\vec{\bar x}}\, .
\end{equation}
For each $(n, j)$ pair, we aim to prepare the energy derivative operator $\Del$, which one can do by leveraging the Hellmann--Feynman theorem~\citep{hellmann1937einfuhrung, feynman1939forces}. From the first quantization formulation of the electronic Hamiltonian, one can analytically derive and then implement the operators $\frac{\partial H_{\tel}}{\partial x_{n,j}}$ for each $(n,j)$, the results of which have been shown in \cref{lemHamDeriv}. Given the smaller scaling factor compared to the electronic Hamiltonian, the total runtime can be lowered in the Hamiltonian simulation step when compared to using central finite difference methods to implement the energy derivative.

\begin{figure*}
    \includegraphics[width=\linewidth]{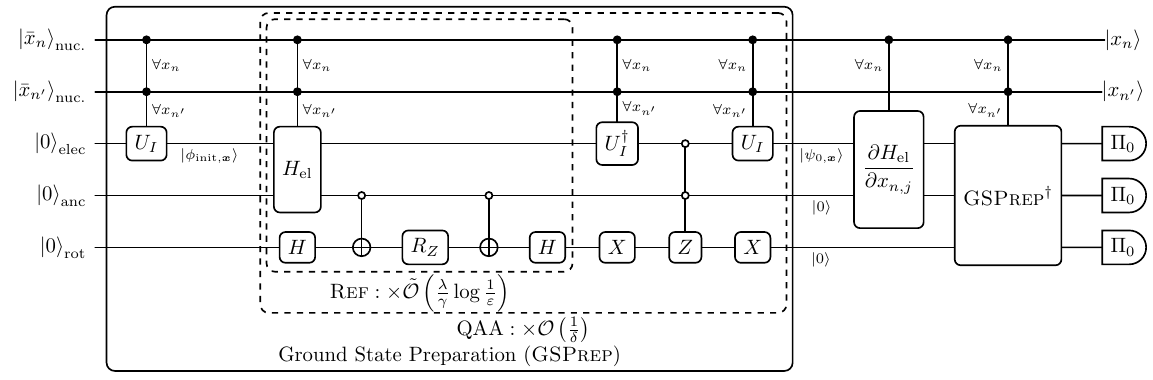}
    \caption[Block encoding implementation of the $\Del$ operator.]{\emph{Block encoding implementation of the $\Del$ operator.} This diagonal block encoding is implemented using the Hellmann--Feynman theorem~\citep{hellmann1937einfuhrung, feynman1939forces}, where we first prepare the ground state from the electronic Hamiltonian~\citep{lin2020nearoptimal}, apply the Hamiltonian derivative operator, and then uncompute the ground state. In this figure, we draw the approximate ground state preparation oracle $U_I$, electronic Hamiltonian $H_{\tel}$, and force operator $\frac{\partial H_{\tel}}{\partial x_{n,j}}$ with controls on the nuclear registers to symbolize that they are nuclear-register controlled. However, these operators are implemented as one single operator and need not be reimplemented for each different set of $\vec x$. In this figure and the figures after this, we symbolize the ancilla qubits of the block encoding by appending a
        \raisebox{-2pt}{\tikz{\draw[line width=0.7pt]
                (0,0) -- ++(0,1em) -- ++(1em,0)
                arc[start angle=90, end angle=0, radius=0.5em]
                -- ++(0, 0)
                arc[start angle=0, end angle=-90, radius=0.5em]
                -- ++(-0.5em, 0) -- cycle;
                \node at (0.75em,0.5em) {\scriptsize $\Pi_0$};}}
        symbol at the end of the qubit wire.}
    \label{figDel}
\end{figure*}

Similar to the electronic Hamiltonian, when the values of $\vec{x}$ are input via SWAP operations rather than calling from QROM, the operator that we obtain is a nuclear-position-controlled force operator
\begin{equation}
    \sum_{\vec{\bar x}}\ketbra{\vec{\bar x}}{\vec{\bar x}} \otimes \frac{\partial H_\tel(\vec x)}{\partial x_{n, j}} \, .
    \label{eqForce}
\end{equation}
We now use this operator to encode the derivative of the ground state energy for different nuclear positions, or the operator $\Del$. The circuit detailing this implementation is shown in \cref{figDel}.

\begin{lemma}[Block encoding of $\Del$]
    Suppose we have a $(\lambda, a_{\tel}, \varepsilon_{\tel})$-block-encoding $U_H$ of the nuclear-position-controlled electronic Hamiltonian satisfying \cref{assumptionGS} in regards to spectral gap $\gamma$, ground state upper bound $\mu$, and approximate ground state preparation unitary $U_I$ with up to an overlap $1-\delta$. Further, we are given a $(\tau_n, a_{\mathrm{force}}, \varepsilon_{\mathrm{force}})$-block-encoding of the nuclear-position-controlled electronic force operator as shown in \cref{eqForce} by \cref{lemHamDeriv}. Let $\tau = \frac{\tau_n}{Z_n}$. If
    \begin{equation*}
        \varepsilon_{\tel}\in\widetilde{\mathcal O} \left(\frac{\delta^2\gamma^2\varepsilon_{D_E}^4}{\lambda\tau^4Z_{\max}^4}\right),\,\varepsilon_{\mathrm{force}} = \frac{\varepsilon_{D_E}}{2}
    \end{equation*}
    then then exists a $(\tau_n, a_{D_E}, \varepsilon_{D_E})$-block-encoding of $\Del$ where
    \begin{equation*}
        a_{D_E} \in \max(a_{\tel}, a_{\mathrm{force}})+\mathcal{O}(\widetilde{N}\log B).
    \end{equation*}
    This block encoding can be prepared with $\mathcal{O}(\frac{\lambda}{\delta\gamma}\log\frac{\tau Z_{\max}}{\delta\varepsilon_{D_E}})$ queries to $U_H$, $\mathcal{O}(\frac{1}{\delta})$ queries to $U_I$, one query to $U_{\mathrm{force}}$, and an additional $\mathcal O \left( \frac{\lambda}{\delta \gamma} \log \frac{\tau Z_{\max}}{\delta \varepsilon_{D_E}}  \left( \max(a_{\tel}, a_{\rm force})+\log \frac{\lambda\tau Z_{\max}}{\delta \gamma\varepsilon_{D_E}}\right) \right)$ Toffoli gates.
    \label{lemDel}
\end{lemma}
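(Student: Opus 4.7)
The plan is to realize the diagonal block encoding by the ``sandwich'' construction depicted in the circuit: starting from an ancilla register in $\ket{0}$, prepare (in superposition over the nuclear position basis) an approximate electronic ground state $\ket{\tilde\psi_0(\vec x)}$ using the robust ground state preparation subroutine (\cref{lemGSPrep}) driven by the nuclear-position-controlled Hamiltonian block encoding $U_H$ and the approximate-ground-state oracle $U_I$; then apply the nuclear-position-controlled force block encoding $U_{\rm force}$ from \cref{lemHamDeriv}; then uncompute the ground-state preparation. Projecting the electronic register and all block-encoding ancillas back onto $\ket{0}$ selects the expectation value $\langle \tilde\psi_0(\vec x)| \tilde F_{n,j}(\vec x)|\tilde\psi_0(\vec x)\rangle$ on the diagonal of the nuclear position register, which by the Hellmann--Feynman theorem equals $\partial E_{\tel}(\vec x)/\partial x_{n,j}$ in the ideal limit. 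Thus the circuit structurally produces a $(\tau_n, a_{D_E}, \cdot)$-block-encoding of $\Del$.

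Next I would propagate the two independent sources of error. Let $F(\vec x)$ denote the exact force operator (norm $\le \tau_n$) and $\tilde F(\vec x)$ the operator actually block-encoded by $U_{\rm force}$, so $\|\tilde F - F\| \le \varepsilon_{\rm force}$. Let $\ket{\tilde\psi_0(\vec x)}$ be the output of robust ground state preparation with fidelity at least $1-\varepsilon_{\rm prep}$, so (after a suitable phase choice) $\|\ket{\tilde\psi_0} - \ket{\psi_0}\| \le \sqrt{2\varepsilon_{\rm prep}}$. A standard triangle inequality then gives
\begin{equation*}
    \bigl|\bra{\tilde\psi_0}\tilde F\ket{\tilde\psi_0} - \bra{\psi_0}F\ket{\psi_0}\bigr| \;\le\; \varepsilon_{\rm force} + 2\tau_n\sqrt{2\varepsilon_{\rm prep}}\, ,
\end{equation*}
uniformly in $\vec x$. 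Demanding this be $\le \varepsilon_{D_E}$ is achieved by setting $\varepsilon_{\rm force} = \varepsilon_{D_E}/2$ (as stated in the hypothesis) and $\varepsilon_{\rm prep} \in \mathcal O(\varepsilon_{D_E}^2/\tau_n^2)$. Pushing the latter back through the requirement in \cref{lemGSPrep}, namely $\varepsilon_H \in \tilde{\mathcal O}(\delta^2\gamma^2\varepsilon_{\rm prep}^2/\lambda)$, and using $\tau_n = \tau Z_n \le \tau Z_{\max}$, reproduces precisely the bound $\varepsilon_{\tel} \in \tilde{\mathcal O}(\delta^2\gamma^2\varepsilon_{D_E}^4/(\lambda\tau^4 Z_{\max}^4))$ claimed in the statement.

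For resource counting, I would simply read off the contributions from each stage and add them. Ground-state preparation is invoked twice (compute and uncompute), each costing $\mathcal O(\frac{\lambda}{\delta\gamma}\log\frac{1}{\delta\varepsilon_{\rm prep}})$ queries to $U_H$ and $\mathcal O(1/\delta)$ queries to $U_I$ by \cref{lemGSPrep}; substituting $\varepsilon_{\rm prep} \sim \varepsilon_{D_E}^2/(\tau^2 Z_{\max}^2)$ yields the stated $\mathcal O(\frac{\lambda}{\delta\gamma}\log\frac{\tau Z_{\max}}{\delta\varepsilon_{D_E}})$ scaling, and the Toffoli count follows from \cref{corGSPrepTof} after the same substitution. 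The force block encoding is applied once. The ancilla count is the maximum of $a_{\tel}$ and $a_{\rm force}$ (which can be reused between compute/uncompute stages) plus the $\mathcal O(\tilde N \log B)$ qubits needed to hold the electronic first-quantized register that carries $\ket{\tilde\psi_0(\vec x)}$.

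The main obstacle I anticipate is the delicate bookkeeping in the error step: the force operator norm $\tau_n$ scales with the atomic number, so the ground-state fidelity tolerance $\varepsilon_{\rm prep}$ must be tightened quadratically in $\tau_n$, which in turn tightens the Hamiltonian block-encoding precision $\varepsilon_{\tel}$ \emph{quartically}. One must confirm that this is compatible with (i) the $\varepsilon_H \le \gamma/4$ hypothesis of \cref{lemGSPrep} (otherwise the sign-function filter cannot separate the ground state), and (ii) the fact that the sandwich applies the force operator to a non-stationary approximate state, so there is no analytic cancellation of first-order errors as would occur at a true eigenstate; the quoted scaling therefore cannot be improved without either refining the preparation or exploiting additional structure.
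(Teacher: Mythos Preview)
Your proposal is correct and matches the paper's proof essentially step for step: the same sandwich construction $W^\dagger U_{\rm force} W$ with phase kickback onto the nuclear register, the same triangle-inequality error bound $2\tau_n\sqrt{2\varepsilon_{\rm prep}} + \varepsilon_{\rm force}$, the same choices $\varepsilon_{\rm force} = \varepsilon_{D_E}/2$ and $\varepsilon_{\rm prep} \in \mathcal O(\varepsilon_{D_E}^2/(\tau^2 Z_{\max}^2))$, and the same back-propagation through \cref{lemGSPrep} to obtain the $\varepsilon_{\tel}$ condition. Your resource accounting and ancilla analysis also align with the paper's, and your closing remarks about the quartic tightening of $\varepsilon_{\tel}$ and the absence of first-order error cancellation are apt observations that the paper does not make explicit.
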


\begin{proof}
    We first show the correctness of the block encoding construction. Let $W$ be the ground state preparation unitary such that $W(\vec x)\ket{0} = \ket{\psi_0(\vec x)}$. We denote the unitary as $W_{\vec{x}}$ for simplicity and readability. Assuming perfect block encoding and ground state preparation, we note that preparing the ground state in superposition gives us
    \begin{equation}
        \sum_{\vec{\bar x}}\ket{\vec{\bar x}}\ket{0}_{\mathrm{anc}} \to \sum_{\vec{\bar x}}\ket{\vec{\bar x}}\left(W_{\vec{x}}\ket{0}\right)\ket{0}_{\mathrm{anc}} \, .
    \end{equation}
    Then, applying the controlled Hamiltonian derivative operator, we obtain
    \begin{multline}
        \sum_{\vec{\bar x}}\ket{\vec{\bar x}}\bigg(\frac{1}{\tau_n} \left(\frac{\partial H_{\tel}}{\partial x_{n,j}} W_{\vec{x}}\ket{0}\right)\ket{0}_{\mathrm{anc}}\\
        +\sum_{j}\lambda_j'\ket{\mathrm{bad}_j}\ket{j}_{\mathrm{anc}}\bigg) \, .
    \end{multline}
    where $\lambda'_j$ is a normalization factor for some bad state $\ket{\mathrm{bad}_j}$. Then, uncomputing the ground state, we obtain
    \begin{multline}
        \sum_{\vec{\bar x}}\ket{\vec{\bar x}}\bigg(\frac{1}{\tau_n}\left(W_{\vec{x}}^\dagger\frac{\partial H_{\tel}}{\partial x_{n,j}} W_{\vec{x}}\ket{0}\right)\ket{0}_{\mathrm{anc}}\\
        +\sum_{j}\lambda_j'W_{\vec{x}}^\dagger\ket{\mathrm{bad}_j}\ket{j}_{\mathrm{anc}}\bigg) \, .
        \label{eqForceState}
    \end{multline}
    By the Hellmann--Feynman theorem, we note the following:
    \begin{equation}
        W_{\vec{x}}^\dagger\frac{\partial H_{\tel}}{\partial x_{n,j}} W_{\vec{x}}\ket{0} = \frac{\partial E_\tel}{\partial x_{n,j}}\ket{0} + \sum_k \lambda_k' \ket{k} \, ,
    \end{equation}
    where $E_\tel$ is the ground state energy of $H_{\tel}$. Thus, \cref{eqForceState} can be written as
    \begin{multline}
        \sum_{\vec{\bar x}}\ket{\vec{\bar x}}\bigg(\frac{1}{\tau_n}\frac{\partial E_\tel}{\partial x_{n,j}}\ket{0}_{\mathrm{anc}} + \sum_k \frac{\lambda_k'}{\tau_n} \ket{k}\ket{0}_{\mathrm{anc}}\\
        +\sum_{j}\lambda_j'W_{\vec{x}}^\dagger\ket{\mathrm{bad}_j}\ket{j}_{\mathrm{anc}}\bigg) \, .
    \end{multline}
    Therefore, when post-selecting on $\ket{0}_\mathrm{anc}$, we find that the encoded amplitude on $\ket{\vec{\bar x}}$ is then $\frac{1}{\tau_n}\bra{0}W_{\vec{x}}^\dagger\frac{\partial H_{\tel}}{x_{n,j}}W_{\vec{x}}\ket{0} = \frac{1}{\tau_n}\frac{\partial E_{\tel}}{\partial x_{n,j}}$. Hence, this process block-encodes the electronic forces on the nuclear registers by phase kickback.

    Next, we proceed to the factors of the block encoding. The scaling factor follows from the Hamiltonian derivative operator in \cref{lemHamDeriv}, which is $\tau_n \in \mathcal O \left( \frac{\widetilde NZ_n}{h_{\tel}^2}\right)$. Given that the electronic register is computed and then uncomputed, when encoding the ground state energy onto the nuclear register, the ancilla qubits are the max of the ancilla qubits between $H_{\tel}$ and $H_{\mathrm{force}}$, or $a_{\tel}$ and $a_{\mathrm{force}}$, plus the size of the electronic register, or $\mathcal{O}(\widetilde N \log B)$.

    Regarding the error, we first prepare the ground state of $H^{\rm ctrl}_{\tel}$ up to fidelity $1-\varepsilon_{\mathrm{prep}}$ with \cref{lemGSPrep}. This produces a state $\ket{\phi_0}$ such that $\lvert(\bra{\psi_0}\otimes\bra{0})\ket{\phi_0}\rvert \geq 1-\varepsilon_{\mathrm{prep}}$. For cleanliness, we denote $\ket{\widetilde \psi_0} = \ket{\psi_0}\otimes\ket{0}$. Then the $\ell_2$ distance between the two states can be found to be
    \begin{align}
        \left\lVert\ket{\phi_0}-\ket{\widetilde \psi_0}\right\rVert & = \sqrt{(\bra{\phi_0}-\bra{\widetilde \psi_0})(\ket{\phi_0}-\ket{\widetilde \psi_0})} \nonumber \\
        & = \sqrt{2-2\lvert\braket{\phi_0|\widetilde\psi_0}\rvert} \leq \sqrt{2\varepsilon_{\mathrm{prep}}} \, .
    \end{align}
    Given that the final block encoding is that of a \emph{diagonal} matrix on different nuclear positions, the error term of the block encoding (spectral norm difference) is less than that of the maximum difference on a single entry, that is, the maximum difference on the eigenvalues themselves. We derive the difference as follows via a series of triangular inequalities and drop the notation on nuclear positions, as the derivation is for a single position:
    \begin{align}
         & \left\lvert\tau_n\bra{0}W^\dagger U_{\mathrm{force}} W\ket{0} - \frac{\partial E_\tel}{\partial x_{n,j}}\right\rvert \nonumber \\
         & \begin{aligned}[b]
               \le\; & \tau_n\left\lvert\bra{0} W^\dagger U_{\mathrm{force}}W\ket{0} - \bra{0} W^\dagger U_{\mathrm{force}}\ket{\widetilde \psi_0}\right\rvert \\
               & + \tau_n\left\lvert\bra{0}W^\dagger U_{\mathrm{force}} \ket{\widetilde \psi_0} - \bra{\widetilde \psi_0}  U_{\mathrm{force}} \ket{\widetilde \psi_0}\right\rvert \\
               & + \left\lvert\tau_n \bra{\widetilde \psi_0} U_{\mathrm{force}} \ket{\widetilde \psi_0} - \bra{\widetilde \psi_0}  \frac{\partial H_{\tel}}{\partial x_{n,j}} \ket{\widetilde \psi_0}\right\rvert \, .
           \end{aligned}
    \end{align}
    We now derive the upper bounds of the three absolute differences. The first term can be bounded by the Cauchy–Schwarz inequality such that
    \begin{align}
         & \tau_n\left\lvert\bra{0}W^\dagger U_{\mathrm{force}}W\ket{0} - \bra{0}W^\dagger U_{\mathrm{force}}\ket{\widetilde \psi_0}\right\rvert\nonumber \\
         & = \tau_n\left\lvert\bra{0}W^\dagger U_{\mathrm{force}}\ket{\phi_0} - \bra{0} W^\dagger U_{\mathrm{force}}\ket{\widetilde \psi_0}\right\rvert\nonumber \\
         & = \tau_n \left\lvert\bra{0}W^\dagger U_{\mathrm{force}}\left(\ket{\phi_0} - \ket{\widetilde \psi_0}\right)\right\rvert \nonumber \\
         & \leq \tau_n\left\lVert\bra{0}W^\dagger U_{\mathrm{force}}\right\rVert\left\lVert\ket{\phi_0} - \ket{\widetilde \psi_0}\right\rVert \nonumber \\
         & = \tau_n\left\lVert\ket{\phi_0} - \ket{\widetilde \psi_0}\right\rVert \, ,
    \end{align}
    where the last equality holds because $W$ is unitary.

    The second term can be similarly bounded such that
    \begin{align}
         & \tau_n\left\lvert\bra{0}W^\dagger U_{\mathrm{force}} \ket{\widetilde \psi_0} - \bra{\widetilde \psi_0}  U_{\mathrm{force}} \ket{\widetilde \psi_0}\right\rvert\nonumber \\
         & = \tau_n\left\lVert\bra{\phi_0} - \bra{\widetilde \psi_0}\right\rVert \left\lVert U_{\mathrm{force}}\ket{\widetilde\phi_0}\right\rVert\nonumber \\
         & = \tau_n\left\lVert\ket{\phi_0} - \ket{\widetilde \psi_0}\right\rVert \, ,
    \end{align}

    The third term can be bounded as follows:
    \begin{align}
         & \left\lvert\bra{\widetilde \psi_0}\tau_n W^\dagger U_{\mathrm{force}} \ket{\widetilde \psi_0} - \bra{\widetilde \psi_0} \frac{\partial H_\tel}{\partial x_{n,j}} \ket{\widetilde \psi_0}\right\rvert \nonumber \\
         & = \left\lvert\bra{\psi_0}\left(\tau_n(\mathbb{I}\otimes\bra{0}) U_{\mathrm{force}} (\mathbb{I}\otimes\ket{0}) - \frac{\partial H_\tel}{\partial x_{n,j}}\right) \ket{\psi_0}\right\rvert\nonumber \\
         & \leq \left\lVert\bra{\psi_0}\right\rVert\left\lVert\tau_n(\mathbb{I}\otimes\bra{0}) U_{\mathrm{force}} (\mathbb{I}\otimes\ket{0}) -  \frac{\partial H_\tel}{\partial x_{n,j}}\right\rVert\left\lVert\ket{\psi_0}\right\rVert \nonumber \\
         & \leq \left\lVert\left(\tau_n(\mathbb{I}\otimes\bra{0}) U_{\mathrm{force}} (\mathbb{I}\otimes\ket{0}) - \frac{\partial H_\tel}{\partial x_{n,j}} \right) \right\rVert\le \varepsilon_{\mathrm{force}}.
    \end{align}
    where the second-to-last inequality is by Cauchy–Schwarz inequality and the last inequality is by the definition of the spectral norm.

    Putting the results together, we get:
    \begin{align}
         & \left\lvert\tau_n\bra{0}W^\dagger U_{\mathrm{force}}W\ket{0} - \frac{\partial E_\tel}{\partial x_{n,j}}\right\rvert \nonumber \\
         & \le 2\tau_n \lVert\ket{\phi_0} - \ket{\widetilde \psi_0}\rVert + \varepsilon_{\mathrm{force}}\nonumber \\
         & \le 2\tau_n \sqrt{2\varepsilon_{\mathrm{prep}}} + \varepsilon_{\mathrm{force}} \, .
    \end{align}
    Set $\varepsilon_{\mathrm{force}} = \frac{\varepsilon_{D_E}}{2}$ and $2\tau_n \sqrt{2\varepsilon_{\mathrm{prep}}} = \frac{\varepsilon_{D_E}}{2}$.
    Replacing the bounds of $\varepsilon_{\mathrm{prep}}$ of that obtained in \cref{lemGSPrep}, where we see that
    \begin{equation}
        \varepsilon_{\tel}\in \widetilde{\mathcal O} \left(\frac{\delta^2\gamma^2\varepsilon_{\mathrm{prep}}^2}{\lambda}\right)\subseteq  \widetilde{\mathcal O} \left(\frac{\delta^2\gamma^2\varepsilon_{D_E}^4}{\lambda\tau_n^4}\right) \, .
    \end{equation}
    For consistency across different $n$, we set $\varepsilon_{\tel}$ to a tighter bound. Let $\tau = \frac{\tau_n}{Z_n}$. We then let
    \begin{align}
        \varepsilon_{\mathrm{prep}} & \in \mathcal O \left(\frac{\varepsilon_{D_E}^2}{\tau^2 Z_{\max}^2}\right) \\
        \varepsilon_{\tel}          & \in \widetilde{\mathcal O} \left(\frac{\delta^2\gamma^2\varepsilon_{\mathrm{prep}}^2}{\lambda}\right)\subseteq  \widetilde{\mathcal O} \left(\frac{\delta^2\gamma^2\varepsilon_{D_E}^4}{\lambda\tau^4Z_{\max}^4}\right) \, .
    \end{align}
    which still satisfies the former bound.
    The runtime is largely dominated by ground state preparation plus one query to the force operator. Replacing $\varepsilon_{\mathrm{prep}}$ by its bounds of $\varepsilon_{D_E}$, we note that we require $\mathcal{O}(\frac{\lambda}{\delta\gamma}\log\frac{\tau Z_{\max}}{\delta\varepsilon_{D_E}})$ queries to $U_H$. The number of Toffoli gates needed can be found by \cref{corGSPrepTof} to be
    \begin{equation}
        \mathcal O \left( \frac{\lambda}{\delta \gamma} \log \frac{\tau Z_{\max}}{\delta \varepsilon_{D_E}}  \left( \max(a_{\tel} \, , a_{\mathrm{force}})+\log \left(\frac{\lambda\tau Z_{\max}}{\delta \gamma\varepsilon_{D_E}}\right)\right) \right) \, .
    \end{equation}
\end{proof}

\begin{figure*}
    \includegraphics[width=\linewidth]{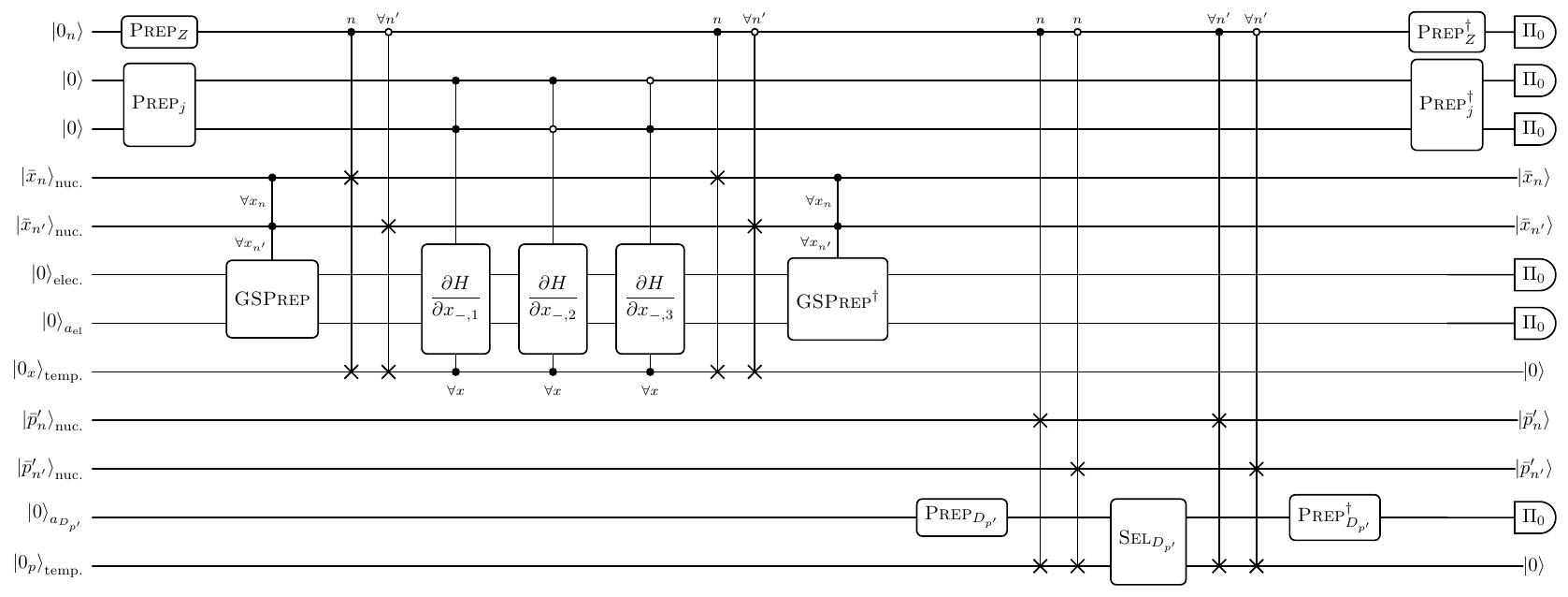}
    \caption[Block encoding implementation of the electronic Liouvillian.]{\emph{Block encoding implementation of the electronic Liouvillian.} We combine the implementation of $\Del$ from \cref{figDel} and $D_{p'}$ as shown in Ref.~\citep{simon2024improved} for all $n$ and $j$ via LCU. Note that the nuclear-register-controlled ground state preparation operator is implemented as a single operator and need not be repeated for different sets of $\vec x$.}
    \label{figLel}
\end{figure*}

\subsection{Implementation of the Liouvillian}

Now that we have the implementation of $\Del$ at hand, we can implement the electronic Liouvillian, which we state again as follows:
\begin{equation}
    L_{\tel}^{\mathrm{(NVT)}} := i\sum_{n=1}^{N} \sum_{j=1}^{3} \Del \otimes D_{p'_{n,j}} \, ,
\end{equation}
where the central finite difference operator reads,
\begin{equation}
    D_{p'_{n,j}} = \frac{1}{h_{p'}} \sum_{k=-d_{p'}}^{d_{p'}}c_{d_{p'}, k} \ketbra{\bar{p}'_{n,j}-k}{\bar{p}'_{n,j}} \, .
\end{equation}

From the results of Equations (A79) to (A82) in Ref.~\citep{simon2024improved}, we know that a $(\alpha_{D_{p'}}, a_{D_{p'}}, \varepsilon_{D_{p'}})$-block-encoding of $D_{p'_{n,j}}$, where
\begin{align}
    \alpha_{D_{p'}} & \in \mathcal O \left(\frac{\log d_{p'}}{h_{p'}}\right), \\
    a_{D_{p'}}      & \in \mathcal O (\log d_{p'}) \, ,
\end{align}
can be prepared with $\mathcal O \left(d_{p'}\log \frac{g_{p'}\log d_{p'}}{h_{p'}\varepsilon_{D_{p'}}}\right)$ Toffoli gates using a weighted combination of unitary adders~\citep{draper2000addition} with the adjustments made in \cref{lemStatePrep}. Then, to implement $L_{\tel}$, we simply take a weighted sum over $\Del \otimes D_{p'_{n,j}}$ via LCU, or \cref{lemLCU}. We again implement this operator by \textsc{Prep}$^\dagger$--\textsc{Sel}--\textsc{Prep} formalism. Given that the $\textsc{Sel}$ operators would have different weightings dependent on $n$, we separate $n$ and $j$ into two registers. Note that given for each $\Del$, the scaling factor $\tau_n$ differs by a factor of $Z_n$, we need to prepare the state
\begin{equation}
    \sum_{n=1}^N \sqrt{\frac{Z_n}{\sum_{n'=1}^N Z_{n'}}}\ket n \otimes \frac{1}{\sqrt{3}}(\ket{00} + \ket{01} + \ket{10})
\end{equation}
via the $\textsc{Prep}_Z$ operator. On the other hand, the \textsc{Sel} operator is formulated as follows:
\begin{equation}
    \sum_{n=1}^N \sum_{j=1}^3 \ketbra{n}{n}\otimes\ketbra{j}{j}\otimes \Del \otimes D_{p'_{n,j}} \, .
\end{equation}

Note that, since the $\Del$ operator utilizes ground state preparation on the electronic registers, we can leverage the ground state preparation algorithm across all LCU components by querying it twice, thereby reducing the runtime. Further, the $D_{p'}$ only needs to be queried once if we use an ancilla register that swaps in the momentum states and applies the operator in superposition. A complete circuit diagram can be shown in \cref{figLel}. We formalize the results as follows:

\begin{proposition}[Block encoding of the electronic Liouvillian]
    Suppose we have a nuclear-position-controlled electronic Hamiltonian that can be block-encoded in the first quantization scheme with scaling factor $\lambda$ satisfying \cref{assumptionGS} in regards to spectral gap $\gamma$, ground state upper bound $\mu$. We are given an approximate ground state preparation oracle $U_I$ that can prepare ground states up to an overlap of $1-\delta$. Then there exists a $(\alpha_{L_{\tel}}, a_{L_{\tel}}, \varepsilon_{L_{\tel}})$-block-encoding of $L_{\tel}$ where
    \begin{align*}
        \alpha_{L_{\tel}} & \in \mathcal O \left( \frac{N\widetilde NZ_{\max}\log d_{p'}}{h_{\tel}^2h_{p'}}\right) \\
        a_{L_\tel}        & \in \mathcal{O}\left(\widetilde{N}\log B + \log\frac{d_{p'}}{\delta\gamma}+\log \frac{\alpha_{L_\tel}}{\varepsilon_{L_{\tel}}}\right) \, .
    \end{align*}
    This block encoding can be prepared with $\mathcal{O}(\frac{1}{\delta})$ queries to $U_I$, and
    \begin{align*}
        \widetilde{\mathcal O} & \Bigg( \frac{\lambda}{\delta \gamma}\log \frac{\alpha_{L_{\tel}}}{\varepsilon_{L_{\tel}}}\left( N + \widetilde N \log B + \log B\log\frac{B g_x \alpha_{L_{\tel}}}{\varepsilon_{L_{\tel}}}\right)\nonumber \\
        & + N \log g_{p'} +d_{p'}\log \frac{\alpha_{L_{\tel}}}{\varepsilon_{L_{\tel}}}\Bigg)
    \end{align*}
    Toffoli gates.
    \label{propElecLiou}
\end{proposition}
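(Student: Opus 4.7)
The plan is to build the block encoding hierarchically: instantiate $\Del$ via \cref{lemDel}, take the $(\mathcal O(\log d_{p'}/h_{p'}), \mathcal O(\log d_{p'}), \varepsilon_{D_{p'}})$-block-encoding of $D_{p'_{n,j}}$ from Ref.~\citep{simon2024improved}, fuse them by tensor product using \cref{lemProd}, and then sum over the $3N$ pairs $(n,j)$ via \cref{lemLCU}. Because the scaling factor $\tau_n = \tau Z_n$ of $\Del$ depends on $n$ while $\alpha_{D_{p'}}$ does not, I would use a $\textsc{Prep}_Z$ state-preparation unitary producing $\sum_n \sqrt{Z_n/\sum_{n'} Z_{n'}}\ket n$ on the $n$ register together with a uniform superposition on the $j$ register, so that the LCU combines the components with amplitudes proportional to the square roots of their individual scalings. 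The resulting overall scaling factor is $3\alpha_{D_{p'}}\sum_n \tau Z_n \in \mathcal O(N\tilde N Z_{\max}\log d_{p'}/(h_{\tel}^2 h_{p'}))$, matching the claim, and the global factor of $i$ can be absorbed as a complex phase in the state preparation.

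The central algorithmic step — and the one requiring the most care — is to avoid paying a factor of $N$ in the ground-state-preparation cost. Since $\ket{\psi_0(\vec x)}$ depends only on the nuclear register and not on $(n,j)$, the ground-state-preparation subroutine from \cref{lemDel} can be hoisted outside the LCU sum: prepare the ground state once on a shared electronic register controlled only on the nuclear position, then apply the $(n,j)$-controlled force operator $\partial_{x_{n,j}} H_{\tel}^{\rm ctrl}$ (routed by the LCU ancillae) and $D_{p'_{n,j}}$ (routed by conditionally swapping the $n$-th momentum register into a fixed work register at cost $\mathcal O(N\log g_{p'})$ Toffolis), and finally uncompute the ground state. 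This keeps the total number of invocations of ground-state preparation constant, so the bound from \cref{corGSPrepTof} applied to \cref{lemBEElecHam} propagates through with its $\mathcal O(1/\delta)$ query count to $U_I$ intact and its Toffoli cost paid only once.

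What remains is error accounting and gate counting. Propagating through \cref{lemProd} gives a per-component error $\tau_n \varepsilon_{D_{p'}} + \alpha_{D_{p'}} \varepsilon_{D_E}$, and \cref{lemLCU} contributes an additional $\alpha_{L_{\tel}}\varepsilon_{\mathrm{SP}}$ from the state-preparation pair; distributing a budget of $\Theta(\varepsilon_{L_{\tel}}/N)$ per term across these sources pins each sub-precision at $\mathrm{poly}\log(\alpha_{L_{\tel}}/\varepsilon_{L_{\tel}})$. Through the quartic-in-$\varepsilon_{D_E}$ dependence of \cref{lemDel}, the block-encoding error $\varepsilon_{\tel}$ of the electronic Hamiltonian enters only logarithmically in the final counts. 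The ancilla total is $a_{D_E} + a_{D_{p'}} + \mathcal O(\log N + \log(1/\varepsilon_{\mathrm{SP}}))$, which collapses under the bounds from \cref{lemDel} to the stated $\mathcal O(\tilde N \log B + \log(d_{p'}/\delta\gamma) + \log(\alpha_{L_{\tel}}/\varepsilon_{L_{\tel}}))$. Summing the Toffoli contributions — the single controlled ground-state preparation (supplying the dominant $\tfrac{\lambda}{\delta\gamma}\log(\alpha_{L_{\tel}}/\varepsilon_{L_{\tel}})(N + \tilde N \log B + \log B\log(Bg_x \alpha_{L_{\tel}}/\varepsilon_{L_{\tel}}))$ term), the conditional swaps ($\mathcal O(N\log g_{p'})$), the $\textsc{Prep}_Z$ state preparation ($\tilde{\mathcal O}(N)$ by \cref{lemStatePrep}), and the single $D_{p'}$ implementation ($\tilde{\mathcal O}(d_{p'}\log(\alpha_{L_{\tel}}/\varepsilon_{L_{\tel}}))$) — reproduces the claimed bound. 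The principal obstacle is bookkeeping rather than conceptual: one must verify that the hoisted ground-state preparation is faithful to the LCU's controlled structure and that the quartic precision dependence of \cref{lemDel} does not spoil the logarithmic scaling in $\varepsilon_{L_{\tel}}$.
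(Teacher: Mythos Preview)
Your proposal is correct and mirrors the paper's proof essentially step for step: the paper also builds each term via \cref{lemProd} from \cref{lemDel} and the $D_{p'_{n,j}}$ block encoding, combines them with \cref{lemLCU} using exactly the $\textsc{Prep}_Z$ state $\sum_n\sqrt{Z_n/\sum_{n'}Z_{n'}}\,\ket{n}\otimes\tfrac{1}{\sqrt3}(\ket{00}+\ket{01}+\ket{10})$, and critically hoists the ground-state preparation outside the $(n,j)$-sum (querying it twice rather than $3N$ times) while routing $D_{p'_{n,j}}$ via control-SWAPs. The only cosmetic differences are that the paper realizes the global $i$ with explicit $XSXS$ gates rather than absorbing it into the state preparation, and it queries the force operator thrice (once per $j$) rather than once, but neither affects the asymptotics.
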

\begin{proof}
    We first calculate the scaling factors for the block encoding. We are given a $(\tau_n, a_{D_E}, \varepsilon_{D_E})$-block-encoding of $\Del$ and a $(\alpha_{D_{p'}}, a_{D_{p'}}, \varepsilon_{D_{p'}})$-block-encoding of $D_{p'_{n,j}}$. By \cref{lemProd}, we can construct a $(\tau_n\alpha_{D_{p'}}, a_{D_E}+a_{D_{p'}}, \tau_n\varepsilon_{D_{p'}} + \alpha_{D_{p'}}\varepsilon_{D_E})$-block-encoding of $\Del\otimes D_{p'_{n,j}}$.

    Noting that by \cref{lemHamDeriv}, $\forall n, n', \frac{\tau_n}{Z_n} = \frac{\tau_{n'}}{Z_{n'}}$, we let $\tau = \frac{\tau_n}{Z_n}$. We can then notice that for each $(n,j)$ pair, we have a
    \begin{equation*}
        \left(\tau\alpha_{D_{p'}}, a_{D_E}+a_{D_{p'}}, \tau\varepsilon_{D_{p'}} + \frac{\alpha_{D_{p'}}\varepsilon_{D_E}}{Z_n}\right)\text{-block-encoding}
    \end{equation*}
    of $\frac{1}{Z_n}\Del\otimes D_{p'_{n,j}}$. Given that the minimum value of $Z_n$ is 1, we can also set the error term to be $\tau\varepsilon_{D_{p'}} + \alpha_{D_{p'}}\varepsilon_{D_E}$ for consistency over all $(n,j)$ pairs. We can then construct the weighted sum of
    \begin{equation}
        i\sum_{n=1}^N\sum_{j=1}^3 Z_n \frac{\Del\otimes D_{p'_{n,j}}}{Z_n} = L_{\tel}
    \end{equation}
    to construct the electronic Liouvillian. The additional $i$ factor can be added by $XSXS$ gates, or in the controlled version, $(X\otimes H)CX(X\otimes H)CX(\mathbb{I}\otimes S)CX(\mathbb{I}\otimes S)$ gates. Given a $(\alpha_Z, a_Z, \varepsilon_Z)$-state-preparation-pair that prepares the state $\sum_{n=1}^N \sqrt{\frac{Z_n}{\sum_{n'=1}^N Z_{n'}}}\ket n \otimes \frac{1}{\sqrt{3}}(\ket{00} + \ket{01} + \ket{10})$, where $\alpha_Z = 3\sum_{n=1}^NZ_n\in \mathcal O (NZ_{\max})$ and $a_Z \in \mathcal O (\log N)$, by \cref{lemLCU}, we construct a $(\alpha_{L_\tel}, a_{L_\tel}, \varepsilon_{L_\tel})$-block-encoding of the electronic Liouvillian, where
    \begin{align}
        \alpha_{L_\tel}      & \in \mathcal O (NZ_{\max}\tau\alpha_{D_{p'}}) \\
        a_{L_\tel}           & \in a_{D_E}+a_{D_{p'}} + \mathcal O (\log N) \\
        \varepsilon_{L_\tel} & \in \mathcal O (NZ_{\max}\tau\varepsilon_{D_{p'}} + NZ_{\max}\alpha_{D_{p'}}\varepsilon_{D_E} +  \tau\alpha_{D_{p'}}\varepsilon_Z) \, .
    \end{align}
    Plugging in the values of $\tau$ and $\alpha_{D_{p'}}$, we note that we get for the scaling factor
    \begin{equation}
        \alpha_{L_\tel} \in \mathcal O \left( \frac{N\widetilde NZ_{\max}\log d_{p'}}{h_{\tel}^2h_{p'}}\right)\, .
    \end{equation}
    Next, as the number of ancillae is dependent on error, we check the error terms first. Starting with the largest bound on $\varepsilon_{L_\tel}$ and working our way down, we obtain the following:
    \begin{align}
        \varepsilon_{D_{p'}}         & \in \mathcal O \left(\frac{\varepsilon_{L_\tel}}{NZ_{\max}\tau}\right)\subseteq \mathcal O \left(\frac{h_{\tel}^2\varepsilon_{L_\tel}}{N\widetilde NZ_{\max}}\right) \\
        \varepsilon_{D_E}            & \in \mathcal O \left(\frac{\varepsilon_{L_\tel}}{NZ_{\max}\alpha_{D_{p'}}}\right)\subseteq \mathcal O \left(\frac{h_{p'}\varepsilon_{L_\tel}}{NZ_{\max}\log d_{p'}}\right) \\
        \varepsilon_{Z}              & \in \mathcal O \left(\frac{\varepsilon_{L_\tel}}{\tau\alpha_{D_{p'}}}\right)\subseteq \mathcal O \left(\frac{h_{\tel}^2h_{p'}\varepsilon_{L_\tel}}{\widetilde N\log d_{p'}}\right) \\
        \varepsilon_{\tel}           & \in \widetilde{\mathcal O} \left(\frac{\delta^2\gamma^2\varepsilon_{D_E}^4}{\lambda\tau^4Z_{\max}^4}\right)\subseteq  \widetilde{\mathcal O} \left(\frac{\delta^2\gamma^2h_{p'}^4h_{\tel}^8\varepsilon_{L_\tel}^4}{\lambda N^4\widetilde N^4Z_{\max}^8\log^4 d_{p'}}\right) \\
        \varepsilon_{\mathrm{force}} & \in \mathcal O (\varepsilon_{D_E}) \subseteq \mathcal O \left(\frac{h_{p'}\varepsilon_{L_\tel}}{NZ_{\max}\log d_{p'}}\right) \, .
    \end{align}
    Moving on to the ancillae, we note that
    \begin{align}
        a_{L_\tel} & \in a_{D_E}+a_{D_{p'}} + \mathcal O \left(\log N\right) \nonumber \\
        & \subseteq \mathcal{O}\left( \widetilde{N}\log B + \log\frac{N d_{p'}}{\varepsilon_\tel}\right)\nonumber \\
        & \subseteq \mathcal{O}\left( \widetilde{N}\log B + \log\frac{N Z_{\max} d_{p'}}{\delta\gamma h_{p'}h_{\tel}\varepsilon_{L_{\tel}}}\right) \, .
    \end{align}
    Note that we can rewrite this expression as
    \begin{equation}
        a_{L_\tel} \in \mathcal{O}\left(\widetilde{N}\log B + \log\frac{d_{p'}}{\delta\gamma}+\log \frac{\alpha_{L_\tel}}{\varepsilon_{L_{\tel}}}\right) \, .
    \end{equation}
    We also make use of $\mathcal O (N\log g_x)$ ancillae for swapping in nuclear positions and another $\mathcal O (\log g_{p'})$ pure ancillae for momentum positions.

    Lastly, to calculate the number of Toffoli gates, we first go through the \textsc{Prep} operator. Per \cref{lemStatePrep}, using the  \textsc{Prep} operator to prepare a state preparation pair up to $\varepsilon_Z$, we note that this would cost $\mathcal O \left(N\log\frac{NZ_{\max}}{\varepsilon_Z}\right)$ Toffoli gates. Next, moving on to the \textsc{Sel} operator, we note that for all $(n,j)$ pairs, we prepare the same ground state; we only need to query the ground state preparation algorithm twice---once for computing and once for uncomputing the ground state. Further, given that we control-SWAP the nuclear positions into an ancilla register and compute the force operator, we query the force operator $\frac{\partial H_{\tel}}{x_{n,j}}$ thrice---once for each value of $j$. If we further decompose the force operator, we note that only the single qubit rotation controlled by $\ket{\nu}$ is different---the other operations may be shared. Nevertheless, the asymptotics are the same. Then, for the $D_{p'_{n,j}}$ operator, we again only need to query once, as we can also control-SWAP the momentum positions into an ancilla register. Thus, we obtain the total number of gates as follows:
    \begin{align}
        \#_{\text{Toffolis}} & =
        \begin{multlined}[t]
            \mathcal O \left( \frac{\lambda}{\delta \gamma} \log \frac{\tau Z_{\max}}{\delta\varepsilon_{D_E}}\right)\\
            \times \mathcal O \left( N + \log B \left(\widetilde{N} + \log\frac{Bg_x}{\varepsilon_{\tel}} \right)\right)
        \end{multlined}\nonumber \\
        & +\mathcal O \left( \frac{\lambda}{\delta \gamma} \log \frac{\tau Z_{\max}}{\delta \varepsilon_{D_E}}  \left( a_{\tel}+\log \left(\frac{\lambda\tau Z_{\max}}{\delta \gamma\varepsilon_{D_E}}\right)\right) \right)\nonumber \\
        & +\mathcal O\left(N\log g_x+ \log B\left(\widetilde N +\log\frac{Bg_x}{\varepsilon_{\mathrm{force}}}\right)\right)\nonumber \\
        & +\mathcal O \left(N\log g_{p'}+ d_{p'}\log \frac{g_{p'}\log d_{p'}}{h_{p'}\varepsilon_{D_{p'}}}\right)\nonumber \\
        & +\mathcal O \left(N\log\frac{NZ_{\max}}{\varepsilon_Z}\right) \, ,
    \end{align}
    where the first two terms are from ground state preparation, the third term is from the force operator, the fourth term is from the momentum derivative operator, and the last term is from the state preparation pair.
    Cleaning up the result, we note that a total of
    \begin{align}
        \widetilde{\mathcal O} & \Bigg(\frac{\lambda}{\delta \gamma}\log \frac{\alpha_{L_{\tel}}}{\varepsilon_{L_{\tel}}}\left( N + \widetilde N \log B + \log B\log\frac{B g_x \alpha_{L_{\tel}}}{\varepsilon_{L_{\tel}}}\right)\nonumber \\
        & + N \log g_{p'} +d_{p'}\log \frac{g_{p'}\alpha_{L_{\tel}}}{\varepsilon_{L_{\tel}}}\Bigg)
    \end{align}
    Toffoli gates are used, where, further cleaning up constant values, we obtain
    \begin{align}
         & \widetilde{\mathcal O} \left( \frac{\widetilde N N_{\mathrm{tot}}}{\delta \gamma}\log \frac{1}{\varepsilon_{L_{\tel}}}\left( N_{\mathrm{tot}} + \log\frac{1}{\varepsilon_{L_{\tel}}}\right)\right)\nonumber \\
         & \subseteq \widetilde{\mathcal O}\left( \frac{\widetilde N N_{\mathrm{tot}}^2}{\delta \gamma}\log^2 \frac{1}{\varepsilon_{L_{\tel}}}\right) \, .
    \end{align}
\end{proof}
To block-encode the final Liouvillian, we require the block encoding of the classical Liouvillian under the NVT ensemble as shown in Ref.~\citep{simon2024improved}.
\begin{lemma}[Block encoding of the discretized classical NVT Liouvillian -- Lemma 2, revised, \citep{simon2024improved}]
    There exists a $(\alpha_{\mathrm{NVT}}, a_{\mathrm{NVT}},\varepsilon_\mathrm{NVT})$-block-encoding of the discretized classical Liouvillian $L_{\mathrm{NVT}}$ with scaling factor
    \begin{multline*}
        \alpha_{\mathrm{NVT}} \in \mathcal O\Bigg(N \frac{p'_{\max}}{m_{\min} s_{\min}^2} \frac{\log d_x}{h_x} + N^2 \frac{Z_{\max}^2 x_{\max}}{\Delta^3} \frac{\log d_{p'}}{h_{p'}} \\
        + \frac{p_{s,\max}}{Q} \frac{\log d_s}{h_s} + \left(N \frac{{p'}_{\max}^2}{m_{\min} s_{\min}^3} + \frac{N_f k_B T}{s_{\min}} \right) \frac{\log d_{p_s}}{h_{p_s}} \Bigg)
    \end{multline*}
    and the number of ancilla qubits
    \begin{equation*}
        a_{\mathrm{NVT}} \in \mathcal O\left(\log d+ \log \frac{\alpha_{NVT}}{\varepsilon_{\mathrm{NVT}}} \right),
    \end{equation*}
    where $ d := \max( d_x, d_{p'}, d_s, d_{p_s} )$. This block encoding can be implemented using
    \begin{equation*}
        \widetilde{\mathcal O} \left((N+d) \log \frac{g \alpha_{NVT}}{\varepsilon_{\mathrm{NVT}}} + \log^{\log 3} \frac{\alpha_{NVT}}{\varepsilon_{\mathrm{NVT}}} \right)
    \end{equation*}
    Toffoli gates, where $g := \max( g_x, g_{p'}, g_s, g_{p_s} )$.
    \label{lemClassLiou}
\end{lemma}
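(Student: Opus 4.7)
The plan is to build $L_{\mathrm{cl,disc}}$ as a linear combination of tensor products, with each summand of the form $D_{\xi} \otimes \partial H_{\mathrm{cl}}/\partial \xi'$ where $(\xi,\xi')$ ranges over the conjugate coordinate pairs $(x_{n,j},p'_{n,j})$ and $(s,p_s)$, with the appropriate signs. The key structural observation is that every partial-derivative operator written explicitly in \cref{ap:preliminaries} is diagonal in the phase-space grid basis, so it can be realized as a diagonal block encoding of a classically computable scalar function of the grid indices; simultaneously, each central finite-difference operator $D_{\xi}$ is a sparse sum of at most $2d+1$ shift operators with classical stencil coefficients, and is therefore block-encodable by a weighted LCU of adder circuits, exactly as was done for $D_{p'_{n,j}}$ in the construction used for \cref{propElecLiou}.

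First, I would block-encode each diagonal operator $\partial H_{\mathrm{cl}}/\partial \xi'$ by computing its scalar value on the grid registers with a quantum arithmetic circuit and then loading that value into an amplitude either by a controlled rotation or by the alternating-sign trick~\citep{berry2014exponential}. The scaling factor of each such diagonal block encoding is the maximum absolute value of the corresponding function on the discretized grid; for instance $\|\partial H_{\mathrm{cl}}/\partial p'_{n,j}\| \le p'_{\max}/(m_{\min} s_{\min}^2)$, $\|\partial H_{\mathrm{cl}}/\partial x_{n,j}\| \in \mathcal{O}(N Z_{\max}^2 x_{\max}/\Delta^3)$ using the Coulomb regularization parameter $\Delta$, and the bath derivatives contribute $p_{s,\max}/Q$ and $N p'_{\max}{}^2/(m_{\min} s_{\min}^3) + N_f k_B T/s_{\min}$, respectively. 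For the finite-difference operators, the $\ell_1$ norm of the stencil coefficients grows as $\mathcal{O}(\log d)$, and there is an additional $1/h$ from the grid spacing, giving a $(\mathcal{O}(\log d/h),\mathcal{O}(\log d),\varepsilon)$-block-encoding per $D_{\xi}$.

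Next, I would combine each coordinate--momentum pair via \cref{lemProd} to obtain block encodings of the tensor products on the relevant register pair (with identity on the spectators), and finally take a weighted LCU of the $6N+2$ resulting summands via \cref{lemLCU}, using a state-preparation pair constructed per \cref{lemStatePrep}. Choosing the LCU weights proportional to the individual scaling factors makes the overall scaling $\alpha_{\mathrm{NVT}}$ equal to the $\ell_1$ sum of the component scaling factors, which by inspection recovers the claimed expression: $N$ copies each of $(p'_{\max}/m_{\min} s_{\min}^2)(\log d_x/h_x)$ and $(N Z_{\max}^2 x_{\max}/\Delta^3)(\log d_{p'}/h_{p'})$, plus the two bath summands. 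The ancilla budget is dominated by $\mathcal{O}(\log d)$ from the finite-difference LCUs together with $\mathcal{O}(\log(\alpha_{\mathrm{NVT}}/\varepsilon_{\mathrm{NVT}}))$ needed for the fixed-point representation of the diagonal values, which absorbs the $\mathcal{O}(\log N)$ of the outer LCU.

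The hard part will be carefully threading the precision budget through the nested block encodings and bookkeeping the Toffoli count, rather than any conceptual step. In particular, the Coulomb-type derivative $\partial H_{\mathrm{cl}}/\partial x_{n,j}$ requires evaluating $(x_{n,j}-x_{n',j})/(\|\vec x_n-\vec x_{n'}\|^2+\Delta^2)^{3/2}$ in superposition over all pairs $(n,n')$, which involves subtraction, squaring, accumulation, and Newton--Raphson-style reciprocal and $x^{-3/2}$ routines, each computed to precision $\mathcal{O}(\varepsilon_{\mathrm{NVT}}/\alpha_{\mathrm{NVT}})$. These are standard subroutines but they contribute the superlinear $\log^{\log 3}(\alpha_{\mathrm{NVT}}/\varepsilon_{\mathrm{NVT}})$ factor associated with recursive Karatsuba-type multipliers; together with the $\mathcal{O}((N+d)\log(g\alpha_{\mathrm{NVT}}/\varepsilon_{\mathrm{NVT}}))$ cost of the adders, QROM accesses, and state preparations, this yields the claimed total Toffoli count.
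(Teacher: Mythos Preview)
Your proposal is correct and follows the expected construction. Note, however, that the paper does not actually supply its own proof of this lemma: it is quoted verbatim (with minor revision) from Ref.~\citep{simon2024improved} and used as a black box in the assembly of \cref{propLiou}. Your sketch---diagonal block encodings of the analytic partial derivatives via arithmetic circuits plus amplitude loading, finite-difference operators as weighted LCUs of adders, tensor products via \cref{lemProd}, and a final outer LCU via \cref{lemLCU,lemStatePrep}---is precisely the route taken in the cited reference, and your identification of the Karatsuba-style $\log^{\log 3}$ contribution from the Coulomb arithmetic is the correct source of that term.
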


\begin{turnpage}
    \begin{figure*}
        \centering
        \includegraphics[width=\textheight]{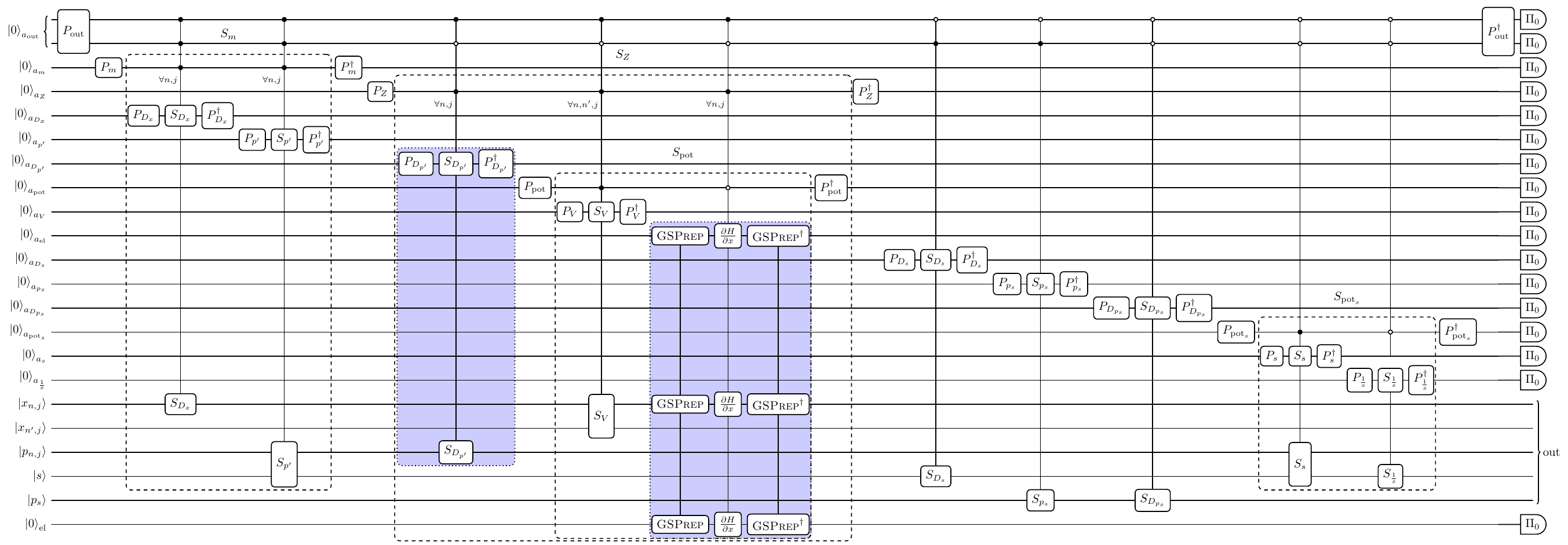}
        \caption[Block encoding implementation of the NVT Liouvillian $L$.]{\emph{Block encoding implementation of the NVT Liouvillian $L$.} As in Ref.~\citep{simon2024improved}, the Liouvillian is broken into separate operators that can be individually implemented by \textsc{Prep}--\textsc{Sel}--\textsc{Prep}$^\dagger$ operators. \textsc{Prep} are denoted as $P$ while \textsc{Sel} are denoted as $S$. These block-encoded operators include the discrete derivative operators for position $U_{D_x}$, momentum $U_{D_{p'}}$ and bath variables $U_{D_s}$, $U_{D_{p_s}}$, the momentum operator $U_{p'}$, the nuclear force operator $U_V$, the momentum operator for the heat bath $U_{p_s}$, the kinetic energy force operator of the heat bath $U_s$, and the potential energy force operator of the heat bath $U_{\frac{1}{s}}$. In addition to the nine operators previously block-encoded in Ref.~\citep{simon2024improved}, we provide the block encoding of $U_{D^{\tel}}$ as shown in \cref{lemDel} and \cref{figDel}, which corresponds to the electronic force operator. When applying the tensor product with $D_{p'}$, this forms the electronic Liouvillian shown in \cref{propElecLiou} and \cref{figLel}, which is highlighted in blue. In addition, we also require the use of $P_m$ to encode the nuclear masses and $P_Z$ to encode the nuclear charges to create the correct weighted sum of block encodings, as well as $P_{f}$ and $P_{f_s}$ to combine the force terms correctly and $P_{\mathrm{out}}$ to combine the Liouvillian terms correctly. Note that for simplicity, the figure only displays the implementation used for a single $(n,j)$ pair. Implementations for all $(n,j)$ would make use of SWAP gates to reduce the number of queries to individual block encodings, as illustrated in \cref{figLel} for the electronic Liouvillian.}
        \label{figLiou}
    \end{figure*}
\end{turnpage}
Lastly, we arrive at the block encoding of the full Liouvillian. Note that in the practice of implementing the block encoding of the hardware, there is no need to separate the classical and electronic Liouvillians. One can simply LCU the operator $\Del$ to the nuclear force operator $V$ and then tensor-product $D_{p'}$. This implementation is shown in \cref{figLiou}. However, in the proofs, the choice to separate classical and electronic Liouvillian is simply due to the ease of theoretical proofs, without having to perform repetitive calculations regarding the operators in the classical Liouvillian. Nevertheless, we note that due to the scaling factor being the $\ell_1$-norm of components of an LCU, the scaling factor of the Liouvillian should not be affected. The runtime and ancilla count \textit{may} be affected slightly due to the changes in how error propagates, but as error has only a logarithmic dependency in both the runtime and ancilla count, the dependencies of the multiplicative factors acting on the error would be hidden within polylogarithmic factors not shown in the asymptotics. We now present the following theorem, where the implementation is based on separate classical and electronic Liouvillians.

\begin{proposition}[Block encoding of the NVT canonical Liouvillian]
    Suppose we have a nuclear-position-controlled electronic Hamiltonian that can be block-encoded in the first quantization scheme with scaling factor $\lambda$ satisfying \cref{assumptionGS} in regards to spectral gap $\gamma$, ground state upper bound $\mu$. We are given an approximate ground state preparation oracle $U_I$ that can prepare ground states up to an overlap of $1-\delta$. Then there exists a $(\alpha_L, a_L, \varepsilon_L)$-block-encoding of the NVT canonical Liouvillian $L$, where
    \begin{align*}
        \alpha_L \in \mathcal O & \Bigg(N\widetilde N\frac{Z_{\max}}{h_{\tel}^2}\frac{\log d_{p'}}{h_{p'}} + N^2 \frac{Z_{\max}^2 x_{\max}}{\Delta^3} \frac{\log d_{p'}}{h_{p'}} \\
        & + N \frac{p'_{\max}}{m_{\min} s_{\min}^2} \frac{\log d_x}{h_x}+ \frac{p_{s,\max}}{Q} \frac{\log d_s}{h_s} \\
        & + \left( N \frac{{p'}_{\max}^2}{m_{\min} s_{\min}^3} + \frac{N_f k_B T}{s_{\min}} \right) \frac{\log d_{p_s}}{h_{p_s}} \Bigg) \\
        a_L \in \mathcal{O}     & \left(\widetilde{N}\log B + \log\frac{d}{\delta\gamma} + \log\frac{\alpha_L}{\varepsilon_L}\right)
    \end{align*}
    and $ d := \max( d_x, d_{p'}, d_s, d_{p_s} )$, can be prepared with $\mathcal{O}(\frac{1}{\delta})$ queries to $U_I$, and
    \begin{equation*}
        \widetilde{\mathcal O}  \Bigg(\left(\frac{\lambda}{\delta \gamma}\left( N + \log B \left(\widetilde N+ \log\frac{B g \alpha_L}{\varepsilon_L}\right)\right) + d\right)\log \frac{g\alpha_L}{\varepsilon_L}\Bigg)
    \end{equation*}
    Toffoli gates, where $g := \max( g_x, g_{p'}, g_s, g_{p_s} )$.
    \label{propLiou}
\end{proposition}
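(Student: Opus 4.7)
The plan is to obtain the full Liouvillian block encoding by LCU-combining the two constituent pieces that are already constructed, namely the classical Liouvillian of \cref{lemClassLiou} and the electronic Liouvillian of \cref{propElecLiou}. Since $L = L_{\mathrm{cl}} + L_{\tel}$, a single application of \cref{lemLCU} with $m=2$ suffices, using a one-qubit state-preparation pair whose amplitudes are determined by the relative scaling factors $\alpha_{\mathrm{NVT}}$ and $\alpha_{L_{\tel}}$ via \cref{lemStatePrep}.

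First I would instantiate the two block encodings. From \cref{lemClassLiou} we have a $(\alpha_{\mathrm{NVT}}, a_{\mathrm{NVT}}, \varepsilon_{\mathrm{NVT}})$-block-encoding of $L_{\mathrm{cl}}$, and from \cref{propElecLiou} we have a $(\alpha_{L_{\tel}}, a_{L_{\tel}}, \varepsilon_{L_{\tel}})$-block-encoding of $L_{\tel}$, the latter prepared using $\mathcal{O}(1/\delta)$ queries to $U_I$. The LCU weights are then $(\alpha_{\mathrm{NVT}}, \alpha_{L_{\tel}})$ with $\ell_1$-norm $\beta = \alpha_{\mathrm{NVT}} + \alpha_{L_{\tel}}$, and by \cref{lemStatePrep} a $(\beta, 1, \varepsilon_{\mathrm{SP}})$-state-preparation pair can be realized with a constant number of Toffolis (since $m=2$), contributing only polylogarithmic overhead in $\varepsilon_{\mathrm{SP}}$. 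Adding the two scaling factors and substituting the explicit expression for $\alpha_{L_{\tel}}$ from \cref{propElecLiou} (which dominates the $Z_{\max}$-independent electronic contribution $\tfrac{N\tilde N Z_{\max}\log d_{p'}}{h_{\tel}^2 h_{p'}}$) yields the claimed $\alpha_L$.

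Next I would track the ancillae and errors. \cref{lemLCU} delivers a block encoding with $a_L = \max(a_{\mathrm{NVT}}, a_{L_{\tel}}) + 1$ ancillae and error $\alpha_{\mathrm{NVT}}\varepsilon_{\mathrm{SP}} + \beta\max(\varepsilon_{\mathrm{NVT}}, \varepsilon_{L_{\tel}})$; setting each constituent error to $\Theta(\varepsilon_L/\alpha_L)$ and $\varepsilon_{\mathrm{SP}} = \Theta(\varepsilon_L/\alpha_L)$ produces a total error at most $\varepsilon_L$. Because the dominant ancilla term is $a_{L_{\tel}}$, the $\log(\alpha_L/\varepsilon_L)$ factor in the ancilla count absorbs the adjustments, giving the stated $a_L \in \mathcal{O}(\tilde N \log B + \log(d/\delta\gamma) + \log(\alpha_L/\varepsilon_L))$, where the $\log d$ term comes from $a_{\mathrm{NVT}}$ and the $\log(1/\delta\gamma)$ term from the ground-state reflector inside $L_{\tel}$.

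The Toffoli count is then the sum of the two subroutine costs plus $\mathcal{O}(\mathrm{polylog})$ for the state-preparation pair. From \cref{propElecLiou} the electronic contribution is $\tilde{\mathcal{O}}\bigl(\tfrac{\lambda}{\delta\gamma}\log(\alpha_L/\varepsilon_L)(N + \tilde N \log B + \log B \log(B g_x \alpha_L/\varepsilon_L)) + N\log g_{p'} + d_{p'}\log(\alpha_L/\varepsilon_L)\bigr)$, and from \cref{lemClassLiou} the classical contribution is $\tilde{\mathcal{O}}\bigl((N+d)\log(g\alpha_L/\varepsilon_L) + \log^{\log 3}(\alpha_L/\varepsilon_L)\bigr)$, with the rescaling of each internal precision to $\Theta(\varepsilon_L/\alpha_L)$ only changing polylogarithmic factors. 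Collecting and absorbing subdominant terms into the asymptotic notation yields the claimed expression. The main subtlety I expect in executing this is not any conceptual difficulty but bookkeeping: making sure the two internal precisions are rescaled consistently with $\alpha_L$, that the tightened $\varepsilon_{\tel}$ requirement from \cref{lemDel} propagates correctly through the logarithmic factors, and that the $g$-dependence inherited from the classical Liouvillian genuinely dominates the electronic $g_{p'}$-dependence so that a single $\log(g\alpha_L/\varepsilon_L)$ factor suffices in the final bound.
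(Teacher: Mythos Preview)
Your proposal is correct and follows essentially the same approach as the paper: combine the classical Liouvillian block encoding of \cref{lemClassLiou} with the electronic Liouvillian block encoding of \cref{propElecLiou} via a single two-term LCU using a one-qubit state-preparation pair weighted by $\alpha_{\mathrm{NVT}}$ and $\alpha_{L_{\tel}}$, then propagate scaling factors, ancillae, errors, and Toffoli costs. The only discrepancy is a minor bookkeeping slip in your stated LCU error (the paper first normalizes both block encodings to scaling factor $1$, so the error reads $\alpha_L\cdot\max(\varepsilon_{L_{\tel}}/\alpha_{L_{\tel}},\varepsilon_{\mathrm{NVT}}/\alpha_{\mathrm{NVT}})+\varepsilon_{\mathrm{rot}}$ and the sub-precisions are set to $\varepsilon_{L_{\tel}}\in\mathcal{O}(\alpha_{L_{\tel}}\varepsilon_L/\alpha_L)$, $\varepsilon_{\mathrm{NVT}}\in\mathcal{O}(\alpha_{\mathrm{NVT}}\varepsilon_L/\alpha_L)$), but this does not affect the outcome and you yourself flag the bookkeeping as the only subtlety.
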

\begin{proof}
    To take the combination of the electronic and classical Liouvillian, we first note that by \cref{propElecLiou} and \cref{lemClassLiou}, we have a $(1, a_{L_\tel}, \frac{\varepsilon_{L_{\tel}}}{\alpha_{L_{\tel}}})$-block-encoding of $\frac{L_\tel}{\alpha_{L_{\tel}}}$ and a $(1, a_{\mathrm{NVT}}, \frac{\varepsilon_{\mathrm{NVT}}}{\alpha_{\mathrm{NVT}}})$-block-encoding of $\frac{L_\mathrm{NVT}}{\alpha_{\mathrm{NVT}}}$. Let $\alpha_L = \alpha_{L_{\tel}} + \alpha_{\mathrm{NVT}}$. Then, given a state preparation unitary that prepares the state
    \begin{equation}
        \sqrt{\frac{\alpha_{L_{\tel}}}{\alpha_{L_{\tel}} + \alpha_{\mathrm{NVT}}}} \ket{0} + \sqrt{\frac{\alpha_{\mathrm{NVT}}}{\alpha_{L_{\tel}} + \alpha_{\mathrm{NVT}}}} \ket{1}
    \end{equation}
    that functions as a $\left( \alpha_L, 1,\varepsilon_{\mathrm{rot}} \right)$-state-preparation-pair, we can construct via \cref{lemLCU} a $(\alpha_L, a_L, \varepsilon_L)$-block-encoding of the Liouvillian $L$ such that
    \begin{align}
        \alpha_L      & = \alpha_{L_{\tel}} + \alpha_{\mathrm{NVT}}, \\
        a_L           & = \max(a_{L_{\tel}},a_{\mathrm{NVT}}) + 1, \\
        \varepsilon_L & = \alpha_L \cdot \max\left(\frac{\varepsilon_{L_{\tel}}}{\alpha_{L_{\tel}}}, \frac{\varepsilon_{\mathrm{NVT}}}{\alpha_{\mathrm{NVT}}}\right) + \varepsilon_{\mathrm{rot}} \, .
    \end{align}
    Dealing with the errors, we set the following to satisfy the above bounds
    \begin{align}
        \varepsilon_{\mathrm{rot}} & \in \mathcal{O}(\varepsilon_L), \\
        \varepsilon_{L_{\tel}}     & \in \mathcal{O}\left(\frac{\alpha_{L_{\tel}}\varepsilon_L}{\alpha_L}\right), \\
        \varepsilon_{\mathrm{NVT}} & \in \mathcal{O}\left(\frac{\alpha_{\mathrm{NVT}}\varepsilon_L}{\alpha_L}\right) \, .
    \end{align}
    We can then use this to capture the bounds on the ancillae, where
    \begin{align}
        a_L & = \max(a_{L_{\tel}},a_{\mathrm{NVT}}) + 1\nonumber \\
            & \in \begin{multlined}[t]\mathcal{O}\bigg(\max\bigg(\left(\widetilde{N}\log B + \log\frac{d_{p'}}{\delta\gamma}+\log \frac{\alpha_{L_\tel}}{\varepsilon_{L_{\tel}}}\right),\\
            \left(\log d+ \log \frac{\alpha_{NVT}}{\varepsilon_{\mathrm{NVT}}} \right)\bigg)\bigg)\end{multlined}\nonumber \\
            & \subseteq \mathcal{O}\left(\widetilde{N}\log B + \log\frac{d\alpha_L}{\delta\gamma\varepsilon_L}\right).
    \end{align}
    Now, for the cost of implementation, we note that we simply have to add the implementation cost of both the classical and electronic Liouvillian, plus the cost for the single rotation gate (which is negligible), and adjust for errors, which we can find to be as follows:
    \begin{equation}
        \widetilde{\mathcal O}  \Bigg(\left(\frac{\lambda}{\delta \gamma}\left( N + \log B \left(\widetilde N+ \log\frac{B g \alpha_L}{\varepsilon_L}\right)\right) + d\right)\log \frac{g\alpha_L}{\varepsilon_L}\Bigg).
    \end{equation}
\end{proof}

\subsection{Implementing the Liouvillian simulation algorithm}

The final step in the Liouvillian simulation algorithm involves Hamiltonian simulation of $L$. We refer the reader to \cref{algoLS} for the full algorithmic description that includes the construction of the electronic Liouvillian in prior sections.
\begin{theorem}[Liouvillian simulation]
    Suppose we have a nuclear-position-controlled electronic Hamiltonian that can be block-encoded in the first quantization scheme with scaling factor $\lambda$ satisfying \cref{assumptionGS} in regards to spectral gap $\gamma$, ground state upper bound $\mu$. We are given an approximate ground state preparation oracle $U_I$ that can prepare ground states up to an overlap of $1-\delta$. Given an initial state $\ket{\rho_0}$ that encodes the initial discretized phase-space density and a discretized Liouvillian under the NVT canonical ensemble, there exists a quantum algorithm that outputs a quantum state that is $\varepsilon$-close in $\ell_2$ distance to $\ket{\rho_t} = e^{-iLt}\ket{\rho_0}$ using
    \begin{multline*}
        \widetilde{\mathcal{O}} \Bigg(\bigg(\alpha_L t + \log\frac{1}{\varepsilon} \bigg)\bigg(\frac{\lambda}{\delta \gamma} \log \frac{g}{\varepsilon}\\
            \times\left( N + \widetilde{N} \log B + \log B \log\frac{B g}{\varepsilon} \right) + d \log \frac{g}{\varepsilon} \bigg) \Bigg)
    \end{multline*}
    Toffoli gates,
    \begin{align*}
        \widetilde{\mathcal{O}} \left( \left(\alpha_L t + \log\frac{1}{\varepsilon} \right) \frac{1}{\delta}\right)
    \end{align*}
    queries to $U_I$, and
    \begin{align*}
        \widetilde{\mathcal{O}}\left(N \log g + \widetilde{N}\log B + \log\frac{d}{\delta\gamma} + \log\frac{\alpha_L t }{\varepsilon}\right)
    \end{align*}
    qubits, where $\alpha_L$ is as defined in \cref{propLiou}, $d := \max(d_x, d_{p'}, d_s, d_{p_s} ) $ and $ g := \max(g_x, g_{p'}, g_s, g_{p_s})$.
    \label{thmLiouSim}
\end{theorem}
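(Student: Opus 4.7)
The plan is to compose the block encoding of the full NVT Liouvillian from \cref{propLiou} with the robust Hamiltonian simulation routine of \cref{lemHamSim}/\cref{corHamSimTof}, treating $L$ as a Hermitian ``Hamiltonian''. Since the evolution operator $e^{-iLt}$ is applied to an initial state $\ket{\rho_0}$, it suffices to build a block encoding of $e^{-iLt}$ with spectral-norm error $\mathcal O(\varepsilon)$, which then translates into $\ell_2$ error $\mathcal O(\varepsilon)$ on the output state. To absorb the block-encoding error of $L$ itself into the final precision, I will invoke \cref{propLiou} with internal error set to $\varepsilon_L \in \Theta(\varepsilon/t)$, as demanded by the robust Hamiltonian simulation lemma, which requires an $(\alpha_L,a_L,\varepsilon/(2t))$-block-encoding of $L$ in order to output a $(1,a_L+2,\varepsilon)$-block-encoding of $e^{-iLt}$.

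The query accounting then proceeds in three layers. First, \cref{lemHamSim} uses $\mathcal O(\alpha_L t + \log(1/\varepsilon))$ queries to the block encoding $U_L$ of $L$. Second, each such query invokes \cref{propLiou}, which in turn uses $\mathcal O(1/\delta)$ queries to the approximate ground-state oracle $U_I$ and a block-encoding Toffoli cost of
\begin{equation*}
\tilde{\mathcal O}\!\left(\left(\tfrac{\lambda}{\delta\gamma}\bigl(N+\log B(\tilde N + \log\tfrac{Bg\alpha_L}{\varepsilon_L})\bigr)+d\right)\log\tfrac{g\alpha_L}{\varepsilon_L}\right).
\end{equation*}
Multiplying these two factors, and using $\varepsilon_L\in\Theta(\varepsilon/t)$ so that $\log(1/\varepsilon_L)$ is absorbed into the $\tilde{\mathcal O}$ with $\log(g/\varepsilon)$, yields the stated Toffoli count and $\tilde{\mathcal O}((\alpha_L t+\log(1/\varepsilon))/\delta)$ queries to $U_I$. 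Third, the extra $\mathcal O(a_L(\alpha_L|t|+\log(1/\varepsilon)))$ two-qubit gates from \cref{lemHamSim}, converted into Toffolis via \cref{corHamSimTof}, are of lower order than the block-encoding cost and so vanish into the $\tilde{\mathcal O}$.

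For the qubit count, I will add up (i) the physical registers encoding the discretized nuclear phase-space and heat-bath, which need $\mathcal O(N\log g)$ qubits to hold $\ket{\bar{\vec x}},\ket{\bar{\vec p}'},\ket{\bar s},\ket{\bar p_s}$; (ii) the block-encoding ancilla $a_L\in\mathcal O(\tilde N\log B+\log(d/(\delta\gamma))+\log(\alpha_L/\varepsilon_L))$ from \cref{propLiou}; and (iii) the additional $\mathcal O(1)$ ancillae from the QSVT wrapping in \cref{lemHamSim}. Substituting $\varepsilon_L\in\Theta(\varepsilon/t)$ and noting $\alpha_L$ is polynomial in the physical parameters gives the claimed $\tilde{\mathcal O}(N\log g+\tilde N\log B+\log(d/(\delta\gamma))+\log(\alpha_L t/\varepsilon))$ qubit bound, which in the informal form of \cref{thmLiouSimInfml} collapses to $\tilde{\mathcal O}(N_{\rm tot}+\log(t/(\delta\gamma\varepsilon)))$.

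The step I expect to be the main obstacle is the bookkeeping of error propagation and logarithmic factors in a way that keeps the final Toffoli bound in the same tidy form as \cref{propLiou}: one must verify that tightening $\varepsilon_L$ by a factor of $t$ does not leak extra polynomial dependence, only polylogarithms hidden in $\tilde{\mathcal O}$, and that the $\alpha_L t$ and $\log(1/\varepsilon)$ terms from \cref{lemHamSim} cleanly factor outside of the per-query cost from \cref{propLiou}. Once these logarithms are reconciled, the three bounds on Toffoli count, $U_I$ queries, and qubits follow by direct multiplication and summation of the stated cost expressions.
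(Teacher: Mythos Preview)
Your proposal is correct and takes essentially the same approach as the paper: set $\varepsilon_L = \varepsilon/(2t)$ in \cref{propLiou} and feed the resulting block encoding into \cref{lemHamSim}/\cref{corHamSimTof}, then read off the three cost bounds by multiplying the per-query cost by the $\mathcal O(\alpha_L t + \log(1/\varepsilon))$ query count. The paper's own proof is a two-line sketch of exactly this composition, so your more detailed bookkeeping of the logarithms and qubit registers is a faithful elaboration rather than a different argument.
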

\begin{proof}
    Set $\varepsilon_L = \frac{\varepsilon}{2t}$, and apply the block encoding constructed in \cref{propLiou} to \cref{lemHamSim,corHamSimTof}. This provides the number of Toffoli gates directly.
\end{proof}
Note that if we consider only the particle numbers, the spectral gap, approximation ground state preparation overlap, and error precision, the runtime can be written as
\begin{align}
     & \widetilde{\mathcal{O}} \left(\frac{\widetilde N N_{\mathrm{tot}}}{\delta \gamma} \log \frac{1}{\varepsilon}\left(N_{\mathrm{tot}}+\log\frac{1}{\varepsilon}\right)\left( N N_{\mathrm{tot}} t + \log\frac{1}{\varepsilon}\right)\right)\nonumber \\
     & \subseteq \widetilde{\mathcal{O}}\left(\frac{N \widetilde N N_{\mathrm{tot}}^3 t}{\delta\gamma}\log^3\frac{1}{\varepsilon}\right) \, ,
\end{align}
as shown in the informal version of the theorem (\cref{thmLiouSimInfml}) in the main text.

\subsection{Block encoding results for the NVE microcanonical ensemble}
\label{appNVE}

Here we present the results for the NVE microcanonical ensemble.

\begin{proposition}[Block encoding of the microcanonical NVE Liouvillian]
    Suppose we have a nuclear-position-controlled electronic Hamiltonian that can be block-encoded in the first quantization scheme with scaling factor $\lambda$ satisfying \cref{assumptionGS} in regards to spectral gap $\gamma$, ground state upper bound $\mu$. We are given an approximate ground state preparation oracle $U_I$ that can prepare ground states up to an overlap of $1-\delta$. Then there exists a $(\alpha_L, a_L, \varepsilon_L)$-block-encoding of the microcanonical NVE Liouvillian $L$ where
    \begin{align*}
        \alpha_L \in \mathcal O & \Bigg(N\widetilde N\frac{Z_{\max}}{h_{\tel}^2}\frac{\log d_{p}}{h_{p}} + N^2 \frac{Z_{\max}^2 x_{\max}}{\Delta^3} \frac{\log d_{p}}{h_{p}} \\
                                & + N \frac{p_{\max}}{m_{\min}} \frac{\log d_x}{h_x} \Bigg) \\
        a_L \in \mathcal O      & \left(\widetilde{N}\log B + \log\frac{d}{\delta\gamma} + \log\frac{\alpha_L}{\varepsilon_L}\right)
    \end{align*}
    and $ d := \max( d_x, d_{p})$, can be prepared with $\mathcal{O}(\frac{1}{\delta})$ queries to $U_I$, and
    \begin{equation*}
        \widetilde{\mathcal O}  \Bigg(\left(\frac{\lambda}{\delta \gamma}\left( N + \log B \left(\widetilde N+ \log\frac{B g \alpha_L}{\varepsilon_L}\right)\right) + d\right)\log \frac{g\alpha_L}{\varepsilon_L}\Bigg)
    \end{equation*}
    Toffoli gates, where $g := \max(g_x, g_{p})$.
\end{proposition}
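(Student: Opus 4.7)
The plan is to adapt the NVT proof of \cref{propLiou} in a term-by-term fashion, stripping out the bath degrees of freedom $s$ and $p_s$ and replacing the virtual momentum $p'$ by the real momentum $p$. The microcanonical Hamiltonian only contains the nuclear kinetic term $\sum_{n,j} p_{n,j}^2/(2m_n)$, the nuclear-nuclear Coulomb repulsion, and the electronic ground-state correction $E_{\tel}(\vec x)$. Consequently, the NVE Liouvillian splits as $L = L_{\rm cl}^{\rm (NVE)} + L_{\tel}^{\rm (NVE)}$ where the electronic piece has exactly the same structure as in \cref{defElecLiou} but with $D_{p'_{n,j}}$ replaced by $D_{p_{n,j}}$.

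First I would invoke the NVE counterpart of the classical Liouvillian block encoding from Ref.~\citep{simon2024improved}: this yields an $(\alpha_{\rm NVE}, a_{\rm NVE}, \varepsilon_{\rm NVE})$-block-encoding of $L_{\rm cl}^{\rm (NVE)}$ with scaling factor
\begin{equation*}
\alpha_{\rm NVE} \in \mathcal O \Bigl( N \tfrac{p_{\max}}{m_{\min}}\tfrac{\log d_x}{h_x} + N^2 \tfrac{Z_{\max}^2 x_{\max}}{\Delta^3}\tfrac{\log d_p}{h_p}\Bigr),
\end{equation*}
since in the absence of the $s$-rescaling the kinetic-term derivative reduces to $p_{n,j}/m_n$ (no $s^{-2}$ factor), and the $p_s,s$-related pieces drop out. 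Next, I would reuse \cref{lemDel} verbatim to obtain the diagonal block encoding of $\Del$ (the electronic force on the ground-state surface is independent of the ensemble), and then repeat the LCU construction of \cref{propElecLiou} with $D_{p_{n,j}}$ in place of $D_{p'_{n,j}}$. Since $D_{p_{n,j}}$ has the same asymptotic cost and scaling factor as $D_{p'_{n,j}}$ (just acting on a differently labelled register), the resulting electronic Liouvillian block encoding has scaling factor $\mathcal O(N\tilde N Z_{\max}\log d_p/(h_{\tel}^2 h_p))$ with identical ancilla and Toffoli overheads.

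Finally I would combine the two block encodings via a one-qubit weighted LCU (\cref{lemLCU}) exactly as in the NVT proof, using a state-preparation pair with weights $\sqrt{\alpha_{\rm NVE}/\alpha_L}$ and $\sqrt{\alpha_{L_{\tel}}/\alpha_L}$ where $\alpha_L = \alpha_{\rm NVE} + \alpha_{L_{\tel}}$. Propagating the errors as in \cref{propLiou} — set $\varepsilon_{L_{\tel}}, \varepsilon_{\rm NVE}, \varepsilon_{\rm rot}$ each to the appropriate fraction of $\varepsilon_L$ so that $\alpha_L\max(\varepsilon_{L_{\tel}}/\alpha_{L_{\tel}},\varepsilon_{\rm NVE}/\alpha_{\rm NVE}) + \varepsilon_{\rm rot} \le \varepsilon_L$ — yields the stated $\alpha_L$ and the ancilla count $a_L \in \mathcal O(\tilde N\log B + \log(d/(\delta\gamma)) + \log(\alpha_L/\varepsilon_L))$ where $d = \max(d_x, d_p)$. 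The Toffoli count is just the sum of the two subroutine costs plus negligible overhead for the single state-preparation rotation, giving the bound in the statement.

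There is no real obstacle here beyond bookkeeping: every quantitative ingredient (the $\Del$ construction, the LCU error propagation, the classical NVE block encoding) is already in place, and the main task is simply to verify that no $s$- or $p_s$-dependent term appears in the asymptotics after the bath variables are removed. The only place one has to be careful is in confirming that the $s_{\min}^{-2}$ factor in the NVT kinetic derivative is correctly replaced by $1$, which accounts for the cleaner form of the first summand in $\alpha_L$ compared with \cref{propLiou}.
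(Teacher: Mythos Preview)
Your proposal is correct and follows essentially the same approach as the paper, which simply states that the proof follows trivially by replacing the NVT Liouvillian with its NVE counterpart in the proof of \cref{propLiou}. If anything, your write-up is more detailed than the paper's one-line justification, but the underlying strategy is identical.
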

\begin{proof}
Proof follows trivially by replacing the NVT Liouvillian with the NVE Liouvillian in \cref{thmLiouSim}.
\end{proof}

\section{Construction and proof of the thermodynamic integration algorithm}

\label{appThermoInt}
In the thermodynamic integration algorithm, we integrate the nuclear Hamiltonian difference over an interpolated set of Liouvillian-equilibrated states, which can be achieved by a Hadamard test upon the difference of nuclear Hamiltonians. We first discuss the bounds on the discretization error from taking the Riemann sum as an estimation instead of integration. Next, we implement the interpolated Liouvillian evolution states, followed by the implementation of the nuclear Hamiltonian difference block encoding. Lastly, we calculate the cost of the full thermodynamic integration algorithm.

Although thermal averages of systems are not time-dependent, the algorithm that we implement produces the free energy estimation results in the computed value being dependent on the evolution time. To minimize the amount of dependencies and better error bounds from taking the Riemann sum across interpolated systems in the thermodynamic integration algorithm, we fix the equilibration time $t_{\rm eq}$ across different $\Lambda$-dependent Liouvillian simulations in our proof and construction of the algorithm. Given the potentially different scaling factors of the Liouvillian block encodings, we need a method to perform different Hamiltonian simulations in superposition.

However, we note that in practice, as long as the equilibration time is long enough and assuming that the evolved state settles into the thermal state such that the computed values do not fluctuate much, given the time-independence of thermal averages, the equilibration time $t_{\rm eq}$ need not be consistent across different $\Lambda$-s, and one can simply use the same Hamiltonian simulation circuit across all $\Lambda$-s, provided that all states are sufficiently equilibrated. Regardless, the techniques developed in this section are still valuable outside of the convenience of theoretical guarantees, as they can be slightly modified to allow for coherent superposed evolutions over different evolution times for better time-independent estimates of free energy, as mentioned in the main text.

\subsection{Upper bounds on the discretization error}
\label{appThmIntDiscErr}
In this section, we take a deeper look at the discretization error from approximating the thermodynamic integration with a Riemann sum approximation.

\begin{lemma}[Discretization error of thermodynamic integration by left Riemann sum]
    Given two systems $A$ and $B$, and suppose for both systems, we are given Liouvillians $L_A$ and $L_B$ under the NVT canonical ensemble encoding the dynamics, as well as nuclear Hamiltonians $H_A$ and $H_B$. Further, we are given an initial state $\ket{\rho_0}$ that encodes the initial discretized phase-space density as well as an equilibration time $t_{\mathrm{eq}}$ for Liouvillian simulation by \cref{thmLiouSim}. Then the discretization error from approximating the thermodynamic integration algorithm by computing the Riemann sum of $N_\Lambda$ discretized points with the left rectangle rule is upper bounded by
    \begin{equation*}
        \frac{\lVert L_B-L_A\rVert\cdot\lVert H_B-H_A\rVert t_{\mathrm{eq}}}{N_\Lambda} \, .
    \end{equation*}
    \label{lemDiscLeft}
\end{lemma}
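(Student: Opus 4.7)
The plan is to reduce this to a standard one-dimensional quadrature error estimate and then bound the derivative of the integrand using Duhamel's formula. Writing $L_\Lambda := L_A + \Lambda(L_B - L_A)$ and $\Delta H := H_B - H_A$, the exact free energy difference from thermodynamic integration (as derived in \cref{prelim:thermodynamic_integration}) is
\begin{equation*}
\Delta F_{A\to B} = \int_0^1 f(\Lambda)\, d\Lambda,
\end{equation*}
where $f(\Lambda) := \bra{\rho_0} e^{iL_\Lambda t_{\mathrm{eq}}}\, \Delta H \, e^{-iL_\Lambda t_{\mathrm{eq}}} \ket{\rho_0}$ is the ensemble average after equilibration for a fixed time $t_{\mathrm{eq}}$, and the left-rectangle approximation with $N_\Lambda$ evenly-spaced interpolation points is $\frac{1}{N_\Lambda}\sum_{k=0}^{N_\Lambda-1} f(k/N_\Lambda)$.

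First, I would invoke the textbook error bound for the left Riemann sum of a $C^1$ function on $[0,1]$,
\begin{equation*}
\left\lvert \int_0^1 f(\Lambda)\, d\Lambda - \frac{1}{N_\Lambda}\sum_{k=0}^{N_\Lambda-1} f(k/N_\Lambda) \right\rvert \le \frac{\sup_{\Lambda \in [0,1]} \lvert f'(\Lambda)\rvert}{2 N_\Lambda},
\end{equation*}
which is immediate from a Taylor expansion on each sub-interval of length $1/N_\Lambda$. This reduces the task to bounding $\sup_\Lambda \lvert f'(\Lambda)\rvert$.

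Next, I would compute $\partial_\Lambda e^{-iL_\Lambda t_{\mathrm{eq}}}$. Since $\partial_\Lambda L_\Lambda = L_B - L_A$ is independent of $\Lambda$ and $L_\Lambda$ is self-adjoint so $e^{-iL_\Lambda t_{\mathrm{eq}}}$ is unitary, Duhamel's formula yields
\begin{equation*}
\partial_\Lambda e^{-iL_\Lambda t_{\mathrm{eq}}} = -i t_{\mathrm{eq}} \int_0^1 e^{-i(1-s)L_\Lambda t_{\mathrm{eq}}} (L_B - L_A)\, e^{-isL_\Lambda t_{\mathrm{eq}}}\, ds,
\end{equation*}
so that $\lVert \partial_\Lambda e^{-iL_\Lambda t_{\mathrm{eq}}} \rVert \le t_{\mathrm{eq}} \lVert L_B - L_A\rVert$ by unitarity of the flanking exponentials. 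Applying the product rule to $f$, where the two occurrences of $e^{\pm iL_\Lambda t_{\mathrm{eq}}}$ each contribute such a term, and using submultiplicativity of the operator norm together with $\lvert \bra{\rho_0} A \ket{\rho_0}\rvert \le \lVert A \rVert$, gives
\begin{equation*}
\lvert f'(\Lambda) \rvert \le 2\, t_{\mathrm{eq}}\, \lVert L_B - L_A\rVert \cdot \lVert H_B - H_A\rVert.
\end{equation*}

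Substituting this into the quadrature bound and simplifying the factor of two produces the claimed estimate $\frac{\lVert L_B - L_A\rVert \lVert H_B - H_A\rVert t_{\mathrm{eq}}}{N_\Lambda}$. The only real subtlety is that $L_\Lambda$ need not commute with $L_B - L_A$, so the derivative of the matrix exponential cannot be written as a naive product; Duhamel's identity is the right tool, and the unitarity of $e^{-iL_\Lambda t_{\mathrm{eq}}}$ prevents any norm blow-up from the inner integrand, giving the tight linear-in-$t_{\mathrm{eq}}$ scaling that is needed to match the stated bound.
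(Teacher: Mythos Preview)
Your proof is correct and follows essentially the same structure as the paper's: reduce to the standard left-Riemann-sum error bound $\frac{1}{2N_\Lambda}\sup_\Lambda|f'(\Lambda)|$, apply the product rule to $f$, and bound $\lVert\partial_\Lambda e^{-iL_\Lambda t_{\mathrm{eq}}}\rVert\le \lVert L_B-L_A\rVert\,t_{\mathrm{eq}}$. The only minor difference is that you obtain this last bound cleanly via Duhamel's formula, whereas the paper passes through a difference-quotient limit together with the Lipschitz estimate $\lVert e^{-iHt}-e^{-iH't}\rVert\le\lVert H-H'\rVert t$; your route is slightly more direct but the two arguments are equivalent.
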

\begin{proof}
    We begin by observing that the thermodynamic integration algorithm obtains the free energy difference of two systems $A$ and $B$ via the following equation:
    \begin{equation}
    \Delta F_{A\to B} = \int_0^1 \left\langle H_B - H_A \right\rangle_{\Lambda} \mathrm d\Lambda.
    \end{equation}
    By the left rectangle rule, we can approximate the above integral as follows:
    \begin{equation}
        \Delta \widetilde F = \frac{1}{N_\Lambda} \sum_{\bar \Lambda=0}^{N_\Lambda-1} \braket{H_B-H_A}_\Lambda \, ,
    \end{equation}
    where $\bar \Lambda = N_\Lambda \cdot \Lambda$. The discretization error arising from this approximation can then be upper-bounded as follows:
    \begin{equation}
        \left\lvert\Delta F_{A\to B} - \Delta \widetilde F\right\rvert \le \left\lvert\frac{\partial \braket{H_B-H_A}_\Lambda}{\partial \Lambda}\right\rvert\frac{1}{2N_\Lambda} \, .
    \end{equation}
    Evaluating the first fraction, we obtain by the triangle inequality,
    \begin{align}
         & \left\lvert\frac{\partial \braket{H_B-H_A}_\Lambda}{\partial \Lambda}\right\rvert\nonumber \\
         & = \left\lvert\frac{\partial}{\partial \Lambda} \bra{\rho_0}e^{iL_\Lambda t_{\mathrm{eq}}} (H_B-H_A) e^{-iL_\Lambda t_{\mathrm{eq}}}\ket{\rho_0}\right\rvert\nonumber \\
         & \begin{multlined}
               = \bigg\lvert\bra{\rho_0}\left(\frac{\partial}{\partial \Lambda}e^{iL_\Lambda t_{\mathrm{eq}}}\right) (H_B-H_A) e^{-iL_\Lambda t_{\mathrm{eq}}}\ket{\rho_0}\\
               +\bra{\rho_0}e^{iL_\Lambda t_{\mathrm{eq}}} (H_B-H_A) \left(\frac{\partial}{\partial \Lambda}e^{-iL_\Lambda t_{\mathrm{eq}}}\right)\ket{\rho_0}\bigg\rvert
           \end{multlined}\nonumber \\
         & \begin{multlined}[b]
               \le \left\lvert\bra{\rho_0}\left(\frac{\partial}{\partial \Lambda}e^{iL_\Lambda t_{\mathrm{eq}}}\right) (H_B-H_A) e^{-iL_\Lambda t_{\mathrm{eq}}}\ket{\rho_0}\right\rvert\\
               + \left\lvert\bra{\rho_0}e^{iL_\Lambda t_{\mathrm{eq}}} (H_B-H_A) \left(\frac{\partial}{\partial \Lambda}e^{-iL_\Lambda t_{\mathrm{eq}}}\right)\ket{\rho_0}\right\rvert \, .
           \end{multlined}
    \end{align}
    We evaluate the first term such that by the tracial matrix H\"older's inequality~\citep{holder1889ueber,horn1991topics,baumgartner2011inequality} and unitarily invariant properties of spectral/Schatten norms~\citep{schatten1960norm}, we can obtain
    \begin{align}
         & \left\lvert\bra{\rho_0}\left(\frac{\partial}{\partial \Lambda}e^{iL_\Lambda t_{\mathrm{eq}}}\right) (H_B-H_A) e^{-iL_\Lambda t_{\mathrm{eq}}}\ket{\rho_0}\right\rvert\nonumber \\
         & =\left\lvert\tr\left(\left(\frac{\partial}{\partial \Lambda}e^{iL_\Lambda t_{\mathrm{eq}}}\right) (H_B-H_A) e^{-iL_\Lambda t_{\mathrm{eq}}}\ketbra{\rho_0}{\rho_0}\right)\right\rvert\nonumber \\
         & \le \left\lVert\left(\frac{\partial}{\partial \Lambda}e^{iL_\Lambda t_{\mathrm{eq}}}\right) (H_B-H_A) e^{-iL_\Lambda t_{\mathrm{eq}}}\right\rVert \tr\left(\left\lvert\ketbra{\rho_0}{\rho_0}\right\rvert\right)\nonumber \\
         & \le \left\lVert\frac{\partial}{\partial \Lambda}e^{iL_\Lambda t_{\mathrm{eq}}}\right\rVert \left\lVert H_B-H_A\right\rVert \left\lVert e^{-iL_\Lambda t_{\mathrm{eq}}}\right\rVert \tr\left(\left\lvert\ketbra{\rho_0}{\rho_0}\right\rvert\right) \nonumber \\
         & = \left\lVert\frac{\partial}{\partial \Lambda}e^{iL_\Lambda t_{\mathrm{eq}}}\right\rVert \left\lVert H_B-H_A\right\rVert \, .
    \end{align}
    Then by Lemma 50 of Ref.~\citep{chakraborty2019power}, which states
    \begin{equation}
        \lVert e^{-iHt} - e^{-iH't} \rVert \le \lVert H - H' \rVert t \, .
    \end{equation}
    We find that
    \begin{align}
         & \left\lVert\frac{\partial}{\partial \Lambda}e^{-iL_\Lambda t_{\mathrm{eq}}}\right\rVert \nonumber \\
         & = \lim_{\Delta \Lambda\to0}\frac{\left\lVert e^{-i(L_A +(\Lambda +\Delta \Lambda)(L_B-L_A) t_{\mathrm{eq}}} - e^{-i(L_A +\Lambda (L_B-L_A) t_{\mathrm{eq}}}\right\rVert}{\Delta \Lambda} \nonumber \\
         & \le \lim_{\Delta \Lambda\to0} \frac{(\Lambda +\Delta \Lambda)\lVert L_B-L_A\rVert t_{\mathrm{eq}} - \Lambda \lVert L_B-L_A\rVert t_{\mathrm{eq}}}{\Delta \Lambda}\nonumber \\
         & = \lVert L_B-L_A\rVert t_{\mathrm{eq}} \, .
    \end{align}
    We can then see that
    \begin{align}
         & \left\lvert\bra{\rho_0}\left(\frac{\partial}{\partial \Lambda}e^{iL_\Lambda t_{\mathrm{eq}}}\right) (H_B-H_A) e^{-iL_\Lambda t_{\mathrm{eq}}}\ket{\rho_0}\right\rvert\nonumber \\
         & \le \|L_B-L_A\|\cdot\|H_B-H_A\|t_{\mathrm{eq}} \, .
    \end{align}
    Likewise, we can show that
    \begin{align}
         & \left\lvert\bra{\rho_0}e^{iL_\Lambda t_{\mathrm{eq}}} (H_B-H_A) \left(\frac{\partial}{\partial \Lambda}e^{-iL_\Lambda t_{\mathrm{eq}}}\right)\ket{\rho_0}\right\rvert\nonumber \\
         & \le \|L_B-L_A\|\cdot\|H_B-H_A\|t_{\mathrm{eq}} \, .
    \end{align}
    Thus, one can show that the discretization error is upper-bounded as shown
    \begin{align}
        \left\lvert\Delta F_{A\to B} - \Delta \widetilde F\right\rvert & \le \left\lvert\frac{\partial \braket{H_B-H_A}_\Lambda}{\partial \Lambda}\right\rvert\frac{1}{2N_\Lambda}\nonumber \\
        & \le\frac{\|L_B-L_A\|\cdot\|H_B-H_A\|t_{\mathrm{eq}}}{N_\Lambda} \, .
    \end{align}
\end{proof}

While one can obtain tighter bounds by obtaining high-order numerical integration approximation methods and showing the proof via iteratively applying the Lie-Trotter formula~\citep{trotter1959product}, this is not required unless QSVT phase angles are loaded into the quantum circuit via lookup tables, which would incur an additional $\mathcal{O} (N_\Lambda)$ cost per query to the Liouvillian. If the angles are computed on separate quantum registers, or avoided via QSP without angle finding, then the dependency of the runtime would be at most $\mathcal{O}(\poly\log N_\Lambda)$ per query to the Liouvillian, to which higher-order approximations would not help much as the dependency is already logarithmic.

\subsection{Derivations of \texorpdfstring{$\Lambda$}{Lambda}-dependent Liouvillian simulation}
\label{appLambdaLiouvillian}

In this section, we further describe methods for $\Lambda$-dependent Liouvillian simulation discussed in \cref{secTI}. First, as mentioned in the main text, we can implement a $\Lambda$-dependent Liouvillian simulation by implementing the entire QSVT algorithm in superposition, controlling both the Liouvillian and the phase angles by $\Lambda$. Phase angles are computed in a separate register that implements phase-finding algorithms, the details of which are outside the scope of this paper.

\begin{figure*}
    \includegraphics[width=\textwidth]{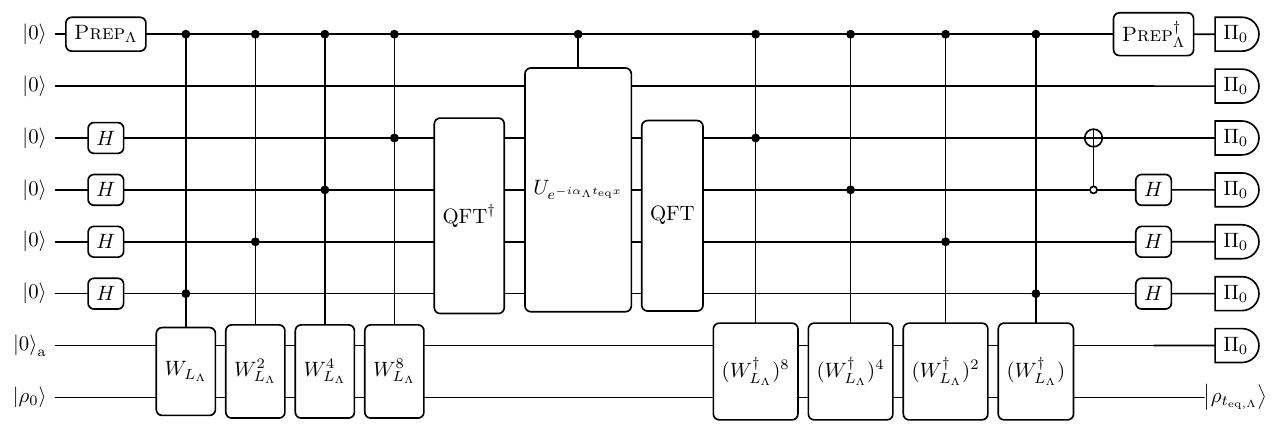}
    \caption[$\Lambda$-dependent Hamiltonian simulation of $L_\Lambda$ using QSP without angle finding.]{\emph{$\Lambda$-dependent Hamiltonian simulation of $L_\Lambda$ using QSP without angle finding}~\citep{alase2025quantum}. The central $\Lambda$-controlled unitary is diagonal with the entries $\exp(-i\alpha_\Lambda t \cos\frac{2\pi k}{4\mathcal D})$ and can be performed using $\log^2(\frac{N_\Lambda}{\varepsilon})$ gates. In this example figure, we assume polynomial degree $\mathcal{D}=4$. The gate $W_{L_\Lambda}$ is the qubitization of the block-encoded Liouvillian $U_{L_\Lambda}$ such that $W_{L_\Lambda} = (2\Pi_a-I_a)U_{L_\Lambda}$.}
    \label{fig:lambda_simulation_angleless}
\end{figure*}

Alternatively, one can apply the robust Hamiltonian simulation in superposition by utilizing quantum signal processing without angle finding as shown in \cref{propAnglelessQSP,propAnglelessHamSim}. Corresponding to the methodology of constructing Hamiltonian simulation in superposition for standard QSVT, which is to control the phase angles, in angleless QSP, to construct Hamiltonian simulation in superposition, we produce the functional diagonal block encoding $U_{f, 4\mathcal{D}}$ in superposition, such that we have
\begin{equation}
    \sum_{\Lambda} \ketbra{\Lambda}{\Lambda} \otimes U_{f_\Lambda, 4\mathcal{D}} \, ,
    \label{eqFuncUni}
\end{equation}
where $\mathcal{D}$ is chosen to be the maximum degree for polynomial expansions for all $\Lambda$, or
\begin{equation}
    \mathcal{D} \in \mathcal {O} \left(\max(\alpha_{L_A}, \alpha_{L_B})t_{\mathrm{eq}} + \log \frac{1}{\varepsilon}\right),
\end{equation}
and
\begin{equation}
    f_\Lambda(x) = \exp(-i\alpha_\Lambda t_{\mathrm{eq}} \cos x) \, .
\end{equation}

For simplicity, in the angleless QSP setting, we approximate the thermodynamic integration algorithm via the left Riemann sum per \cref{lemDiscLeft} such that
\begin{equation}
    N_\Lambda = \frac{\lVert L_B-L_A\rVert\lVert H_B-H_A\rVert t_{\mathrm{eq}}}{\varepsilon} \in \mathcal O \left(\frac{\alpha_L\alpha_Ht_{\mathrm{eq}}}{\varepsilon}\right).
\end{equation}
To detail the implementation of \cref{eqFuncUni}, we first prepare the state
\begin{equation}
    \sum_{\bar \Lambda=0}^{N_\Lambda - 1} \sum_{k=0}^{4\mathcal{D}-1} \ket{\Lambda}\ket{k} \, .
\end{equation}
Then by implementing the CORDIC algorithm~\citep{volder1959cordic} via quantum arithmetic circuits~\citep{vedral1996quantum} and pre-stored classical values on lookup tables, whose values can just be directly provided on ancilla registers, or via a modification of the quantum CORDIC algorithm~\citep{burge2024cordic}, one can implement the $\cos$ function up to $\varepsilon$ additive precision with $\mathcal O(\log^2 \frac{\mathcal{D}}{\varepsilon})$ Toffoli gates. We then produce the state
\begin{equation}
    \sum_{\bar \Lambda=0}^{N_\Lambda - 1} \sum_{k=0}^{4\mathcal{D}-1} \ket{\Lambda}\ket{k}\Ket{\cos \frac{2\pi k}{4\mathcal{D}}} \, .
\end{equation}
By further use of quantum arithmetic circuits, we can also obtain
\begin{equation}
    \sum_{\bar \Lambda=0}^{N_\Lambda - 1} \sum_{k=0}^{4\mathcal{D}-1} \ket{\Lambda}\ket{k}\ket{\alpha_\Lambda}\Ket{\cos \frac{2\pi k}{4\mathcal{D}}}\Ket{\alpha_\Lambda t_{\mathrm{eq}}\cos \frac{2\pi k}{4\mathcal{D}}}
\end{equation}
using an additional $\mathcal O(\log \frac{\mathcal{D}N_\Lambda}{\varepsilon})$ Toffoli gates. Then controlled on the final fixed-point register of $\alpha_\Lambda t_{\mathrm{eq}}\cos \frac{2\pi k}{4\mathcal{D}}$, we can perform a rotational Z transform to an ancilla qubit or via the alternating sign trick~\citep{berry2014exponential} such that we have (ignoring intermediate states)
\begin{equation}
    \sum_{\bar \Lambda=0}^{N_\Lambda - 1} \sum_{k=0}^{4\mathcal{D}-1} \ket{\Lambda}\ket{k}\Ket{\alpha_\Lambda t_{\mathrm{eq}}\cos \frac{2\pi k}{4\mathcal{D}}}e^{-i\alpha_\Lambda t_{\mathrm{eq}}\cos \frac{2\pi k}{4\mathcal{D}}}\ket{0}
\end{equation}
with $\mathcal O(\log^2 \frac{\mathcal{D}N_\Lambda}{\varepsilon})$ Toffoli gates.
Then, uncomputing the arithmetic circuits, we obtain
\begin{equation}
    \sum_{\bar \Lambda=0}^{N_\Lambda - 1} \sum_{k=0}^{4\mathcal{D}-1} \ket{\Lambda}\ket{k}e^{-i\alpha_\Lambda t_{\mathrm{eq}}\cos \frac{2\pi k}{4\mathcal{D}}}\ket{0}\,.
\end{equation}
Notice that this can then be used to provide a
$(1, 1, \frac{\varepsilon}{24\sqrt{2}})$-block-encoding of $D_{f_\Lambda, 4\mathcal{D}}$ in superposition as shown in \cref{eqFuncUni} for which the implementation costs
\begin{equation}
    \widetilde{\mathcal O}\left(\log^2 \frac{\alpha_L\alpha_Ht_{\mathrm{eq}}}{\varepsilon}\right)
\end{equation}
Toffoli gates.

Further accounting for the gates to prepare the weighted sum between $L_A$ and $L_B$, we note that an additional $\mathcal{O} (\log^2 \frac{N_\Lambda}{\varepsilon})$ gates are needed for state preparation per access to $L_\Lambda$.  Hence, the runtime for preparing the $L_\Lambda$ simulation in superposition of $\Lambda$ in total would still be asymptotically the same as preparing a single Liouvillian simulation when polylogarithmic factors are disregarded. We provide a circuit for this approach in \cref{fig:lambda_simulation_angleless} (based on a circuit from Ref.~\citep{alase2025quantum}). Lastly, we note that although controlled versions of the qubitized Liouvillian are needed, as the ground state oracle is only used for state preparation that is then uncomputed, controlled versions of the approximate ground state oracle are not required.

\subsection{Implementation of nuclear Hamiltonian}
\label{appInHam}
Recall from \cref{eqNucHam} that the nuclear Hamiltonian $H_{\rm nuc}$ can be decomposed into $H_{\mathrm{kin}}$, $H_{\mathrm{pot}}$, $H_{\mathrm{gse}}$, which correspond to the kinetic energy, potential energy, and ground state energy values at relevant points in phase-space:
\begin{align}
    H_{\mathrm{kin}} & = \sum_{n, j} \sum_{\bar{p}'_{n,j}} \sum_{\bar{s}} \frac{{p'}_{n,j}^2}{2m_n (s + s_{\min})^2} \ketbra{\bar{p}'_{n,j}}{\bar{p}'_{n,j}} \otimes \ketbra{\bar{s}}{\bar{s}} \, , \\
    H_{\mathrm{pot}} & = \sum_{n' \ne n} \sum_{\bar{x}_n} \sum_{\bar{x}_{n'}} \frac{Z_n Z_{n'}}{\left( \| x_n - x_{n'} \|^2 + \Delta^2 \right)^{1/2}} \ketbra{\bar{x}_n}{\bar{x}_n} \nonumber \\
                     & \otimes \ketbra{\bar{x}_{n'}}{\bar{x}_{n'}} \, , \\
    H_{\mathrm{gse}} & = \sum_{\vec{\bar{x}}} E_{\tel}(\vec x) \ketbra{\vec{\bar x}}{\vec{\bar x}} \, .
\end{align}
In this section, we review results on block-encoding the kinetic and potential energy operators from Ref.~\citep{simon2024improved}, and we present an improved implementation of the ground state energy operator $H_{\mathrm{gse}}$.

\begin{lemma}[Block encoding of $H_{\mathrm{kin}}$ -- Lemma 22, \citep{simon2024improved}]
    There exists a $(\alpha_{\mathrm{kin}}, a_{\mathrm{kin}}, \varepsilon_{\mathrm{kin}})$-block-encoding of $H_{\mathrm{kin}}$ with normalization constant
    \begin{equation}
        \alpha_{\mathrm{kin}} \in \mathcal{O} \left( N \frac{p'^2_{\max}}{m_{\min} s_{\min}^2} \right)
    \end{equation}
    and the number of ancilla qubits
    \begin{equation}
        a_{\mathrm{kin}} \in \mathcal{O} \left( \log \frac{\alpha_{\mathrm{kin}}}{\varepsilon_{\mathrm{kin}}} \right)
    \end{equation}
    that can be implemented using
    \begin{equation}
        \mathcal{O} \left( N \log \frac{g_{p'} \alpha_{\mathrm{kin}}}{\varepsilon_{\mathrm{kin}}} + \log^{\log 3} \frac{\alpha_{\mathrm{kin}}}{\varepsilon_{\mathrm{kin}}} \right)
    \end{equation}
    Toffoli gates, where $g_{p'}$ is the number of discrete momenta considered in the classical part of the Liouvillian.
    \label{lemKin}
\end{lemma}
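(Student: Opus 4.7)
The plan is to build the block encoding term-by-term from a diagonal amplitude encoding of each summand and then combine the $3N$ summands through a uniform LCU. Since $H_{\mathrm{kin}}$ is diagonal in the momentum and bath registers, the natural route is to compute the scalar value ${p'}_{n,j}^2 / (2 m_n (s+s_{\min})^2)$ in a fresh arithmetic register (working in a fixed-point representation) and then transfer it onto an amplitude via a controlled rotation or via the alternating-sign trick of Ref.~\citep{berry2014exponential}. After uncomputing the arithmetic register, this yields a $(\alpha_{n,j}, a_{n,j}, \varepsilon_{n,j})$-block-encoding of $\frac{{p'}_{n,j}^2}{2 m_n (s+s_{\min})^2} \ketbra{\bar p'_{n,j}}{\bar p'_{n,j}}\otimes \ketbra{\bar s}{\bar s}$, with single-term scaling factor bounded by $p'^2_{\max}/(2 m_{\min} s_{\min}^2)$.

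Next I would combine the $3N$ terms using \cref{lemLCU} with a uniform state-preparation pair on $\lceil\log(3N)\rceil$ ancillas, which can be built from a small layer of Hadamards up to a negligible padding overhead. The resulting block encoding has a normalization constant $\alpha_{\mathrm{kin}} \in \mathcal O(N p'^2_{\max}/(m_{\min} s_{\min}^2))$ as stated, and a number of LCU ancillas that is absorbed into the overall $\mathcal O(\log(\alpha_{\mathrm{kin}}/\varepsilon_{\mathrm{kin}}))$ bound once we track the arithmetic precision needed to hit total error $\varepsilon_{\mathrm{kin}}$. To meet the target, each arithmetic step (square, add, multiply, reciprocal) must be carried out with enough bits that the accumulated rounding error, after the $\alpha_{\mathrm{kin}}$-scale amplification from LCU, stays below $\varepsilon_{\mathrm{kin}}$; this fixes the working bit-width at $\mathcal O(\log(\alpha_{\mathrm{kin}}/\varepsilon_{\mathrm{kin}}))$.

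For the Toffoli cost I would separately account for (i) the per-$(n,j)$ SWAP-in of the momentum register and lookup of $m_n$, contributing the $\mathcal O(N \log(g_{p'} \alpha_{\mathrm{kin}}/\varepsilon_{\mathrm{kin}}))$ term, and (ii) the actual arithmetic on a bit-width of order $\log(\alpha_{\mathrm{kin}}/\varepsilon_{\mathrm{kin}})$. The $\log^{\log 3}$ factor comes from using a recursive Karatsuba-style multiplication for the two squarings and the denominator multiplication, whose gate cost on $b$-bit inputs scales as $\mathcal O(b^{\log_2 3})$; reciprocation via Newton iteration fits inside the same asymptotic. The division by $2m_n(s+s_{\min})^2$ is done once per $(n,j)$ but can be shared across the momentum SWAP-in pattern.

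The main obstacle is the error bookkeeping through the composed arithmetic circuit and the controlled rotation: one must ensure that the multiplicative error from Karatsuba multiplication, the truncation error from computing the reciprocal, and the controlled-rotation synthesis error each remain below $\varepsilon_{\mathrm{kin}}/\alpha_{\mathrm{kin}}$ per term so that the LCU-amplified error stays within $\varepsilon_{\mathrm{kin}}$. The rest of the argument is essentially a careful application of \cref{lemLCU} together with the propagation of the state-preparation-pair error bound from \cref{lemStatePrep}; no new structural idea beyond the diagonal-encoding-plus-LCU template is required.
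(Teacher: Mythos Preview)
The paper does not actually prove this lemma; it is imported verbatim from Lemma~22 of Ref.~\citep{simon2024improved} and stated without proof. Your proposed argument is the standard diagonal-encoding-plus-LCU construction used throughout that line of work and is essentially how the cited reference builds it: compute the diagonal value by fixed-point arithmetic, load it into an amplitude, uncompute, and LCU over the $3N$ particle/coordinate indices. Your identification of the $\log^{\log 3}$ term with Karatsuba-style multiplication on $\mathcal O(\log(\alpha_{\mathrm{kin}}/\varepsilon_{\mathrm{kin}}))$-bit registers is correct, and the error-budget accounting you sketch (each arithmetic/rotation error below $\varepsilon_{\mathrm{kin}}/\alpha_{\mathrm{kin}}$ so that the LCU amplification stays within $\varepsilon_{\mathrm{kin}}$) is the right bookkeeping. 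One small refinement: rather than absorbing the mass $m_n$ into the arithmetic and then doing a uniform LCU, the original construction typically puts the $1/m_n$ weights into the state-preparation pair, but this makes no difference to the asymptotic scaling factor or gate count.
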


\begin{lemma}[Block encoding of $H_{\mathrm{pot}}$ -- Lemma 23, \citep{simon2024improved}]
    There exists a $(\alpha_{\mathrm{pot}}, a_{\mathrm{pot}}, \varepsilon_{\mathrm{pot}})$-block-encoding of $H_{\mathrm{pot}}$ with normalization constant
    \begin{equation}
        \alpha_{\mathrm{pot}} \in \mathcal{O} \left( N^2 \frac{Z_{\max}^2}{\Delta} \right)
    \end{equation}
    and the number of ancilla qubits
    \begin{equation}
        a_{\mathrm{pot}} \in \mathcal{O} \left( \log \frac{\alpha_{\mathrm{pot}}}{\varepsilon_{\mathrm{pot}}} \right) \, .
    \end{equation}
    This block encoding can be implemented using
    \begin{equation}
        \mathcal{O} \left( N \log \frac{g_x \alpha_{\mathrm{pot}}}{\varepsilon_{\mathrm{pot}}}  + \log^{\log 3} \frac{\alpha_{\mathrm{pot}}}{\varepsilon_{\mathrm{pot}}}  \right)
    \end{equation}
    Toffoli gates, where $g_x$ is the number of discrete positions considered in the classical part of the Liouvillian.
    \label{lemPot}
\end{lemma}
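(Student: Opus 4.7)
Since $H_{\mathrm{pot}}$ is diagonal in the nuclear position basis, my plan is to construct a diagonal block encoding by computing the regularized Coulomb value for each pair $(n,n')$ with quantum arithmetic and loading the result into an amplitude through a standard \textsc{Prep}--\textsc{Sel}--\textsc{Prep}$^\dagger$ pipeline. No ground-state preparation is needed since every term depends only on the classical nuclear positions, so the whole construction reduces to coherent classical arithmetic followed by a single controlled rotation.

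First I would prepare a uniform superposition over the $\mathcal{O}(N^2)$ ordered pairs $(n,n')$ with $n\neq n'$ using $\mathcal{O}(\log N)$ gates, and then use controlled swaps to bring the corresponding nuclear position registers $\ket{\bar x_n}$ and $\ket{\bar x_{n'}}$ into an arithmetic workspace. Next I would apply quantum arithmetic circuits~\citep{vedral1996quantum} to compute a fixed-point representation of $\|x_n-x_{n'}\|^2+\Delta^2$, then use quantum square-root and reciprocal subroutines, and finally multiply by the precomputed charge product $Z_n Z_{n'}$. I would encode the resulting fixed-point value into an amplitude via controlled $R_y$ rotations (or the alternating-sign trick of Ref.~\citep{berry2014exponential}), uncompute all arithmetic registers to erase garbage, and apply the inverse pair-preparation unitary, so that the joint ancilla zero projector recovers $H_{\mathrm{pot}}/\alpha_{\mathrm{pot}}$ in the usual LCU sense.

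For the scaling factor, the $\ell_1$-norm over the $\mathcal{O}(N^2)$ pairs together with the term-wise bound $Z_{\max}^2/\Delta$, which follows from $\|x_n-x_{n'}\|^2+\Delta^2\ge \Delta^2$, gives $\alpha_{\mathrm{pot}} \in \mathcal{O}(N^2 Z_{\max}^2/\Delta)$. The ancilla count is dominated by the fixed-point arithmetic registers, which need $\mathcal{O}(\log(\alpha_{\mathrm{pot}}/\varepsilon_{\mathrm{pot}}))$ bits to meet the target precision once error propagation is controlled. The $N\log(g_x\alpha_{\mathrm{pot}}/\varepsilon_{\mathrm{pot}})$ contribution to the Toffoli count reflects the controlled swaps that load the two nuclear coordinates at the required fixed-point precision over $g_x$ grid points, while the $\log^{\log 3}(\alpha_{\mathrm{pot}}/\varepsilon_{\mathrm{pot}})$ contribution is the familiar Karatsuba cost of the multiplicative routines inside the reciprocal-square-root subroutine.

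The main technical obstacle is error propagation through the nonlinear $1/\sqrt{\cdot}$ step: a small arithmetic error in the squared distance can be amplified sharply near the regularized minimum $\Delta^2$, and the LCU then inflates any per-term error by $N^2 Z_{\max}^2$. I would handle this by oversizing the fixed-point precision so that the absolute error in $(\|x_n-x_{n'}\|^2+\Delta^2)^{-1/2}$ stays below $\varepsilon_{\mathrm{pot}}/(N^2 Z_{\max}^2)$, which only adds a factor of $\log(1/\Delta)$ inside the existing logarithms and therefore does not change the stated asymptotic counts.
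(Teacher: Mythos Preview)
The paper does not give its own proof of this lemma: it is quoted verbatim from Lemma~23 of Ref.~\citep{simon2024improved} and used as a black box in \cref{propNuclearHam}. Your proposal is therefore not being compared against a proof in this paper but against the construction in the cited reference, and it matches that construction in spirit: a \textsc{Prep}--\textsc{Sel}--\textsc{Prep}$^\dagger$ diagonal block encoding built from quantum arithmetic on the nuclear position registers, with the $\ell_1$-norm of the pair coefficients giving $\alpha_{\mathrm{pot}}\in\mathcal{O}(N^2 Z_{\max}^2/\Delta)$ and the Karatsuba reciprocal-square-root contributing the $\log^{\log 3}(\alpha_{\mathrm{pot}}/\varepsilon_{\mathrm{pot}})$ term.

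One small point worth tightening: the $N\log(g_x\alpha_{\mathrm{pot}}/\varepsilon_{\mathrm{pot}})$ term is not purely from the controlled swaps. The swaps themselves cost $\mathcal{O}(N\log g_x)$; the remaining $N\log(\alpha_{\mathrm{pot}}/\varepsilon_{\mathrm{pot}})$ contribution comes from the \textsc{Prep} stage. In the standard construction one encodes the charges into the \textsc{Prep} amplitudes rather than into the arithmetic (i.e.\ prepare $\sum_n\sqrt{Z_n/\sum_{n'}Z_{n'}}\,\ket{n}$ on each index register separately), which by \cref{lemStatePrep} costs $\mathcal{O}(N\log(\alpha_{\mathrm{pot}}/\varepsilon_{\mathrm{pot}}))$ and avoids an extra multiplication inside \textsc{Sel}. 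Your variant of absorbing $Z_nZ_{n'}$ into the arithmetic also works, but then you need to load the charges coherently (e.g.\ via two $\mathcal{O}(N)$-entry QROMs), which recovers the same scaling. Either way the stated asymptotics hold.
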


Given that the electronic Hamiltonian can be accessed in superposition, to construct $H_{\mathrm{gse}}$, one can obtain the ground state energy of the electronic Hamiltonian $H_{\tel}(\vec x)$ for different values of $\vec{\bar x}$ in superposition. Ref.~\citep{simon2024improved} achieves this by preparing a separate set of electronic registers and preparing the ground states of the electronic Hamiltonian via a ground state preparation algorithm by \citet{lin2020nearoptimal} for different nuclear positions in superposition. Then, using phase estimation, fixed point representations of the ground state energies are extracted, after which one can then approximately encode the fixed point representations into amplitudes of the quantum state via the alternating sign trick~\citep{berry2014exponential}. However, this construction is largely bottlenecked by phase estimation, which requires $\mathcal O(\frac{1}{\varepsilon})$ applications of the block encoding of the electronic Hamiltonian $U_{H_{\tel}}$. In contrast, our implementation requires only $\mathcal O(\poly\log\frac{1}{\varepsilon})$ applications of $U_{H_{\tel}}$.

In our work, similar to Ref.~\citep{simon2024improved}, we implement $H_{\mathrm{gse}}$ by first preparing the ground state with the preparation algorithm by \citet{lin2020nearoptimal}, but instead of using phase estimation to extract the eigenvalues, we then directly apply the block encoding of the Hamiltonian such that the scaled eigenvalues are directly encoded in the amplitude of the ground states in superposition, followed lastly by uncomputing the ground state, similar to our strategy of implementing the force operator in \cref{lemHamDeriv}. This produces a block encoding implementation of $H_{\mathrm{gse}}$ that is only dependent on the precision logarithmically in terms of runtime. The following is therefore an improved version of Lemma 24 of Ref.~\citep{simon2024improved}.
\begin{lemma}[Block encoding of $H_{\mathrm{gse}}$]
    Suppose we have a $(\lambda, a_{\tel}, \varepsilon_{\tel})$-block-encoding $U_H$ of the nuclear-position-controlled electronic Hamiltonian satisfying \cref{assumptionGS} in regards to spectral gap $\gamma$, ground state upper bound $\mu$, and we are given an approximate ground state preparation oracle $U_I$ that can prepare ground states up to an overlap of $1-\delta$. If
    \begin{equation*}
        \varepsilon_{\tel}\in\widetilde{\mathcal O} \left(\frac{\delta^2\gamma^2\varepsilon_{\mathrm{gse}}^4}{\lambda^5}\right),
    \end{equation*}
    then there exists a $(\lambda, a_{\mathrm{gse}},\varepsilon_{\mathrm{gse}})$-block-encoding of $H_{\mathrm{gse}}$ where
    \begin{equation*}
        a_{\mathrm{gse}} \in a_{\tel}+\mathcal{O}(\widetilde{N}\log B).
    \end{equation*}
    This block encoding can be prepared with $\mathcal{O}(\frac{\lambda}{\delta\gamma}\log\frac{\lambda}{\delta\varepsilon_{\mathrm{gse}}})$ queries to $U_H$, $\mathcal{O}(\frac{1}{\delta})$ queries to $U_I$ and an additional $\mathcal O \left( \frac{\lambda}{\delta \gamma} \log \frac{\lambda}{\delta \varepsilon_{\mathrm{gse}}}  \left( a_{\tel}+\log \frac{\lambda}{\delta \gamma\varepsilon_{\mathrm{gse}}}\right) \right)$ Toffoli gates.
    \label{lemGSE}
\end{lemma}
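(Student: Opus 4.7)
The plan is to imitate the construction of $\Del$ in \cref{lemDel} almost verbatim, with the force operator $\partial_{x_{n,j}} H_\tel^{\rm ctrl}$ replaced by the nuclear-position-controlled electronic Hamiltonian $H_\tel^{\rm ctrl}$ itself. Concretely, I would introduce an electronic register initialized in $\ket{0}$, apply the controlled approximate ground state preparation unitary $W(\vec x)$ obtained from \cref{lemGSPrep} (using the block encoding $U_H$ and the oracle $U_I$) to get $\sum_{\vec{\bar x}} \ket{\vec{\bar x}}(W_{\vec x}\ket{0})$ in superposition over nuclear positions, then apply $U_H$, and finally apply $W^\dagger$ to uncompute. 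Post-selecting the electronic and all block-encoding ancillae onto $\ket{0}$ kicks back the amplitude $\frac{1}{\lambda}\bra{0}W_{\vec x}^\dagger H_\tel(\vec x) W_{\vec x}\ket{0} = \frac{E_\tel(\vec x)}{\lambda}$ onto the nuclear register $\ket{\vec{\bar x}}$, producing the desired $(\lambda, a_{\mathrm{gse}}, \varepsilon_{\mathrm{gse}})$-block-encoding of $H_{\mathrm{gse}}$.

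For the error analysis, I would reuse the three-term triangle-inequality decomposition from the proof of \cref{lemDel}, splitting
\begin{equation*}
    \left|\lambda\bra{0}W^\dagger U_H W\ket{0} - E_\tel(\vec x)\right|
\end{equation*}
into (i) two terms bounded by $\lambda\|\ket{\phi_0}-\ket{\tilde\psi_0}\| \le \lambda\sqrt{2\varepsilon_{\mathrm{prep}}}$ each, via Cauchy–Schwarz and unitarity of $W$, and (ii) one term bounded by the block-encoding error $\varepsilon_\tel$ of $U_H$. Setting $2\lambda\sqrt{2\varepsilon_{\mathrm{prep}}} \sim \varepsilon_{\mathrm{gse}}/2$ forces $\varepsilon_{\mathrm{prep}} \in \mathcal O(\varepsilon_{\mathrm{gse}}^2/\lambda^2)$, and then substituting into the prerequisite of \cref{lemGSPrep} gives exactly
\begin{equation*}
    \varepsilon_\tel \in \tilde{\mathcal O}\!\left(\frac{\delta^2\gamma^2\varepsilon_{\mathrm{prep}}^2}{\lambda}\right) \subseteq \tilde{\mathcal O}\!\left(\frac{\delta^2\gamma^2\varepsilon_{\mathrm{gse}}^4}{\lambda^5}\right),
\end{equation*}
matching the hypothesis. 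The diagonality of $H_{\mathrm{gse}}$ on the nuclear register means the spectral norm error of the resulting block encoding is bounded by this worst-case pointwise error, so the analysis is uniform over $\vec{\bar x}$.

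The resource accounting is then a bookkeeping step: the ancilla count is $a_\tel$ (for $U_H$) plus $\mathcal O(\tilde N \log B)$ for the plane wave electronic register, giving $a_{\mathrm{gse}} \in a_\tel + \mathcal O(\tilde N \log B)$; queries to $U_H$ and $U_I$ come from the two calls (compute/uncompute) to \cref{lemGSPrep} with precision $\varepsilon_{\mathrm{prep}}$, yielding $\mathcal O(\frac{\lambda}{\delta\gamma}\log\frac{\lambda}{\delta\varepsilon_{\mathrm{gse}}})$ queries to $U_H$ and $\mathcal O(\frac{1}{\delta})$ queries to $U_I$; one extra query to $U_H$ for the middle Hamiltonian application is absorbed. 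The Toffoli count is read off from \cref{corGSPrepTof} with $m \to a_\tel$ and $\varepsilon_{\mathrm{prep}}$ as above.

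I expect no real obstacle beyond being careful about one subtlety: unlike \cref{lemDel} where the force block encoding has its own normalization $\tau_n$, here the normalization is simply $\lambda$, so the final block encoding constant is $\lambda$ rather than $1$, and the factor of $Z_{\max}$ that appeared in the bound on $\varepsilon_\tel$ for $\Del$ is absent. The only genuinely delicate point is verifying that the approximate ground state produced in superposition yields a well-defined spectral-norm error on the diagonal operator $H_{\mathrm{gse}}$; this follows because each diagonal entry is independently controlled by the same pointwise bound, and the cross terms in the block-encoding error decompose cleanly since different $\vec{\bar x}$ label orthogonal subspaces.
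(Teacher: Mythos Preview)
Your proposal is correct and matches the paper's proof essentially step for step: the same compute--apply--uncompute construction with $U_H$ in place of the force operator, the same three-term triangle-inequality error split yielding $2\lambda\sqrt{2\varepsilon_{\mathrm{prep}}} + \varepsilon_{\tel}$, and the same parameter choices $\varepsilon_{\mathrm{prep}} \in \mathcal{O}(\varepsilon_{\mathrm{gse}}^2/\lambda^2)$ feeding into \cref{lemGSPrep} to obtain the stated $\varepsilon_{\tel}$ bound. Your observation that the normalization is $\lambda$ rather than $\tau_n$ (and hence no $Z_{\max}$ factor appears) is precisely the distinction the paper makes.
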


\begin{proof}
    First, we demonstrate the correctness of the block encoding construction. Assuming perfect block encoding and ground state preparation, we note that preparing the ground state in superposition gives us
    \begin{equation}
        \sum_{\vec{\bar x}}\ket{\vec{\bar x}}\ket{0}_{\mathrm{anc}} \to \sum_{\vec{\bar x}}\ket{\vec{\bar x}}\ket{\psi_0(\vec x)}\ket{0}_{\mathrm{anc}} \, .
    \end{equation}
    Then, applying the controlled Hamiltonian, we obtain
    \begin{equation}
        \sum_{\vec{\bar x}}\ket{\vec{\bar x}}\bigg(\frac{1}{\lambda} H_{\tel}(\vec x) \ket{\psi_0(\vec x)}\ket{0}_{\mathrm{anc}}
        +\sum_{j}\lambda_j'\ket{\mathrm{bad}_j}\ket{j}_{\mathrm{anc}}\bigg) \, .
    \end{equation}
    where $\lambda'_j$ is a normalization factor for some bad state $\ket{\mathrm{bad}_j}$. Noting that $H_{\tel}(\vec x) \ket{\psi_0(\vec x)} = E_0(\vec x) \ket{\psi_0(\vec x)}$, if we then uncompute the ground state, we obtain
    \begin{equation}
        \sum_{\vec{\bar x}}\ket{\vec{\bar x}}\bigg(\frac{1}{\lambda} E_0(\vec x) \ket{0}_{\mathrm{anc}}+\sum_{j}\lambda_j'\ket{\mathrm{bad}_j'}\ket{j}_{\mathrm{anc}}\bigg) \, .
    \end{equation}
    Then, post-selecting on $\ket{0}$ for the ancilla registers and the electronic register, we see that we would obtain
    \begin{equation}
        \sum_{\vec{\bar x}}\frac{1}{\lambda} E_0(\vec x)\ket{\vec{\bar x}} \, .
    \end{equation}

    Now, to the factors of the block encoding, the scaling factor $\lambda$ of the block encoding follows from the scaling factor $H_{\tel}$. Given that the electronic register is computed and then uncomputed, in regards to encoding the ground state energy onto the nuclear register, the ancilla qubits are the sum of the ancilla qubits of the $H_{\tel}$, or $a_{\tel}$, and the size of the electronic register, or $\mathcal{O}(\widetilde N \log B)$.

    Regarding the error, we first prepare the ground state of $ H^{\rm ctrl}_{\tel}$ up to fidelity $1-\varepsilon_{\mathrm{prep}}$ with \cref{lemGSPrep}. This produces a state $\ket{\phi_0}$ such that $\lvert(\bra{\psi_0}\otimes\bra{0})\ket{\phi_0}\rvert \geq 1-\varepsilon_{\mathrm{prep}}$. Similar to \cref{lemHamDeriv}, for cleanliness, we denote $\ket{\widetilde \psi_0} = \ket{\psi_0}\otimes\ket{0}$. Then the $\ell_2$ distance between the two states can be found to be
    \begin{equation}
        \left\lVert\ket{\phi_0}-\ket{\widetilde \psi_0}\right\rVert \leq \sqrt{2\varepsilon_{\mathrm{prep}}} \, .
    \end{equation}
    Let $W$ be the unitary operator that prepares the approximate ground state $\ket{ \phi_0}$ from $\ket{0}$, composed of $U_I$ and the ground state preparation procedure in \cref{lemGSPrep}. Given that the final block encoding is that of a diagonal matrix on different nuclear positions, the error term of the block encoding (spectral norm difference) is less than that of the maximum difference on a single entry. We derive the difference as follows via a series of triangular inequalities and drop the notation on nuclear positions, as the derivation is for a single position:
    \begin{align}
         & \left\lvert\lambda\bra{0}W^\dagger U_{H_{\tel}} W\ket{0} - E_0\right\rvert \nonumber \\
         & \begin{aligned}[b]
               \le\; & \lambda\left\lvert\bra{0} W^\dagger U_{H_{\tel}}W\ket{0} - \bra{0} W^\dagger U_{H_{\tel}}\ket{\widetilde \psi_0}\right\rvert \\
               & + \lambda\left\lvert\bra{0}W^\dagger U_{H_{\tel}} \ket{\widetilde \psi_0} - \bra{\widetilde\psi_0} U_{H_{\tel}} \ket{\widetilde \psi_0}\right\rvert \\
               & +\left\lvert \bra{\widetilde\psi_0}   U_{H_{\tel}} \ket{\widetilde \psi_0} - \bra{0}  E_0\ket{0}\right\rvert \, .
           \end{aligned}
    \end{align}
    We now derive the upper bounds of the four absolute differences. The first term can be bounded by the Cauchy–Schwarz inequality such that
    \begin{align}
         & \lambda\left\lvert\bra{0} W^\dagger U_{H_{\tel}}W\ket{0} - \bra{0} W^\dagger U_{H_{\tel}}\ket{\widetilde \psi_0}\right\rvert\nonumber \\
         & = \lambda\left\lvert\bra{0}W^\dagger U_{H_{\tel}}\ket{\phi_0} - \bra{0} W^\dagger U_{H_{\tel}}\ket{\widetilde \psi_0}\right\rvert\nonumber \\
         & = \lambda\left\lVert\ket{\phi_0} - \ket{\psi_0}\right\rVert \, .
    \end{align}

    The second term can be similarly bounded as follows:
    \begin{equation}
        \lambda\left\lvert\bra{0}W^\dagger U_{H_{\tel}} \ket{\widetilde \psi_0} - \bra{\widetilde\psi_0} U_{H_{\tel}} \ket{\widetilde \psi_0}\right\rvert\le  \lambda \lVert\ket{\phi_0} - \ket{\psi_0}\rVert \,.
    \end{equation}

    The third term can be upper-bounded as follows:
    \begin{align}
         & \left\lvert\bra{\widetilde\psi_0} U_{H_{\tel}} \ket{\widetilde \psi_0} - \bra{\widetilde\psi_0}  E_0 \ket{\widetilde \psi_0}\right\rvert \nonumber \\
         & = \left\lvert\bra{ \psi_0}\left(\lambda(\mathbb{I}\otimes\bra{0}) U_{H_{\tel}} (\mathbb{I}\otimes\ket{0}) -  H\right) \ket{\psi_0}\right\rvert\nonumber \\
         & \leq \left\lVert\lambda(\mathbb{I}\otimes\bra{0}) U_{H_{\tel}} (\mathbb{I}\otimes\ket{0}) -  H\right\rVert \le \varepsilon_{\tel} \, ,
    \end{align}

    When we combine the results, we get:
    \begin{align}
        \left\lvert\lambda\bra{0}W^\dagger U_{H_{\tel}}W\ket{0} - E_0\right\rvert & \le 2\lambda \lVert\ket{\phi_0} - \ket{\psi_0}\rVert + \varepsilon_{\tel}\nonumber \\
        & \le2\lambda \sqrt{2\varepsilon_{\mathrm{prep}}}  + \varepsilon_{\tel} \, .
    \end{align}
    From Ref.~\citep{su2021faulttolerant}, we know that $\lambda$ is given from the $\ell_1$ norm of the coefficients $\alpha_\ell$ from the LCU decomposition of the electronic Hamiltonian $ H_{\tel} = \sum_{\ell=1}\alpha_\ell  H_{\ell}$, where $ H_{\ell}$ are unitary. Thus, $\lVert  H_{\tel}\rVert$ can be upper bounded by $\lambda$ via the triangular inequality. The error of the block encoding can then be further upper-bounded by
    \begin{equation}
        2\lambda \sqrt{2\varepsilon_{\mathrm{prep}}} + \varepsilon_{\tel} \, .
    \end{equation}
    Set $\varepsilon_{\tel} \le \frac{\varepsilon_{\mathrm{gse}}}{2}$ and $2\lambda \sqrt{2\varepsilon_{\mathrm{prep}}} = \frac{\varepsilon_{\mathrm{gse}}}{2}$. Replacing the bounds of $\varepsilon_{\mathrm{prep}}$ of that obtained in \cref{lemGSPrep}, where we see that
    \begin{equation}
        \varepsilon_{\tel}\in \widetilde{\mathcal O} \left(\frac{\delta^2\gamma^2\varepsilon_{\mathrm{prep}}^2}{\lambda}\right)\subseteq  \widetilde{\mathcal O} \left(\frac{\delta^2\gamma^2\varepsilon_{\mathrm{gse}}^4}{\lambda^5}\right) \, .
    \end{equation}
    For consistency across different $n$, we set $\varepsilon_{\tel}$ to a tighter bound. We then let
    \begin{align}
        \varepsilon_{\mathrm{prep}} & \in \mathcal O \left(\frac{\varepsilon_{\mathrm{gse}}^2}{\lambda^2}\right) \\
        \varepsilon_{\tel}          & \in \widetilde{\mathcal O} \left(\frac{\delta^2\gamma^2\varepsilon_{\mathrm{prep}}^2}{\lambda}\right)\subseteq  \widetilde{\mathcal O} \left(\frac{\delta^2\gamma^2\varepsilon_{\mathrm{gse}}^4}{\lambda^5}\right) \, ,
    \end{align}
    which still satisfies the former bound. The runtime is largely dominated by ground state preparation, hence it has the same asymptotic runtime as the latter. Replacing $\varepsilon_{\mathrm{prep}}$ by its bounds of $\varepsilon_{\mathrm{gse}}$, we note that we require $\mathcal{O}(\frac{\lambda}{\delta\gamma}\log\frac{\lambda}{\delta\varepsilon_{\mathrm{gse}}})$ queries to $U_H$. The number of Toffoli gates needed can be found by \cref{corGSPrepTof} to be
    \begin{equation}
        \mathcal O \left( \frac{\lambda}{\delta \gamma} \log \frac{\lambda}{\delta \varepsilon_{\mathrm{gse}}}  \left( a_{\tel}+\log \left(\frac{\lambda}{\delta \gamma\varepsilon_{\mathrm{gse}}}\right)\right) \right) \, .
    \end{equation}
\end{proof}
Note that the requirement on the error bound of the electronic Hamiltonian is tighter than that of the requirements for implementing the force operator in \cref{lemHamDeriv}. However, due to the error being a logarithmic dependency, this is hidden in the asymptotics. Alternatively, we can remove this requirement if we take the approach in Ref.~\citep{simon2024improved} and assume that the electronic Hamiltonian is a $(\lambda, a_{\tel}, 0)$-block-encoding of an alternative Hamiltonian $\widetilde H$, which we can prepare the ground state to since we technically use this Hamiltonian as for ground state preparation. This requires dependency on an alternate spectral gap $\widetilde \gamma$, which can still be bounded by $\gamma$ by Weyl's inequality, and an alternative overlap bound $\widetilde \delta$ on the approximate ground state oracle $U_I$, which in practice would be roughly $\delta$, but further assumptions are required for theoretical completeness. Since the dependency on error is logarithmic in our scheme, we opt to consider the higher requirements on the electronic Hamiltonian block encoding, in contrast to adding another assumption on the oracle.

We now combine the Hamiltonian terms to produce a block encoding of the nuclear Hamiltonian $H_{\rm nuc}$.
\begin{proposition}[Block encoding of nuclear Hamiltonian]
    Suppose we have a nuclear-position-controlled electronic Hamiltonian that can be block-encoded in the first quantization scheme with scaling factor $\lambda$ satisfying \cref{assumptionGS} in regards to spectral gap $\gamma$, ground state upper bound $\mu$. We are given an approximate ground state preparation oracle $U_I$ that can prepare ground states up to an overlap of $1-\delta$. Then there exists a $(\alpha_H, a_H, \varepsilon_H)$-block-encoding of $H_{\rm nuc}$ where
    \begin{align*}
        \alpha_H & \in \mathcal O \left( \frac{Np'^2_{\max}}{m_{\min} s_{\min}^2} +  \frac{N^2Z_{\max}^2}{\Delta} + \lambda\right), \\
        a_H      & \in \mathcal{O}\left(\widetilde{N}\log B + \log\frac{1}{\delta\gamma}+\log \frac{\alpha_H}{\varepsilon_H}\right) \, .
    \end{align*}
    This block encoding can be prepared with $\mathcal{O}(\frac{1}{\delta})$ queries to $U_I$, and
    \begin{align*}
        \widetilde{\mathcal O} & \Bigg(\frac{\lambda}{\delta \gamma}\log \frac{g\alpha_H}{\varepsilon_H}\left( N + \widetilde N \log B + \log B\log\frac{B g \alpha_H}{\varepsilon_H}\right) \Bigg)
    \end{align*}
    Toffoli gates, where $g = \max(g_x, g_{p'})$.
    \label{propNuclearHam}
\end{proposition}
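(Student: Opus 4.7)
The plan is to obtain $H_{\rm nuc}$ by taking the linear combination of the three component block encodings for $H_{\rm kin}$, $H_{\rm pot}$, and $H_{\rm gse}$ supplied by \cref{lemKin,lemPot,lemGSE}, and then invoke the LCU construction of \cref{lemLCU}. Concretely, I would first renormalize the three component encodings to unit scaling factor, so that they become $(1, a_{\rm kin}, \varepsilon_{\rm kin}/\alpha_{\rm kin})$-, $(1, a_{\rm pot}, \varepsilon_{\rm pot}/\alpha_{\rm pot})$-, and $(1, a_{\rm gse}, \varepsilon_{\rm gse}/\lambda)$-block-encodings of $H_{\rm kin}/\alpha_{\rm kin}$, $H_{\rm pot}/\alpha_{\rm pot}$, and $H_{\rm gse}/\lambda$ respectively. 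Then, with $\alpha_H := \alpha_{\rm kin} + \alpha_{\rm pot} + \lambda$, I would build a $(\alpha_H, \lceil\log 3\rceil, \varepsilon_{\rm SP})$-state-preparation-pair via \cref{lemStatePrep} that realises the weight vector $(\alpha_{\rm kin}, \alpha_{\rm pot}, \lambda)$, and apply \cref{lemLCU} to assemble the final block encoding of $H_{\rm nuc}$.

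The scaling factor then comes out as the stated $\alpha_H \in \mathcal{O}(N p'^2_{\max}/(m_{\min}s_{\min}^2) + N^2 Z_{\max}^2/\Delta + \lambda)$, and the ancilla count is $\max(a_{\rm kin}, a_{\rm pot}, a_{\rm gse}) + \mathcal{O}(1)$, which is dominated by the $\mathcal{O}(\tilde N \log B)$ electronic register and the $a_{\tel}$ ancillas internal to the controlled electronic Hamiltonian used by $H_{\rm gse}$. Unwinding the ancilla count of $a_{\rm gse}$ from \cref{lemGSE} (which via \cref{assumptionGS} depends on $\delta$, $\gamma$, and the tight bound on $\varepsilon_{\tel}$) yields the claimed $\mathcal{O}(\tilde N \log B + \log\frac{1}{\delta\gamma} + \log\frac{\alpha_H}{\varepsilon_H})$.

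For the error budget, I would set $\varepsilon_{\rm SP}, \varepsilon_{\rm kin}/\alpha_{\rm kin}, \varepsilon_{\rm pot}/\alpha_{\rm pot}, \varepsilon_{\rm gse}/\lambda$ all to $\Theta(\varepsilon_H/\alpha_H)$, so that by the error bound $\alpha\varepsilon_1 + \beta\varepsilon_2$ in \cref{lemLCU} the total error is $\mathcal{O}(\varepsilon_H)$. Since $\varepsilon_H, \varepsilon_{\rm gse}, \varepsilon_{\tel}$ only enter costs polylogarithmically, this propagates cleanly into the Toffoli count. The Toffoli count itself is completely dominated by the cost of preparing $H_{\rm gse}$ per \cref{lemGSE}: the kinetic and potential block encodings contribute only $\tilde{\mathcal{O}}(N\log(g\alpha_H/\varepsilon_H))$ gates, which is subsumed, while $H_{\rm gse}$ contributes the claimed $\tilde{\mathcal{O}}\bigl(\frac{\lambda}{\delta\gamma}\log\frac{g\alpha_H}{\varepsilon_H}(N + \tilde N\log B + \log B\log\frac{Bg\alpha_H}{\varepsilon_H})\bigr)$ after substituting the tightened bound on $\varepsilon_{\tel}$ into the ground-state preparation subroutine.

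The main technical care — but not really an obstacle — is bookkeeping: ensuring that the tighter $\varepsilon_{\tel} \in \tilde{\mathcal{O}}(\delta^2\gamma^2\varepsilon_{\rm gse}^4/\lambda^5)$ required for $H_{\rm gse}$ rather than the looser requirements of the kinetic/potential pieces is what sets the overall precision of the controlled electronic Hamiltonian oracle $U_H$, and that queries to $U_I$ are shared across the ground-state preparations inside $H_{\rm gse}$ so that the total $U_I$ count remains $\mathcal{O}(1/\delta)$ rather than multiplying by the number of LCU summands. Everything else is a mechanical application of the block-encoding arithmetic lemmas already established.
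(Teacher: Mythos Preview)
Your proposal is correct and follows essentially the same approach as the paper: renormalize the three component block encodings from \cref{lemKin,lemPot,lemGSE} to unit scale, combine them via a weighted LCU using a state-preparation pair for $(\alpha_{\rm kin},\alpha_{\rm pot},\lambda)$, and observe that ancilla and Toffoli costs are dominated by the $H_{\rm gse}$ piece. The only (inconsequential) difference is that the paper sets the state-preparation error to $\mathcal{O}(\varepsilon_H)$ rather than $\Theta(\varepsilon_H/\alpha_H)$, since in \cref{lemLCU} that error enters unscaled; your tighter choice still works and only perturbs logarithmic factors.
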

\begin{proof}
    To take the combination of the electronic and classical Liouvillian, we first note that by \cref{lemKin,lemPot,lemGSE}, we have a $(1, a_{\mathrm{kin}}, \frac{\varepsilon_{\mathrm{kin}}}{\alpha_{\mathrm{kin}}})$-block-encoding of $\frac{H_{\mathrm{kin}}}{\alpha_{\mathrm{kin}}}$, a $(1, a_{\mathrm{pot}}, \frac{\varepsilon_{\mathrm{pot}}}{\alpha_{\mathrm{pot}}})$-block-encoding of $\frac{H_{\mathrm{pot}}}{\alpha_{\mathrm{pot}}}$, and a $(1, a_{\mathrm{gse}}, \frac{\varepsilon_{\mathrm{gse}}}{\lambda})$-block-encoding of $\frac{H_{\mathrm{gse}}}{\lambda}$. Let $\alpha_H = \alpha_{\mathrm{kin}} + \alpha_{\mathrm{pot}} + \lambda$. Then, given a state preparation unitary that prepares the state
    \begin{equation}
        \sqrt{\frac{\alpha_{\mathrm{kin}}}{\alpha_H}} \ket{00} + \sqrt{\frac{\alpha_{\mathrm{pot}}}{\alpha_H}} \ket{01} + \sqrt{\frac{\lambda}{\alpha_H}} \ket{10}
    \end{equation}
    that functions as a $\left( \alpha_H, 2,\varepsilon_{\mathrm{rot}} \right)$-state-preparation-pair, we can construct via \cref{lemLCU} a $(\alpha_H, a_H, \varepsilon_H)$-block-encoding of the Liouvillian $L$ such that
    \begin{align}
        \alpha_H      & = \alpha_{\mathrm{kin}} + \alpha_{\mathrm{pot}} + \lambda \, , \\
        a_H           & = \max(a_{\mathrm{kin}}, a_{\mathrm{pot}}, a_{\mathrm{gse}}) + 2 \, , \\
        \varepsilon_H & = \alpha_H \cdot \max\left(\frac{\varepsilon_{\mathrm{kin}}}{\alpha_{\mathrm{kin}}}, \frac{\varepsilon_{\mathrm{pot}}}{\alpha_{\mathrm{pot}}}, \frac{\varepsilon_{\mathrm{gse}}}{\lambda}\right) + \varepsilon_{\mathrm{rot}} \, .
    \end{align}
    Dealing with the errors, we set the following to satisfy the above bounds
    \begin{align}
        \varepsilon_{\mathrm{rot}} & \in \mathcal{O}(\varepsilon_H) \, , \\
        \varepsilon_{\mathrm{kin}} & \in \mathcal{O}\left(\frac{\alpha_{\mathrm{kin}}\varepsilon_H}{\alpha_H}\right) \, , \\
        \varepsilon_{\mathrm{pot}} & \in \mathcal{O}\left(\frac{\alpha_{\mathrm{pot}}\varepsilon_H}{\alpha_H}\right) \, , \\
        \varepsilon_{\mathrm{gse}} & \in \mathcal{O}\left(\frac{\lambda\varepsilon_H}{\alpha_H}\right) \, , \\
        \varepsilon_{\tel}         & \in \widetilde{\mathcal O} \left(\frac{\delta^2\gamma^2\varepsilon_H^4}{\lambda\alpha_H^4}\right) \, .
    \end{align}
    We can then use this to capture the bounds on the ancillae, where
    \begin{align}
        a_H & = \max(a_{\mathrm{kin}}, a_{\mathrm{pot}}, a_{\mathrm{gse}}) + 2\nonumber \\
        & \in \mathcal{O}\left(\max\left(\log \frac{\alpha_{\mathrm{kin}}}{\varepsilon_{\mathrm{kin}}}, \log \frac{\alpha_{\mathrm{pot}}}{\varepsilon_{\mathrm{pot}}}, \log\frac{\lambda}{\varepsilon_{\tel}} + \widetilde N \log B\right)\right)\nonumber \\
        & \subseteq \mathcal{O}\left(\widetilde{N}\log B + \log\frac{1}{\delta\gamma} + \log\frac{\alpha_H}{\varepsilon_H}\right) \, .
    \end{align}
    Now, for the cost of implementation, we note that we simply have to add the implementation cost of the kinetic, potential, and ground state energy Hamiltonians, plus the cost for the single rotation gate (which is negligible), and adjust for errors, which we can find to be as follows:

    \begin{equation}
        \widetilde{\mathcal O} \Bigg(\frac{\lambda}{\delta \gamma}\log \frac{g\alpha_H}{\varepsilon_H}\left( N + \widetilde N \log B + \log B\log\frac{B g \alpha_H}{\varepsilon_H}\right) \Bigg).
    \end{equation}

    Note that the number of ancillae and the number of gates is largely dominated by the ground state energy Hamiltonian, with the other two only affecting a logarithmic term regarding $g$.
\end{proof}
If we consider only the particle numbers, the spectral gap, approximation ground state preparation overlap, and error precision, the scaling factor can be written as
\begin{equation}
    \widetilde{\mathcal O}  \left(N_{\mathrm{tot}}^2\right).
\end{equation}
while the gate cost can be written as
\begin{align}
     & \widetilde{\mathcal O}  \left(\frac{\widetilde N N_{\mathrm{tot}}}{\delta\gamma} \log \frac{1}{\varepsilon_H}\left( N_{\mathrm{tot}} + \log\frac{1}{\varepsilon_H}\right) \right),\nonumber \\
     & \subseteq\widetilde{\mathcal O}  \left(\frac{\widetilde N N_{\mathrm{tot}}^2}{\delta\gamma} \log^2 \frac{1}{\varepsilon_H}\right).
\end{align}

Lastly, given the construction of the nuclear Hamiltonian $H_{\rm nuc}$, we can construct the internal energy difference between two systems, $A$ and $B$, as utilized in the thermodynamic integration algorithm by a simple LCU. We note here that, in the context of thermodynamic integration, a large portion of the kinetic and potential Hamiltonian terms cancel each other out. A more efficient block encoding with a smaller scaling factor can be constructed by examining the exact LCU terms and canceling identical terms. However, given that the ground state energy Hamiltonian is constructed by preparing the ground state on an electronic register and applying the electronic Hamiltonian, it may be hard to cancel terms out in this block encoding. Given that the scaling factor of the nuclear Hamiltonian $H_{\rm nuc}$ is primarily dominated by the ground state energy Hamiltonian, we note that even if we can cancel out most of the terms in the kinetic and potential operators, the scaling factor would still be in $\mathcal O(\widetilde N N_{\text{tot}})$ with additional terms dependent on the differing terms of the kinetic and potential Hamiltonians. Hence, for simplicity of theoretical constructions, we do not delve into the cancellation of Hamiltonian terms, but instead construct the difference using a simple LCU, noting that in practice, the block encoding implementation can be made more efficient with a lower scaling factor.

It is then easy to see that when we are provided two systems $A$ and $B$, given $H_{\Delta \rm int} = H_{\rm int_{A}}- H_{\rm int_{B}}$, by \cref{lemLCU}, there exists a $(\alpha_{\Delta}, a_{\Delta}, \varepsilon_{\Delta})$-block-encoding of $H_{\Delta\rm int}$ where
\begin{align}
    \alpha_{\Delta} & \in \mathcal O \left(N_{\mathrm{tot},A}^2 + N_{\mathrm{tot},B}^2\right) \\
    a_{\Delta}      & \in
    \begin{multlined}[t]
        \mathcal{O}\bigg(\max(\widetilde{N}_{A}\log B_{A},\widetilde{N}_{B}\log B_{B}) \\
        + \log\frac{1}{\min(\delta_{A}\gamma_{A},\delta_{B}\gamma_{B})} + \log\frac{\alpha_{\Delta}}{\varepsilon_{\Delta}}\bigg)
    \end{multlined}
\end{align}
with cost
\begin{equation}
    \widetilde{\mathcal O}  \left(\left(\frac{\widetilde N_{A} N_{\mathrm{tot},A}^2}{\delta_{A}\gamma_{A}} + \frac{\widetilde N_{B} N_{\mathrm{tot},B}^2}{\delta_{B}\gamma_{B}}\right)\log^2 \frac{1}{\varepsilon_{\Delta}}\right) \, .
\end{equation}

\subsection{The quantum thermodynamic integration algorithm}

Our quantum implementation of the thermodynamic integration algorithm first prepares $\Lambda$-dependent Liouvillian simulations in superposition, such that we have 
\begin{equation}
    \frac{1}{\sqrt{N_\Lambda}}\sum_\Lambda \ket{\Lambda}\ket{\rho_\Lambda} = \frac{1}{\sqrt{N_\Lambda}}\sum_\Lambda \ket{\Lambda}\otimes e^{-iL_\Lambda t_{\rm eq}}\ket{\rho_0}
\end{equation}
by the methods outlined in \cref{appLambdaLiouvillian}. We then apply the controlled version of the block encoding of $\Delta H$ as shown in \cref{appInHam} in the form of a Hadamard test circuit, where we then obtain probability
\begin{equation}
    \mathbb{P}_{\Delta \widetilde F} =\frac{1}{2}\left(\frac{\Delta \widetilde F}{\alpha_{\Delta}} + 1\right) \, ,
\end{equation}
of measuring the zero state on the measurement qubit. One can then use amplitude estimation to obtain an $\varepsilon/6\alpha_{\Delta}$-close estimate of $p_{\Delta \widetilde F}$ with constant success probability. This in turn provides a $\varepsilon/3$-close estimate of $\Delta \widetilde F$, and consequently a $\varepsilon$-close estimate of $\Delta F$ when Riemann sum errors and block encoding errors are sufficiently bounded. We refer the reader to \cref{algoTI} for the complete algorithm description.

We now obtain the formal statement of the thermodynamic integration algorithm. 

\begin{theorem}[Thermodynamic integration]
    Given two systems $A$ and $B$, suppose for both systems, we have a nuclear-position-controlled electronic Hamiltonian that can be block-encoded in the first quantization scheme with scaling factor $\lambda$ satisfying \cref{assumptionGS} in regards to spectral gap lower bound $\gamma$ for both systems and that we are each given an approximate ground state preparation oracle $U_{I,A/B}$ that can prepare ground states up to an overlap of $1-\delta$ for either system. We are given block encodings of discretized Liouvillians $L_A$ and $L_B$ under the NVT canonical ensemble encoding the dynamics for systems $A$ and $B$ per \cref{propLiou}, as well as block encodings of discretized nuclear Hamiltonians $H_A$ and $H_B$ for systems $A$ and $B$ per \cref{propNuclearHam}. Further, we are given an initial state $\ket{\rho_0}$ that encodes the initial discretized phase-space density as well as an equilibration time $t_{\mathrm{eq}}$ for Liouvillian simulation by \cref{thmLiouSim}. There exists a quantum algorithm that computes the Helmholtz free energy difference between the two systems up to additive $\varepsilon$-precision with success probability $1-\xi$ using
    \begin{multline*}
        \widetilde{\mathcal{O}} \bigg( \frac{\alpha_H\alpha_L t_{\mathrm{eq}}\log g}{\varepsilon}\log\frac{1}{\xi}\\
        \times\bigg(\frac{\lambda}{\delta \gamma}(N + \widetilde{N} \log B + \log B \log Bg)+ d\bigg)\bigg)
    \end{multline*}
    Toffoli gates,
    \begin{align*}
        \widetilde{\mathcal{O}} \left(\frac{\alpha_H\alpha_L t_{\mathrm{eq}}}{\delta\varepsilon}\log\frac{1}{\xi}\right)
    \end{align*}
    queries to $U_{I,A}$ and $U_{I,B}$, and
    \begin{align*}
        \widetilde{\mathcal{O}}\left(N \log g + \widetilde{N}\log B + \log\frac{d}{\delta\gamma} + \log\frac{\alpha_L\alpha_H t_{\mathrm{eq}}}{\varepsilon}\right)
    \end{align*}
    qubits, where $N = \max(N_A, N_B)$, $\widetilde N = \max(\widetilde N_A, \widetilde N_B)$, $N_{\mathrm{tot}}=N+\widetilde N$, $\alpha_L:= \max(\alpha_{L_A}, \alpha_{L_B})$ and $\alpha_{L_A}$ and $\alpha_{L_B}$ are as defined for systems $A$ and $B$ in \cref{propLiou}, $\alpha_H:= \max(\alpha_{H_A}, \alpha_{H_B})$ and $\alpha_{H_A}$ and $\alpha_{H_B}$ are as defined for systems $A$ and $B$ in \cref{propNuclearHam},
    $ d := \max(d_x, d_{p'}, d_s, d_{p_s} ) $ and $ g := \max( g_x, g_{p'}, g_s, g_{p_s} )$.
    \label{thmThermInt}
\end{theorem}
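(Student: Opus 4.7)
The plan is to prove the theorem by tracking the resource requirements of \cref{algoTI} through three composable stages: $\Lambda$-dependent Liouvillian simulation in superposition, a Hadamard test against a block-encoded nuclear Hamiltonian difference, and a final amplitude estimation. First, I would apply \cref{lemDiscLeft} to bound the discretization error from the left Riemann sum: to ensure $|\Delta F - \Delta \tilde F| \le \varepsilon/3$, it suffices to set
\begin{equation*}
    N_\Lambda \in \mathcal{O}\!\left(\frac{\|L_B-L_A\|\,\|H_B-H_A\|\,t_{\mathrm{eq}}}{\varepsilon}\right) \subseteq \mathcal{O}\!\left(\frac{\alpha_L\alpha_H t_{\mathrm{eq}}}{\varepsilon}\right),
\end{equation*}
which is only polylogarithmic in the final gate count because the $\Lambda$-register is manipulated coherently via quantum arithmetic rather than loaded classically.

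Next, I would construct $\sum_\Lambda \ketbra{\Lambda}{\Lambda}\otimes e^{-iL_\Lambda t_{\mathrm{eq}}}$ using either the standard QSVT approach with $\Lambda$-controlled phase angles or the angleless QSP route of \cref{propAnglelessHamSim} applied to the $\Lambda$-indexed diagonal block encoding described in \cref{appLambdaLiouvillian}. The per-query cost of the weighted LCU of $U_{L_A}$ and $U_{L_B}$ inherits the block-encoding bounds of \cref{propLiou}, while the number of queries to $U_{L_\Lambda}$ scales as $\tilde{\mathcal{O}}(\alpha_L t_{\mathrm{eq}} + \log(1/\varepsilon'))$ for the per-call precision budget $\varepsilon'$ established below. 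I would then invoke \cref{propNuclearHam} twice (for $H_A$ and $H_B$) and combine via \cref{lemLCU} to produce a block encoding of $\Delta H$ with scaling factor $\alpha_\Delta \in \mathcal{O}(\alpha_H)$. Applying the Hadamard test on the evolved state (with a CNOT-duplicated bath register traced out, as discussed in \cref{ap:preliminaries}) yields measurement probability $\tfrac{1}{2}(1+\Delta\tilde F/\alpha_\Delta)$ for the $\ket{0}$ outcome.

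Finally, I would apply amplitude estimation~\citep{brassard2002quantum} to this probability at additive precision $\varepsilon/(6\alpha_\Delta)$, which requires $\mathcal{O}(\alpha_H/\varepsilon)$ invocations of the full Hadamard-test circuit at constant success probability. To boost to success rate $1-\xi$, I would use a median-of-means argument, giving an $\mathcal{O}(\log(1/\xi))$ multiplicative overhead. Multiplying the number of amplitude-estimation rounds by the per-round cost of one controlled $e^{-iL_\Lambda t_{\mathrm{eq}}}$ (from \cref{thmLiouSim}) plus one controlled $\Delta H$ block encoding (from \cref{propNuclearHam}) produces the stated Toffoli count after absorbing the polylogarithmic overhead from $N_\Lambda$, from angle/coefficient arithmetic, and from the $\log g$ and $d$ factors carried by the classical Liouvillian. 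The $U_I$ query count follows from the ground-state-preparation calls inside each Liouvillian simulation invocation, and the qubit count is the maximum width across all subcircuits, equal to the Liouvillian simulation width from \cref{thmLiouSim} plus the $\mathcal{O}(\log N_\Lambda)$ register, which is subsumed into the $\log(\alpha_L\alpha_H t_{\mathrm{eq}}/\varepsilon)$ term.

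The main obstacle will be the careful error bookkeeping: each Liouvillian evolution must be prepared to $\ell_2$-precision $\mathcal{O}(\varepsilon/\alpha_H)$ so that the subsequent weighting by the Hamiltonian of spectral norm $\mathcal{O}(\alpha_H)$ keeps the total error in the Hadamard-test amplitude at $\mathcal{O}(\varepsilon/\alpha_\Delta)$, and the $\Delta H$ block encoding must match this precision; because \cref{thmLiouSim} and \cref{propNuclearHam} each depend only logarithmically on their precision parameters, these tightened requirements contribute only additional $\textrm{polylog}(\alpha_H\alpha_L t_{\mathrm{eq}}/(\delta\gamma\varepsilon))$ factors. The one place requiring genuine care is the error contribution from the $\Lambda$-dependent state-preparation pair in the weighted LCU of $U_{L_A}$ and $U_{L_B}$, where the scaling factor $\alpha_\Lambda$ must be computed coherently and fed into either phase-angle or angleless-QSP rotations without breaking the per-query precision budget; this is handled by the angleless-QSP construction in \cref{appLambdaLiouvillian}, whose diagonal functional encoding contributes only $\tilde{\mathcal{O}}(\log^2(\alpha_L\alpha_H t_{\mathrm{eq}}/\varepsilon))$ Toffoli gates per query and therefore does not affect the leading-order asymptotics.
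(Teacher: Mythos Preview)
Your proposal is correct and follows essentially the same approach as the paper: discretize via \cref{lemDiscLeft}, build the $\Lambda$-controlled evolution from \cref{propLiou} and the angleless-QSP construction of \cref{appLambdaLiouvillian}, apply the Hadamard test with the $\Delta H$ block encoding from \cref{propNuclearHam}, then amplitude estimation with median boosting. The only cosmetic difference is that the paper phrases the Liouvillian-simulation precision requirement through a fidelity bound (yielding $\varepsilon_L \in \mathcal{O}(\varepsilon^2/\alpha_\Delta^2)$ after the $\sqrt{\varepsilon_L}$ conversion) rather than directly in $\ell_2$ distance as you do, but since \cref{thmLiouSim} depends only polylogarithmically on its precision parameter the resulting asymptotics are identical.
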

\begin{proof}
    We now consider the error bounds of the algorithm. From \cref{thmThermInt}, we can prepare a state that $\varepsilon_L$-close in $\ell_2$ norm to the Liouvillian evolved state $\ket{\rho}$, upon successful post-selection of the ancillae, which requires a constant factor times of reruns. Here, because we need to use the state as input to the Hadamard test, we do not want to perform post-selection. Here we use slightly looser bounds and note that the prepared state $\ket{\sigma}$ (inclusive of the ancilla registers) can be prepared to be close enough to the target state $\ket{\widetilde \rho} = \ket{\rho}\ket{0}_a$ with fidelity up to $1-\varepsilon_L$. Formally, we write
    \begin{equation}
        \lvert \braket{\sigma|\widetilde \rho}\rvert \ge 1-\varepsilon_L\,.
    \end{equation}
    The $\ell_2$-norm difference can then be bounded as follows:
    \begin{equation}
        \lVert \ket{\sigma} - \ket{\widetilde \rho}\rVert \le \sqrt{2 \varepsilon_L}
    \end{equation}
    We note that the above bounds apply to all $\Lambda$-dependent states $\ket{\rho_{\Lambda}}$.
    Then applying the state to the $(\alpha_\Delta, a_\Delta, \varepsilon_\Delta)$-block-encoding $U_\Delta$ of the Hamiltonian difference $\Delta H = H_B - H_A$, we note the total block encoding errors can be bounded as follows:
    \begin{align}
         & \lvert\alpha_\Delta\bra{\sigma}U_\Delta\ket{\rho} - \bra{\sigma}\Delta H\ket{\rho}\rvert\nonumber \\
         & \begin{aligned}
               \le\; & \alpha_\Delta\lvert\bra{\sigma}U_\Delta\ket{\sigma} -  \bra{\sigma}U_\Delta\ket{\widetilde \rho}\rvert \\
               & +\alpha_\Delta\lvert\bra{\sigma}U_\Delta\ket{\widetilde \rho} -  \bra{\widetilde \rho}U_\Delta\ket{\widetilde \rho}\rvert \\
               & +\lvert\alpha_\Delta\bra{\widetilde \rho}U_\Delta\ket{\widetilde \rho} - \bra{\rho}\Delta H\ket{\rho}\rvert
           \end{aligned}\nonumber \\
         & \le 2\alpha_\Delta\sqrt{2\varepsilon_L} + \varepsilon_\Delta
    \end{align}
    To obtain the total error over the state $\frac{1}{\sqrt{N_\Lambda}}\sum_{\Lambda}\ket{\Lambda}\ket{\rho_\Lambda}$, we simply compute the following
    \begin{align}
        &\begin{multlined}
            \Bigg\lvert\frac{\alpha_{\Delta}}{N_{\Lambda}} \left(\sum_{\Lambda} \bra{\Lambda}\bra{\sigma_\Lambda}\right)\mathbb{I} \otimes U_\Delta\left(\sum_{\Lambda} \ket{\Lambda}\ket{\sigma_\Lambda}\right)\\
            -\frac{1}{N_{\Lambda}} \left(\sum_{\Lambda} \bra{\Lambda}\bra{\rho_\Lambda}\right)\mathbb{I}\otimes \Delta H\left(\sum_{\Lambda} \ket{\Lambda}\ket{\rho_\Lambda}\right)\Bigg\rvert
        \end{multlined}\nonumber\\
        & = \frac{1}{N_\Lambda} \left\lvert\sum_\Lambda\alpha_{\Delta}\bra{\sigma_\Lambda}U_\Delta\ket{\sigma_\Lambda} - \bra{\rho_\Lambda}\Delta H\ket{\rho_\Lambda}\right\rvert\nonumber\\
        &\le\frac{1}{N_\Lambda} \sum_\Lambda\left\lvert\alpha_{\Delta}\bra{\sigma_\Lambda}U_\Delta\ket{\sigma_\Lambda} - \bra{\rho_\Lambda}\Delta H\ket{\rho_\Lambda}\right\rvert\nonumber\\
        &\le\frac{1}{N_\Lambda} \sum_\Lambda(2\alpha_\Delta\sqrt{2\varepsilon_L} + \varepsilon_\Delta)\nonumber\\
        &=2\alpha_\Delta\sqrt{2\varepsilon_L} + \varepsilon_\Delta
    \end{align}
    
    From the discretization errors of Riemann sum bounds, we add an additional term of $\varepsilon_{\rm disc}$, which is shown in \cref{appThmIntDiscErr}. From the Hadamard test circuit, we obtain the probability
    \begin{equation}
        \mathbb{P}_{\Delta \widetilde F} =\frac{1}{2}\left(\frac{\Delta \widetilde F}{\alpha_{\Delta}} + 1\right) \, ,
    \end{equation}
    on measurement of the zero state. Viewing the entire circuit (the $\Lambda$-Liouvillian evolution and the Hadamard test) as a quantum state $\ket{\psi}$ we note that 
    \begin{equation}
        \tr\left(\ketbra{\psi}{\psi}(\ketbra{0}{0}\otimes \mathbb I)\right) = \mathbb{P}_{\Delta \widetilde F}
    \end{equation}
    By amplitude estimation~\citep{brassard2002quantum,montanaro2015quantum}, we can output an $\frac{\varepsilon_{\rm qae}}{2\alpha_\Delta}$-close estimation of $p_{\Delta \widetilde F}$ with $\mathcal{O}(\frac{\alpha_\Delta}{\varepsilon_{\rm qae}})$ accesses to $2\ketbra{\psi}{\psi} - \mathbb{I}$ and $2\ketbra{0}{0}\otimes \mathbb I - \mathbb{I}$. Thus, the total error in the estimation of $\Delta F$ is then found to be
    \begin{equation}
    2\alpha_\Delta\sqrt{2\varepsilon_L} + \varepsilon_\Delta + \varepsilon_{\rm disc} + \varepsilon_{\rm qae}
    \end{equation}

\begin{table*}
    \centering
    \begin{tabular}{lccc}
    \toprule
         & Qubit Count & Toffoli Count & Scaling Factor\\
    \midrule
        Alchemical Free Energy [\cref{thmThermInt}] & $\widetilde{\mathcal{O}}(N_{\rm tot})$ & $\displaystyle\widetilde{\mathcal{O}}\left(\frac{N\widetilde N N_{\rm tot}^5t}{\delta\gamma\varepsilon}\log \frac{1}{\xi}\right)$ & --  \\
        \dent{1}Liouvillian Simulation [\cref{thmLiouSim}] & $\widetilde{\mathcal{O}}(N_{\rm tot})$ & $\displaystyle\widetilde{\mathcal{O}}\left(\frac{N\widetilde N N_{\rm tot}^3t}{\delta\gamma}\log^3\frac{1}{\varepsilon}\right)$ & 1\\
        \dent{2}Liouvillian [\cref{propLiou}] & $\widetilde{\mathcal{O}}(N_{\rm tot})$ & $\displaystyle\widetilde{\mathcal{O}}\left(\frac{\widetilde N N_{\rm tot}^2}{\delta\gamma}\log^2\frac{1}{\varepsilon}\right)$ & $\mathcal{O}(NN_{\rm tot})$\\
        \dent{3}Classical Liouvillian~\citep{simon2024improved}  [\cref{lemClassLiou}]& $\widetilde{\mathcal{O}}(N)$ & $\displaystyle\widetilde{\mathcal{O}}\left( N_{\rm tot}\log\frac{1}{\varepsilon}+\log^{\log 3}\frac{1}{\varepsilon}\right)$ & $\mathcal{O}(N^2)$\\
        \dent{3}Electronic Liouvillian [\cref{propElecLiou}] & $\widetilde{\mathcal{O}}(N_{\rm tot})$ & $\displaystyle\widetilde{\mathcal{O}}\left(\frac{\widetilde N N_{\rm tot}^2}{\delta\gamma}\log^2\frac{1}{\varepsilon}\right)$ & $\mathcal{O}(N\widetilde N)$\\
        \dent{4}Ground State Energy Derivative [\cref{lemDel}] & $\widetilde{\mathcal{O}}(\widetilde N)$ & $\displaystyle\widetilde{\mathcal{O}}\left(\frac{\widetilde N N_{\rm tot}^2}{\delta\gamma}\log^2\frac{1}{\varepsilon}\right)$ & $\mathcal{O}(\widetilde N)$\\
        \dent{5}Ground State~\citep{lin2020nearoptimal} [\cref{lemGSPrep}] & $\widetilde{\mathcal{O}}(\widetilde N)$ & $\displaystyle\widetilde{\mathcal{O}}\left(\frac{\widetilde N N_{\rm tot}^2}{\delta\gamma}\log^2\frac{1}{\varepsilon}\right)$ & 1\\
        \dent{5}Force Operator~\citep{obrien2022efficient} [\cref{lemHamDeriv}] & $\widetilde{\mathcal{O}}(\widetilde N)$ & $\displaystyle\widetilde{\mathcal{O}}\left(\widetilde N+\log\frac{1}{\varepsilon}\right)$ & $\mathcal{O}(\widetilde N)$\\
        \dent{1}Nuclear Hamiltonian [\cref{propNuclearHam}] & $\widetilde{\mathcal{O}}(N_{\rm tot})$ & $\displaystyle\widetilde{\mathcal{O}}\left(\frac{\widetilde N N_{\rm tot}^2}{\delta\gamma}\log^2\frac{1}{\varepsilon}\right)$ & $\mathcal{O}(N_{\rm tot}^2)$\\
        \dent{2}Ground State Energy Hamiltonian [\cref{lemGSE}] & $\widetilde{\mathcal{O}}(N_{\rm tot})$ & $\displaystyle\widetilde{\mathcal{O}}\left(\frac{\widetilde N N_{\rm tot}^2}{\delta\gamma}\log^2\frac{1}{\varepsilon}\right)$ & $\mathcal{O}(\widetilde NN_{\rm tot})$\\
    \bottomrule
    \end{tabular}
    \caption{We summarize the simplified qubit count and Toffoli gate counts corresponding to the block encoding components that make up the Liouvillian simulation and alchemical free energy algorithms in our paper.}
    \label{tabOurRuntime}
\end{table*}

\begin{table*}
    \centering
    \begin{tabular}{lcc}
    \toprule
     & Qubit Count & Toffoli Count \\
    \midrule
        Absolute Free Energy & $\widetilde{\mathcal{O}}(N_{\rm tot})$ & $\displaystyle\mathcal{O}^*\left( \left( \frac{\eta^{o(1)} N^2 \widetilde N^2 N_{\mathrm{tot}}^3 t}{\delta\gamma \, \varepsilon} \left( N_{\mathrm{tot}}^2 + \frac{\eta}{\sqrt{\varepsilon}} \right) + \frac{\widetilde N N_{\text{tot}}^4 }{\varepsilon^2}\right) \log \frac{1}{\xi}\right)$ \\
        \dent{1}Internal Energy & $\widetilde{\mathcal{O}}(N_{\rm tot})$ & $\displaystyle\mathcal{O}^*\left( \left( \frac{\eta^{o(1)} N^2 \widetilde N^2 N_{\mathrm{tot}}^5 t}{\delta\gamma \, \varepsilon} + \frac{\widetilde N N_{\text{tot}}^4 }{\varepsilon^2}\right) \log \frac{1}{\xi}\right)$ \\
        \dent{2}Trotterized Liouvillian Simulation & $\widetilde{\mathcal{O}}(N_{\rm tot})$ & $\displaystyle\mathcal{O}^*\left(\frac{N^2 \widetilde N^2 N_{\mathrm{tot}}^3  t}{\delta\gamma\varepsilon^{o(1)}}\right)$ \\
        \dent{3}Classical Liouvillian & $\widetilde{\mathcal{O}}(N)$ & $\displaystyle\widetilde{\mathcal{O}}\left( N_{\rm tot}\log\frac{1}{\varepsilon}+\log^{\log 3}\frac{1}{\varepsilon}\right)$ \\
        \dent{3}Electronic Liouvillian Evolution & $\widetilde{\mathcal{O}}(N_{\rm tot})$ & $\displaystyle\widetilde{\mathcal{O}}\left( N\widetilde NN_{\rm tot}^2t\log^2\frac{1}{\varepsilon} + \frac{N\widetilde N N_{\rm tot}^2}{\delta\gamma}\log\frac{1}{\varepsilon}\right)$ \\
        \dent{4}\makecell[l]{Controlled Electronic\\ Hamiltonian Simulation} & $\widetilde{\mathcal{O}}(N_{\rm tot})$ & $\displaystyle\widetilde{\mathcal{O}}\left( \widetilde NN_{\rm tot}^2t\log^2\frac{1}{\varepsilon}\right)$ \\    
        \dent{2}Nuclear Hamiltonian & $\widetilde{\mathcal{O}}(N_{\rm tot})$ & $\displaystyle\widetilde{\mathcal{O}}\left(\widetilde N N_{\rm tot}^2\left(\frac{1}{\delta\gamma}+\frac{1}{\varepsilon}\right)\right)$ \\
        \dent{3}Ground State Energy Hamiltonian & $\widetilde{\mathcal{O}}(N_{\rm tot})$ & $\displaystyle\widetilde{\mathcal{O}}\left(\widetilde N N_{\rm tot}^2\left(\frac{1}{\delta\gamma}+\frac{1}{\varepsilon}\right)\right)$ \\
        \dent{1}Gibbs Entropy & $\widetilde{\mathcal{O}}(N_{\rm tot})$ & $\displaystyle\mathcal{O}^*\left( \left( \frac{\eta N^2 \widetilde N^2 N_{\mathrm{tot}}^3 t}{\delta\gamma \, \varepsilon^{1.5}}\right) \log \frac{1}{\xi}\right)$\\

    \bottomrule
    \end{tabular}
    \caption{We summarize the simplified qubit count and Toffoli gate counts corresponding to the block encoding components that make up the Liouvillian simulation and free energy algorithms by \citet{simon2024improved} to provide a comparison to our paper. We note that the electronic Liouvillian evolution algorithm provided by \citet{simon2024improved} is in fact more efficient than block encoding the electronic Liouvillian and applying QSVT directly, due to the decoupling of the ground state preparation and the time evolution. This advantage would be lost once the Trotter iterates between classical and electronic Liouvillian evolution is brought in, as the number of ground state preparations is then multiplied by the Liouvillian spectral norm and total evolution time.}
    \label{tabPRXRuntime}
\end{table*}
    
    Upper-bounding the total error by $\varepsilon$, we note that the error terms are bounded as follows:
    \begin{align}
        \varepsilon_L         & \in \mathcal{O} \left(\frac{\varepsilon^2}{\alpha_\Delta^2}\right) \\
        \varepsilon_\Delta,\varepsilon_{\rm qae}, \varepsilon_{\rm disc} & \in \mathcal {O} (\varepsilon)
    \end{align}
    Plugging these results into the runtime, we note that the cost of block-encoding the Hamiltonian in \cref{propNuclearHam} is overshadowed by the cost of block-encoding the $\Lambda$-dependent Liouvillian evolution in \cref{thmLiouSim}. Then, factoring the cost of amplitude estimation to precision $\varepsilon /\alpha_\Delta$, we multiply the cost of Liouvillian simulation by the inverse of the precision, while noting that $\alpha_\Delta \in \mathcal{O} (\alpha_H)$. This, in turn, overshadows the polylogarithmic error terms in the Liouvillian simulation. Lastly, we include the additional cost of using median boosting to raise the success probability to $1-\xi$, which yields the result in the theorem. The number of qubits can be found by plugging in the error bounds of $\varepsilon_L$ as provided above to \cref{thmLiouSim}.
\end{proof}
Note that if we consider only the particle numbers, the spectral gap, approximation ground state preparation overlap, and error precision, the runtime can be written as
\begin{align}
    \widetilde{\mathcal{O}}\left(\frac{N \widetilde N N_{\mathrm{tot}}^5 t_{\mathrm{eq}}}{\delta\gamma\varepsilon}\log\frac{1}{\xi}\right)
\end{align}
as shown in the informal version of the theorem (\cref{thmThermIntInfml}) in the main text.

Lastly, we provide a summarized list of runtime costs for the components that make up our Liouvillian simulation algorithm and free energy calculation algorithm in \cref{tabOurRuntime}. For comparisons purpose, we also provide a similar summary of \citet{simon2024improved}'s approach in \cref{tabPRXRuntime}.

\section{Discretization errors of the Liouvillian}
\label{appDiscErr}

We provide a qualitative discussion of the results of errors in the discretization of the Liouvillian. From \cref{thmLiouSim,thmThermInt}, we note that the number of grid points in the nuclear registers $g$ and the number of plane wave bases in the electronic register $B$ are both only logarithmically dependent within the runtime. Hence, one can potentially decrease the error caused by discretization by increasing $g$ and $B$ to a large number without making polynomial increases to the runtime.

Regarding approximating the derivative of position and momentum via central finite difference to order $d$, we note that the discretization error of a function $f$ can be upper bounded~\citep{li2005general} by
\begin{equation}
    \varepsilon \in \mathcal O \left(\max_{\xi \in [a, b]} \left|f^{(2d+1)}(\xi)\right| \left(\frac{eh}{2}\right)^{2d}\right) \, ,
\end{equation}
for which $d \in \mathcal O (\log \frac{1}{\varepsilon})$ if $h$ is constant. Thus, given that the dependency of the central finite difference orders $d$ in the runtime of \cref{thmLiouSim,thmThermInt} is only linear, the error stemming from this discretization error can potentially be suppressed, provided that there exists a constant $\mathcal{C}$ such that
\begin{equation}
    \mathcal{C} \ge \sup_{d\ge 0} \max_{\xi \in [a,b]}\left|f^{(2d+1)}(\xi) \right|^{\frac{1}{2d}} \, .
\end{equation}
However, this assumption is, in fact, difficult to validate in general, as the maximum scale of the higher-order derivatives is difficult to bound in full generality as the continuum limit is approached. This discretization issue remains an important consideration within the field~\citep{leveque2007finite}, and the development of alternative differential operators that do not rely on central finite difference would serve as valuable material for future work.
\end{document}